\documentclass[10pt]{amsart}

\usepackage{mathtools}
\usepackage{amssymb,amsfonts,amsmath}
\usepackage{hyperref}
\usepackage{amsthm}

\usepackage{booktabs}
\usepackage{bm}

\usepackage{graphicx, verbatim, centernot}
\usepackage{caption}
\usepackage{xcolor}

\usepackage{multicol}
\usepackage[shortlabels]{enumitem}

\usepackage{dsfont}

\usepackage[left=1in,right=1in,top=1in,bottom=1in]{geometry}

\usepackage{fancyhdr}

\usepackage{tikz}
\usetikzlibrary{arrows.meta,positioning}

\usepackage{cleveref}

\newtheorem{theorem}{Theorem}[section]
\newtheorem*{theorem*}{Theorem}

\newtheorem{corollary}[theorem]{Corollary}
\newtheorem{lemma}[theorem]{Lemma}

\newtheorem{proposition}[theorem]{Proposition}

\newtheorem*{claim*}{Claim}
\newtheorem{fact}[theorem]{Fact}

\theoremstyle{definition}
\newtheorem{definition}[theorem]{Definition}

\theoremstyle{remark}

\DeclareMathOperator*{\essinf}{ess\,inf}
\DeclareMathOperator*{\esssup}{ess\,sup}

\def\eps{{\varepsilon}}

\newcommand{\pointsets}{\mathcal{N}}
\newcommand{\events}{\mathfrak{R}}
\newcommand{\dirichlet}{\mathcal{E}}
\newcommand{\ind}[1]{\mathds{1}_{#1}}
\newcommand{\innerProduct}[2]{\left\langle #1, #2 \right\rangle}
\newcommand{\Pois}{\textup{Pois}}
\newcommand{\Ai}{\mathrm{Ai}}
\newcommand{\spec}{{\mathrm{spec}}}

\newcommand{\tnorm}[1]{{\left\vert\kern-0.25ex\left\vert\kern-0.25ex\left\vert #1 
		\right\vert\kern-0.25ex\right\vert\kern-0.25ex\right\vert}}

\newcommand{\one}{\boldsymbol{1}}
\newcommand{\bX}{\mathbf{X}}
\newcommand{\bY}{\mathbf{Y}}
\newcommand{\bZ}{\mathbf{Z}}
\newcommand{\Vol}{\operatorname{Vol}}
\newcommand{\blambda}{{\boldsymbol{\lambda}}}
\newcommand{\dist}{\mathrm{dist}}

\renewcommand{\P}{\mathbb{P}}
\newcommand{\E}{\mathbb{E}}
\newcommand{\N}{\mathbb{N}}
\newcommand{\R}{\mathbb{R}}
\newcommand{\C}{\mathbb{C}}

\newcommand{\region}{\Lambda}
\newcommand{\potential}{\phi}
\newcommand{\activity}{\lambda}
\newcommand{\activities}{\pmb{\lambda}}
\newcommand{\borel}{\mathcal{B}}
\newcommand{\size}[1]{\left\lvert#1\right\rvert}
\newcommand{\diff}{\mathrm{d}}
\newcommand{\vol}{\mathrm{Vol}}
\newcommand{\randpoint}{\mathbf{y}}
\newcommand{\domain}{\mathcal{D}}
\newcommand{\norm}[1]{\left\lVert#1\right\rVert}
\newcommand{\absolute}[1]{\left\lvert#1\right\rvert}
\newcommand{\epsbound}{\varepsilon^{*}}
\newcommand{\V}[1]{\mathbf{#1}}

\newcommand{\Var}{{\operatorname{Var}}}
\newcommand{\Cov}{\operatorname{Cov}}

\AtBeginDocument{\pagestyle{plain}}

\begin{document}
	
	\title{Uniqueness, analyticity and mixing for Gibbs point processes  via spectral gaps}

	\author{Andreas G\"obel}
	\address{Hasso Plattner Institute, University of Potsdam, Germany}
	\email{andreas.goebel@hpi.de}
	\author{Matthew Jenssen}
	\address{King's College London, Department of Mathematics}
	\email{matthew.jenssen@kcl.ac.uk}
	\author{Marcus Michelen}
	\address{Northwestern University, Department of Mathematics}
	\email{michelen@northwestern.edu}
	\author{Marcus Pappik}
	\address{Hasso Plattner Institute, University of Potsdam, Germany}
	\email{marcus.pappik@hpi.de}
	\author{Will Perkins}
	\address{Georgia Institute of Technology, School of Computer Science}
	\email{math@willperkins.org}
	\author{Leon Schiller}
	\address{Hasso Plattner Institute, University of Potsdam, Germany}
	\email{leon.schiller@hpi.de}

	\begin{abstract}
		A Gibbs point process is a model for particles interacting in the continuum through a potential.  Among the most classical examples is the hard-sphere model, where given an activity parameter $\lambda$, a radius $r$, and a bounded set $\Lambda \subset \R^d$ one samples a Poisson process of intensity $\lambda$ in $\Lambda$ conditioned on the points forming the centers of a packing of  spheres of radius $r$.  We prove uniqueness of infinite-volume Gibbs measure, analyticity of the pressure, and various notions of spatial and temporal mixing for activities up to what we define as the spectral threshold $\lambda_{\spec}$ of the potential.  For each fixed dimension $d \geq 2$, this improves the uniqueness and analyticity bounds for the hard-sphere model.  As $d \to \infty$, we obtain not only the first bound with a growing improvement over the classical bounds, but one whose improvement grows exponentially fast.  
		{We also prove an optimal mixing time bound for heat bath dynamics for the hard-sphere model up to an expected density of $\Theta(d / 2^d)$, which is the first result that asymptotically matches the maximum density for rapid mixing predicted by Parisi and Zamponi.}

		We also exhibit repulsive, radial pair potentials for which $\lambda_{\spec} = + \infty$, thereby showing that there are Gibbs point processes with such potentials that provably exhibit no phase transition at any activity $\lambda > 0$.  Further, in dimensions $8$ and $24$ we exhibit such a potential with no phase transition for which the work of Cohn-Kumar-Miller-Radchenko-Viazovska proves that the unique ground state at any fixed density is given by the $E_8$ and Leech lattices, respectively.  
		Our work builds upon a 2013 work of Kondratiev--Kuna--Ohlerich which implicitly defined $\lambda_{\spec}$ and demonstrated a spectral gap for a Glauber-like continuum birth-death dynamics.  Our main work consists of showing that such a spectral gap implies {several strong notions of absence of phase transition,} along with analyzing the behavior of $\lambda_\spec$ {for some interesting potentials}.
	\end{abstract}
	\date{}
	
	\maketitle
	
	\section{Introduction}

	A Gibbs point process, or classical gas, is a model for particles interacting in the continuum through a potential.  For a pair potential given by a symmetric function $\phi: \R^d \times \R^d \to \R \cup \{+\infty\}$, we define the energy of a finite configuration $\bX \subset \R^d$ via $$H(\bX):= \sum_{\{x,y\} \subset \bX } \phi(x,y)\,.$$
	For a bounded, measurable region $\Lambda \subset \R^d$ and an activity $\lambda > 0$, the \emph{partition function} is defined by \begin{equation*}
		Z_{\Lambda}(\lambda) = \sum_{k \geq 0} \frac{\lambda^k}{k!} \int_{\Lambda^k} e^{-H(\mathbf{x})} \,\diff \mathbf{x}\,.
	\end{equation*}
	The \emph{finite-volume Gibbs measure} is defined by $$\mu_{\Lambda,\lambda}(A) = \frac{1}{Z_\Lambda(\lambda)} \sum_{k \geq 0} \frac{\lambda^k}{k!} \int_{\Lambda^k} \one{\{\mathbf{x} \in A}\} \cdot e^{-H(\mathbf{x})} \,\diff \mathbf{x}\,. $$
	
	One of the central thermodynamic quantities is the pressure, first defined by defining the \emph{finite-volume pressure} at activity $\lambda$ by $$p_{\Lambda}(\lambda) := \frac{1}{\vol(\Lambda)} \cdot \log Z_{\Lambda}(\lambda)$$
	and subsequently defining $p(\lambda)$ via $\lim_{\Lambda \nearrow \R^d} p_{\Lambda}(\lambda)$ over a sequence of sets $\Lambda \nearrow \R^d$ such as increasing cubes.

	In statistical physics, the central questions for these models concern properties of the infinite-volume limits: the uniqueness of infinite-volume Gibbs measures, analyticity of the pressure, and spatial decay of correlations.  In the computer science literature, the central questions concern sampling and approximation in finite volume: mixing times for Markov chains such as Glauber dynamics and approximation algorithms for the partition function.  Essentially, each of these properties listed says something about the model behaving in a ``gas-like'' or ``fluid'' fashion; in fact, strategies that prove one of these properties for a specific potential $\phi$ and activity $\lambda$ often prove many others  simultaneously.  For many choices of pair potential $\phi$, it is expected that the model undergoes a \emph{phase transition}: one definition of this is that for small $\lambda$ there is a unique infinite-volume Gibbs measure while at other $\lambda$ there is more than one such measure.

	The most well-studied choice of potential $\phi$ is the \emph{hard-sphere model}, where one sets $\phi(x,y) = + \infty$ if $\|x - y\|_2 < 2r$ and $0$ otherwise.  The hard-sphere model is supported on pointsets that are $2r$-separated, meaning that the points form the centers of a sphere packing of balls of radius $r > 0$. 
	It is widely believed in the physics literature that in dimensions $2$ and $3$ there is a phase transition for the hard-sphere model going back at least to the seminal 1957 work of Alder and Wainwright \cite{alder1957phase}; various physics works suggest phase transitions in other  dimensions (see e.g.\ \cite{finken2001freezing} for a treatment up to $d = 50$) as well as asymptotically as $d \to \infty$~\cite{frisch1999high,parisi2010mean}.  Dimension $d=2$ is particularly subtle; here Richthammer \cite{richthammer2007translation} proved a Mermin-Wagner type theorem, showing that all infinite-volume Gibbs measures are translation invariant.  This omits the possibility of crystallization, but does not rule out the possibility of phase transition breaking rotational order, which has been predicted by recent works in physics~\cite{bernard2011two}. See~\cite{lowen2000fun} for a survey on the physics literature on the hard sphere model.

	Classical techniques for understanding when a Gibbs point process is in the fluid phase---and in particular for showing uniqueness of infinite-volume Gibbs measure and analyticity of the pressure---include cluster expansion methods: one expands $\log Z_{\Lambda}(\lambda)$ as a power series for $\lambda \in \C$ and proves analyticity of $\log Z_{\Lambda}(\lambda)$ in a complex disk $|\lambda| < \lambda_{0}$ for some fixed $\lambda_{0} > 0$ uniformly in the domain $\Lambda$.  For these techniques, the relevant quantity is the \emph{temperedness constant} defined by \begin{equation}
		C_\phi = \sup_{x} \int_{\R^d}|1 - e^{-\phi(x,y)}| \,\diff y
	\end{equation}
	which is always assumed to be finite. 
	In the case where $\phi$ is non-negative, we say that it is \emph{repulsive}.  For repulsive potentials Groeneveld, Penrose, and Ruelle \cite{groeneveld1962two,penrose1963convergence,ruelle1963correlation} showed in various forms that one can take $\lambda_{0} = 1/(e C_\phi)$, which shows analyticity and uniqueness for $\lambda \in [0,\lambda_0)$. 
	
	Traditional routes to proving uniqueness of infinite-volume Gibbs measures go through the Kirkwood--Salsburg equations, a hierarchy of equestions for the correlation functions satisfied by infinite-volume Gibbs measures.  Approaches that directly analyze the Kirkwood--Salsburg equations are similar in spirit to cluster expansion approaches, and such results prove uniqueness up to $1/(e C_\phi)$. 
	We refer the reader to Ruelle's classic book for further context on this approach \cite{ruelle1969statistical}.  A 1970 work of Meeron \cite{meeron1970bounds} proved uniqueness and analyticity up to $1/C_\phi$ using a novel interpolation approach along with a variant of the Kirkwood--Salsburg equations.  A sequence of works by Michelen--Perkins \cite{michelen2022strong,michelen2023analyticity,michelen2025potential} proved uniqueness and analyticity along with various spatial and temporal mixing up to  $e/C_\phi$ as well as $e/\Delta_\phi$, where $\Delta_\phi < C_\phi$ is a quantity introduced in \cite{michelen2025potential} called the potential-weighted connective constant; these works were built upon tree-like integral recursions and were inspired by work of Weitz \cite{Wei06} in an analogous discrete setting.  Other approaches to proving uniqueness involve Dobrushin uniqueness \cite{houdebart2022explicit},    disagreement percolation \cite{betsch2023uniqueness,dereudre2019introduction,hofertemel2019disagreement}, and Markov chain mixing \cite{helmuth2022correlation}.

	Our contribution is to establish uniqueness and analyticity through a different approach.  A remarkable but little-noticed work by Kondratiev, Kuna and Ohlerich \cite{kondratiev2013spectral} shows that under a certain positive-definiteness condition on an integral operator defined in terms of $\phi$ and $\lambda$, a continuous-time Markov chain known as \emph{spatial birth-death dynamics} exhibits a spectral gap.  
	In~\cite{kondratiev2013spectral}, the authors used this  to show that for certain potentials there is a spectral gap for all activities and the corresponding models ``should not show any phase transition...at any activity.''  
	Our main work {proves this in a very strong form}, showing that this spectral gap condition for the dynamics implies uniqueness of infinite-volume Gibbs measure, analyticity of the pressure, and various strong forms of spatial and temporal mixing.  {These implications are reminiscient of the implications in Dobrushin and Shlosman's theory of complete analyticity in the setting of bounded-range lattice spin models~\cite{dobrushin1985completely}; the novelty is in the weakness of the assumption (the spectral gap of birth--death dynamics) and the generality to continuum models with unbounded range potentials.}
	
	Using these implications (and analyzing specific potentials), we will prove that for certain pair potentials there is uniqueness and analyticity at \emph{all} activities, thus providing the first rigorous proof of the existence of a Gibbs point process with a non-trivial, radial pair potential  that lacks a phase transition.  As we will see below, we will exhibit such a pair potential for which the breakthrough work of Cohn-Kumar-Miller-Radchenko-Viazovska \cite{cohn2022universal} proves that the ground state at any fixed density is uniquely given by the $E_8$ and Leech lattices in dimension $8$ and $24$ respectively. 
	Additionally, for classical examples such as the hard-sphere model, our approach demonstrates dramatically improved bounds for the fluid regime for every dimension $d \geq 2$, with improvements that increase exponentially as the dimension grows. 
	To our knowledge, this is the first asymptotically growing improvement in the literature.

	\subsection{Main results}
	
	To state our results, as well as those of \cite{kondratiev2013spectral}, 
	for a repulsive, tempered potential $\phi$, define the \emph{spectral threshold} $\lambda_{\spec}$ via 
	\begin{equation}\label{eq:lambda-spec-definition}
		\lambda_{\spec} := \left(\sup_{f \in L^2 : \|f \|_2 = 1} \iint f(x) f(y) (e^{-\phi(x,y)} - 1)\,\diff x\,\diff y\right)^{-1}
	\end{equation}
	where the supremum is over real-valued functions $f$ and if the supremum on the right-hand side is $0$ then we interpret $\lambda_{\spec} = +\infty$.  In the special case where $\phi$ is translation invariant, we may define $g(x) = 1 - \exp(-\phi(x,0))$ and we will have \begin{equation}\label{eq:lambda-spec-translation-invariant}
		\lambda_\spec = 
		\left(\esssup_{\xi} (-\widehat{g}(\xi) )\right)^{-1}
	\end{equation}
	where $\widehat{g}$ is the Fourier transform of $g$ (see Lemma \ref{lem:fourier-criterion} for a proof).   We will also assume that $\phi$ decays exponentially, meaning $1 - \exp(-\phi(x,y)) \leq \exp(-\Omega(\dist(x,y)))$ (see Definition \ref{def:potential_decay} for a precise definition).

	Our main theorem is as follows. 
	
	\begin{theorem}\label{thm:main}
		Let $\phi$ be a repulsive potential that decays exponentially.  Then for all $\lambda < \lambda_\spec$, there is a unique infinite-volume Gibbs measure and the infinite-volume pressure is analytic on $[0,\lambda_\spec)$.
	\end{theorem}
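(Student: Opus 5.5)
The plan is to go from the spectral gap of Kondratiev--Kuna--Ohlerich to spatial mixing, and then to deduce uniqueness and analyticity from spatial mixing.

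\emph{Step 1: a uniform spectral gap.} For $\lambda<\lambda_\spec$, write $\lambda = (1-\delta)\lambda_\spec$. The integral operator $K$ with kernel $e^{-\phi(x,y)}-1$ then satisfies $\lambda\langle f, Kf\rangle \le (1-\delta)\|f\|_2^2$. By the argument of \cite{kondratiev2013spectral}, this gives a spectral gap of at least $\delta$ for the spatial birth--death generator on every bounded $\Lambda$ and every boundary condition. The generator is
\[
(Lf)(\bX)=\sum_{x\in\bX}\big(f(\bX\setminus x)-f(\bX)\big)+\lambda\int_\Lambda e^{-\sum_{y\in\bX}\phi(x,y)}\big(f(\bX\cup x)-f(\bX)\big)\,\diff x .
\]
The gap comes from comparing the Dirichlet form of $L$ with the chaos decomposition: on $n$-particle functions, $-L$ acts like $n$ minus a perturbation controlled by $\lambda K$. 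The key point is that the gap is uniform in $\Lambda$ and in the (repulsive) boundary condition, since the boundary interaction only multiplies the intensity by a factor in $[0,1]$.

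\emph{Step 2: from the gap to exponential decay of correlations.} I would prove decay of the influence of boundary conditions on local observables: for $A\subset\Lambda$ at distance $R$ from the region where two boundary conditions differ, the laws of the configuration in $A$ should be within $e^{-cR}$ in total variation. To do this I would run the dynamics for time $t\asymp R$ and use a finite-speed-of-propagation estimate. Because $\phi$ decays exponentially, a disagreement spreads by a birth-death percolation with kernel $\lambda(1-e^{-\phi})$, and its growth rate is bounded, so information travels distance $O(t)$ up to errors $e^{-\Omega(R)}$. Meanwhile the spectral gap makes each chain within $e^{-\delta t}$ of its own stationary measure, in $L^2$ and hence in total variation. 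The initial $L^2$ distance is exponential in $\Vol(\Lambda)$, so I would pass to local relaxation using the gap restricted to a box of side $O(R)$ around $A$, or use the gap for the local observable directly.

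\emph{Step 3: uniqueness and analyticity.} Uniqueness follows in the standard way: any two infinite-volume Gibbs measures are mixtures of finite-volume specifications with boundary conditions, and Step 2 shows their local marginals agree. For analyticity I would use the identity $\frac{\diff}{\diff\lambda}p_\Lambda = \rho_\Lambda/\lambda$ together with a complex extension. First I would show that exponential strong spatial mixing, which holds uniformly on a real neighborhood of each $\lambda_0<\lambda_\spec$, yields zero-freeness of $Z_\Lambda$ in a complex neighborhood of $[0,\lambda_\spec)$ uniform in $\Lambda$, via a Dobrushin--Shlosman-type complete analyticity argument with block decomposition and cluster expansion of boundary effects. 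Vitali's theorem then gives analyticity of the limit pressure.

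I expect Step 3's passage to complex activities to be the main obstacle. If that passage fails, my fallback is to bound all cumulants (Ursell functions) of the density uniformly by $C^n n!$ using the spectral gap applied to multi-point observables, which gives real analyticity directly.
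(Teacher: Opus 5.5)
\paragraph{Step 3 (analyticity).} The genuine gap is here. Uniform zero-freeness of $Z_\Lambda$ near $[0,\lambda]$ is asserted rather than derived, and it is the heart of the matter. The Dobrushin--Shlosman implication from strong mixing to complete analyticity is a theorem about finite-range lattice systems. Its block and cluster-expansion machinery does not carry over as a black box to the continuum with unbounded-range potentials.

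Your fallback does not close the gap either. The spectral gap gives uniform control of $\Var|\bX|$, i.e.\ of the second cumulant, and real-variable information of this kind does not stop complex zeros from pinching the real axis. Uniform bounds of order $C^n n!\,\Vol(\Lambda)$ on the $n$-th cumulant of $|\bX|$ are essentially equivalent, via Cauchy's formula, to a zero-free region with a logarithmic bound. So they cannot be obtained ``from multi-point observables'' without proving the zero-freeness itself.

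The idea you are missing is to leave the real axis only multiplicatively. Write $z=\lambda'(1+\eps)$ with $\lambda'$ real and $\eps$ small complex, and study $F^{\blambda}_\Lambda(\eps)=Z_\Lambda((1+\eps)\blambda)/Z_\Lambda(\blambda)=\E_{\Lambda,\blambda}[(1+\eps)^{|\bX|}]$ for real $\blambda\le\lambda$. Then every quantity is an expectation under a genuine Gibbs measure, where mixing can be used. The paper proves, by induction on the number of radius-$R/2$ balls needed to cover $\Lambda$, four statements simultaneously:
\begin{enumerate}
\item $|F^{\blambda}_\Lambda(\eps)|\ge(1-\kappa)^{c_{R/2}(\Lambda)}$;
\item deleting a ball $B_R(x)$ from $\Lambda$ changes $F$ by a factor within $\kappa$ of $1$;
\item pinning a point at distance at least $\delta$ from $\Lambda$ changes $F$ by a factor $1+O(e^{-\beta\dist(x,\Lambda)})$;
\item pinning an arbitrary point changes $F$ by a factor within $\sqrt{\eps^*}$ of $1$.
\end{enumerate}
Mixing enters only in the last statement. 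By DLR, $F_\Lambda/F_{\Lambda_2}$ is approximated by the expectation of an annulus observable, and single-site SSM compares that expectation under $\mu_{\Lambda,\blambda}$ and $\mu_{\Lambda,\blambda_x}$. Only after this does the Vitali/Yang--Lee step give analyticity.

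\paragraph{Step 2 (decay of correlations).} This step also leaves open the problem you flag. For boundary conditions differing on a large set, $\E_{\mu_1}f-\E_{\mu_2}f=\E_{\mu_2}[(h-1)f]$ with $h=d\mu_1/d\mu_2$. $L^2$-contraction of the semigroup helps only if $\|h-1\|_{L^2(\mu_2)}$ is controlled, and it is not. Restricting the dynamics to a box around $A$ does not help, because the box's boundary condition is random, with a law that depends on exactly what you are trying to forget.

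The remedy is to compare only $\mu_{\Lambda,\blambda}$ and $\mu_{\Lambda,\blambda_x}$ for a single pinned point. Then $\|\mu_{\Lambda,\blambda}-\mu_{\Lambda,\blambda_x}\|_{\Lambda'}=|\Cov_{\Lambda,\blambda}(f,g)|/\E_{\Lambda,\blambda}[g]$, with $f$ the indicator of an event on $\Lambda'$ and $g=e^{-\sum_{y}\phi(x,y)}$. Here $\E g$ is bounded below by Poisson domination. The covariance splits as $\E[P_tf\,P_tg]$, which the gap contracts using only $\Var f,\Var g\le1$, plus a propagation-of-influence term. Since $g$ is only approximately local around $x$, this needs the extension of Bertini--Cancrini--Cesi to approximately local functions. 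General SSM is then recovered via correlation functions.

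\paragraph{Uniqueness.} The same single-site reduction is needed here. With unbounded range, the boundary condition $\bX\setminus\Lambda_n$ changes the activity everywhere in $\Lambda_n$, so there is no far-away region of difference for ``Step 2 shows the local marginals agree'' to apply to. Instead one sums single-site SSM over the boundary points. DLR together with Poisson domination then shows $\E\sum_{y\in\bX\setminus\Lambda_n}e^{-m\dist(y,\Lambda_k)}\to0$ under any Gibbs measure.
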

	
	{To illustrate the uses of Theorem~\ref{thm:main}, we give several examples, computing or bounding $\lambda_\spec$ for specific potentials and deducing the consqeuences.}
	
	\subsubsection{Hard sphere model}
	
	A first example is the \emph{hard sphere model}, {perhaps the most studied specific potential, and one that illustrates the quantiative improvements given by Theorem~\ref{thm:main}}.  For an interaction radius $2r_d$ we define \begin{equation} \label{eq:hard-sphere-potential}
		\phi(x,y) = \begin{cases}
			+\infty &\text{ if }\|x - y\|_2 \leq 2r_d \\
			0 &\text{ otherwise}
		\end{cases}
	\end{equation}
	where we often take $r_d$ to be the radius of the ball of volume $1$ in which case the temperedness constant is given by $C_\phi = 2^{d}$.    
	
	\begin{proposition}\label{prop:hard-spheres}
		For the case of the hard sphere potential, as $d \to \infty$ we have $$
		\lambda_\spec = \left(\frac{e}{2}+o(1)\right)^{d/2}\cdot C_\phi^{-1}\, .
		$$
	\end{proposition}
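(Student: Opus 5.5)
By Lemma~\ref{lem:fourier-criterion}, $\lambda_\spec^{-1}=\esssup_\xi(-\widehat g(\xi))$, where $g=\ind{B(0,2r_d)}$ is the indicator of the ball of radius $R:=2r_d$. This ball has volume $2^d=C_\phi=\widehat g(0)$. The plan is to compute $\widehat g$ explicitly via Bessel functions and show that its most negative value has size $C_\phi\,(2/e)^{d/2+o(d)}$. That is exactly $\lambda_\spec=(e/2+o(1))^{d/2}C_\phi^{-1}$.

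\textbf{Step 1 (explicit formula).} Set $\nu=d/2$. The classical formula for the Fourier transform of a ball gives, with $t=2\pi R|\xi|$,
\[
\frac{\widehat g(\xi)}{\widehat g(0)}=\Gamma(\nu+1)\Big(\frac{2}{t}\Big)^{\nu}J_\nu(t)=:F_\nu(t).
\]
Hence $\lambda_\spec^{-1}=C_\phi\cdot\sup_{t>0}(-F_\nu(t))$. This supremum is positive: $F_\nu$ does take negative values, since $J_\nu$ changes sign. It remains to show $\sup_t(-F_\nu(t))=(2/e)^{\nu}e^{o(\nu)}$.

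\textbf{Step 2 (upper bound).} $F_\nu(t)>0$ for $t<j_{\nu,1}$, the first positive zero of $J_\nu$. The standard fact $j_{\nu,1}>\nu$ then shows that negative values occur only for $t>\nu$. There we use $|J_\nu(t)|\le 1$ to get
\[
-F_\nu(t)\le \Gamma(\nu+1)(2/\nu)^{\nu}.
\]
By Stirling, $\Gamma(\nu+1)(2/\nu)^\nu=(2/e)^{\nu}\cdot O(\sqrt{\nu})$, which is $(2/e)^{\nu+o(\nu)}$.

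\textbf{Step 3 (lower bound, the main obstacle).} We must exhibit some $t$ where $J_\nu(t)$ is negative and not exponentially small, while $t^{-\nu}$ is only subexponentially smaller than $\nu^{-\nu}$. I would use the transition-region (Airy) asymptotics, uniform for $a$ in compact sets:
\[
J_\nu(\nu+a\nu^{1/3})=(2/\nu)^{1/3}\Ai(-2^{1/3}a)+O(\nu^{-1}).
\]
Fix $a>0$ with $\Ai(-2^{1/3}a)<0$; for instance, take $-2^{1/3}a$ just past the first zero of $\Ai$. Then for large $\nu$ we have $J_\nu(t)\le -c\,\nu^{-1/3}$ at $t=\nu+a\nu^{1/3}$. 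Moreover
\[
t^{\nu}=\nu^{\nu}(1+a\nu^{-2/3})^{\nu}=\nu^\nu e^{O(\nu^{1/3})}.
\]
Combining with Stirling gives $-F_\nu(t)\ge (2/e)^{\nu}e^{-O(\nu^{1/3})}$.

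The delicate point is justifying the uniform Airy approximation with an explicit error term; I would cite the standard Olver-type expansion. A more self-contained alternative is to lower bound $|J_\nu|$ near $j_{\nu+1,1}$, which is where $F_\nu$ attains its minimum, using known bounds $j_{\nu,1}=\nu+O(\nu^{1/3})$ together with the interlacing of zeros.

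Combining Steps 2 and 3 gives $\sup_t(-F_\nu(t))=(2/e)^{d/2+o(d)}$. Therefore
\[
\lambda_\spec=C_\phi^{-1}(e/2)^{d/2+o(d)}=\big(e/2+o(1)\big)^{d/2}C_\phi^{-1},
\]
as claimed.
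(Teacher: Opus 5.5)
Your argument is correct, but it reaches the proposition by a coarser and more economical route than the paper.

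The paper locates the minimizer exactly. Using the energy $E=F_\nu^2+(F_\nu')^2$ (Lemma~\ref{lem:crit-points-bessel-decreasing}) together with interlacing of zeros, it shows that $\min_t t^{-\nu}J_\nu(t)$ is attained at $j_{\nu+1,1}$. It then expands $j_{\nu+1,1}$ (Lemma~\ref{lem:jnu+1-expansion}) and evaluates $J_\nu$ there using the Airy approximation. This yields the sharp asymptotic \eqref{eq:hard-sphere-lambda-spec-asympt}, including the polynomial and $\exp(\Theta(d^{1/3}))$ corrections. It also gives, for each fixed $d$, the exact Bessel formula of Lemma~\ref{lem:unit-ball-fourier}, which underlies Table~\ref{tab:hard-sphere-low-dimensions}.

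You instead sandwich the minimum without locating it:
\begin{itemize}
\item Your upper bound uses only three facts: $J_\nu>0$ on $(0,j_{\nu,1})$, $j_{\nu,1}>\nu$, and $|J_\nu|\le 1$.
\item Your lower bound evaluates at a single transition-region point $\nu+a\nu^{1/3}$, with $2^{1/3}a$ chosen between the first two zeros of $\Ai(-\cdot)$.
\end{itemize}
The two bounds differ by a factor of order $\nu^{1/3}e^{a\nu^{1/3}}$. This is within the $e^{o(d)}$ slack permitted by $(e/2+o(1))^{d/2}$, so the argument is complete.

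What your route buys is that it bypasses the critical-point analysis and the expansion of $j_{\nu+1,1}$ entirely. What it gives up is the precise constant and the fixed-dimension values. Both proofs depend on the same uniform Airy expansion of $J_\nu(\nu+a\nu^{1/3})$ for the lower bound, so neither is more self-contained on that point.
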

	
	In particular, we obtain not only the first bound for  uniqueness and analyticity thresholds for the hard sphere model that allows $\activity C_{\potential}$ to increase with higher dimensions, but also an exponential improvement over previous results.
	We in fact compute the leading order asymptotics of $\lambda_\spec$ (see \eqref{eq:hard-sphere-lambda-spec-asympt}).  For each dimension $d$, we explicitly compute $\lambda_\spec$ in terms of Bessel functions and their zeros below. 
	
	One may also parameterize the hard sphere model in terms of the expected density; for a region $\Lambda$ and activity $\lambda$, define $\alpha_{\Lambda}(\lambda)$ by \begin{equation}\label{eq:point-density-definition}
		\alpha_\Lambda(\lambda) =  \frac{\E_{\bX \sim \Lambda,\lambda}[|\bX|]}{\Vol(\Lambda)}\,.
	\end{equation}
	
	A basic calculation shows that $\lambda \mapsto \alpha_{\Lambda}(\lambda)$ is monotone increasing for all $\Lambda$ (see e.g.\ \cite[Lemma 6]{jenssen2019hard}).  
	For repulsive potentials such as the hard sphere model, the random point process $\bX$ is always stochastically dominated by a Poisson point process of intensity $\lambda$ (see Lemma \ref{lem:Poisson-domination}).  In particular, this implies that we always have the upper bound 
	\begin{equation}\label{eq:upper-bound-density}
		\alpha_{\Lambda}(\lambda) \leq \lambda\,.
	\end{equation}  
	All prior results for the hard sphere model obtained results for $\lambda = \Theta(2^{-d})$, which translates to $\alpha_{\Lambda}(\lambda) = \Theta(2^{-d})$ (see \cite{michelen2023analyticity} for a lower bound on the implicit constant).  
	
	Outside of the trivial upper bound \eqref{eq:upper-bound-density}, bounding the density is often challenging.  A theorem of Jenssen--Joos--Perkins \cite[Theorem~2]{jenssen2019hard} shows that for the hard sphere model, for $\lambda \geq 3^{-d/2}$ one has \begin{equation}\label{eq:LB-density}
		\alpha_\Lambda(\lambda) \geq (1 + o_d(1))\frac{\log(2/\sqrt{3}) d}{2^d} \,.
	\end{equation}
	Since Proposition \ref{prop:hard-spheres} shows $$\lambda_{\spec} \geq \left( \frac{e^{1/2}}{2^{3/2}} + o(1)\right)^{d}  \geq 3^{-d/2}\,, $$
	this shows that in fact when parametrizing in terms of the density, our results hold up to $\alpha_{\Lambda}(\lambda) = \Theta(d/ 2^d)$, a factor of $d$ denser than the reach of previous results.  As we will see below, this includes showing that we efficiently sample from the hard-sphere model using a Markov chain at expected density $\Theta(d/2^d)$ in high dimensions.  A physics prediction due to Parisi and Zamponi \cite[Eq.~(54)]{parisi2010mean} suggests that Markov chains should mix slowly at density above $\approx 4.8 \cdot d / 2^d$.  This shows that Parisi and Zamponi predict our result is sharp up to a constant factor of the density.   
	We also note that the spectral threshold yields an improvement in the uniqueness and analyticity thresholds for the hard sphere model in low dimensions as well.

	For the case of hard spheres in fixed dimension $d$, we compute $\lambda_\spec$ in terms of Bessel functions.  In particular, setting $\nu = d/2$ we show in Section \ref{sec:lambda-spec-HS} that 
	\begin{equation*}
		C_\phi \lambda_\spec = \Vol(B_{\R^d}(1)) (2\pi)^{-\nu} ( j_{\nu + 1,1})^{d/2} (-J_{\nu}(j_{\nu + 1,1}))^{-1}
	\end{equation*}
	where $J_\nu$ is the Bessel function of the first kind of order $\nu$ and $j_{\nu+1,1}$ is the first positive zero of $J_{\nu+1}$.  In dimension $2$ this yields $C_\phi \lambda_\spec \approx 7.56$.  There were multiple previous bounds in dimension $2$ in terms of quantities like critical percolation parameters and connective constants, and $\lambda_\spec$ is larger than all: Michelen--Perkins \cite{michelen2025potential} proved uniqueness and analyticity up to $C_\phi \lambda \leq 3.233$ using the potential-weighted connective constant (Monte Carlo simulations in \cite{michelen2025potential} suggest that the connective constant bound is roughly of size $4.38$). Betsch-Last \cite{betsch2023uniqueness} proved uniqueness up to the phase transition for continuum percolation  $C_\phi\lambda < \lambda_{\mathrm{perc}}$; the best rigorous lower bound on this quantity appears to be due to Ziesche \cite{ziesche2018sharpness} with $\lambda_{\mathrm{perc}} \geq 4.48$ and extensive simulations suggest that $\lambda_{\mathrm{perc}} \approx 4.51$ (see \cite{mertens2012continuum} for a survey of these results and \cite{balister2005continuum} for simulations with rigorous confidence intervals).  This is in comparison to the radius of convergence of the cluster expansion, which is only known to hold for $C_\phi \lambda < 1/e \approx 0.37$ and provably cannot hold for $C_\phi \lambda > 1$.
	
	We include in Table \ref{tab:hard-sphere-low-dimensions} the values of $C_\phi \lambda_\spec$ up to the third decimal place.  The previous records for each fixed dimension $d \geq 3$ were essentially only slightly larger than $e$ due to \cite{michelen2025potential}.
	\begin{table}[h!]        
		\centering
		\begin{tabular}{|c||c|c|c|c|c|c|c|c|c|c|}
			\hline
			$\bm{d}$ & 2 & 3 & 4 & 5 & 6 & 7 & 8 & 9 & 10 & 11 \\
			\hline
			$\bm{C_\phi\lambda_{\spec}}$ & 7.559 & 11.604 & 17.059 & 24.327 & 33.921 & 46.481 & 62.809 & 83.911 & 111.038 & 145.749\\
			\hline
		\end{tabular}
		\caption{Rigorous numerical approximations to $\lambda_{\spec}$ for the hard sphere potential in low dimensions.} 
		\label{tab:hard-sphere-low-dimensions}
	\end{table}

	\subsubsection{Gibbs point processes with no phase transition}
	
	We next give examples of repulsive Gibbs point processes for which we prove there is no phase transition.  In particular, if $\lambda_{\spec} = +\infty$ then one obtains uniqueness and analyticity for all $\lambda > 0$.
	
	\begin{corollary}\label{cor:positive-definite}
		Let $\phi$ be a repulsive potential that decays exponentially.  Assume $\phi$ is translation invariant and define $g(x) = 1- e^{-\phi(x,0)}$.  If $\widehat{g}(\xi) \geq 0$ for all $\xi \in \R^d$ then for all $\lambda \geq 0$ there is a unique infinite volume Gibbs measure and the infinite volume pressure is analytic on $[0,\infty)$.
	\end{corollary}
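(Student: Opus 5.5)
This follows from Theorem~\ref{thm:main} once we show $\lambda_\spec = +\infty$ under the hypothesis $\widehat{g} \geq 0$. By \eqref{eq:lambda-spec-translation-invariant} (proved in Lemma~\ref{lem:fourier-criterion}), for translation-invariant $\phi$ we have $\lambda_\spec = (\esssup_\xi(-\widehat g(\xi)))^{-1}$. If $\widehat g \geq 0$ everywhere, this essential supremum is at most $0$. By the stated convention, a supremum of $0$ in \eqref{eq:lambda-spec-definition} means $\lambda_\spec = +\infty$.

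To be careful about the convention, I would check directly that the supremum in \eqref{eq:lambda-spec-definition} equals $0$ and is not negative. Since $\phi$ is repulsive, $0 \le g \le 1$. Temperedness gives $g \in L^1$, so $\widehat g$ is continuous and bounded. For real $f \in L^2$ with $\|f\|_2 = 1$, Plancherel gives
\[
\iint f(x) f(y)\bigl(e^{-\phi(x,y)}-1\bigr)\,\diff x\,\diff y = -\langle f, g * f\rangle = -\int \widehat g(\xi)\,|\widehat f(\xi)|^2\,\diff \xi \le 0,
\]
up to the normalization constant of the Fourier transform. Now take $f$ to be a normalized indicator of a ball of radius $R$ (or a normalized wide Gaussian) and let $R \to \infty$. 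Then $|\widehat f|^2$ concentrates near $\xi = 0$ with total mass $1$, so the quadratic form is approximately $-\widehat g(0)$. This does not tend to $0$, since $\widehat g(0) = \int g > 0$ for nontrivial $\phi$.

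So $f$ must instead be concentrated at high frequencies. Since $g \in L^1$, the Riemann--Lebesgue lemma gives $\widehat g(\xi) \to 0$ as $|\xi| \to \infty$. Choosing $f$ with $\widehat f$ supported in $\{|\xi| > M\}$ (for instance a modulated bump) makes the form as close to $0$ as we like. Hence the supremum is exactly $0$ and $\lambda_\spec = +\infty$. If $\phi \equiv 0$ the form is identically $0$ and the conclusion is trivial.

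With $\lambda_\spec = +\infty$, Theorem~\ref{thm:main} applies, since $\phi$ is repulsive and decays exponentially by hypothesis. It gives uniqueness of the infinite-volume Gibbs measure for every $\lambda < +\infty$, that is, for all $\lambda \ge 0$ (at $\lambda = 0$ the measure is the empty configuration and uniqueness is trivial). It also gives analyticity of the pressure on $[0,\lambda_\spec) = [0,\infty)$.

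The only step needing care is the interpretation of $\lambda_\spec$: confirming that the supremum is $0$ rather than negative, which the Riemann--Lebesgue argument settles. I expect no real obstacle beyond that, since Lemma~\ref{lem:fourier-criterion} and Theorem~\ref{thm:main} do all the substantive work.
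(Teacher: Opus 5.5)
Your proposal is correct and follows the paper's proof, which combines Lemma~\ref{lem:fourier-criterion} (to get $\lambda_{\spec} = +\infty$ from $\widehat{g} \ge 0$) with Theorem~\ref{thm:main}. Your Riemann--Lebesgue check that the supremum is exactly $0$ rather than negative is extra care that the paper leaves implicit.
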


	We include a short Fourier argument  that shows that it is impossible for $\phi$ to  satisfy the hypotheses of Corollary \ref{cor:positive-definite} while simultaneously having a hard-core (Lemma \ref{lem:impossible-core}).

	{One can ask if potentials satisfying the conditions of Corollary~\ref{cor:positive-definite} exist, and if so, whether they avoid phase transitions by not possesing sufficiently ordered ground states that might drive crystallization.  We will in fact show that there are potentials with no phase transition whose ground states are ordered in the strongest sense.}
	Consider the potential $\phi$ defined by \begin{equation} \label{eq:positive-def-phi}
		\phi(x,y) = \log\left(\frac{1}{1 - \exp(-\|x - y\|^2)}\right)\,.
	\end{equation}
	In this case, we see that $g(x) = \exp(-\|x \|^2)$ and so $\widehat{g}(\xi) \geq 0$ for all $\xi \in \R^d$, and so  Corollary~\ref{cor:positive-definite} implies that the potential $\phi$ does not exhibit a phase transition at any $\lambda > 0$.   While there are many choices for potential $\phi$ which yield $\widehat{g}(\xi) \geq 0$ for all $\xi$, the potential \eqref{eq:positive-def-phi} is in fact a completely monotonic function of distance squared as can be seen by the expansion \begin{align*}
		\phi(x,y) = \log\left(\frac{1}{1 - \exp(-\|x - y\|^2)}\right) = \sum_{k \geq 1} \frac{e^{-k\|x - y\|^2}}{k} \,.
	\end{align*}
	
	As such, the breakthrough work of Cohn-Kumar-Miller-Radchenko-Viazovska \cite{cohn2022universal} shows that the potential $\phi$ defined in \eqref{eq:positive-def-phi} has a unique ground state in dimensions $8$ and $24$ up to isometries; the ground states are the $E_8$ and Leech lattices, respectively, which were shown to be the unique optimal sphere packing configurations in work by Viazovska \cite{viazovska2017sphere} and Cohn-Kumar-Miller-Radchenko-Viazovska \cite{cohn2017sphere}.  More precisely, this means that if we fix any density $\rho$, the unique energy minimizer over point configurations at density $\rho$ in dimensions $8$ and $24$ are given by a rescaled version of the $E_8$ and Leech lattice respectively.  Despite the potential having highly structured and unique ground states at each density, Corollary \ref{cor:positive-definite} shows that at \emph{all} activities the Gibbs point process remains gas-like. 
	
	It is conjectured that in dimension $2$ completely monotonic potentials have unique ground state given by the hexagonal lattice---which is the unique optimizer for the sphere packing problem---but this remains open (see Montgomery \cite{montgomery1988minimal} for a proof that the hexagonal lattice is optimal among lattices, and \cite{cohn2022universal} for further context and references).

	\subsubsection{Gaussian core model}  
	The Gaussian core model is another widely studied potential for which we obtain a significant quantitative improvement. 
	The Gaussian core model is given by the potential $$\phi(x,y) = \beta e^{-\|x - y\|_2^2 / 2}$$ where $\beta > 0$ is an inverse temperature parameter. We will work in the regime where $\beta$ is fixed and $d \to \infty$.  In this setting, work of Cohn and de Courcy-Ireland \cite{cohn2018gaussian} show that the approximate ground states at each fixed density are given by the point configurations of random lattices.
	
	In the regime where $\beta$ is fixed and $d \to \infty$ we obtain an exponential improvement over prior results.  In particular, defining the temperedness constant $$C_\phi = \int_{\R^d} 1 - e^{-\phi(x,0)} \,\diff x$$
	then prior results \cite{michelen2023analyticity,michelen2025potential} showed uniqueness and analyticity up to $\lambda \leq e/C_\phi$.  We will see in Lemma \ref{lem:gaussian-core-bound} that there is a constant $C_\beta > 0$ so that $\lambda_{\spec} \geq C_\beta 2^d / C_\phi$.

	\subsection{Temporal and spatial mixing}

	We also prove optimal mixing (i.e., asymptotically optimal mixing times) of a natural Markov chain known as \emph{block dynamics} or \emph{heat bath dynamics}.  Block dynamics with update radius $L$ on a bounded measurable set $\Lambda$ is the Markov chain where given a configuration $X$, the next configuration is obtained by choosing a point $x \in \Lambda$ uniformly at random and updating the configuration in $B_L(x)$ according to the finite-volume Gibbs measure associated to $\lambda,\phi$ subject to the boundary condition $X \cap B_L(x)^c$ (see Definition \ref{def:block_dynamics} for the formal details).  We prove optimal mixing of block dynamics up to $\lambda_{\spec}$.
	
	\begin{theorem}\label{thm:block-dynamics}
		Let $\phi$ be a repulsive potential that decays exponentially.  Then for all $\lambda < \lambda_{\spec}$, there is some $L_0 = L_0(\lambda,\phi) > 0$ so that for all activity functions $\blambda \leq \lambda$ and block size $L \geq L_0$, the block dynamics with update radius $L$ have mixing time $O(\Vol(\Lambda) \log \Vol(\Lambda) )$ for all boxes $\Lambda = [-n,n]^d \subset \R^d$.
	\end{theorem}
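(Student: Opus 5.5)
The plan is to go through spectral independence or, more directly, a uniform spectral gap and entropy decay, following the standard route from spatial mixing to optimal mixing of block dynamics. There are three steps.

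\textbf{Step 1: uniform spectral gap.} The first ingredient is the Kondratiev--Kuna--Ohlerich spectral gap. For $\lambda < \lambda_\spec$ and any activity function $\blambda \le \lambda$, the birth--death generator on any bounded $\Lambda$, with any boundary condition, has spectral gap at least $1 - \lambda/\lambda_\spec$. This bound is uniform in $\Lambda$ and in the boundary condition. The reason is that the positive-definiteness condition only involves $\phi$ and $\sup \blambda$, and boundary conditions only multiply $\blambda$ by factors $e^{-\sum\phi}\le 1$, since $\phi$ is repulsive.

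\textbf{Step 2: strong spatial mixing.} From this uniform gap I would derive exponential decay of covariances, and then strong spatial mixing. For a local function $F$ supported near $A$ and boundary perturbations far away, I would use the Poincaré inequality. I would also use the fact that birth--death dynamics has finite speed of propagation in the sense of exponential decay of influence, since $\phi$ decays exponentially. Together these show that the influence of a boundary change at distance $r$ on the marginal near $A$ is $e^{-\Omega(r)}$, with constants depending on $\lambda,\phi$ only. Concretely, I would run the dynamics for time $t \approx c r$ from two coupled boundary conditions. The gap controls convergence in time, and exponential decay of $\phi$ together with Poisson domination (Lemma~\ref{lem:Poisson-domination}) controls how fast a disagreement can spread in space.

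\textbf{Step 3: from spatial mixing to optimal mixing.} With strong spatial mixing uniform in boundary conditions and activities $\blambda\le\lambda$, I would apply the Martinelli--Olivieri/Cesi framework adapted to the continuum. Strong spatial mixing gives an approximate tensorization of entropy (or at least of variance) over overlapping balls of radius $L$, provided $L\ge L_0$ is large enough that the error terms $e^{-\Omega(L)}$ are beaten by the overlap. This yields a modified log-Sobolev constant, or a relative entropy contraction, of order $1/\Vol(\Lambda)$ per step for the discrete-time block dynamics. Mixing in $O(\Vol(\Lambda)\log \Vol(\Lambda))$ steps then follows, since the initial entropy is $O(\Vol(\Lambda))$ up to logarithmic factors. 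To control the initial entropy I would start from, or reduce to, configurations of bounded density, using Poisson domination and a standard burn-in argument.

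\textbf{Expected obstacle.} The main obstacle is the passage from a spectral gap to an \emph{entropy} contraction uniformly in the unbounded spin space. The worst-case starting configuration can have arbitrarily many points, so the initial entropy is unbounded. I would first establish the result via variance (giving $O(\Vol(\Lambda)^2)$) and then upgrade it. The upgrade would use block factorization of entropy, derived from strong spatial mixing via a Cesi-type argument, and it must handle the fact that blocks of radius $L$ contain unboundedly many points. I would handle this by noting that after a single update of each block (coupon collector, $O(\Vol(\Lambda)\log\Vol(\Lambda))$ steps) the configuration is dominated by Poisson, which removes the bad initial states.
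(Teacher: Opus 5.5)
\textbf{Step 3 is a gap.} Steps 1 and 2 broadly follow the paper's first half: the Kondratiev--Kuna--Ohlerich gap, then decay of influence for the birth--death semigroup. Step 3, however, is where the proof has to happen, and as written it is an assertion rather than an argument.

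You claim that SSM yields approximate block factorization of entropy over radius-$L$ balls, and hence relative-entropy contraction at rate $1/\Vol(\Lambda)$. You give no argument, and no available framework supplies this implication here. Cesi-type quasi-factorization needs a multiplicative, $L^\infty$-type decoupling of the $\sigma$-algebras generated by the complements of two overlapping blocks, uniformly over boundary conditions. On the lattice, the passage from total-variation SSM to such a bound relies on finite single-site spin spaces and finite-range interactions. None of that is available here:
\begin{itemize}
\item a block can contain unboundedly many points;
\item the boundary condition $\bX_t\setminus B_L(y)$ is an arbitrary point cloud;
\item $\phi$ has infinite range and may have a hard core.
\end{itemize}

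Your treatment of the initial state also fails as stated. From a deterministic start the relative entropy is infinite. Stochastic domination by a Poisson process after a burn-in gives no bound on $D(\mathrm{Law}(\bX_t)\,\|\,\mu_{\Lambda,\blambda})$; that would require a bound on the density.

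There is also a smaller issue in Step 2. The gap gives only $L^2(\mu)$ convergence, so running two coupled dynamics is controlled by the $\chi^2$-distance of the starting law. For a single pinned point this is bounded, since the density is $g/\E g$ with $g(Y)=e^{-\sum_{y\in Y}\phi(x,y)}$ and $\E g$ bounded below. For general boundary changes it can be exponential in the volume. This is why the paper derives only \emph{single-site} SSM from the gap, writing the total-variation distance as $\Cov(f,g)/\E g$, and handles general SSM separately.

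\textbf{What the paper does instead.} The paper avoids functional inequalities for the block dynamics entirely. Using the metric $\Delta(X,Y)=|X\triangle Y|$, it applies Bubley--Dyer path coupling. For $Y=X\cup\{x\}$ and update centre $y$, it couples as follows:
\begin{itemize}
\item if $\dist(x,y)\le L$, the two updates are identical;
\item if $\dist(x,y)\ge L+\ell$, it uses a TV-optimal coupling of the two block laws;
\item in between, it uses a TV-optimal coupling outside $B_\ell(x)$ together with an independent resample inside $B_\ell(x)$.
\end{itemize}
Single-site SSM in total variation, Cauchy--Schwarz and Poisson moments bound the expected new discrepancy. Averaging over $y$, with $\ell=\Theta(\log L)$ and $L\ge L_0$, gives $W_\Delta(P(X,\cdot),P(Y,\cdot))\le 1-\delta/\Vol(\Lambda)$. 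Hence
$\|P^t(X,\cdot)-\mu_{\Lambda,\blambda}\|_{\mathrm{TV}}\le e^{-\delta t/\Vol(\Lambda)}(\lambda\Vol(\Lambda)+|X|)$.
A bad initial state now enters only through $|X|$. The coupon-collector burn-in (Lemma~\ref{lemma:burn_in}) reduces $|X|$ to $O(\Vol(\Lambda)\log(1/\varepsilon))$ with high probability, which gives the $O(\Vol(\Lambda)\log\Vol(\Lambda))$ bound.

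To repair your proposal, replace Step 3 with this path-coupling argument. You then need only total-variation single-site SSM, with no entropy factorization and no bound on the entropy of the initial state.
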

	
	While the connection between a spectral gap and mixing bounds for the corresponding dynamics is classical, it often provides a sub-optimal mixing bound.  In particular, using only the information of the spectral gap provided by \cite{kondratiev2013spectral}, one would obtain\footnote{In the case where the potential does not have a hard core, one can bound the spectral gap of block dynamics to that of the birth-death dynamics by a comparison argument.} a mixing time of $O(\Vol(\Lambda)^2)$ rather than our bound of  $O(\Vol(\Lambda) \log \Vol(\Lambda))$, which is optimal.

	Underpinning all of our results is the use of a notion of spatial mixing.  In the case $\phi$ is finite-range, the notion of strong spatial mixing is sufficient to capture the change in a Gibbs measure under the inclusion of an additional point.  Since our potentials may have infinite-range, we introduce an additional notion of spatial mixing to capture exponential decay of the influence of including one additional point.  We call this notion \emph{single-site Strong Spatial Mixing} (single-site SSM). 
	
	To formalize this, for a given measurable function $\blambda: \R^d \to [0,\lambda]$, we let $\mu_{\Lambda,\blambda}$ denote the Gibbs distribution on $\Lambda$ with activity function $\blambda$ (see Section \ref{sec:finite-volume-prelims} for a precise definition of $\mu_{\Lambda,\blambda}$). 
	For a point $x \in \R^d$, and activity function $\blambda$, define $\blambda_x : \R^d \to [0,\lambda]$ via $\blambda_x(v) = \blambda(v) e^{-\phi(x,v)}$ (see Section \ref{sec:pinnings} for our more general definition of an activity function with pinnings).  For a Borel set $\Delta$, a bounded set $\Lambda$, and activity functions $\blambda$ and $\blambda'$ we define the projected total variation distance $\|\mu_{\Lambda,\blambda} -  \mu_{\Lambda,\blambda'}\|_{\Delta}$ as the total variation distance between the pushforwards of $\mu_{\Lambda,\blambda}$ and $\mu_{\Lambda,\blambda'}$ under the map $X \mapsto X \cap \Delta$ for a point set $X$.

	\begin{definition}[Single-site Strong Spatial Mixing]\label{def:spatial_mixing}
		A potential $\potential$ satisfies \emph{single-site Strong Spatial Mixing} (single-site SSM)  up to some activity $\activity \in \R_{\ge 0}$ if there exist constants $C < \infty$ and $m > 0$ such that, for all bounded measurable $\region' \subseteq \region \subset \R^d$, all $x \in \R^d$ and all activity functions $\activities \le \activity$ it holds that 
		\[
		\|\mu_{\region, \activities} - \mu_{\region, \activities_x}\|_{\region'} \le C  \vol(\region')  e^{- m \cdot \dist(x, \region')},
		\]
		where for the distance $\dist(x, \region')$ we take $\inf\{\|x-y\|_2\mid y\in\Lambda'\}$.
	\end{definition}
	
	We also recall the notion of Strong Spatial Mixing (SSM) for point measures, introduced in \cite{michelen2022strong}.

	\begin{definition}[Strong Spatial Mixing] \label{def:SSM}
		A potential $\phi$ satisfies \emph{Strong Spatial Mixing} (SSM) up to some activity $\lambda \in \R_{\geq 0}$ if there exist constants $C > 0$ and $m > 0$ so that for all bounded and measurable $\Lambda' \subseteq \Lambda \subset \R^d$ and all activity functions $\blambda,\blambda' \leq \lambda$ it holds that $$\|\mu_{\Lambda,\blambda} - \mu_{\Lambda,\blambda'}\|_{\Lambda'} \leq C \Vol(\Lambda') e^{-m \cdot\dist(\Lambda', \mathrm{supp}(\blambda - \blambda'))}\,.$$
	\end{definition}
	
	We note that when $\phi$ has finite-range, SSM implies single-site SSM.  In general, the two are not immediately equivalent.  
	We will show that both occur up to $\lambda_\spec$ when $\phi$ decays exponentially, which yields various algorithmic implications discussed below.
	
	\begin{theorem}\label{thm:spatial-mixing}
		Let $\phi$ be a repulsive potential that decays exponentially.  Then $\phi$ satisfies SSM and single-site SSM up to $\lambda_\spec$.
	\end{theorem}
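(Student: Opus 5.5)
The plan is to use the spectral condition through the Kondratiev--Kuna--Ohlerich mechanism, but for the \emph{covariance structure} of finite-volume Gibbs measures rather than for the dynamics directly. Fix $\lambda < \lambda_\spec$ and an activity function $\blambda \le \lambda$, possibly with pinnings. The key estimate to isolate is an operator bound on the truncated two-point correlation (Ursell) kernel of $\mu_{\Lambda,\blambda}$. Write $\rho_1,\rho_2$ for the correlation functions and $u(x,y) = \rho_2(x,y) - \rho_1(x)\rho_1(y)$. I expect to show that the covariance operator $f \mapsto \Var_{\mu}(\sum_{x\in \bX} f(x))$ is bounded by $(1-\lambda/\lambda_\spec)^{-1}\int \rho_1 f^2$. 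This uniform-in-$\Lambda$ and uniform-in-$\blambda$ variance bound is exactly the content of the spectral gap of birth--death dynamics: the Dirichlet form satisfies $\dirichlet(F,F) \ge (1-\lambda/\lambda_\spec)\Var(F)$, and we test it on linear statistics $F=\sum_{x\in\bX} f(x)$. The step uses only \eqref{eq:lambda-spec-definition} and repulsiveness ($\rho_1\le\blambda$, via Lemma~\ref{lem:Poisson-domination}).

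Second, I would upgrade this $L^2$ bound to exponential decay in space. Insert exponential weights by conjugation. Replace $f$ by $e^{a\langle v,\cdot\rangle}f$, or more robustly by $e^{a\,\dist(\cdot,x_0)}$-type weights, and redo the spectral computation. The kernel $(e^{-\phi}-1)$ decays exponentially, so conjugating it by $e^{a|x-y|}$-weights perturbs the quadratic form by $O(a)$ in operator norm (Schur test, for $a$ less than the decay rate). Hence for small $a>0$ the constant $\lambda/\lambda_\spec + O(a)$ stays below $1$. This yields a weighted $L^2$ bound on $u$, and local $L^\infty$ smoothing then gives $|u(x,y)| \le C e^{-m|x-y|}$ after averaging over unit cells. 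The needed local control of $\rho_1,\rho_2$ comes from Poisson domination and the Kirkwood--Salsburg/GNZ equation $\rho_1(x)=\blambda(x)\E[e^{-\sum_{y\in\bX}\phi(x,y)}]$.

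Third, convert correlation decay into the two mixing statements. For single-site SSM, the GNZ identity shows that $\mu_{\Lambda,\blambda_x}$ is the Palm-type reweighting of $\mu_{\Lambda,\blambda}$ by $e^{-\sum_y \phi(x,y)}$. Interpolating the pinning strength $t\phi(x,\cdot)$, $t\in[0,1]$, expresses the derivative of $\mu(A\cap\Lambda')$ as a covariance between $\ind{A}$ (an event measurable in $\Lambda'$) and $\sum_{y}\phi(x,y)$, computed under a measure that is again of the allowed form with activities $\le\lambda$. Bounding such covariances requires higher-order truncated correlations, not just $u$. For these I would apply the same spectral/weighted argument to the Dirichlet form tested on $F=\ind{A}$ against linear statistics. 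Equivalently, I would use the Poincaré inequality for the conditional measures together with a cluster-type expansion in $\Lambda'$, giving a bound $C\Vol(\Lambda')e^{-m\dist(x,\Lambda')}$. SSM is handled identically: change $\blambda$ to $\blambda'$ along a path and integrate $\int_{\mathrm{supp}(\blambda-\blambda')}$ of the single-site influence over the support of the difference, using exponential decay to sum.

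The main obstacle is the second and third steps together. The spectral gap naturally controls $L^2$ objects of linear statistics, while total variation on $\Lambda'$ requires control of arbitrary events. So the hardest step is deriving the projected TV bound from weighted Poincaré-type estimates. I would first try the weighted (exponentially tilted) Dirichlet form argument applied to $\ind{A}$, using reversibility of the birth--death generator to move the weight across. If that fails, I would fall back on a recursive/disagreement coupling argument seeded by the decay of $u$.
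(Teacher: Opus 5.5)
There is a genuine gap, and it is the step you flag as the main obstacle. Nothing in your plan gives spatial decay of $\Cov_{\Lambda,\blambda}(f,g)$ for a \emph{general} bounded observable $f$ of $\bX\cap\Lambda'$, such as $\ind{A_{\Lambda'}}$; single-site SSM needs exactly that. Testing the Poincar\'e inequality on linear statistics only gives $\iint f(x)f(y)u(x,y)\,\diff x\,\diff y\le C\int\rho_1 f^2$. This is an operator-norm bound on the Ursell kernel and carries no spatial information: a kernel $\eta\,h(x)h(y)$ with $h$ spread out satisfies it. Your proposed cure, conjugating by $e^{a\,\dist(\cdot,x_0)}$ and ``redoing the spectral computation'', has no underlying object. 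The conjugated form is not the variance of anything, and Theorem~\ref{thm:Kondratiev} is a statement about the generator on configuration space. A weighted version would need a Combes--Thomas-type conjugation of $L_{\Lambda,\blambda}$ itself, which you do not set up; Schur-test smallness of the conjugated kernel $1-e^{-\phi}$ does not address this. Even granting decay of $u$, arbitrary events need control of all higher truncated correlations, which you leave open. Your fallback is also circular. Testing the Dirichlet form on $\ind{A_{\Lambda'}}$ against a linear statistic gives only $\sqrt{\Var\cdot\Var}$ by Cauchy--Schwarz. Moreover, by the GNZ equation, for $h$ supported off $\Lambda'$,
$\Cov(\ind{A_{\Lambda'}},\sum_{y\in\bX}h(y))=\int h(y)\rho_1(y)\big(\mu_{\Lambda,\blambda_y}(A_{\Lambda'})-\mu_{\Lambda,\blambda}(A_{\Lambda'})\big)\,\diff y$,
so this covariance \emph{is} an average of single-site influences. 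Interpolating the pinning strength $t\phi$ also fails for hard cores; you should interpolate activities, or use $\Cov(f,g)/\E[g]$ with $g=e^{-\sum_{y}\phi(x,y)}$.

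The missing idea is the Bertini--Cancrini--Cesi mechanism the paper uses: temporal relaxation combined with finite speed of propagation of the birth--death semigroup. For mean-zero $f,g$, write $\Cov(f,g)=\E[P_tf\,P_tg]+\E[P_t(fg)-P_tf\,P_tg]$.
\begin{itemize}
\item The first term is at most $\|f\|_2\|g\|_2e^{-ct}$ by Lemma~\ref{lemma:L2_contraction}.
\item The second term equals $-2\int_0^t\dirichlet_{\Lambda,\blambda}(P_sf,P_sg)\,\diff s$. It is controlled through $F_s(x)=\|D_x^+P_sf\|_\infty$, which obeys a Gronwall inequality with kernel $\lambda(1-e^{-\phi(x,y)})$ whose iterates are at most $b^ke^{-\alpha\dist(x,y)/2}$. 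So influences grow at most like $e^{Mt}$ in time while decaying exponentially in space.
\end{itemize}
Approximately local functions are what make this work for infinite-range $\phi$ and for $g=e^{-\sum_y\phi(x,y)}$. Choosing $t\propto\dist(\Lambda_f,\Lambda_g)$ gives Proposition~\ref{prop:covariance} for arbitrary bounded (approximately) local $f,g$, hence single-site SSM for $\Vol(\Lambda')\ge 1$. Small volumes are then handled via one-point correlation functions and Lemma~\ref{lem:ssm-via-densities}. Finally, your path-integration step for SSM works when $\Lambda'$ is a ball. For general $\Lambda'$, however, the resulting bound $\Vol(\Lambda')\int_{\mathrm{supp}(\blambda-\blambda')}e^{-m\dist(u,\Lambda')}\,\diff u$ depends on the geometry of $\Lambda'$ (for example, many tiny, widely separated pieces), not only on its volume. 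The paper instead compares one-point correlation functions pointwise, using the mixing bound only on balls, and then applies Lemma~\ref{lem:ssm-via-densities}.
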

	
	In fact, we will show that in the setting of Theorem \ref{thm:spatial-mixing}, single-site SSM implies SSM (see Proposition \ref{prop:SSM}) but this appears to be non-trivial and uses the machinery built for SSM in \cite{michelen2022strong}.  A similar argument shows that SSM implies single-site SSM in the setting of Theorem \ref{thm:spatial-mixing}.

	As a final notion of spatial mixing, we will see that the $k$-point correlation functions of the infinite-volume measure (which is unique by Theorem \ref{thm:main}) enjoy exponential decay.  For an activity function $\blambda$ for which there is a unique infinite-volume Gibbs measure, let $\rho_{\blambda}(v_1,\ldots,v_k)$ denote the $k$-point correlation function for the unique infinite-volume Gibbs measure.  
	
	We will see exponential decay of $k$-point correlation functions up to $\lambda_{\spec}$.
	
	\begin{theorem}\label{thm:decay-of-k-point-correlations}
		Let $\phi$ be a repulsive potential that decays exponentially.  For each $\lambda < \lambda_{\spec}$ there are constants $C,m > 0$ depending on $\phi,d,\lambda$ so that for all $k,\ell \in \mathbb{N}$ and $\mathbf{v} = (v_1,\ldots,v_k)$ and $\mathbf{w} = (w_1,\ldots,w_\ell)$ we have $$\left|\rho_{\blambda}(\mathbf{v},\mathbf{w}) - \rho_{\blambda}(\mathbf{v})\rho_{\blambda}(\mathbf{w}) \right| \leq C^{k + \ell} e^{-m \cdot\dist(\mathbf{v},\mathbf{w})}$$
		where we write $\dist(\mathbf{v},\mathbf{w}) = \min_{i,j} \dist(v_i,w_j).$
	\end{theorem}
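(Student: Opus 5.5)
Our approach is to reduce the decay of truncated $k$-point correlations to single-site SSM, i.e.\ Theorem~\ref{thm:spatial-mixing}, applied to pinned activity functions. The starting point is the standard identity for correlation functions of a finite-volume Gibbs measure with activity $\blambda$ on $\Lambda$:
\[
\rho_{\Lambda,\blambda}(\mathbf{v},\mathbf{w}) = \rho_{\Lambda,\blambda}(\mathbf{w})\,\rho_{\Lambda,\blambda_{\mathbf{w}}}(\mathbf{v}),
\]
where $\blambda_{\mathbf{w}}(u)=\blambda(u)\prod_j e^{-\phi(w_j,u)}$ is the activity with the points $w_1,\dots,w_\ell$ pinned. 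This follows directly from the GNZ/Janossy representation; for repulsive $\phi$ the pairwise factors $e^{-\phi(v_i,w_j)}\le 1$ between $\mathbf v$ and $\mathbf w$ must be accounted for, so more precisely
\[
\rho(\mathbf v,\mathbf w)=\rho(\mathbf w)\,\rho_{\blambda_{\mathbf w}}(\mathbf v)\prod_{i,j}e^{-\phi(v_i,w_j)}.
\]
Consequently
\[
\rho(\mathbf v,\mathbf w)-\rho(\mathbf v)\rho(\mathbf w)=\rho(\mathbf w)\Big[\rho_{\blambda_{\mathbf w}}(\mathbf v)\textstyle\prod e^{-\phi(v_i,w_j)}-\rho_{\blambda}(\mathbf v)\Big].
\]
We then pass to infinite volume. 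By Theorem~\ref{thm:main} (and SSM), the correlation functions $\rho_{\Lambda,\blambda}$ converge as $\Lambda\nearrow\R^d$ for every activity function bounded by $\lambda$, pinned ones included. It therefore suffices to prove bounds uniform in $\Lambda$.

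The bound has three ingredients. First, Poisson domination (Lemma~\ref{lem:Poisson-domination}) gives $\rho(\mathbf w)\le\lambda^\ell$ and $\rho(\mathbf v)\le \lambda^k$. Second, exponential decay of $\phi$ gives $1-\prod_{i,j}e^{-\phi(v_i,w_j)}\le k\ell\, e^{-m_0\dist(\mathbf v,\mathbf w)}$, which is absorbed into $C^{k+\ell}e^{-m\dist}$. Third, and this is the main term, we must bound $|\rho_{\blambda_{\mathbf w}}(\mathbf v)-\rho_{\blambda}(\mathbf v)|$. We telescope over pinning one $w_j$ at a time, $\blambda\to\blambda_{w_1}\to\dots\to\blambda_{\mathbf w}$; each intermediate activity is still $\le\lambda$, so single-site SSM applies uniformly. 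For each step it is enough to compare $\rho_{\blambda'}(\mathbf v)$ and $\rho_{\blambda'_{x}}(\mathbf v)$ where $x=w_j$.

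The key step, and the one we expect to be the main obstacle, is converting total variation closeness into pointwise closeness of densities. A total variation bound on $X\cap\Lambda'$ controls only integrated correlations, not the value at the points $\mathbf v$. To handle this we iterate the same identity in the other direction:
\[
\rho_{\blambda'}(\mathbf v)=\prod_i \blambda'(v_i)\cdot\prod_{i<i'}e^{-\phi(v_i,v_{i'})}\cdot \frac{Z(\blambda'_{\mathbf v})}{Z(\blambda')}.
\]
The ratio of partition functions telescopes as $\prod_i \frac{Z(\blambda'_{v_1\dots v_i})}{Z(\blambda'_{v_1\dots v_{i-1}})}$, and each factor is a Poisson-type expectation $\E_{\mu}\big[\prod_{y\in X}e^{-\phi(v_i,y)}\big]$ under a pinned measure. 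Hence the task reduces to comparing $\E_{\mu_{\blambda''}}[F]$ with $\E_{\mu_{\blambda''_x}}[F]$ for $F(X)=\prod_{y\in X}e^{-\phi(v,y)}\in[0,1]$. We decompose $F$ according to the region $B(v,R)$ with $R=\dist(\mathbf v,\mathbf w)/2$. The dependence of $F$ on points outside $B(v,R)$ costs at most $\E|X\setminus B(v,R)|$-weighted $\phi(v,\cdot)$, which is $\le\lambda\int_{B(v,R)^c}\phi(v,y)\,dy\lesssim e^{-cR}$ by Poisson domination and exponential decay; this requires a mild integrability such as $\phi\le C(1-e^{-\phi})$ away from cores, which we would check from Definition~\ref{def:potential_decay}. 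The dependence inside $B(v,R)$ is handled by single-site SSM with $\Lambda'=B(v,R)$, giving $CR^d e^{-mR}$. We also need the factors to be bounded below. Since $\E[F]\ge \exp(-\lambda\int\phi(v,\cdot))$ fails for hard cores, we instead use the lower bound $\E[F]\ge \P(X\cap B(v,r_0)=\emptyset)\cdot e^{-O(1)}\ge c>0$. This comes from Poisson domination together with a local argument, and it lets us pass from additive errors to ratio errors. Multiplying $k$ factors yields $C^{k}$, summing over $\ell$ pinnings yields $\ell\le 2^\ell$, and shrinking $m$ absorbs polynomial factors in $R$. This gives the claimed $C^{k+\ell}e^{-m\dist(\mathbf v,\mathbf w)}$.
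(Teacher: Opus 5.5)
Your proposal is correct and is essentially the paper's argument: pass to finite volume via uniqueness (Fact~\ref{fact:convergence-corr-fcns}), factor $\rho(\mathbf v,\mathbf w)$ through a pinned correlation function, telescope over one pinning at a time, and control each step with the iterated one-point identity, truncation to a ball, single-site SSM on that ball, and a uniform lower bound on $\E[\prod_{y\in\bX}e^{-\phi(v,y)}]$; this is exactly Lemma~\ref{lem:1-pt-SSSSM} and Corollary~\ref{cor:multipoint-SSSSM}. Two of your detours are unnecessary if you work with $1-e^{-\phi}$ instead of $\phi$: Lemma~\ref{lem:complexproductdifference} bounds the truncation error by $\lambda\int_{B(v,R)^c}(1-e^{-\phi(v,y)})\,\diff y$ with no integrability check, and since $X\mapsto\prod_{y\in X}e^{-\phi(v,y)}$ is decreasing, Poisson domination gives the lower bound $\exp(-\lambda\int(1-e^{-\phi(v,y)})\,\diff y)=e^{-O(1)}$ directly, even with a hard core (also, your first identity $\rho(\mathbf v,\mathbf w)=\rho(\mathbf w)\rho_{\blambda_{\mathbf w}}(\mathbf v)$ was already exact because $\blambda_{\mathbf w}(v_i)$ carries the cross factors $e^{-\phi(v_i,w_j)}$, so the extra product you inserted is spurious, though harmless for the bound).
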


	\subsection{Algorithmic implications}
	Efficient algorithms for sampling configurations from a Gibbs point process and approximating partition functions have been studied extensively~\cite{friedrich2022spectral,michelen2022strong,jerrum2019perfect,anand2023perfect,FGKKP2025,jenssen2024quasipolynomial}.
	While most algorithms are restricted to repulsive, bounded-range pair potentials
	~\cite{friedrich2022spectral,michelen2022strong,jerrum2019perfect,anand2023perfect}, a few algorithmic results also extend beyond repulsive~\cite{jenssen2024quasipolynomial} or beyond bounded-range potentials~\cite{FGKKP2025}. 
	The aforementioned results give asymptotic run-time guarantees up to $\lambda<e/C_\phi$\footnote{Or $\lambda<e/\Delta_\phi$ using the results of \cite{michelen2025potential}.}.
	We note that, by efficiently implementing each step of the heat bath dynamics via rejection sampling, Theorem \ref{thm:block-dynamics} immediately yields and efficient approximate sampling algorithm all the way up to $\lambda_\spec$.
	Combining this with a standard equivalence between counting and sampling for spin systems (see e.g.\ the self-reducibility argument in \cite{michelen2022strong}) this further yields an efficient Monte Carlo approximation of the partition function in the same activity regime.
	Lastly, when restricted to bounded-range potentials, combining our SSM result in Theorem \ref{thm:spatial-mixing} with \cite{anand2023perfect} yields an exact sampling algorithm with near linear running time up to $\lambda_\spec$.

	\subsection{The canonical ensemble}
	In this short section we record some results which establish analyticity in the \emph{canonical ensemble} associated to the Gibbs points process considered above (which is the grand canonical ensemble). 
	
	In the canonical ensemble, the density of points is fixed and configurations are chosen with probability proportional to $e^{-H(\mathbf{x})}$. For $k$ and $\Lambda$ define the finite-volume canonical partition function $$\widehat{Z}_{\Lambda}(k) = \frac{1}{k!} \int_{\Lambda^{k}} e^{-H(\mathbf{x})}\,\diff \mathbf{x}\,.$$
	For $\alpha > 0$, define the pressure of the canonical ensemble to be  
	$$\psi(\alpha) := \lim_{n \to \infty} \frac{1}{\Vol(\Lambda_n)}\log \widehat{Z}_{\Lambda_n}(\lfloor \alpha \Vol(\Lambda_n)\rfloor ) 
	$$
	where existence of the limit\footnote{In the case where $\phi$ has a hard core, this limit may be given by $-\infty$ for $\alpha$ large enough.} was proven by Fisher \cite{fisher1964free} (see also Ruelle \cite{ruelle1969statistical}).   
	
	Using Theorem \ref{thm:main} we will see that the the limiting infinite-volume density in the grand canonical ensemble $$\alpha(\lambda):= \lim_{n \to \infty} \alpha_{\Lambda_n}(\lambda)
	$$
	exists and is analytic for $\lambda \in (0,\lambda_\spec)$.  
	We will also see that $\alpha$ is strictly increasing on $(0,\lambda_\spec)$ and therefore the limit
	\[
	\alpha_{\spec}
	\coloneqq
	\lim_{\lambda\uparrow\lambda_{\spec}}\alpha(\lambda)\, .
	\]
	exists in \([0,\infty]\).  We first deduce analyticity for the pressure of the canonical ensemble.

	\begin{theorem}\label{thm:analyticity-canonical}
		Let $\phi$ be a repulsive potential that decays exponentially.  Then $\psi$ is analytic for all $\alpha \in (0,\alpha_\spec)$.
	\end{theorem}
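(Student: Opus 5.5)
The plan is to deduce analyticity of $\psi$ from the analyticity of the grand canonical pressure $p$ given by Theorem~\ref{thm:main}, via the Legendre transform. Write $\lambda = e^{\mu}$, so that $\mu \mapsto p(e^\mu)$ is convex on $\R$. The first step is to record the standard equivalence of ensembles for repulsive tempered potentials (Fisher \cite{fisher1964free}, Ruelle \cite{ruelle1969statistical}): the canonical pressure is finite and concave on the interval of densities where it is finite, and
\[
\psi(\alpha) = \inf_{\mu \in \R} \bigl( p(e^{\mu}) - \mu\alpha \bigr), \qquad p(e^\mu) = \sup_{\alpha}\bigl(\psi(\alpha) + \mu\alpha\bigr).
\]
The lower bound $\log Z_\Lambda(\lambda) \ge \log \widehat Z_\Lambda(k) + k\log\lambda$ gives one inequality directly. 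The reverse inequality follows from the usual argument that the sum defining $Z_\Lambda$ has at most $O(\Vol(\Lambda))$ relevant terms, using Poisson domination (Lemma~\ref{lem:Poisson-domination}) or a stability bound to cut off the tail, combined with concavity/superadditivity of $\psi$.

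The second step identifies the derivative. For $\lambda\in(0,\lambda_\spec)$ we have $\alpha_{\Lambda_n}(\lambda) = \lambda \partial_\lambda p_{\Lambda_n}(\lambda)$. Since the $p_{\Lambda_n}$ are convex in $\mu$ and converge pointwise to $p$, which is differentiable (indeed analytic) on this range by Theorem~\ref{thm:main}, the derivatives converge. Hence $\alpha(\lambda) = \lambda p'(\lambda)$ exists and is analytic there, as claimed before the statement.

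The third and main step is to show $\frac{d\alpha}{d\mu} = \lambda\alpha'(\lambda) > 0$ on $(0,\lambda_\spec)$; this is the analytic second derivative $\partial_\mu^2 p$. In finite volume,
\[
\partial_\mu^2 p_{\Lambda_n} = \frac{\Var(|\bX|)}{\Vol(\Lambda_n)}.
\]
I would lower bound this uniformly in $n$ by a conditional-variance argument. Tile $\Lambda_n$ by unit cubes and take a sparse sublattice of well-separated cubes. Conditioning on the configuration outside these cubes, each cube has probability bounded below of containing $0$ versus $\ge 1$ points: by Poisson domination it is empty with probability at least $e^{-\lambda}$, and by repulsivity and temperedness a point is inserted with conditional intensity bounded below (the expected energy cost is controlled by $C_\phi$ times a bounded density). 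The law of total variance then gives $\Var(|\bX|) \ge c\Vol(\Lambda_n)$. Since the convex functions $\partial_\mu p_{\Lambda_n}$ converge and the limit is analytic, the second derivatives pass to the limit. Alternatively, the pointwise bound can be upgraded using the exponential decay of truncated two-point correlations (Theorem~\ref{thm:decay-of-k-point-correlations}), which expresses $\partial_\mu^2 p = \rho_1 + \int (\rho_2(0,v)-\rho_1^2)\,\diff v$ as a convergent integral. This gives $\partial_\mu^2p \ge c > 0$, so $\alpha$ is strictly increasing with nonvanishing derivative. I expect this step to be the main obstacle, specifically making the uniform positivity rigorous when $\phi$ may have a hard core.

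Finally, by the analytic inverse function theorem, $\mu \mapsto \alpha$ has an analytic inverse $\alpha\mapsto\mu(\alpha)$ from $(0,\alpha_\spec)$ onto $(-\infty,\log\lambda_\spec)$; the lower endpoint uses $\alpha(\lambda)\to 0$ as $\lambda\to 0$ by \eqref{eq:upper-bound-density}. For $\alpha$ in this range the infimum in the Legendre formula is attained at the unique critical point $\mu(\alpha)$, by strict convexity. Hence
\[
\psi(\alpha) = p\bigl(e^{\mu(\alpha)}\bigr) - \alpha\mu(\alpha),
\]
which is a composition of analytic functions, and so $\psi$ is analytic on $(0,\alpha_\spec)$.
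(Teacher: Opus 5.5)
Your architecture matches the paper's: establish $\psi(\alpha(\lambda)) = p(\lambda)-\alpha(\lambda)\log\lambda$, show $\alpha$ is analytic with nonvanishing derivative, then invert. You reach the identity by a different route, though.

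The paper proves the identity by hand. It tunes $\lambda_n$ so that $\E|\bX_n| = \lfloor\alpha(\lambda)\Vol(\Lambda_n)\rfloor$. It then uses the central limit theorem of \cite{michelen2025central}, fed by the uniform bound on $\log Z_{\Lambda_n}$ from Theorem~\ref{thm:zero_freeness}\ref{thm:zero_freeness:log-bound}, to put probability at least $1/3$ on a window of width $\sigma_n = O(\Vol(\Lambda_n)^{1/2})$ above the mean. Finally it applies the flatness estimate Lemma~\ref{lem:approximately-smooth}.

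You instead import the Fisher--Ruelle Legendre duality and locate the infimum by convexity. You also pass first derivatives to the limit by convexity of $\mu\mapsto p_{\Lambda_n}(e^\mu)$, rather than by uniform convergence of analytic functions. This is legitimate and lighter: it needs only real analyticity of $p$ plus classical thermodynamics, and no CLT. The paper's route, by contrast, is self-contained regarding equivalence of ensembles.

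A small slip: the $\partial_\mu p_{\Lambda_n}$ are monotone, not convex, and their pointwise convergence does not give convergence of second derivatives. However, a uniform bound $\partial_\mu^2 p_{\Lambda_n}\ge c$ passes to the limit through difference quotients, which is all you need.

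The genuine gap is your proof of $\Var(|\bX|)\ge c\Vol(\Lambda_n)$. Let $\xi$ be the configuration off the selected cubes. The law of total variance gives $\Var(|\bX|)\ge\E[\Var(\sum_Q N_Q\mid\xi)]$. To reduce this to $\sum_Q\E[\Var(N_Q\mid\xi)]$ you need the cubes to be conditionally independent given $\xi$, and that holds only when $\phi$ has finite range smaller than the spacing.

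The theorem allows infinite range, e.g.\ the Gaussian core or \eqref{eq:positive-def-phi}. Then the cubes still interact given $\xi$, the conditional covariances of the $N_Q$ have no definite sign, and nothing rules out cancellation. Your fallback via decay of truncated correlations only shows that $\rho_1+\int(\rho_2(0,v)-\rho_1^2)\,\diff v$ converges; it says nothing about its sign.

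By contrast, the hard core you worry about is harmless. Averaging over $\xi$ with Poisson domination gives $\E\bigl[\int_Q\lambda e^{-\sum_{y\in\xi}\phi(x,y)}\diff x\bigr]\ge\lambda\Vol(Q)e^{-\lambda C_\phi}$, which suffices.

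To close the gap you can cite Lemma~\ref{lem:non-hyper-uniform} (Ginibre; Dereudre--Flimmel), as the paper does. Alternatively, use GNZ with $h(\bX)=\int_\Lambda\lambda e^{-\sum_{y\in\bX}\phi(x,y)}\diff x$, for which $\E|\bX|=\E h$ and $\Var(|\bX|)=\E|\bX|+\Cov(|\bX|,h)$. Since $|D_y^+h|\le\lambda C_\phi$, Lemma~\ref{lem:dirichlet_explicit} and the Poincar\'e inequality of Theorem~\ref{thm:Kondratiev} give $\Var(h)=O(\Vol(\Lambda))$. Also $\E|\bX|\ge\lambda e^{-\lambda C_\phi}\Vol(\Lambda)$. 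Cauchy--Schwarz then forces $\Var(|\bX|)\ge c\Vol(\Lambda)$ for all $\lambda<\lambda_\spec$, with no range assumption.
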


	In the case of the hard-sphere potential, we obtain analyticity for the canonical pressure up to density $\Theta(d2^{-d})$ in comparison to previous approaches which only go up to density $\Theta(2^{-d})$. 
	
	\begin{corollary}\label{cor:canon-hs}
		For the hard-sphere potential, $\psi$ is analytic for all $\alpha \in (0,cd2^{-d})$ for each fixed $c<\log(2/\sqrt{3})$ and $d$ sufficiently large. 
	\end{corollary}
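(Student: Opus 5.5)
This follows by combining Theorem~\ref{thm:analyticity-canonical} with two quantitative inputs: Proposition~\ref{prop:hard-spheres} and the density lower bound \eqref{eq:LB-density} of Jenssen--Joos--Perkins. By Theorem~\ref{thm:analyticity-canonical}, $\psi$ is analytic on $(0,\alpha_\spec)$. It therefore suffices to show that $\alpha_\spec \geq c d 2^{-d}$ for each fixed $c<\log(2/\sqrt3)$ once $d$ is large. Recall that $\alpha_\spec=\lim_{\lambda\uparrow\lambda_\spec}\alpha(\lambda)$, where $\alpha$ is strictly increasing on $(0,\lambda_\spec)$. Hence it is enough to exhibit a single activity $\lambda_0<\lambda_\spec$ with $\alpha(\lambda_0)\geq c d2^{-d}$.

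\emph{Choosing $\lambda_0$.} Take $\lambda_0 = 3^{-d/2}$ (with $r_d$ normalized so that $C_\phi=2^d$). By Proposition~\ref{prop:hard-spheres},
\[
\lambda_\spec = \left(\tfrac e2+o(1)\right)^{d/2}2^{-d} = \left(\tfrac{e^{1/2}}{2^{3/2}}+o(1)\right)^{d}.
\]
Since $e^{1/2}/2^{3/2}\approx 0.583>3^{-1/2}\approx 0.577$, we get $\lambda_0<\lambda_\spec$ for all sufficiently large $d$.

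\emph{Passing the density bound to the limit.} The bound \eqref{eq:LB-density} gives, for every box $\Lambda_n$,
\[
\alpha_{\Lambda_n}(\lambda_0)\geq (1+o_d(1))\log(2/\sqrt3)\, d2^{-d}.
\]
Here the $o_d(1)$ must be uniform in $\Lambda$; this is the one point where I would check the statement of \cite[Theorem~2]{jenssen2019hard}, since the bound there is stated for general bounded regions, or at least for boxes. The limit $\alpha(\lambda_0)=\lim_n\alpha_{\Lambda_n}(\lambda_0)$ exists by the discussion preceding Theorem~\ref{thm:analyticity-canonical}, so $\alpha(\lambda_0)\geq(1+o_d(1))\log(2/\sqrt3)\,d2^{-d}$. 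For fixed $c<\log(2/\sqrt3)$ and $d$ large, this exceeds $cd2^{-d}$.

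\emph{Conclusion.} By monotonicity, $\alpha_\spec\geq\alpha(\lambda_0)>cd2^{-d}$, so $(0,cd2^{-d})\subseteq(0,\alpha_\spec)$, and Theorem~\ref{thm:analyticity-canonical} gives the claim. There is no substantial obstacle beyond checking two things: that the numerical comparison $\lambda_\spec\geq 3^{-d/2}$ holds with room to absorb the $o(1)$ terms, and that \eqref{eq:LB-density} holds uniformly over the boxes used to define $\alpha$. If the lower bound were only available for particular regions (for example, tori), I would instead take a limit along those regions and use the independence of the limit from the choice of exhausting sequence for the unique Gibbs measure, which Theorem~\ref{thm:main} provides.
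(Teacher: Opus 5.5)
Your proof is correct and is essentially the paper's argument. The paper also uses Proposition~\ref{prop:hard-spheres} to get $\lambda_\spec > 3^{-d/2}$ for large $d$ and then combines \eqref{eq:LB-density} with Theorem~\ref{thm:analyticity-canonical}; your monotonicity step $\alpha_\spec \ge \alpha(3^{-d/2})$ and your passage to the limit just spell out what the paper leaves implicit, and the paper likewise states \eqref{eq:LB-density} for general $\Lambda$ without comment, so the uniformity caveat you flag is not a gap relative to its proof.
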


	We also show that if $\lambda_\spec=\infty$ then $\alpha_\spec=\infty$, and so for such potentials $\psi$ is analytic for \emph{all} $\alpha>0$.
	
	\begin{corollary}\label{cor:canon-pd}
		Let $\phi$ be a repulsive, translation invariant potential that decays exponentially. If $\lambda_\spec=+\infty$ then $\psi$ is analytic for all $\alpha\in (0,\infty)$.
	\end{corollary}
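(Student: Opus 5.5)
The plan is to deduce this from Theorem~\ref{thm:analyticity-canonical}. It suffices to show that $\lambda_\spec=+\infty$ forces $\alpha_\spec=\lim_{\lambda\to\infty}\alpha(\lambda)=+\infty$. I would argue by contradiction through the Legendre-type relation between the grand canonical pressure $p$ and the canonical pressure $\psi$. So suppose $\alpha(\lambda)\le A<\infty$ for all $\lambda>0$.

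The first step is an upper bound on $p$. By Theorem~\ref{thm:main}, $p$ is analytic on $(0,\infty)$. Moreover $t\mapsto p(e^t)$ is convex with derivative $\alpha(e^t)$, using the standard identity $\lambda\,\partial_\lambda p_{\Lambda_n}=\alpha_{\Lambda_n}$ and convergence of the derivatives of convex functions. Hence
\[
p(\lambda)\le p(1)+A\log\lambda\qquad\text{for all }\lambda\ge 1 .
\]

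The second step is a lower bound. Since $Z_{\Lambda}(\lambda)\ge \lambda^k\widehat Z_\Lambda(k)$ for every $k$, we have $p(\lambda)\ge \psi(\alpha_0)+\alpha_0\log\lambda$ for every $\alpha_0$. If $\psi(\alpha_0)>-\infty$ for some $\alpha_0>A$, this contradicts the first step as $\lambda\to\infty$. So the whole proof reduces to showing $\psi(\alpha_0)>-\infty$ for every $\alpha_0>0$, i.e.\ that configurations of arbitrary density cost only a bounded energy per particle.

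This third step is the main obstacle, because $\phi$ may blow up at the origin; for instance \eqref{eq:positive-def-phi} behaves like $-2\log\|x\|$ there. I would use the Fourier hypothesis in the following way. By \eqref{eq:lambda-spec-translation-invariant}, $\lambda_\spec=\infty$ means $\widehat g\ge 0$ a.e. We also know that $0\le g\le 1$ and $g\in L^1$ (temperedness). A positive-definite, bounded, integrable function of this kind has $\widehat g\in L^1$; the standard argument is to test against a Gaussian approximate identity. Consequently $g$ is continuous with $|g(x)|\le g(0)\le 1$. Moreover $g(x_0)=g(0)$ at some $x_0\neq 0$ would force $g$ to be $x_0$-periodic, which is impossible for a nonzero integrable $g$. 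Combined with continuity and $g(x)\to 0$ at infinity, this gives $\sup_{\|x\|\ge s}g(x)<1$ for every $s>0$. Hence $\phi$ is bounded on $\{\|x\|\ge s\}$, and by exponential decay it is summable over lattice shells.

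Now fix $\alpha_0$ and choose a spacing $a$ with $a^{-d}=\alpha_0$. I would restrict $\widehat Z_{\Lambda_n}(k)$ to configurations with exactly one point in a cube of side $\delta=a/4$ around each site of $a\mathbb Z^d\cap\Lambda_n$. On this set, distinct points are at distance at least $a/2$. Therefore
\[
H\le k\cdot\tfrac12\sum_{z\in a\mathbb Z^d\setminus\{0\}}\ \sup_{\|y\|\le \sqrt d\,\delta}\phi(z+y)=:kK,
\]
and $K<\infty$ by the previous paragraph together with exponential decay. This yields $\widehat Z_{\Lambda_n}(k)\ge \delta^{dk}e^{-kK}$, where the $1/k!$ is absorbed by the $k!$ orderings. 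It follows that $\psi(\alpha_0)\ge \alpha_0(d\log\delta-K)>-\infty$. Rounding $k=\lfloor\alpha_0\Vol(\Lambda_n)\rfloor$ versus the number of lattice sites only affects boundary terms, which are negligible. This gives the contradiction, so $\alpha_\spec=\infty$, and Theorem~\ref{thm:analyticity-canonical} gives analyticity of $\psi$ on $(0,\infty)$.
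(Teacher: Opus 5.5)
Your proposal is correct and follows essentially the paper's route: you play convexity of $t\mapsto p(e^t)$ with bounded slope against the lower bound $p(\lambda)\ge\psi(a)+a\log\lambda$, and you get $\psi(a)>-\infty$ from a lattice configuration together with the fact that positive definiteness of $g$ keeps $g$ strictly below $1$ away from the origin. The only difference is that you obtain continuity of $g$ from $\widehat g\in L^1$ and Fourier inversion, where the paper cites Bochner's theorem.

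Two small fixes are needed. First, take the supremum defining $K$ over $y\in[-\delta,\delta]^d$ rather than over the Euclidean ball of radius $\sqrt d\,\delta$: for $d\ge 16$ that ball contains $-z$ when $\|z\|=a$, so $K$ could be infinite. Second, use a lattice slightly denser than $\alpha_0$, as the paper does, so that there are always at least $\lfloor\alpha_0\Vol(\Lambda_n)\rfloor$ sites to choose from.
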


	\subsection{Discussion and overview of the proofs}

	\begin{figure}
		\centering
		\resizebox{\textwidth}{!}{%
			\begin{tikzpicture}[
				node distance=3cm and 3cm,
				box/.style={
					draw,
					rectangle,
					align=center,
					minimum width=3.5cm,
					minimum height=1cm
				},
				->,
				>=Stealth
				]
				
				\node[box, label=above:{Shown in \cite{kondratiev2013spectral}}] (gap)
				{uniform spectral gap\\ 
					of birth--death
					dynamics\\ for $\lambda < \lambda_\spec$};
				
				\node[box,right=of gap] (ssm)
				{single-site strong\\
					spatial mixing\\
					for $\lambda < \lambda_\spec$};
				
				\node[box,right=of ssm] (zf)
				{zero-freeness in\\
					a neighborhood\\
					of $[0,\lambda]$ for $\lambda < \lambda_{\spec}$};
				
				\node[box,right=of zf] (analyticity)
				{analytic pressure in the\\
					canonical ensemble};
				
				\node[box,below left=1.8cm and 1.2cm of ssm] (gibbs)
				{Gibbs uniqueness\\
					for all $\lambda < \lambda_\spec$};
				
				\node[box,below=1.8cm of ssm] (hb)
				{optimal mixing of\\
					heat-bath\\
					dynamics for all $\lambda < \lambda_{\spec}$};
				
				\node[box,below right=1.8cm and 1.2cm of ssm] (other)
				{strong spatial mixing and \\
					decay of correlation functions};
				
				\node[box,right=of other] (pressure)
				{analytic pressure\\
					at all $\lambda \in [0,\lambda_\spec)$};
				
				\node[box,below=2cm of hb] (approxsample)
				{efficient\\
					approximate\\
					sampling};
				
				\node[box,right=3cm of approxsample] (approxpart)
				{efficient\\
					approximation of\\
					partition functions};
				
				\node[box,right=3cm of approxpart] (exactsample)
				{efficient exact sampling\\
					for bounded range};
				
				\draw (gap) --node[midway, above] {\Cref{{thm:ssssm}}} (ssm);
				\draw (ssm) --node[midway, above] {\Cref{thm:zero_freeness} (1)} (zf);
				\draw (zf) --node[midway, above] {\Cref{thm:analyticity-canonical}}  (analyticity);
				
				\draw (ssm) --node[fill=white,midway,left] {\Cref{thm:ssssm_uniqueness}} (gibbs);
				\draw (ssm) --node[fill=white,midway] {\Cref{{th:mixing-block-dynamics}}} (hb);
				\draw (ssm) --node[fill=white,midway,right] {\Cref{thm:spatial-mixing,thm:decay-of-k-point-correlations}} (other);
				
				\draw[dashed] (zf) --node[fill=white,midway,align=left]{\Cref{{thm:zero_freeness}} (3) \\ (Yang--Lee \cite{yang1952statistical})} (pressure);
				\draw (pressure) --node[midway, right] {\Cref{thm:analyticity-canonical}} (analyticity);
				
				\draw[dashed] (hb) --node[midway,right]{Shown in \cite{michelen2022strong}} (approxsample);
				\draw[dashed] (other) --node[fill=white,midway, right]{Shown in \cite{anand2023perfect}} (exactsample);
				
				\draw[dashed] (approxsample) --node[midway,above]{Shown in \cite{michelen2022strong}} (approxpart);
				
		\end{tikzpicture}}
		\caption{The graph of implications. Dashed lines indicate implications known from the literature. \label{fig:overview}}
	\end{figure}
	
	The implications we prove in this paper are summarised in \Cref{fig:overview}.
	
	As discussed earlier, our starting point is a theorem of Kondratiev--Kuna--Ohlerich which shows that one can bound the spectral gap for spatial birth-death dynamics.  We will formally define these dynamics in Section \ref{sec:birth-death-dynamics}, but informally describe them here: in a domain $\Lambda$, we attempt to add points according to a Poisson process of intensity measure $\blambda$ in continuous time with rate $1$; when we attempt to add a  point at $x$ to a configuration $\bX$, add it with probability $\exp(-\sum_{y \in \bX} \phi(x,y))$;  points are independently killed at rate $1$.  The generator $L_{\Lambda,\blambda}$ for these dynamics is defined in \eqref{eq:generator-definition}.  We set its Dirichlet form to be $\dirichlet_{\Lambda,\blambda}$ (defined in \eqref{eq:dirichlet-def}) and $\mathcal{D}_{\Lambda,\blambda}(\dirichlet_{\Lambda,\blambda})$ to be its domain.  
	
	Among the main results of \cite{kondratiev2013spectral} is a so-called coercivity identity which is equivalent to a \emph{spectral gap}.  In particular, for function $f \in \mathcal{D}_{\Lambda,\blambda}(\dirichlet_{\Lambda,\blambda})$, we define $\Var_{\lambda,\blambda}(f)$ to be the variance of the random variable $f(\bX)$ where $\bX$ is sampled from the finite-volume Gibbs measure $\mu_{\Lambda,\blambda}$.  Our starting point is the following special case of \cite[Theorem~3.4]{kondratiev2013spectral}:
	
	\begin{theorem}[Kondratiev--Kuna--Ohlerich, \cite{kondratiev2013spectral}] \label{thm:Kondratiev}
		Let $\phi$ be a repulsive and tempered potential and suppose that $\blambda$  and $\gamma \in (0,1)$ are such that for all $f \in L^2$ with $\|f \|_2 = 1$ we have
		\begin{equation}\label{eq:positive-def-assumption}
			\iint f(x) f(y) \sqrt{\blambda(x) \blambda(y)} (1 - e^{-\phi(x,y)}) \,dx\,dy + (1 - \gamma) \geq 0\,.
		\end{equation}
		Then for all bounded and  measurable $\Lambda \subset \R^d$, it holds that \begin{equation}\label{eq:spectral-gap}
			\Var_{\Lambda,\blambda}(f) \leq \gamma^{-1} \cdot \mathcal{E}_{\Lambda,\blambda}(f) \text{ for all }f \in \mathcal{D}(\mathcal{E}_{\Lambda,\blambda})\,.\end{equation}
	\end{theorem}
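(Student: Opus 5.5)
Since this theorem is cited from \cite{kondratiev2013spectral}, I will sketch how its proof goes rather than rederive every detail. The approach is the classical coercivity / Bochner-type identity for the birth--death generator. Write the generator as
\[
L f(\eta)=\sum_{x\in\eta}\bigl(f(\eta\setminus x)-f(\eta)\bigr)+\int \blambda(x)e^{-E(x,\eta)}\bigl(f(\eta\cup x)-f(\eta)\bigr)\,dx,
\qquad E(x,\eta)=\sum_{y\in\eta}\phi(x,y).
\]
Reversibility with respect to $\mu_{\Lambda,\blambda}$ is the GNZ (Georgii--Nguyen--Zessin) identity. A spectral gap $\gamma$ is equivalent to the coercivity bound
\[
\E\bigl[(Lf)^2\bigr]\ \ge\ \gamma\,\dirichlet(f)
\]
on a core of functions. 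This equivalence holds because $-L\ge 0$ is self-adjoint: the inequality gives $\spec(-L)\cap(0,\gamma)=\emptyset$, and $0$ is a simple eigenvalue since the empty configuration is reached at positive rate. So the plan is to prove this second-order inequality.

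\textbf{Step 1: split $\E[(Lf)^2]$ into a diagonal term and a cross term.} Write $Lf=D^-f+D^+f$, the death part plus the birth part. Expand the square and apply GNZ to move one of the sums or integrals out of each product, so that everything becomes expectations over configurations with one or two extra points added. The diagonal pieces, together with the death--birth cross term, rearrange into
\[
\dirichlet(f)+\E\Bigl[\iint\blambda(x)\blambda(y)\,e^{-E(x,\eta)-E(y,\eta)}\,\nabla_xf\,\nabla_yf\,\bigl(e^{-\phi(x,y)}-1\bigr)\,dx\,dy\Bigr]+(\text{a nonnegative term}).
\]
Here $\nabla_xf(\eta)=f(\eta\cup x)-f(\eta)$, and the nonnegative term is a squared second difference. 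Repulsiveness ($\phi\ge0$) is used to drop or sign the leftover pieces.

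\textbf{Step 2: bound the cross term pointwise in $\eta$.} For fixed $\eta$, set
\[
h(x)=\sqrt{\blambda(x)}\,e^{-E(x,\eta)}\,\nabla_xf(\eta).
\]
The cross term then equals $-\iint h(x)h(y)\sqrt{\blambda(x)\blambda(y)}\,(1-e^{-\phi(x,y)})\,dx\,dy$. By hypothesis \eqref{eq:positive-def-assumption} this is at least $-(1-\gamma)\|h\|_2^2$. Using $e^{-2E}\le e^{-E}$ and GNZ,
\[
\E\|h\|_2^2\le\E\int\blambda(x)e^{-E(x,\eta)}\,(\nabla_xf)^2\,dx=\dirichlet(f).
\]
Combining, $\E[(Lf)^2]\ge\dirichlet(f)-(1-\gamma)\dirichlet(f)=\gamma\,\dirichlet(f)$, as required.

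\textbf{Main obstacle.} The hard part is Step 1: getting the exact bookkeeping of the second-order GNZ expansion right, so that the cross term appears with precisely the kernel $e^{-\phi}-1$ and the remaining terms are nonnegative. A secondary issue is domain and core technicalities, namely justifying the computation for bounded cylinder functions and extending it by closure; temperedness of $\phi$ guarantees the integrals are finite.
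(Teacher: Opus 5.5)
Your overall strategy is the right one: proving the coercivity bound $\E[(Lf)^2]\ge\gamma\,\dirichlet(f)$ through a second-order GNZ expansion is the Bochner/$\Gamma_2$ argument the paper attributes to \cite{kondratiev2013spectral}. However, the sign of the kernel in your cross term is wrong, and with your sign Step~2 does not follow.

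Write $Q(h)=\iint h(x)h(y)\sqrt{\blambda(x)\blambda(y)}\,(1-e^{-\phi(x,y)})\,dx\,dy$. Hypothesis \eqref{eq:positive-def-assumption} says exactly $Q(h)\ge-(1-\gamma)\|h\|_2^2$. If the cross term were $-Q(h)$, as you claim, this would only give the \emph{upper} bound $(1-\gamma)\|h\|_2^2$. A lower bound on $-Q(h)$ would instead need control of the top of the spectrum of the kernel. For translation-invariant $\phi$ and constant activity, that top is $\lambda\widehat g(0)=\lambda C_\phi$, so you would recover only the classical regime $\lambda<1/C_\phi$, not $\lambda_\spec$. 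The sign is precisely where the content of the theorem sits, so this is not cosmetic.

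Here is the correct bookkeeping. Set $c(x,\eta)=\blambda(x)e^{-E(x,\eta)}$. Start from
\[
\E[(Lf)^2]=\dirichlet(f,-Lf)=-\E\int c(x,\eta)\,\nabla_xf\,\nabla_x(Lf)\,dx ,
\]
and use the commutation relation
\[
\nabla_x(Lf)(\eta)=-\nabla_xf(\eta)+L(\nabla_xf)(\eta)-\int c(y,\eta)\bigl(1-e^{-\phi(x,y)}\bigr)\nabla_yf(\eta\cup x)\,dy .
\]
This is the identity behind \cite[Eq.~(4.7)]{bertini2002spectral}, which is used in Section~\ref{sec:spatial-mixing}. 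For $g=\nabla_xf$, GNZ gives
\[
-\E[c(x,\eta)\,g\,Lg]=\E\int c(x,\eta)c(y,\eta)e^{-\phi(x,y)}(\nabla_yg)^2\,dy-\E\int c(x,\eta)c(y,\eta)\bigl(1-e^{-\phi(x,y)}\bigr)\,g\,\nabla_yg\,dy .
\]
Now write $\nabla_yf(\eta\cup x)=\nabla_yf(\eta)+\nabla_x\nabla_yf(\eta)$ in the last term of the commutation relation. The mixed terms then cancel exactly, leaving
\[
\E[(Lf)^2]=\dirichlet(f)+\E\iint c(x,\eta)c(y,\eta)e^{-\phi(x,y)}(\nabla_x\nabla_yf)^2\,dx\,dy+\E\iint c(x,\eta)c(y,\eta)\bigl(1-e^{-\phi(x,y)}\bigr)\nabla_xf\,\nabla_yf\,dx\,dy .
\]
With your $h$, the cross term is therefore $+\E\,Q(h)$, and
\[
\E\,Q(h)\ge-(1-\gamma)\,\E\|h\|_2^2\ge-(1-\gamma)\,\dirichlet(f).
\]
The last inequality is your bound $e^{-2E}\le e^{-E}$. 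That is the only place $\phi\ge0$ enters; there are no other leftover terms to sign. Dropping the square term gives $\E[(Lf)^2]\ge\gamma\,\dirichlet(f)$.

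With this correction your sketch coincides with the argument the paper describes: the only inequalities used are positive-definiteness and nonnegativity of an integral of squares. The domain/core and $\ker L=$ constants issues you flag are routine for a sketch.
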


	The paper \cite{kondratiev2013spectral} works in a higher level of generality than Gibbs point processes, and the version of Theorem \ref{thm:Kondratiev} stated there is for arbitrary point processes that admit \emph{Papangelou intensities}.  Roughly, the Papangelou intensity is the likelihood ratio between a point set $\bX$ and $\bX \cup v$. In the case of Gibbs point processes, this simplifies in terms of the activity function $\blambda$ and the added energy.  
	
	The proof of \cite{kondratiev2013spectral} is via a Bochner--Bakry--\'Emery approach: one defines the \emph{Carr\'e-du-champ} operators $\Gamma$ and $\Gamma_2$, applies a discrete integration by parts and takes expectation with respect to the Gibbs measure.  Showing a certain inequality between the two expectations---known as a coercivity identity---is equivalent to a spectral gap.  The manipulations are elementary in nature but fairly involved, and ultimately in the main inequality the only two inequalities used are the positive-definiteness assumption \eqref{eq:positive-def-assumption} along with bounding an integral of squares below by zero. 
	
	Incredibly, a nearly identical condition to \eqref{eq:positive-def-assumption} appeared in a recent breakthrough by Chen--Chen--Chen--Yin--Zhang \cite{chen2025rapid} which showed rapid mixing of Glauber dynamics for the hard-core model on random regular graphs (and also on other graphs which satisfy a bound on their minimum eigenvalue).  The proof of \cite{chen2025rapid} combines technical ingredients  and introduces new ideas related to several recently growing topics in local-to-global arguments: a trickle-down theorem for field dynamics, spectral stability, and a comparison between field dynamics and Glauber dynamics.  In a companion paper \cite{goebel2026simple}, we provide a short and self-contained proof of \cite{chen2025rapid} by streamlining the elementary argument from \cite{kondratiev2013spectral} and adapting it to the discrete case.
	
	Our first step is to show that a spectral gap \eqref{eq:spectral-gap} uniformly for \emph{all} $\blambda \leq \lambda$ implies both single-site SSM and SSM.  In their foundational 2002 work on spatial birth-death dynamics for Gibbs point processes, Bertini--Cancrini--Cesi showed decay of covariance for local test functions (see their \cite[Theorem 4.1]{bertini2002spectral} and our version Proposition \ref{prop:covariance}).  Crucially, their proof required the assumption that the potential is of bounded range; since we are interested in potentials of infinite range---such as the potential \eqref{eq:positive-def-phi}---we required some new ideas.  In particular, in order to obtain such an exponential bound, we restrict to potentials that decay exponentially (see Definition \ref{def:potential_decay} for a precise definition) and introduce a notion of approximately local functions (Definition \ref{def:approxlocal}).  At the core of both Proposition \ref{prop:covariance} and \cite[Theorem 4.1]{bertini2002spectral} is a propagation-of-influence estimate for the birth-death semigroup (Lemma \ref{lemma:unbounded_4.2}, the analogue of \cite[Lemma 4.2]{bertini2002spectral}); the notions of potential decay and approximately local functions were introduced precisely to extend this propagation-of-influence estimate.  Single-site SSM for sets of volume bounded below follows from a quick application of Proposition \ref{prop:covariance}.  Using single-site SSM for large sets will show correlation decay for one-point correlation functions (Lemma \ref{lem:1-pt-SSSSM}).  We then adapt the strategy of \cite{michelen2022strong} to prove single-site SSM for all sets (Lemma \ref{lem:upgrade-to-small-vol}) and SSM (Proposition \ref{prop:SSM}).   This is carried out in Section \ref{sec:spatial-mixing}.
	
	We then move in Section \ref{sec:uniqueness} to prove that there is a unique infinite-volume Gibbs measure, and that it has decaying correlation functions.  
	In the analogous discrete case---or similarly when the potential is of finite range---uniqueness follows essentially immediately from strong spatial mixing along with the DLR equations.  Our setting is a bit trickier, and in fact served as one of the main motivations of our definition of single-site SSM.  To show uniqueness, the DLR equations state that for an infinite-volume Gibbs measure, the point configuration on a finite set $\Lambda$ may be taken to be the finite-volume Gibbs measure on $\Lambda$ subject to the boundary conditions of the point configuration outside of $\Lambda$.  We show that uniqueness follows from single-site SSM by summing over the influence of each point in the boundary condition and using Poisson dominance along with exponential decay of the potential to obtain a summable tail that decreases as $\Lambda \uparrow \R^d$ (Theorem \ref{thm:ssssm_uniqueness}).  We then deduce decay of correlation functions for the infinite-volume Gibbs measure (Proposition \ref{prop:exponential-decay-k-point}) using our bounds on correlation functions from Section \ref{sec:spatial-mixing}.
	
	We have now arrived at the most technical section of the paper, Section \ref{sec:zero-free}, where we prove a zero-free region for the partition function $Z_{\Lambda}(\lambda)$ along with analyticity of the infinite-volume pressure (Theorem \ref{thm:zero_freeness}).  At a very high level, our proof takes some inspiration from a pair of influential works of Dobrushin and Shlosman \cite{dobrushin1985completely,dobrushin1987completely} which prove that for finite-range pair interactions in the discrete setting of $\mathbb{Z}^d$ various strong notions of a ``fluid regime'' are equivalent; most relevant for us is a notion of strong spatial mixing implying a zero-free region.  We ultimately show that single-site SSM implies a uniform zero-free region for cubes $\Lambda = [-n,n]^d$. 
	
	As in the case of Dobrushin--Shlosman, our proof is by induction and deals
	with ratios of partition functions applied for complex activities, but we have two serious obstacles: first, since we are working in the continuum, the nature of such an induction is significantly less clear; second, our interactions need not have bounded range.  This requires us to introduce various intermediate steps that must be proven simultaneously in our inductive proof in order to overcome these two challenges.  The core of our inductive proof is Proposition \ref{prop:expectations} containing four statements concerning zero-freeness of partition functions along with approximations of the influence of altering the activity function $\blambda$ by introducing a single point.  We then prove Proposition \ref{prop:expectations} with a fairly involved four-part strong inductive step (see Figure \ref{img:proofstruct} for a schematic depiction of the structure of the inductive step). Following that, we show analyticity of the infinite-volume pressure by following the standard Lee--Yang approach \cite{yang1952statistical}, which the same duo subsequently applied to the ferromagnetic Ising model in their Lee--Yang paper \cite{lee1952statistical}.  
	
	In Section \ref{sec:block-dynamics} we show that optimal mixing of block dynamics follows from single-site SSM. For this, we use a version of \emph{path coupling} due to Bubley--Dyer \cite{bubley1997path}.  For completeness, we provide a self-contained proof of a basic version of Bubley--Dyer's path coupling lemma (Lemma \ref{lemma:path_coupling}) making use of Kantorovich--Rubinstein duality to witness the Wasserstein metric in terms of discrepancy of the expectation of Lipschitz functions.  To apply the method of path coupling, we follow \cite{michelen2022strong,helmuth2022correlation} and adapt the ideas of \cite{dyer2004mixing} to the continuum.  As for the previous sections, the lack of a finite-range assumption complicates the analysis.
	
	Finally, in Section \ref{sec:lambda-spec} we show how to compute $\lambda_\spec$ in the case of the hard-sphere model and the Gaussian core model. We first show that in the case of a translation-invariant potential, one can interpret $\lambda_\spec$ in terms of the most negative value of the fourier transform of $x \mapsto 1 - e^{-\phi(x,0)}$.  
	From there we compute that $\lambda_\spec$ for the hard sphere model may be expressed in terms of an optimization problem related to Bessel functions.  
	Some basic calculus and properties of Bessel functions carried out in Appendix \ref{sec:bessel} allow us to compute this minimum precisely in terms of Bessel functions and their zeros.  
	This establishes Proposition \ref{prop:hard-spheres} along with its more exact version \eqref{eq:hard-sphere-lambda-spec-asympt}.  
	The analysis for the Gaussian core model is simpler, and consists of applying the Fourier transform to the individual terms of the Taylor expansion of $1 - e^{-\potential(x, 0)}$.
	This leaves us with an alternating series, which we bound by eliminating all but one negative term.
	
	We now begin with some preliminaries, where we define our central objects more formally and introduce some elementary properties.

	\section{Preliminaries}
	We write $\borel$ for the Borel sets on $\R^{d}$, $\borel_b$ for the bounded Borel sets and $\vol$ for the Borel--Lebesgue measure on $(\R^d, \borel)$.
	We denote by $\pointsets$ the set of locally finite and by $\pointsets_f$ the finite point sets from $\R^d$.
	Further, given $\Delta \in \borel$, we write $\pointsets_{\Delta}$ for all $X \in \pointsets$ with $X \subseteq \Delta$.
	We write $\events$ for the $\sigma$-algebra on $\pointsets$ generated by the maps $N_{\Delta}: \pointsets \to \N_0, X \mapsto \size{X \cap \Delta}$ for $\Delta \in \borel_b$, and denote by $\events_{\Delta}$ the trace of $\pointsets_{\Delta}$ in $\events$.

	\subsection{Finite-volume Gibbs measure}\label{sec:finite-volume-prelims}
	A repulsive pair potential on $\R^d$ is a symmetric, measurable function $\potential: \R^d \times \R^d \to \R_{\ge 0} \cup \{\infty\}$.
	Given a repulsive pair potential $\potential$, we define the associated energy function as  
	\[
	H(X) \coloneqq \sum_{\{x, y\} \subseteq X} \potential(x, y) 
	\]
	for finite point configuration $X \in \pointsets_f$.
	With some abuse of notation, we extend this definition to finite tuples $\V{x} = (x_1, \dots, x_k) \in (\R^d)^k$, writing $H(\V{x}) = H(\{x_1, \dots, x_k\})$.

	We will impose the following decay property on the potential $\phi$.
	\begin{definition}[Exponential decay of the potential]\label{def:potential_decay}
		We say that a repulsive potential \emph{decays exponentially} with constants $B, \alpha > 0$ if
		\[
		1 - e^{-\potential(x, y)} \le B  e^{- \alpha  \cdot\dist(x, y)} \text{ for } x, y \in \R^d.
		\]
	\end{definition}

	Fix some repulsive potential $\potential$.
	Given a bounded measurable region $\region \subset \R^d$ and a bounded and measurable activity function $\activities: \R^d \to \R_{\ge 0}$, we write $\mu_{\region, \activities}$ for the Gibbs distribution on $\region$ with activity function $\activities$, i.e.,
	\[
	\mu_{\region, \activities} (\diff X) \propto \bigg(\prod_{x \in X} \activities(x)\bigg) \cdot e^{-H(X)} P_{\region}(\diff X),
	\]
	where $P_{\region}$ is the distribution of a Poisson point process of constant intensity $1$ on $\region$.
	The right-hand side of the expression above is normalized by $\frac{e^{\vol(\region)}}{Z_{\region}(\activities)}$, where 
	\[
	Z_{\region}(\activities) \coloneqq 1 + \sum_{k \ge 1} \frac{1}{k!} \int_{\region^k} \activities^{\V{x}} \cdot e^{-H(\V{x})} \diff \V{x} ,
	\]
	is called the partition function on $\region$, and where we abbreviate $\activities^{\V{x}} \coloneqq \prod_{1 \le j \le k} \activities(x_j)$ for $\V{x} = (x_1, \dots, x_k) \in (\R^d)^k$.
	
	We write $\E_{\region, \activities}[f]$ and $\Var_{\region, \activities}[f]$ for the expectation and variance, and $\Cov_{\region, \activities}(f, g)$ for the covariance of measurable functions $f, g : \pointsets \to \C$ with respect to $\mu_{\region, \activities}$. 
	Lastly, if the activity function is constant $\activities \equiv \activity$, we simplify notation and write $Z_{\region}(\activity)$, $\mu_{\region, \activity}$, $\E_{\region, \activity}$ and $\Var_{\region, \activity}$.

	We will also make use of the fact that repulsive Gibbs measures are stochastically dominated by Poisson point processes (see, e.g., \cite[Lemma 2]{dereudre2019introduction}) and refer to this property as \emph{Poisson domination}: 
	
	\begin{lemma}\label{lem:Poisson-domination}
		Fix a repulsive potential $\phi$, a bounded measurable region $\Lambda \subset \R^d$ and a measurable activity function $\blambda : \R^d \to \R_{\geq0}$.  Then the Gibbs point process $\bX$ attained from sampling $\mu_{\Lambda,\blambda}$ is stochastically dominated by the Poisson process $P_{\Lambda,\blambda}$ of intensity measure $\blambda$ in $\Lambda$.
	\end{lemma}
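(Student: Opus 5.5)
We prove Poisson domination by coupling the two processes through a spatial birth--death construction. An equivalent and cleaner route goes through Holley's criterion for point processes (Preston's continuum version). Both processes are absolutely continuous with respect to $P_\Lambda$. The densities are
\[
p(X) \propto \prod_{x\in X}\blambda(x)\, e^{-H(X)}
\qquad\text{and}\qquad
q(X)\propto \prod_{x\in X}\blambda(x),
\]
for $\mu_{\Lambda,\blambda}$ and $P_{\Lambda,\blambda}$ respectively. Their Papangelou conditional intensities are
\[
\lambda_p(v;X)=\blambda(v)\exp\Big(-\sum_{y\in X}\phi(v,y)\Big)
\qquad\text{and}\qquad
\lambda_q(v;X)=\blambda(v).
\]
Since $\phi\ge 0$, we have $\lambda_p(v;X)\le \blambda(v)=\lambda_q(v;Y)$ for every pair $X\subseteq Y$ and every $v$. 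This pointwise ordering of conditional intensities is the hypothesis of the continuum Holley/Preston theorem, and that theorem gives a coupling with $\bX\subseteq \bY$.

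To keep the argument self-contained, we construct the coupling directly. Run two spatial birth--death processes on $\Lambda$ driven by common randomness. Let $\Pi$ be a Poisson process on $\Lambda\times[0,\infty)\times[0,1]$ with intensity $\blambda(v)\,\diff v\,\diff t\,\diff u$, which proposes births; each born point also carries an independent exponential lifetime of rate $1$. The process $\bY_t$ accepts every proposed birth. The process $\bX_t$ accepts a birth at $(v,t,u)$ only if $u\le \exp(-\sum_{y\in \bX_{t-}}\phi(v,y))$. Both processes use the same lifetimes. If $\bX_0\subseteq\bY_0$, for instance both empty, then inclusion is preserved at every event: $\bY$ accepts whenever $\bX$ does, and deaths remove the same point from both. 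Hence $\bX_t\subseteq\bY_t$ for all $t$.

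It then remains to identify the stationary laws. $\bY_t$ is an $M/M/\infty$-type immigration--death process, and its law converges to the Poisson process $P_{\Lambda,\blambda}$. For $\bX_t$, detailed balance
\[
p(X)\,\lambda_p(v;X)=p(X\cup\{v\})\cdot 1
\]
holds, so $\mu_{\Lambda,\blambda}$ is reversible. Because $\bX_t\subseteq\bY_t$ and the dominating process empties at a positive rate, $\bX_t$ converges in total variation to $\mu_{\Lambda,\blambda}$. Indeed, with probability tending to one both processes have passed through the empty configuration by time $t$, which gives a regeneration argument. Taking $t\to\infty$ along the coupling, and noting that the event $\{\bX_t\subseteq\bY_t\}$ is closed under weak limits of the joint law (tightness follows from the Poisson bound), we obtain a coupling of $\mu_{\Lambda,\blambda}$ and $P_{\Lambda,\blambda}$ with $\bX\subseteq\bY$ almost surely. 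This is the asserted stochastic domination.

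The main obstacle is making the convergence to equilibrium and the passage to the limit rigorous. The regeneration at the empty configuration should handle it. Hitting the empty configuration occurs because the total number of points in $\bY$ is a Poisson-mean birth--death chain with total birth rate $\int_\Lambda\blambda<\infty$ and per-point death rate $1$, which is positive recurrent. Alternatively, one can simply cite \cite[Lemma 2]{dereudre2019introduction} or Preston's theorem.
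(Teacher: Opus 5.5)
The paper does not prove this lemma; it only cites \cite[Lemma 2]{dereudre2019introduction}. Your proposal is therefore a genuinely different, self-contained route, and it is correct.

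Your coupled construction is the monotone coupling behind Preston's continuum Holley criterion. The acceptance rule makes $\bX_t$ run exactly the dynamics with generator \eqref{eq:generator-definition}. Inclusion is preserved at every birth and death, detailed balance makes $\mu_{\Lambda,\blambda}$ reversible, and from the empty start $\bY_t$ is exactly Poisson with intensity $(1-e^{-t})\blambda$.

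The step you flag closes as you suggest, and it needs no weak-limit or closedness argument. Started from $(\emptyset,\emptyset)$, the pair $(\bX_t,\bY_t)$ regenerates at every entrance of $\bY$ into $\emptyset$, where necessarily $\bX_t=\emptyset$. With $a=\int_\Lambda\blambda$, the cycle length has finite mean $e^{a}/a$ and contains an $\mathrm{Exp}(a)$ holding time. It is therefore spread out, so the joint law converges in total variation to a law $\pi$ with $\pi(X\subseteq Y)=1$. Its $\bY$-marginal is $P_{\Lambda,\blambda}$. Its $\bX$-marginal is invariant for the $\bX$-dynamics and hence equals $\mu_{\Lambda,\blambda}$: domination by $\bY$ gives non-explosion, so detailed balance really yields invariance of $\mu_{\Lambda,\blambda}$, and it gives a.s.\ hitting of $\emptyset$ from every initial configuration, so the invariant law is unique.

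You do need $\int_\Lambda\blambda<\infty$, for instance $\blambda$ bounded as everywhere else in the paper; the statement's bare measurability hypothesis leaves this implicit.

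Compared with the citation, your argument gives a self-contained, explicit monotone coupling that fits the birth--death dynamics used later in the paper, at the cost of an ergodic step. A shorter static proof also exists. Since $\phi\ge 0$, the density $\diff\mu_{\Lambda,\blambda}/\diff P_{\Lambda,\blambda}=e^{-H}/\E_{P_{\Lambda,\blambda}}[e^{-H}]$ is decreasing under inclusion. The Harris--FKG inequality for Poisson processes then gives $\mu_{\Lambda,\blambda}(A)\le P_{\Lambda,\blambda}(A)$ for every increasing event $A$. Strassen's theorem converts this into a coupling with $\bX\subseteq\bY$.
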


	\subsection{Pinnings, finite-volume DLR equation, and correlation functions} \label{sec:pinnings}
	Given a point configuration $X \in \pointsets$, we write 
	\[
	\activities_X: \R^d \to \R_{\ge 0}, \quad y \mapsto \activities(y) \cdot \prod_{x \in X} e^{-\potential(x, y)}
	\]
	for the activity function $\activities$ under pinning $X$.
	If $X = \{x\}$ for some $x \in \R^d$, we simply write $\activities_x \coloneqq \activities_{\{x\}}$.
	
	Pinnings can be used to conveniently express conditional expectations in Gibbs point processes.
	In particular, let $\region' \subseteq \region \subset \R^d$ be bounded and measurable, and let $\bX \sim \mu_{\region, \activities}$, then $\mu_{\region', \activities_{\bX \setminus \region'}}$ is a regular conditional distribution for $\bX \cap \region'$ given $\bX \setminus \region'$.
	Equivalently, for every measurable function $f$ such that $\E_{\region, \activities}[f]$ exists, it holds that
	\begin{align}
		\E_{\region, \activities}[f] = \E_{\bX \sim \region, \activities}[\E_{\bY \sim \region', \activities_{\bX \setminus \region'}}[f(\bY \cup (\bX \setminus \region'))]] , \label{eq:finite_DLR}
	\end{align}
	which is known as the finite-volume DLR equation. We further use the Georgii, Nguyen, Zessin (GNZ) equation (see e.g.\  \cite{daley2008introduction}) stating that for every measurable $f: \R^d \times \pointsets \rightarrow \R$, we have 
	\begin{align}\label{eq:gnz}
		\E_{\bX \sim \region, \activity}\Big[\sum_{x \in \bX}f(x, \bX)\Big] = \int_{\region} \activities(x) \ \E_{\bX\sim\region, \activities} \big[ f(x, \bX \cup \{x\}) e^{-\sum_{y \in \bX}\phi(x, y)}\big] \diff x.
	\end{align}
	
	One consequence of the GNZ equations \eqref{eq:gnz} is that we can write the\footnote{\emph{A priori}, for a point process, the $1$-point correlation function is defined as an element of $L^1$ and hence is only defined almost-everywhere; for Gibbs point processes, the GNZ equations assure that the function in \eqref{eq:1-point-identity} can be taken as the correlation function and as such is defined everywhere.} $1$-point correlation function as 
	\begin{equation}\label{eq:1-point-identity}
		\rho_{\Lambda,\blambda}(v) = \blambda(v) \E_{\bX \sim \Lambda,\blambda}\left[\exp\left(-\sum_{x \in \bX} \phi(v,x) \right) \right]
	\end{equation}
	(see e.g.\ \cite[Eq.~(11)]{michelen2022strong}).  Iterating the GNZ equations yields a similar identity for the $k$-point correlation function and as such allow us to write the $k$-point correlation function in terms of $1$-point corrleation functions of pinned activities: for points $v_1,\ldots,v_k$ we have the identity \begin{equation}\label{eq:multipoint-correlation-identity}
		\rho_{\Lambda,\blambda}(v_1,\ldots,v_k) = \prod_{j = 1}^k \rho_{\Lambda,\blambda_{v_1,\ldots,v_{j-1}}}(v_j)
	\end{equation}
	which is the displayed equation in the proof of Corollary 15 in \cite{michelen2022strong}. 
	
	\subsection{Infinite-volume Gibbs measures and the infinite-volume pressure}
	
	There are many equivalent definitions of infinite-volume Gibbs measure (see \cite{ruelle1969statistical} for various notions and their equivalences, or \cite{jansen2018gibbsian} for a modern survey).  For us, the most convenient definition will be that an infinite-volume Gibbs measure is one that satisfies a version of the DLR equations similar to \eqref{eq:finite_DLR}.
	
	\begin{definition}[Infinite-volume Gibbs measure]\label{def:dlr}
		Given a repulsive potential $\potential$ and an activity function $\activities$, we say a probability measure $\mu$ on $(\pointsets, \events)$ satisfies the Dobrushin-Lanford-Ruelle (DLR) equations if, for all non-negative measurable functions $f: \pointsets \to \R_{\ge 0}$ and all bounded measurable $\region \subset \R^d$, it holds that
		\[
		\E_{\mu}[f] = \E_{\bX \sim \mu}\big[ \E_{\bY \sim \mu_{\region, \activities_{\bX \setminus \region}}}[f(\bY \cup (\bX \setminus \region))]\big].
		\]
		Every $\mu$ satisfying the DLR equations is called an \emph{infinite-volume Gibbs measure} compatible with $\potential$ and $\activities$.
	\end{definition}
	
	Equivalently, one could define infinite-volume Gibbs measure in terms of the GNZ equations \eqref{eq:gnz}.  As a consequence, one obtains identities similar to \eqref{eq:1-point-identity} and \eqref{eq:multipoint-correlation-identity}.  In particular, if $\mu$ is an infinite-volume Gibbs measure compatible with $\phi$ and $\blambda$, then its $k$-point correlation function is
	\begin{equation}\label{eq:k-pt-correlation-definition}
		\rho_{\mu,\blambda}(v_1,\ldots,v_k) = \prod_{j = 1}^k \blambda(v_j)  e^{- \sum_{1 \leq i < j \leq k} \phi(v_i,v_j)} \E_{\bY \sim \mu} \left[\exp\left(- \sum_{j = 1}^k \sum_{y \in \bY} \phi(v_j,y) \right) \right]
	\end{equation}
	(see \cite[Prop.~5.12]{jansen2018gibbsian}).  When $\mu$ is the unique infinite-volume Gibbs measure compatible with $\phi$ and $\blambda$ we simply write $\rho_{\blambda} = \rho_{\mu,\blambda}$.

	\subsection{Birth-death dynamics}\label{sec:birth-death-dynamics}

	Spatial birth-death dynamics are a natural Markov process associated with a potential $\potential$, an activity function $\activities$, and a bounded region $\region \subseteq \R^d$.
	For a function $f$ on $\pointsets$, we define the operators $D_x^+$ and $D_x^-$ as follows.
	\[
	D_x^{+} f (X) \coloneqq f(X \cup \{x\}) - f(X) \,,\hspace{1em} D_x^{-} f (X) \coloneqq f(X \setminus \{x\}) - f(X) \hspace{1em}\text{ for } x \in \R^d, X \in \pointsets.
	\]
	Given a repulsive potential $\potential$, the birth-death dynamics for an activity function $\activities$ on a bounded, measurable region $\region \subset \R^d$ is determined by the generator
	\begin{equation}\label{eq:generator-definition}
		L_{\region, \activities} f (X) \coloneqq \sum_{x \in X} D_x^- f (X) + \int_{\region} \activities(x)  e^{-D_x^+ H(X)} D_x^+ f (X)\, \diff x
	\end{equation}
	acting on $L^2(\mu_{\region, \activities})$ with dense domain
	\[
	\domain_0(L_{\region, \activities}) \coloneqq \{f \in L^2(\mu_{\region, \activities}) \mid \exists C \ge 0: \absolute{f} \le C \text{ and } f(X) = 0 \text{ for all } X \in \pointsets \text{ with } \size{X} > C\}.
	\]
	The Dirichlet form associated with $L_{\region, \activities}$ is given by
	\begin{equation}\label{eq:dirichlet-def}
		\dirichlet_{\region, \activities}(f, g) = \innerProduct{- L_{\region, \activities} f}{g}_{\mu_{\region, \activities}} \text{ for } f, g \in \domain_0(L_{\region, \activities}),
	\end{equation}
	and we write $\dirichlet_{\region, \activities}(f) = \dirichlet_{\region, \activities}(f, f)$.
	The following useful properties of $L_{\region, \activities}$ and $\dirichlet_{\region, \activities}$ were proven in \cite[Proposition 2.1]{bertini2002spectral}.
	\begin{enumerate}[(1)]
		\item $\dirichlet_{\region, \activities}$ is closable, and its closure is associated with a self-adjoint extension of $L_{\region, \activities}$ with respect to $\innerProduct{\cdot}{\cdot}_{\mu_{\region, \activities}}$.
		By abuse of notation, we write $\dirichlet_{\region, \activities}$ and $L_{\region, \activities}$ for the extensions and denote their respective domains by $\domain(L_{\region, \activities})$ and $\domain(\dirichlet_{\region, \activities})$.
		\item for every $p \in \N \cup \{\infty\}$, $P_t^{\region, \activities} \coloneqq e^{t L_{\region, \activities}}$ for $t \in \R_{\ge 0}$ defines a Markov semi-group on $L^p(\mu_{\region, \activities})$. 
		\item For every $t \in \R_{\ge 0}$, $P_t^{\region, \activities}$ is self-adjoint on $L^2(\mu_{\region, \activities})$.
	\end{enumerate}
	Moreover, we have the following useful connection between a Poincaré inequality for $L_{\region, \activities}$ and contraction of the semi-group $(P_t^{\region, \activities})_{t \ge 0}$ on $L^2(\mu_{\region, \activities})$.
	
	\begin{lemma}[{\cite[Theorem 4.2.5]{bakry2013analysis}}] \label{lemma:L2_contraction}
		Suppose there is some $\gamma < \infty$ such that $\Var_{\region, \activities}(f) \leq \gamma \cdot \dirichlet_{\region, \activities}(f)$ for all $f \in \domain(\dirichlet_{\region, \activities})$.
		Then, for all $f \in L^2(\mu_{\region, \activities})$ and $t \ge 0$, it holds that 
		\[
		\norm{P_t^{\region, \activities} f - \E_{\region, \activities}[f]}_{L^2(\mu_{\region, \activities})} \le \sqrt{\Var_{\region, \activities}(f)} \ e^{-t/\gamma} .
		\]
	\end{lemma}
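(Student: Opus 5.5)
Since $P_t$ is self-adjoint on $L^2(\mu)$ with $\mu=\mu_{\region,\activities}$, I would prove the statement by the spectral theorem. It is cleaner to give the direct energy (Grönwall) argument, though, using only the Poincaré inequality and the semigroup properties (1)--(3) listed above.

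Fix $f\in L^2(\mu)$. Since $P_t$ is Markov, $\E_\mu[P_tf]=\E_\mu[f]$: indeed $\langle P_t f,1\rangle=\langle f,P_t1\rangle=\langle f,1\rangle$ by self-adjointness and $P_t1=1$. Replacing $f$ by $f-\E_\mu[f]$, we may assume $\E_\mu[f]=0$. Now set $u(t)=\|P_tf\|_{L^2(\mu)}^2=\Var_\mu(P_tf)$.

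For $t>0$, the analytic semigroup generated by the self-adjoint nonpositive operator $L$ maps $L^2$ into $\domain(L)\subset\domain(\dirichlet)$. Therefore $u$ is differentiable on $(0,\infty)$ with
\[
u'(t)=2\langle LP_tf,P_tf\rangle_\mu=-2\dirichlet(P_tf)\le -\tfrac{2}{\gamma}\Var_\mu(P_tf)=-\tfrac{2}{\gamma}u(t),
\]
where the inequality is the assumed Poincaré inequality applied to $P_tf\in\domain(\dirichlet)$. By Grönwall, $u(t)\le u(s)e^{-2(t-s)/\gamma}$ for $0<s<t$.

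Letting $s\downarrow0$ and using strong continuity of the semigroup on $L^2$ gives $u(t)\le \|f\|^2e^{-2t/\gamma}=\Var_\mu(f)e^{-2t/\gamma}$. Taking square roots yields the claim.

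The only delicate point is justifying the differentiation for general $f\in L^2$ rather than $f\in\domain(L)$. There are two ways to handle this. One is the spectral theorem: write $P_t=\int_{[0,\infty)}e^{-t\lambda}\,dE_\lambda$ for $-L$. The Poincaré inequality forces the spectral measure of mean-zero $f$ to be supported on $[1/\gamma,\infty)$, whence $\|P_tf\|^2=\int e^{-2t\lambda}\,d\langle E_\lambda f,f\rangle\le e^{-2t/\gamma}\|f\|^2$ directly. The other is to prove the bound on the dense subspace $\domain(L)$ and extend by contractivity of $P_t$. I would present the spectral version as the primary argument, since it avoids all regularity issues.
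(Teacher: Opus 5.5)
Your proof is correct and follows the standard route. The paper gives no proof of its own and simply cites Bakry--Gentil--Ledoux, Theorem 4.2.5, whose argument is your computation $\frac{\diff}{\diff t}\|P_tf\|_2^2=-2\dirichlet(P_tf)\le-\frac{2}{\gamma}\|P_tf\|_2^2$ for mean-zero $f$, followed by Gr\"onwall and an extension from a dense class by contractivity. Your spectral-theorem version is an equally valid shortcut, once you spell out that $E_{[0,1/\gamma)}f$ is again mean-zero (the projection onto constants commutes with the spectral projections because $L1=0$), so the Poincar\'e inequality applied to it forces $E_{[0,1/\gamma)}f=0$.
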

	
	Furthermore, the following explicit representation of the Dirichlet form will be useful.
	\begin{lemma}\label{lem:dirichlet_explicit}
		For all $f,g \in \domain(\dirichlet_{\region, \activities})$, we have 
		$$
		\dirichlet_{\region, \activities}(f, g) = \int_{\region} \activities(x) \E_{\region, \activities} \big[  e^{-D_x^+ H} D_x^+f D_x^+g  \big] \diff x.
		$$
	\end{lemma}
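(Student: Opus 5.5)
The plan is to obtain the identity from the generator by the GNZ equation \eqref{eq:gnz}, first on the core $\domain_0(L_{\region,\activities})$ and then on the whole form domain by closure. Fix $f,g\in\domain_0(L_{\region,\activities})$, which are bounded and vanish on configurations with more than $C$ points. By definition,
\[
\dirichlet_{\region,\activities}(f,g) = -\E_{\region,\activities}\Big[g(\bX)\sum_{x\in\bX} D_x^- f(\bX)\Big] - \int_{\region}\activities(x)\,\E_{\region,\activities}\big[g\, e^{-D_x^+H} D_x^+ f\big]\diff x .
\]
This expectation is well defined: the sum over $x\in\bX$ has at most $C+1$ nonzero terms (since $f$ vanishes on large configurations), and the integrand in the second term is bounded by $\|\activities\|_\infty\cdot 4C^2$ on the bounded region $\region$.

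The main step is to apply the GNZ equation to the death term with $F(x,X) := g(X)\,\big(f(X\setminus\{x\}) - f(X)\big)$. This gives
\[
\E\Big[\sum_{x\in\bX} F(x,\bX)\Big] = \int_\region \activities(x)\,\E\big[e^{-\sum_{y\in\bX}\phi(x,y)}\, F(x,\bX\cup\{x\})\big]\diff x .
\]
For $\mathrm{Leb}$-a.e.\ $x$ we have almost surely $x\notin\bX$, because under $\mu_{\region,\activities}$ the configuration is absolutely continuous with respect to a Poisson process. For such $x$:
\begin{itemize}
\item $(\bX\cup\{x\})\setminus\{x\}=\bX$, so $F(x,\bX\cup\{x\}) = g(\bX\cup\{x\})\,\big(f(\bX)-f(\bX\cup\{x\})\big) = -g(\bX\cup\{x\})\,D_x^+f(\bX)$;
\item $\sum_{y\in\bX}\phi(x,y) = D_x^+H(\bX)$.
\end{itemize}
Hence the death term contributes $+\int_\region\activities(x)\,\E\big[e^{-D_x^+H}\, g(\bX\cup\{x\})\, D_x^+f\big]\diff x$. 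Adding the birth term and using $g(\bX\cup\{x\})-g(\bX) = D_x^+g(\bX)$ gives the claimed formula. When $\phi=\infty$, the factor $e^{-D_x^+H}=0$ kills those terms consistently on both sides, so no issue arises there.

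To extend to all $f,g\in\domain(\dirichlet_{\region,\activities})$, use the closure from \cite[Proposition 2.1]{bertini2002spectral}. Define $\tilde{\dirichlet}(f,g)$ to be the right-hand side, which is the inner product in $L^2\big(\activities(x)\,e^{-D_x^+H}\diff x\otimes\mu_{\region,\activities}\big)$ of $D^+f$ and $D^+g$.

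For $f$ in the form domain, choose $f_n\in\domain_0$ with $f_n\to f$ in $L^2(\mu)$ and Cauchy in the form norm. By the identity on the core, $D^+f_n$ is Cauchy in that weighted $L^2$ space, so it has a limit $h$. We then need $h = D^+f$ a.e.; this is the main technical point. Since $f_n\to f$ in $L^2(\mu)$, a subsequence converges $\mu$-a.s. Pairs $(x,\bX)$ with weight $\activities(x)e^{-D_x^+H(\bX)}$ have $\bX\cup\{x\}$ distributed absolutely continuously with respect to $\mu$, by the GNZ equation (equivalently, the Mecke formula). So $f_n(\bX\cup\{x\})\to f(\bX\cup\{x\})$ for a.e.\ weighted $(x,\bX)$, and therefore $h=D^+f$.

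Passing to the limit in the identity on the core, using polarization and continuity of both sides in the form norm, finishes the proof.
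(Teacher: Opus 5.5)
Your proposal is correct and follows the paper's proof: apply the GNZ equation to the death term of $\langle -L_{\region,\activities}f, g\rangle$, use $D_x^- f(\bX\cup\{x\}) = -D_x^+ f(\bX)$, and combine with the birth term to produce $D_x^+ g$. Your additional closure step makes rigorous the passage from $\domain_0(L_{\region,\activities})$ to all of $\domain(\dirichlet_{\region,\activities})$, which the paper leaves implicit; you do this by identifying the weighted-$L^2$ limit of $D^+ f_n$ with $D^+ f$, using that by GNZ the law of $\bX\cup\{x\}$ under the weighted measure is absolutely continuous with respect to $\mu_{\region,\activities}$.
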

	\begin{proof}
		By definition of the generator $L$, we have 
		\begin{align*}
			-\dirichlet_{\region, \activities}(f, g) = \E_{\bX \sim \region, \activity}\Big[\sum_{x \in \bX}g(\bX)D_x^-f(\bX)\Big] + \int_{\region} \activities(x) \E_{\bX\sim\region, \activities} \big[  e^{-D_x^+ H(\bX)} g(\bX)D_x^+f(\bX)  \big] \diff x.
		\end{align*} Applying the GNZ equation \eqref{eq:gnz} to the first term, we get 
		\begin{align*}
			&\E_{\bX \sim \region, \activity}\Big[\sum_{x \in \bX}g(\bX)D_x^-f(\bX)\Big] = \int_{\region} \activities(x) \E_{\bX\sim\region, \activities} \big[  e^{-D_x^+ H(\bX)} g(\bX \cup \{x\})D_x^-f(\bX \cup \{x\})  \big] \diff x.
		\end{align*}
		Noting that $D_x^-f(\bX \cup \{x\}) = -D_x^+f(\bX)$ and plugging the result back into our expression for $-\dirichlet_{\region, \activities}(f,g)$ then yields the lemma. 
	\end{proof}

	\subsection{Elementary inequalities and Poisson moments}
	We will make frequent use of the following standard elementary inequality:
	\begin{lemma}
		\label{lem:complexproductdifference}
		Let $\{z_j\}_{j =1 }^n$ and $\{w_j\}_{j =1 }^n$ be complex numbers with $|z_j|, |w_j| \le \theta$ for all $j$. Then,
		$$
		\left| \prod_{j=1}^n z_j - \prod_{j=1}^n w_j \right| \le \theta^{n-1} \sum_{j=1}^n |w_j - z_j|.
		$$
	\end{lemma}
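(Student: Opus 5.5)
We prove the bound by a telescoping argument that changes one factor at a time and then bounds each step with the modulus hypothesis. For $k = 0, 1, \dots, n$ we set
\[
P_k := \prod_{j=1}^{k} w_j \prod_{j=k+1}^{n} z_j ,
\]
with empty products equal to $1$. Then $P_0 = \prod_j z_j$ and $P_n = \prod_j w_j$, so
\[
\prod_{j=1}^n z_j - \prod_{j=1}^n w_j = \sum_{k=1}^{n} (P_{k-1} - P_k).
\]

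Each consecutive difference factors. The products $P_{k-1}$ and $P_k$ differ only in the $k$-th factor, so
\[
P_{k-1} - P_k = \Big(\prod_{j<k} w_j\Big)\Big(\prod_{j>k} z_j\Big)(z_k - w_k).
\]
The two bracketed products together contain exactly $n-1$ factors. Each of these factors has modulus at most $\theta$, so $|P_{k-1}-P_k| \le \theta^{n-1}|z_k - w_k|$.

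Applying the triangle inequality to the telescoping sum gives the claimed bound
\[
\Big|\prod_{j=1}^n z_j - \prod_{j=1}^n w_j\Big| \le \theta^{n-1}\sum_{k=1}^n |w_k - z_k| .
\]
There is no real obstacle. The only points to check are that exactly $n-1$ factors remain after isolating $z_k - w_k$, and the degenerate case $n = 1$, where the empty product is $1$ and $\theta^{0} = 1$.
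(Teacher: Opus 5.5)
Your proof is correct and is essentially the paper's argument. The paper inducts on $n$ using $\bigl|z_{n+1}Z - w_{n+1}W\bigr| \le |z_{n+1}|\,|Z-W| + |W|\,|z_{n+1}-w_{n+1}|$, and unrolling that induction gives exactly your telescoping sum over the hybrid products $P_k$.
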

	\begin{proof}
		We proceed by induction on $n$, noting the $n = 1$ case is trivial. For tuples $\{z_j\}_{j =1 }^{n+1}$ and $\{w_j\}_{j =1 }^{n+1}$ with $|z_j|, |w_j| \le \theta$ for all $j \leq n+1$ define $Z = \prod_{j = 1}^{n} z_j$ and $W = \prod_{j = 1}^{n} w_j$ and note \begin{equation*}
			\left|\prod_{j = 1}^{n+1} z_j - \prod_{j = 1}^{n+1} w_j \right| \leq |z_{n+1}| \cdot |Z - W| + |W| \cdot |z_{n+1} - w_{n+1}|\,.
		\end{equation*}
		The proof follows by induction.
	\end{proof}
	
	We will also make frequent use of various exponential moment identities for Poisson random variables due to our frequent use of Poisson domination (Lemma \ref{lem:Poisson-domination}). 
	
	\begin{lemma}\label{lemma:poisson_moments}
		Let $\gamma \ge 0$ and let $\xi \sim \Pois(\gamma)$.
		For all $b \in \R$ and $K \in \N_0$ it holds that:
		\begin{enumerate}[(1)]
			\item\label{lemma:poisson_moments:exp} $\E[ e^{b\xi}] =  \exp((e^b-1) \gamma)$ and $\displaystyle \E[\ind{\xi > K} e^{b\xi}] =  \exp((e^b-1) \gamma) \cdot \P(\xi'>K)$, where $\xi' \sim \Pois(\gamma e^b)$.
			\item \label{lemma:poisson_moments:linexp}  $\E[\xi e^{b\xi}] = \gamma e^b \exp((e^b-1) \gamma)$.
			\item \label{lemma:poisson_moments:quadexp}$\displaystyle \E[\xi^2 e^{b\xi}] =  (\gamma e^b + \gamma^2 e^{2b}) \exp((e^b-1) \gamma)$. 
			\item \label{lemma:poisson_moments:cubeexp}$\displaystyle \E[\xi^3 e^{b\xi}] =  (\gamma e^b + 3\gamma^2 e^{2b} + \gamma^3 e^{3b}) \exp((e^b-1) \gamma)$. 
		\end{enumerate}
	\end{lemma}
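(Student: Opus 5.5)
The four identities all follow from the probability generating function of $\xi$. Writing $p_k = e^{-\gamma}\gamma^k/k!$, the plan is to compute
\[
M(b) \coloneqq \E[e^{b\xi}] = \sum_{k \ge 0} e^{-\gamma}\frac{(\gamma e^b)^k}{k!} = \exp((e^b-1)\gamma),
\]
which is the first half of \ref{lemma:poisson_moments:exp}.

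For the truncated statement in \ref{lemma:poisson_moments:exp}, we use the same reweighting and keep only the terms with $k > K$:
\[
\E[\ind{\xi>K}e^{b\xi}] = \sum_{k>K} e^{-\gamma}\frac{(\gamma e^b)^k}{k!} = e^{(e^b-1)\gamma}\sum_{k>K} e^{-\gamma e^b}\frac{(\gamma e^b)^k}{k!} = e^{(e^b-1)\gamma}\,\P(\xi'>K).
\]
In other words, tilting $\xi$ by $e^{b\xi}$ produces exactly $\Pois(\gamma e^b)$, up to the normalizing factor $M(b)$.

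For \ref{lemma:poisson_moments:linexp}--\ref{lemma:poisson_moments:cubeexp}, we use that $\E[\xi^j e^{b\xi}] = M(b)\,\E[(\xi')^j]$ with $\xi' \sim \Pois(\mu)$ and $\mu = \gamma e^b$. This is the same tilting identity, applied to the full sum weighted by $k^j$. It then suffices to insert the standard raw moments of a Poisson variable, which come from the factorial moments $\E[\xi'(\xi'-1)\cdots(\xi'-j+1)] = \mu^j$:
\[
\E[\xi'] = \mu, \qquad \E[(\xi')^2] = \mu + \mu^2, \qquad \E[(\xi')^3] = \mu + 3\mu^2 + \mu^3.
\]
The coefficients $1, 3, 1$ in the last expression are the Stirling numbers of the second kind $S(3,\cdot)$. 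Substituting $\mu = \gamma e^b$ gives exactly the stated formulas.

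There is no genuine obstacle here. The only step requiring any care is the bookkeeping in the cubic moment, where one expands $k^3 = k(k-1)(k-2) + 3k(k-1) + k$. Alternatively, one can differentiate $M(b)$ three times in $b$ and check that this yields the same expression.
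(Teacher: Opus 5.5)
Your proposal is correct and is essentially the paper's proof. The paper also computes the sums directly, using the decompositions $\xi^2=\xi(\xi-1)+\xi$ and $\xi^3=\xi(\xi-1)(\xi-2)+3\xi(\xi-1)+\xi$ to obtain the factorial moments $\gamma^j e^{jb}e^{(e^b-1)\gamma}$; your exponential-tilting reformulation to $\Pois(\gamma e^b)$ is the same computation expressed slightly more conceptually.
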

	\begin{proof}
		For \ref{lemma:poisson_moments:exp}, we compute
		\begin{align*}
			\E[\ind{\xi > K} e^{b \xi}] 
			= e^{-\gamma} \sum_{k > K} \frac{\gamma^k}{k!}  e^{b k} 
			= e^{-\gamma} e^{\gamma e^b}\cdot  \P(\xi'>K).
		\end{align*}
		To compute $\E[e^{b \xi}]$ we follow the same calculation without the indicator. 
		
		For \ref{lemma:poisson_moments:linexp}, we note that
		\begin{align*}
			\E[\xi e^{b \xi}] 
			= e^{-\gamma} \sum_{k \geq 1} \frac{\gamma^k}{k!} k e^{b k} 
			= \gamma e^b e^{-\gamma} \sum_{k \geq 1} \frac{\gamma^{k-1}}{(k-1)!} e^{b (k-1)} 
			= \gamma e^b e^{-\gamma} e^{\gamma e^b}.
		\end{align*}
		Next, for \ref{lemma:poisson_moments:quadexp}, we first write $\E[\xi^2 e^{b\xi}] = \E[\xi (\xi-1) e^{b\xi}] + \E[\xi e^{b\xi}]$.
		For calculating the second term, we use \ref{lemma:poisson_moments:linexp}.
		For the first term, we have
		\[
		\E[\xi (\xi-1) e^{b\xi}] = e^{-\gamma} \sum_{k \ge 2} \frac{\gamma^k}{k!} k (k-1) e^{bk} = \gamma^2 e^{2b} e^{-\gamma} e^{\gamma e^b}.
		\]
		Lastly, for \ref{lemma:poisson_moments:cubeexp}, we write $\xi^3 = \xi(\xi-1)(\xi-2) + 3\xi(\xi-1) + \xi$ and then proceed just like before.
	\end{proof}
	
	\section{Spatial mixing from spectral gap} \label{sec:spatial-mixing}
	In this section, we prove that a uniform Poincaré inequality for the spatial birth-death dynamics implies single-site SSM.
	The main result of this section is the following theorem.
	\begin{theorem} \label{thm:ssssm}
		Let $\activity \in \R_{\ge 0}$ and let $\potential$ be a repulsive potential that decays exponentially with constants $B$ and $\alpha$ (see Definition \ref{def:potential_decay}).
		Suppose there is a constant $\gamma < \infty$ such that, for all activity functions $\activities \le \activity$ and all bounded, measurable $\region \subset \R^d$, it holds that
		\[
		\Var_{\region, \activities}(f) \leq \gamma \ \dirichlet_{\region, \activities}(f)  \text{ for all } f \in \domain(\dirichlet_{\region, \activities}).
		\]
		Then $\potential$ satisfies single-site SSM up to $\activity$ for constants $C \coloneqq C(\gamma,\activity, B, \alpha, d) < \infty$ and $m \coloneqq m(\gamma, \activity, B, \alpha) > 0$ (see Definition \ref{def:spatial_mixing}).
	\end{theorem}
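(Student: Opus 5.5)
We prove single-site SSM by writing the effect of a pinning as a covariance under $\mu_{\region,\activities}$, and then bounding that covariance with a Bertini--Cancrini--Cesi type decay-of-covariance estimate derived from the uniform Poincaré inequality.

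\textbf{Step 1: reduce to a covariance.} Fix $x$, $\region'\subseteq\region$ and an event $A\in\events_{\region'}$. Set $w(X)\coloneqq e^{-\sum_{y\in X}\potential(x,y)}\in[0,1]$. Then $\mu_{\region,\activities_x}$ has density $w/\E_{\region,\activities}[w]$ with respect to $\mu_{\region,\activities}$, so
\[
\mu_{\region,\activities_x}(A)-\mu_{\region,\activities}(A)=\frac{\Cov_{\region,\activities}(\ind{A},w)}{\E_{\region,\activities}[w]}.
\]
We bound the denominator by Poisson domination (Lemma~\ref{lem:Poisson-domination}) and Jensen:
\[
\E[w]\ge \exp\Big(-\int\activity\,\potential(x,y)\,\diff y\Big).
\]
This is bounded below uniformly, except for the hard-core part where $\potential=\infty$. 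To handle that part we replace $w$ by $1-w=1-\prod_{y}(1-(1-e^{-\potential(x,y)}))$, which satisfies $1-w\le\sum_y B e^{-\alpha\dist(x,y)}$. Alternatively, we avoid the issue by working with $\E[w\ind{A}]/\E[w]$ and bounding the relevant ratio directly. Either way, it suffices to bound $|\Cov(\ind{A},w)|$ by $C\vol(\region')e^{-m\dist(x,\region')}$.

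\textbf{Step 2: covariance decay via the semigroup.} We write
\[
\Cov(f,g)=-\int_0^\infty \frac{\diff}{\diff t}\E[(P_t f)g]\,\diff t=\int_0^\infty \dirichlet_{\region,\activities}(P_tf,g)\,\diff t,
\]
and split the time integral at $T=c\,\dist(x,\region')$. For $t>T$, Lemma~\ref{lemma:L2_contraction} gives $\|P_tf-\E f\|_2\le e^{-t/\gamma}$, so this part contributes $e^{-T/\gamma}$ times the $\dirichlet$-size of $g$. For $t\le T$, we use Lemma~\ref{lem:dirichlet_explicit}:
\[
\dirichlet(P_tf,g)=\int\activities(z)\,\E\big[e^{-D_z^+H}\,D_z^+P_tf\,D_z^+g\big]\,\diff z.
\]
Here $D_z^+g$ is bounded by $1-e^{-\potential(x,z)}\le Be^{-\alpha\dist(x,z)}$, so $g$ is concentrated near $x$. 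The missing ingredient is that $D_z^+P_tf$ is small when $z$ is far from $\region'$ and $t$ is small: a propagation-of-influence (finite speed) estimate.

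\textbf{Step 3: propagation of influence, the main obstacle.} We show that for $f$ depending only on $X\cap\region'$,
\[
\sup_X |D_z^+P_tf(X)|\le C e^{\kappa t-\alpha'\dist(z,\region')}.
\]
The plan is a graphical construction / coupling of the birth-death dynamics started from $X$ and $X\cup\{z\}$. The extra point dies at rate $1$. While alive, it changes birth acceptance at $y$ with probability $1-e^{-\potential(z,y)}$, which creates discrepancies. The discrepancies then spread as a branching process whose offspring kernel is $\activity(1-e^{-\potential})$, of total mass $\activity C_\potential$, and whose offspring are displaced according to an exponentially decaying kernel. A Gronwall argument on the expected discrepancy density $u_t(y)$, using the convolution inequality $\partial_t u\le -u+\activity K* u$ with $K=Be^{-\alpha|\cdot|}$, gives
\[
u_t(y)\le e^{\kappa t}\,\text{(exponentially decaying in }|y-z|).
\]
This uses that exponential kernels are closed under convolution up to constants, via weights $e^{\alpha'|y|}$ with $\alpha'<\alpha$. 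Since the potential has unbounded range, the argument requires a class of ``approximately local'' functions that is closed under the evolution. This is where the main technical work lies.

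\textbf{Step 4: choose constants.} Combining Steps 2 and 3, the small-$t$ part contributes
\[
\int_0^T\!\!\int Be^{-\alpha\dist(x,z)}\,Ce^{\kappa t-\alpha'\dist(z,\region')}\,\diff z\,\diff t\lesssim \vol(\region')\,e^{\kappa T-\alpha''\dist(x,\region')}.
\]
Choosing $c$ small, so that $\kappa c<\alpha''/2$, makes both parts at most $C\vol(\region')e^{-m\dist(x,\region')}$. The factor $\vol(\region')$ enters through the integral over $\region'$ of the decay profile. Finally, taking the supremum over $A$ gives the total-variation bound in Definition~\ref{def:spatial_mixing}.
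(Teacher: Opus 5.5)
Your overall route matches the paper's. You reduce to $\Cov_{\region,\activities}(\ind{A},w)/\E_{\region,\activities}[w]$, then bound the covariance using the spectral gap for large times and a propagation-of-influence estimate for small times; this is Proposition~\ref{prop:covariance} with Lemma~\ref{lemma:unbounded_4.2}. Your variant $\Cov(f,g)=\int_0^\infty\dirichlet_{\region,\activities}(P_tf,g)\,\diff t$ evolves only $f$, which is a harmless simplification of the paper's carr\'e-du-champ identity.

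\textbf{The gap is in Step~4.} With $f=\ind{A}$ and $g=w$, the large-time part equals $\Cov_{\region,\activities}(P_Tf,g)$. The Poincar\'e inequality only bounds it by $e^{-T/\gamma}\sqrt{\Var_{\region,\activities}(f)\Var_{\region,\activities}(g)}$. Here $\Var(g)$ is of order one, independent of $\region'$. Also $\Var(\ind{A})\le\min\{\mu(A),\mu(A^c)\}\le\activity\vol(\region')$, and this is sharp in general. So this term is at best $O(\sqrt{\vol(\region')}\,e^{-T/\gamma})$, never $O(\vol(\region')e^{-T/\gamma})$, and the $L^2$ spectral-gap bound cannot do better.

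What you actually obtain is $\|\mu_{\region,\activities}-\mu_{\region,\activities_x}\|_{\region'}\le C(1+\vol(\region'))e^{-m\dist(x,\region')}$. That is single-site SSM only for $\vol(\region')\ge1$. Definition~\ref{def:spatial_mixing} requires the factor $\vol(\region')$ for arbitrarily small $\region'$, and taking the minimum with the trivial bound $2\activity\vol(\region')$ does not produce it.

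The paper flags exactly this defect and repairs it with Lemma~\ref{lem:upgrade-to-small-vol}:
\begin{itemize}
\item The $\vol\ge1$ estimate on balls $B_{T/2}(v)$, plus truncation of $\sum_y\potential(v,y)$ via exponential decay, gives multiplicative closeness of one-point correlation functions: $\rho_{\region,\activities}(v)e^{-\potential(v,x)}=\rho_{\region,\activities_x}(v)\bigl(1\pm Ce^{-m\dist(v,x)}\bigr)$.
\item Iterating via GNZ extends this to $k$-point functions.
\item The Michelen--Perkins representation $\mu_{\region,\activities}(\bX\cap\region'\in A)=\sum_k\frac1{k!}\int_{(\region')^k}\ind{A}\,\rho\,h$ then gives a total-variation bound linear in $\vol(\region')$.
\end{itemize}
You need this, or some other genuinely first-order argument, to finish.

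\textbf{Two smaller points.}
\begin{itemize}
\item In Step~1, Jensen gives $\E[w]\ge\exp(-\activity\int\potential(x,y)\,\diff y)$, which is $0$ for hard-core potentials such as hard spheres. Passing to $1-w$ does not change the denominator. Instead, $w$ is decreasing, so Poisson domination and the exact Poisson Laplace functional give $\E_{\region,\activities}[w]\ge\exp\bigl(-\activity\int(1-e^{-\potential(x,y)})\,\diff y\bigr)\ge e^{-\activity BC_{\alpha,d}}$.
\item In Step~3, your pointwise bound $Ce^{\kappa t-\alpha'\dist(z,\region')}$ is $O(e^{\kappa t})$ on the whole unit neighbourhood of $\region'$. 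Integrating it against $e^{-\alpha\dist(x,z)}$ yields the volume of that neighbourhood rather than $\vol(\region')$, even in the small-time part. You need the finer form $\|D_z^+P_tf\|_\infty\le e^{\kappa t}\bigl(\ind{\region'}(z)+\int_{\region'}e^{-\alpha'\|z-y\|}\,\diff y\bigr)$, which is what the paper's kernel iteration delivers.
\end{itemize}
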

	
	Recall that single-site SSM is ultimately a statement at projected total variation distance.  In particular, recall that given $\Delta \in \borel$ we define the projected total variation distance $\|{P - Q}\|_{\Delta}$ between probability distributions $P, Q$ on $(\pointsets, \events)$ as the total variation distance between the pushforwards of $P$ and $Q$ under the map $\pointsets \to \pointsets, X \mapsto X \cap \Delta$.  To simplify notation throughout this section, we will write $\norm{\cdot}_{p}$ for $\norm{\cdot}_{L^p(\mu_{\region, \activities})}$ for every $p \in \N \cup \{\infty\}$ whenever the region $\region \subset \R^d$ and the activity function $\activities$ are clear from the context.

	At the core of our proof of Theorem \ref{thm:ssssm} is the derivation of a correlation-decay result that generalizes that in \cite[Theorem 4.1]{bertini2002spectral} in two ways. 
	Firstly, it applies without assuming a bounded range of the potential and only needs an exponential decay as in Definition \ref{def:potential_decay}. Secondly, instead of only applying to local functions, it applies to functions $f,g$ on $\pointsets$ that are ``approximately local'' on a region $\Delta$ in the sense that $|f(X \cup \{x\}) - f(X)|$ decays exponentially in the distance $\dist(x, \Delta)$ for all $x \notin \Delta$ and $X \in \pointsets$. By contrast, an (ordinary) $\Delta$-local function $f$ is required to satisfy $|f(X \cup \{x\}) - f(X)| = 0$ whenever $x \notin \Delta$. 
	
	\begin{definition}[Approximately local functions]\label{def:approxlocal}
		Given a bounded region $\Delta \subset \R^d$ and $K, \kappa \in \mathbb{R}$ such that $0 \le K < \infty$ and $\kappa > 0$, we say that a function $f$ on $\pointsets$ is $(\Delta,\kappa, K)$-local if for all $x \notin \Delta$,
		\[
		\|D_x^+ f\|_\infty \leq K e^{- \kappa\cdot\dist(x,\Delta)}\,.
		\]
		We note that a $\Delta$-local function is $(\Delta,\kappa, 0)$-local for any choice of $\kappa > 0$, and that a $(\Delta,\kappa, K)$-local function, it is also $(\Delta',\kappa', K')$-local for every $\Delta' \supseteq \Delta$, $\kappa' \le \kappa$ and $K' \ge K$.
	\end{definition}
	To introduce our correlation decay result, we further need to introduce the semi-norm $\tnorm{\cdot}$.
	Given a measurable function $f$ on $\pointsets$, we define 
	\[
	\tnorm{ f } \coloneqq \int \|D_x^+ f\|_\infty \,dx\,.
	\]
	Our correlation decay result now reads as follows.
	\begin{proposition}\label{prop:covariance}
		Given an activity function $\activities \le \activity$, a repulsive potential $\potential$ satisfying exponential decay for constants $\alpha > 0$ and $B < \infty$ (see Definition \ref{def:potential_decay}), and a bounded measurable region $\region \subset \R^d$, suppose there is a constant $\gamma < \infty$ such that 
		\[
		\Var_{\region, \activities}(f) \leq \gamma \ \dirichlet_{\region, \activities}(f) \text{ for all } f \in \domain(\dirichlet_{\region, \activities})
		\]
		then, for each $\kappa > 0,  K \in [0,\infty)$, there are constants $m_0 = m_0(\gamma, \activity, \alpha, B, \kappa) > 0$ and $C_0 = C_0(\gamma, \activity, \alpha, B, \kappa) < \infty$ so that the following holds.  
		For all bounded, measurable $\region_f,\region_g \subset \region$ with $\dist(\region_f, \region_g) > 0$ and all measurable functions $f,g$ that are $(\region_f,\kappa, K)$- and $(\region_g,\kappa, K)$-local, we have 
		\[
		\absolute{\Cov_{\region, \activities}(f,g)} \leq C_0  (\tnorm{f} \cdot \tnorm{g} + K (\tnorm{f} + \tnorm{g}) + \sqrt{\Var_{\region,\activities}(f)} \sqrt{\Var_{\region,\activities}(g)})  e^{- m_0  \cdot\dist(\Lambda_f,\Lambda_g)}\,.
		\]
	\end{proposition}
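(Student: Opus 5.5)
We follow the Bertini--Cancrini--Cesi strategy of \cite[Theorem 4.1]{bertini2002spectral}. Write $P_t = P_t^{\region,\activities}$, $\mu = \mu_{\region,\activities}$, and $\Delta = \dist(\region_f,\region_g)$. Since $P_t$ is self-adjoint, we split the covariance at a time $T = c\Delta$, with $c$ chosen at the end:
\[
\Cov(f,g) = \E[f\,g] - \E[P_T f\,P_T g] + \Cov(P_T f, P_T g).
\]
By Lemma \ref{lemma:L2_contraction} and Cauchy--Schwarz, the last term is bounded by $\sqrt{\Var(f)\Var(g)}\,e^{-2T/\gamma}$, which contributes the variance part of the claimed bound. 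For the first difference, note that $\E[P_T f\,P_T g] = \E[f\,P_{2T}g]$. We then write $\E[fg] - \E[P_s f\, P_s g]$ as $-\int_0^T \frac{d}{ds}\E[P_sf\,P_sg]\,ds = 2\int_0^T \dirichlet(P_sf,P_sg)\,ds$. By Lemma \ref{lem:dirichlet_explicit},
\[
\dirichlet(P_sf,P_sg) = \int_\region \activities(x)\,\E\big[e^{-D_x^+H}\,D_x^+P_sf\,D_x^+P_sg\big]\,\diff x,
\]
which is at most $\activity\int \|D_x^+P_sf\|_\infty \|D_x^+P_sg\|_\infty\,\diff x$. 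So everything reduces to a propagation-of-influence estimate for the semigroup.

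The key lemma (the analogue of \cite[Lemma 4.2]{bertini2002spectral}) we aim to prove is
\[
\|D_x^+ P_s f\|_\infty \le e^{Ms}\int \|D_y^+ f\|_\infty\, e^{-\beta|x-y|}\,\diff y \;+\; (\text{similar terms})
\]
for some $M,\beta$ depending on $\activity, B,\alpha$. Equivalently, $D^+P_s f$ is dominated pointwise in $x$ by $e^{sA}u$, where $u(y) = \|D_y^+f\|_\infty$ and $A$ is an integral operator with kernel $a(x,y) \lesssim \activity B e^{-\alpha|x-y|}$. To obtain it, we couple the birth--death dynamics started from $X$ and from $X\cup\{x\}$, using the same death clocks and the same birth proposals. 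The extra point dies at rate $1$. Every accepted birth at $y$ in one copy but not the other creates a discrepancy, and this occurs at rate $\activities(y)(1-e^{-\phi(x,y)}) \le \activity B e^{-\alpha \dist(x,y)}$. The discrepancy set is therefore dominated by a branching random walk with exponentially decaying jump kernel. Alternatively, one can differentiate: $\partial_s D_x^+P_sf = D_x^+ L P_s f$, and commuting $D_x^+$ with $L$ gives $L$ applied to $D_x^+P_sf$ plus an error term controlled by $\int \activities(y)(1-e^{-\phi(x,y)})|D_y^+P_sf|\,\diff y$. A Duhamel/Grönwall argument, using that $P_s$ is an $L^\infty$ contraction, then yields the pointwise bound $\|D^+P_sf\|_\infty \le e^{s(A-1)}u$ componentwise. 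Because $\int e^{\eta|z|}a(z)\,\diff z<\infty$ for $\eta<\alpha$, the exponential tilt $e^{\eta|x-y|}$ of $e^{sA}$ grows at most like $e^{M_\eta s}$.

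To finish, insert approximate locality. We have $u_f(y) \le K e^{-\kappa\dist(y,\region_f)}$ off $\region_f$, and the integral of $u_f$ over $\region_f$ is at most $\tnorm f$. Then
\[
\int \|D_x^+P_sf\|_\infty\|D_x^+P_sg\|_\infty\,\diff x \lesssim e^{2Ms}\big(\tnorm f\tnorm g + K(\tnorm f+\tnorm g) + K^2\text{-type term}\big)\,e^{-\beta'\Delta}
\]
with $\beta'=\min(\eta,\kappa)/2$. This follows by splitting space according to which of $\region_f$, $\region_g$ is closer and using the triangle inequality $|x-y|+|x-y'|\ge \Delta$. The $K^2$ term is where care is needed. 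We plan to absorb it either by noting that the statement only needs $K$ times the seminorms (if $\tnorm{\cdot}$ already includes the tails) or by a slight modification of the split. Integrating over $s\in[0,T]$ gives $T e^{2MT}e^{-\beta'\Delta}$. Choosing $c$ small so that $2Mc<\beta'/2$ produces a total decay of $e^{-m_0\Delta}$ with $m_0=\min(\beta'/2,2c/\gamma)$, after absorbing $T$ into the constant.

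The main obstacle is the propagation-of-influence lemma for unbounded-range potentials. The commutator $[D_x^+,L]$ involves differences of acceptance rates $e^{-D^+_yH}$ that depend on the whole configuration, with no finite-range cutoff, and these must be bounded uniformly in $X$ by a summable exponentially decaying kernel. Repulsiveness gives $|e^{-D_y^+H(X\cup x)}-e^{-D_y^+H(X)}|\le 1-e^{-\phi(x,y)}$, and this single inequality is what we rely on to close the estimate.
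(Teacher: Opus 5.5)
Your plan is essentially the paper's proof. You split at a time $T\propto\dist(\region_f,\region_g)$ and handle the $P_T$ part by Lemma~\ref{lemma:L2_contraction}. The remainder becomes $2\int_0^T\dirichlet(P_sf,P_sg)\,\diff s$, and you control $\|D_x^+P_sf\|_\infty$ by a Duhamel bound whose commutator term uses $\|D_x^+e^{-D_y^+H}\|_\infty\le 1-e^{-\potential(x,y)}$. Your exponential tilt does the job of the paper's Lemma~\ref{claim:gamma}.

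The $K^2$ worry goes away with your first option. By definition $\tnorm{f}=\int_{\R^d}\|D_x^+f\|_\infty\,\diff x$ already includes the tails. Split the pairs $(y,z)$ as the paper does: both near their sets (within $\dist(\region_f,\region_g)/3$), $z$ far from $\region_g$, or $y$ far from $\region_f$. Apply the pointwise bound $Ke^{-\kappa\,\dist(\cdot,\cdot)}$ to only one factor and integrate the kernel in that variable. This yields only $\tnorm{f}\tnorm{g}$ and $K(\tnorm{f}+\tnorm{g})$ terms.
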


	The main difference between our proof of Proposition \ref{prop:covariance} and that of \cite[Theorem 4.1]{bertini2002spectral} lies in proving the following lemma, which generalizes \cite[Lemma 4.2]{bertini2002spectral}.
	\begin{lemma} \label{lemma:unbounded_4.2}
		In the setting of Proposition \ref{prop:covariance}, there are constants $M = M(\activity, \alpha, B, d) < \infty$, $C_1 = C_1(\activity,B,\alpha,\kappa,d) < \infty$ and $m_1 = m_1(\alpha, \kappa) > 0$ such that, for all $t \ge 0$
		\[
		\big| \E_{\region, \activities}\big[P_t^{\region, \activities}(f g) - P_t^{\region, \activities}f \ 
		P_t^{\region, \activities}g\big]\big| \le C_1 (\tnorm{f} \cdot  \tnorm{g} + K  (\tnorm{f} + \tnorm{g})) \cdot e^{t  M - m_1  \cdot\dist(\region_f, \region_g)} .
		\]
	\end{lemma}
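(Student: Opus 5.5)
The plan is to follow the Bertini--Cancrini--Cesi argument for \cite[Lemma 4.2]{bertini2002spectral}, built on the carré-du-champ, and to replace the finite-range light-cone estimate by a weighted one. Set $F_s = P_{t-s}f$, $G_s = P_{t-s} g$ and $\Phi(s) = P_s(F_s G_s)$. Then $\Phi(t)=P_t(fg)$ and $\Phi(0)=P_tf\,P_tg$. Differentiating gives
\[
\tfrac{\diff}{\diff s}\Phi(s) = P_s\big(L(F_sG_s) - F_s LG_s - G_s LF_s\big) = 2P_s\Gamma(F_s,G_s),
\]
with
\[
2\Gamma(F,G)(X) = \sum_{x\in X} D_x^-F\,D_x^-G + \int_\region \activities(x)e^{-D_x^+H(X)} D_x^+F\,D_x^+G\,\diff x .
\]
Taking expectations under the invariant measure $\mu_{\region,\activities}$ and applying the GNZ equation \eqref{eq:gnz} to the death term, as in the proof of Lemma~\ref{lem:dirichlet_explicit}, yields
\[
\E[P_t(fg) - P_tfP_tg] = 2\int_0^t \int_\region \activities(x)\, \E\big[e^{-D_x^+H} D_x^+F_s\, D_x^+G_s\big] \diff x\, \diff s .
\]
Since $e^{-D_x^+H}\le 1$, the left side is at most $2\activity\int_0^t\int \|D_x^+P_{u}f\|_\infty \|D_x^+P_{u'}g\|_\infty \diff x\,\diff s$, where $u=t-s$ and $u'=t-s$. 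It therefore suffices to control the sup-norm gradient $\|D_x^+P_uf\|_\infty$ as a function of $x$.

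The key step is a propagation-of-influence bound of the form
\[
\|D_x^+ P_u f\|_\infty \le e^{Mu}\Big(\int \|D_y^+f\|_\infty\, k(x,y)\,\diff y + \|D_x^+f\|_\infty\Big),
\]
where $k(x,y)\le C e^{-\alpha'\dist(x,y)}$ is integrable and $M$ is bounded by a constant times $\activity\int k$. To obtain it, write $w_u(x) = \|D_x^+P_uf\|_\infty$ and compute $\partial_u D_x^+P_u f = D_x^+ L P_uf$. Then commute $D_x^+$ with $L$. The death part commutes up to the term $D_x^- (P_u f)(X\cup x) = -D_x^+P_uf(X)$, which is harmless. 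The birth part produces a commutator
\[
\int \activities(y)\, D_x^+\big(e^{-D_y^+H}\big)\, D_y^+P_uf \,\diff y ,
\]
and since $|D_x^+ e^{-D_y^+H}| \le e^{-D_y^+H}(1-e^{-\phi(x,y)})\le Be^{-\alpha\dist(x,y)}$ by Definition~\ref{def:potential_decay}, this gives a closed system of differential inequalities
\[
\tfrac{\diff}{\diff u}w_u(x) \le c\,w_u(x) + \activity B\int e^{-\alpha\dist(x,y)} w_u(y)\,\diff y ,
\]
understood via Duhamel: the non-commutator part of $L$ generates a positivity-preserving contraction in sup-norm, so we work with $D_x^+P_uf$ in integral form. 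Iterating by Grönwall produces a convolution exponential $\sum_n (u\activity B)^n k^{*n}/n!$. Because $k$ has exponential tails, $k^{*n}(x,y)\le C^n A^{n} e^{-\alpha'\dist(x,y)}$ for any $\alpha'<\alpha$, where $A=\int e^{-(\alpha-\alpha')|z|}\diff z$. This gives growth rate $M=M(\activity,\alpha,B,d)$ with spatial decay preserved.

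To finish, insert the local structure. For $y\notin\region_f$ we have $\|D_y^+f\|_\infty\le Ke^{-\kappa\dist(y,\region_f)}$, and for $y\in\region_f$ we bound its integral by $\tnorm{f}$; similarly for $g$. This gives $w_u^f(x) \lesssim e^{Mu}\big(\int_{\region_f}\|D_y^+f\|_\infty e^{-m\dist(x,y)}\diff y + Ke^{-m\dist(x,\region_f)}\big)$ with $m=\min(\alpha',\kappa)/2$. Integrating the product over $x$ and using $\dist(x,\region_f)+\dist(x,\region_g)\ge \dist(\region_f,\region_g)$ splits off $e^{-m_1\dist(\region_f,\region_g)}$. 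The remaining integrals produce exactly the three terms $\tnorm f\tnorm g$, $K\tnorm f$ and $K\tnorm g$. The $K^2$ cross term, $\int e^{-m\dist(x,\region_f)-m\dist(x,\region_g)}\diff x$, is not bounded for unbounded sets and is absent from the stated bound. I would handle it by bounding the $K$-tail contribution through $\tnorm{\cdot}$ whenever $\tnorm{f}<\infty$, since that integral already contains the tail. The time integral contributes $te^{2Mt}\le e^{3Mt}$, and $M$ is adjusted accordingly.

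The main obstacle is the rigorous gradient estimate for $D_x^+P_uf$ with an unbounded birth integral. Justifying differentiation requires domain issues to be handled, for instance by first proving the bound for $f\in\domain_0$ and approximating. One must also ensure that the sup-norm Duhamel argument really closes, because the death rate $|X|$ is unbounded. I would handle the latter through the Markov (sup-norm contractive) semigroup of the diagonal part rather than through crude bounds.
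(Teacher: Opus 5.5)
The setup matches the paper's proof. This covers the identity $\E[P_t(fg)-P_tf\,P_tg]=2\int_0^t\dirichlet_{\region,\activities}(P_sf,P_sg)\,\diff s$ with the GNZ form of the Dirichlet form, and the Bertini--Cancrini--Cesi Duhamel formula for $D_x^+P_tf$ with commutator bounded by $1-e^{-\potential(x,y)}$. It also covers the Gr\"onwall iteration and the iterated-kernel bound as in Lemma~\ref{claim:gamma}.

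\textbf{The gap is the final assembly.} You flagged it but did not close it. You replace the tail $\int_{\region_f^c}\|D_y^+f\|_\infty k(x,y)\,\diff y$ by the pointwise bound $Ke^{-m\dist(x,\region_f)}$ before integrating in $x$. That discards the fact that this tail has total mass at most $\tnorm{f}$. The resulting term $K^2\int e^{-m\dist(x,\region_f)-m\dist(x,\region_g)}\,\diff x$ is finite, but it grows with the size of $\region_f,\region_g$. Writing $r:=\dist(\region_f,\region_g)$, it scales roughly like the $(d-1)$-volume of their facing boundaries times $e^{-mr}$, so no set-independent $C_1$ absorbs it.

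Your remedy, as stated, does not fix this. Suppose you restore $f$'s tail in integral form but keep the pointwise bound for $g$. Integrating out $x$ then leaves $K\int_{\region_f^c}\|D_y^+f\|_\infty e^{-\frac{m}{2}\dist(y,\region_g)}\,\diff y$. This has no decay from points $y$ of $f$'s tail that lie near $\region_g$. Inserting $\|D_y^+f\|_\infty\le Ke^{-\kappa\dist(y,\region_f)}$ there brings back $K^2$ times a volume.

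\textbf{The missing idea is to never split off tails pointwise before the $x$-integration.} Keep the full profiles $F_0(y)=\|D_y^+f\|_\infty$ and $G_0(z)=\|D_z^+g\|_\infty$ on all of $\R^d$, so that $\int F_0=\tnorm{f}$ already contains the tail. Integrate out $x$ first to get $\iint e^{-c\,\dist(y,z)}F_0(y)G_0(z)\,\diff y\,\diff z$. Only then split $(y,z)$-space as $(N_f\times N_g)\cup(\R^d\times N_g^c)\cup(N_f^c\times\R^d)$, where $N_f,N_g$ are the $r/3$-neighbourhoods of $\region_f,\region_g$.
\begin{itemize}
\item On $N_f\times N_g$ the kernel is at most $e^{-cr/3}$, which gives the $\tnorm{f}\tnorm{g}$ term.
\item On $\R^d\times N_g^c$, use $G_0(z)\le Ke^{-\kappa r/3}$ and integrate the (integrable) kernel out in $z$, leaving $K\tnorm{f}$.
\item On $N_f^c\times\R^d$, argue symmetrically to get $K\tnorm{g}$.
\end{itemize}
The pointwise $K$-bound is thus applied to only one function at a time, and the kernel is absorbed by integrating in that function's variable, so no $K^2$ term arises. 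This is how the paper concludes; the diagonal term $\int F_0G_0$ and the mixed terms are handled by the same split.
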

	
	We first deduce Proposition \ref{prop:covariance} and Theorem \ref{thm:ssssm} and then prove Lemma \ref{lemma:unbounded_4.2}.
	
	\newcommand{\infnorm}[1]
	{\big\| #1 \big\|_\infty}
	\newcommand{\twonorm}[1]
	{\big\| #1 \big\|_2}
	
	\subsection{Proving Proposition \texorpdfstring{\ref{prop:covariance}}{3.3} and Theorem  \texorpdfstring{\ref{thm:ssssm}}{3.1}}

	The proofs of Proposition \ref{prop:covariance} and Theorem \ref{thm:ssssm} using Lemma \ref{lemma:unbounded_4.2} will be similar to the proof of \cite[Theorem 4.1]{bertini2002spectral}.  We begin with Proposition \ref{prop:covariance}:
	
	\begin{proof}[Proof of Proposition \ref{prop:covariance}]
		Without loss of generality, we may assume $\E_{\region, \activities}[f] = \E_{\region, \activities}[g] = 0$.
		For every $t \ge 0$, using that $P_t^{\region, \activities}$ is self-adjoint with respect to $\innerProduct{\cdot}{\cdot}_{\mu_{\region, \activities}}$, we then get
		\[
		\Cov_{\region, \activities}(f, g) = \E_{\region, \activities}[f  g] =  \E_{\region, \activities}[P_t^{\region, \activities} (f g)] .
		\]
		Using the triangle inequality and the Cauchy--Schwarz inequality yields
		\begin{align*}
			\big| \E_{\region, \activities}[P_t^{\region, \activities} (f g)] \big| &= \big| \E_{\region, \activities}[P_t^{\region, \activities} f \ P_t^{\region, \activities} g] + \E_{\region, \activities}[P_t^{\region, \activities} (f g) - P_t^{\region, \activities} f \ P_t^{\region, \activities} g] \big|\\
			&\le \twonorm{ P_t^{\region, \activities} f} \twonorm{ P_t^{\region, \activities} g} + \big| \E_{\region, \activities}\big[P_t^{\region, \activities} (f g) - P_t^{\region, \activities} f \  P_t^{\region, \activities} g\big] \big|.
		\end{align*}
		Using the fact that $P_t^{\region, \activities}$ satisfies a Poincar\'e inequality with constant $\gamma$, we can apply Lemma \ref{lemma:L2_contraction} to the first term, and Lemma \ref{lemma:unbounded_4.2} to the second term to obtain
		\[
		\absolute{\Cov_{\region, \activities}(f, g)}
		\le \norm{f}_2 \norm{g}_2  e^{-2 t/\gamma} + C_1  (\tnorm{f} \cdot \tnorm{g} + K  (\tnorm{f} + \tnorm{g}))  e^{t  M - m_1  \cdot\dist(\region_f, \region_g)}.
		\]
		Choosing $t = \frac{m_1}{2M}  \dist(\region_f, \region_g)$, $m_0 = \min(\frac{m_1}{M \gamma}, \frac{m_1}{2})$ and $C_0 = \max(1, C_1)$ concludes the proof.
	\end{proof}
	
	We use Proposition \ref{prop:covariance} to prove the main theorem of this section.  The presence of the term $\|f\|_2 \|g\|_2$ on the right-hand side of Proposition \ref{prop:covariance} will yield a bound of the form $$\| \mu_{\Lambda,\blambda} - \mu_{\Lambda,\blambda_x}\|_{\Lambda'} \leq C(\vol(\Lambda') + C) e^{-m\cdot\dist(x,\Lambda')}$$
	which will ostensibly only yield single site SSM for sets $\Lambda'$ of volume bounded below.  We will see that using the strategy of \cite{michelen2022strong} will upgrade single site SSM to all sets $\Lambda'$.
	
	\begin{lemma}\label{lem:upgrade-to-small-vol}
		Suppose that $\phi$ satisfies single site strong spatial mixing up to some activity $\lambda$ for all bounded measurable $\Lambda'$ with $\vol(\Lambda') \geq 1$.  Then $\phi$ satisfies single site strong spatial mixing up to $\lambda$ for all bounded and measurable $\Lambda'$.
	\end{lemma}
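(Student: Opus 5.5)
The plan is to go through one-point correlation functions. For a small set $\Lambda'$ we cannot simply enlarge it: that would lose the factor $\vol(\Lambda')$. Instead we use the large-volume hypothesis only to control correlation functions, and recover the factor $\vol(\Lambda')$ from integrating over $\Lambda'$. Fix $\Lambda'\subseteq\Lambda$ with $\vol(\Lambda')<1$, a point $x$, an activity function $\blambda\le\lambda$, and write $D=\dist(x,\Lambda')$. Without loss of generality $D$ is large, since otherwise we use the trivial bound $\|\cdot\|_{\Lambda'}\le \P(\text{a point in }\Lambda')\le 2\lambda\vol(\Lambda')$ from Lemma \ref{lem:Poisson-domination}.

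\textbf{Step 1: the geometric obstruction disappears.} A set of volume $\ge 1$ might not fit inside $\Lambda$, so we extend the region. For any bounded $A$, the measure $\mu_{\Lambda\cup A,\blambda\ind{\Lambda}}$ coincides with $\mu_{\Lambda,\blambda}$, and the same holds with pinning at $x$. Since activity functions $\le\lambda$ may vanish, the hypothesis applies on $\Lambda\cup A$. Hence we may use single-site SSM on any set of volume $\ge1$, for instance a ball $B(v,R)$ with $R\ge r_d$.

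\textbf{Step 2: one-point decay (the main step).} We aim to show that for every $v\in\Lambda$ and every $\blambda\le\lambda$,
\[
|\rho_{\Lambda,\blambda}(v)-\rho_{\Lambda,\blambda_x}(v)|\le C_2 e^{-m_2\dist(x,v)}.
\]
By \eqref{eq:1-point-identity}, this difference is $\blambda(v)$ times $\bigl|\E_{\blambda}[F]-\E_{\blambda_x}[F]\bigr|$, where $F(X)=e^{-\sum_{y\in X}\phi(v,y)}$, and where we use $\blambda_x(v)=\blambda(v)e^{-\phi(x,v)}$. The term $\blambda(v)(1-e^{-\phi(x,v)})$ is at most $\lambda Be^{-\alpha\dist(x,v)}$.

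We then split $F=F_R\cdot F_{R}^{c}$ according to whether $y\in B(v,R)$, with $R=\dist(x,v)/2$. Since $0\le F_R\le 1$ and $|1-F_R^c|\le\sum_{y\notin B(v,R)}(1-e^{-\phi(v,y)})$, Poisson domination bounds the error of replacing $F$ by $F_R$, under either measure, by
\[
\lambda\int_{B(v,R)^c}Be^{-\alpha|y-v|}\,\diff y\le C e^{-\alpha R/2}.
\]
The function $F_R$ is $B(v,R)$-measurable with values in $[0,1]$. Its expectation difference is therefore at most the projected total variation on $B(v,R)$, which by the large-volume hypothesis is at most $C\vol(B(v,R))e^{-mR}$. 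The polynomial factor $R^d$ is absorbed by shrinking the rate. This is the step I expect to need the most care, because we need the bound uniformly over all pinned activity functions; it holds because pinned activities are still $\le\lambda$.

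\textbf{Step 3: from correlation functions to total variation on $\Lambda'$.} Let $j_k,j_k'$ be the Janossy densities of $X\cap\Lambda'$ under $\mu_{\Lambda,\blambda}$ and $\mu_{\Lambda,\blambda_x}$. Inclusion--exclusion gives $j_k(\mathbf v)=\sum_{n\ge0}\frac{(-1)^n}{n!}\int_{\Lambda'^n}\rho_{k+n}(\mathbf v,\mathbf w)\,\diff\mathbf w$, and the series converges absolutely since $\rho_N\le\lambda^N$. Hence
\[
\|\mu_{\Lambda,\blambda}-\mu_{\Lambda,\blambda_x}\|_{\Lambda'}\le\sum_{N\ge1}\frac{2^N}{N!}\int_{\Lambda'^N}|\rho_N-\rho'_N|.
\]
By \eqref{eq:multipoint-correlation-identity}, $\rho_N$ is a product of one-point functions with pinned activities $\blambda_{v_1,\dots,v_{j-1}}$, and $\rho_N'$ is the same product for $(\blambda_x)_{v_1,\dots,v_{j-1}}=(\blambda_{v_1,\dots,v_{j-1}})_x$; each factor is bounded by $\lambda$. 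Lemma \ref{lem:complexproductdifference} together with Step 2 then gives $|\rho_N-\rho'_N|\le N\lambda^{N-1}C_2e^{-m_2D}$ on $\Lambda'^N$. Summing,
\[
\|\mu_{\Lambda,\blambda}-\mu_{\Lambda,\blambda_x}\|_{\Lambda'}\le C_2e^{-m_2D}\sum_{N\ge1}\frac{2^N N\lambda^{N-1}\vol(\Lambda')^N}{N!}\le C_3\vol(\Lambda')e^{-m_2D},
\]
using $\vol(\Lambda')<1$. Combined with the hypothesis for $\vol(\Lambda')\ge1$, this gives single-site SSM for all $\Lambda'$, with constants $\max(C,C_3)$ and $\min(m,m_2)$.
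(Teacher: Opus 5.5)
Your proof is correct, and it differs from the paper's in Step 3.

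Steps 1--2 are essentially the paper's Lemma \ref{lem:1-pt-SSSSM}: truncate the energy to a ball around $v$ of radius proportional to $\dist(x,v)$, then apply the large-volume hypothesis on that ball. You state the result with an additive error where the paper uses a multiplicative one. Your Step 1 and your absorption of the factor $\vol(B(v,R))$ make explicit two points the paper leaves implicit.

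In Step 3, the paper first converts the one-point estimate into a multiplicative comparison $\rho_{\Lambda,\blambda}(\mathbf v)\le\rho_{\Lambda,\blambda_x}(\mathbf v)(1+\eps)^k$. This uses the lower bound $\rho_{\Lambda,\blambda}(v)\ge c\,\blambda(v)$ from \eqref{eq:lower-bound-exponential-added-energy}. It then invokes the Michelen--Perkins representation, Lemma \ref{lem:ssm-via-densities}, which writes the Janossy density of $\bX\cap\Lambda'$ as the positive product $\rho_{\Lambda,\blambda}(\mathbf x)h_{\blambda}(\mathbf x)$. Because every term is positive, the multiplicative errors are integrated against the law itself. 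This gives $\E(1+\eps)^{|\bX\cap\Lambda'|}\le e^{\lambda\eps\vol(\Lambda')}$ and a final bound $10\delta$ with $\delta\propto\vol(\Lambda')e^{-m\dist(x,\Lambda')}$, uniform in $\vol(\Lambda')$.

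You instead use the alternating inclusion--exclusion formula for Janossy densities, together with additive one-point bounds and the crude bound $\rho_N\le\lambda^N$. This is more elementary and self-contained: it needs no lower bound on $\rho$ and no appeal to \cite{michelen2022strong}. The cost is that taking absolute values in the alternating series introduces a factor $e^{2\lambda\vol(\Lambda')}$. So you only get a bound linear in $\vol(\Lambda')$ when the volume is bounded. That is exactly what this lemma needs, since volumes $\ge 1$ are covered by hypothesis. However, your Step 3 would not by itself give Proposition \ref{prop:SSM}, where $\Lambda'$ is arbitrary and the paper reuses the positive representation.
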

	
	Lemma \ref{lem:upgrade-to-small-vol} will be proved in parallel to strong spatial mixing (Proposition \ref{prop:SSM}) and so we defer both proofs to  Section \ref{sec:cor-SSM}.
	We now prove Theorem \ref{thm:ssssm}:

	\begin{proof}[Proof of Theorem \ref{thm:ssssm}]
		By Lemma \ref{lem:upgrade-to-small-vol} we may assume that $\vol(\Lambda') \geq 1$.  
		Note that
		\[
		\|{\mu_{\region, \activities_x} - \mu_{\region, \activities}}\|_{\region'} = \sup_{A \in \events} \absolute{\E_{\bY \sim \region, \activities_x}[\ind{\bY \cap \region' \in A}] - \E_{\bY \sim \region, \activities}[\ind{\bY \cap \region' \in A}]} .
		\]
		Given an event $A \in \events$, we then write
		\begin{align*}
			&\absolute{\E_{\bY \sim \region, \activities_x}[\ind{\bY \cap \region' \in A}] - \E_{\bY \sim \region, \activities}[\ind{\bY \cap \region' \in A}]} 
			\\ &\hspace{2cm}= \absolute{\frac{\E_{\bY \sim \region, \activities}[\ind{\bY \cap \region' \in A}  \exp(-\sum_{y \in \bY} \potential(x, y))]}{\E_{\bY \sim \region, \activities}[\exp(-\sum_{y \in \bY} \potential(x, y))]} - \E_{\bY \sim \region, \activities}[\ind{\bY \cap \region' \in A}]} = \frac{\absolute{\Cov_{\region, \activities}(f, g)}}{\E_{\region, \activities}[g]} ,
		\end{align*}
		where $f: Y \mapsto \ind{Y \cap \region' \in A}$ and $g: Y \mapsto \exp(-\sum_{y \in Y} \potential(x, y))$.
		
		To bound the denominator from below, note that $Y \mapsto \exp(-\sum_{y \in Y} \potential(x, y))$ is non-increasing with respect to inclusion.
		Thus, by Poisson domination (Lemma \ref{lem:Poisson-domination}), the Laplace functional of Poisson point processes and the exponential decay of $\potential$, we have
		\begin{align}\label{eq:lower-bound-exponential-added-energy}
			\E_{\region, \activities}[g]
			\ge \exp\left( - \activity \int_{\region} 1 - e^{-\potential(x, y)} \diff y\right)
			\ge e^{-\activity  B  C_{\alpha, d}} ,
		\end{align}
		where $C_{\alpha, d} = \int_{\R^d} e^{-\alpha  \norm{y}} \diff y < \infty$.
		
		It remains to upper bound the numerator, for which we aim to use Proposition \ref{prop:covariance}.
		To this end, note that $\sqrt{\Var_{\region,\activities}(f)}  \le \norm{f}_{\infty} \le 1$ and $\sqrt{\Var_{\region,\activities}(g)}  \le \norm{g}_{\infty} \le 1$.
		Since $f$ is $\region'$-local, it holds that $\tnorm{f} \le 2 \vol(\region')$.
		Finally, using that $\potential$ is repulsive we have $\|D^+_y g\|_\infty \le 1 - e^{-\potential(x, y)}$.
		Since $\potential$ further satisfies exponential decay according to Definition \ref{def:potential_decay}, it is $(\{x\}, \alpha, B)$-local, implying that  $\tnorm{g} \le B  C_{\alpha, d}$.
		Applying Proposition \ref{prop:covariance} and choosing $C < \infty$ large enough and $m > 0$ small enough then proves the theorem.
	\end{proof}

	\subsection{Proving Lemma \texorpdfstring{\ref{lemma:unbounded_4.2}}{3.4}}

	We begin with an identity proven in \cite[Eq.~(4.3)]{bertini2002spectral}, which may be understood as a standard identity for the carr\'e-du-champ operator associated to $L_{\Lambda,\blambda}$:
	
	\begin{lemma}\label{lem:dirichlet-form-curvature}
		For all $f,g \in \domain(\dirichlet_{\Lambda,\blambda})$ we have \begin{align*}
			\E_{\region, \activities}\big[P_t^{\region, \activities}(f  g) - P_t^{\region, \activities}f \ P_t^{\region, \activities}g\big] = -2 \int_0^t \dirichlet_{\region, \activities}(P_s^{\region, \activities}f,P_s^{\region, \activities}g) \diff s\,.
		\end{align*}
	\end{lemma}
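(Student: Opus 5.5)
The plan is to differentiate in time the quantity $F(s) := \E_{\region,\activities}[P_s^{\region,\activities}f \, P_s^{\region,\activities}g]$ for $s \in [0,t]$, and to use invariance of $\mu_{\region,\activities}$ for the first term $\E_{\region,\activities}[P_t^{\region,\activities}(fg)]$. Throughout I abbreviate $P_s = P_s^{\region,\activities}$, $L = L_{\region,\activities}$ and $\E = \E_{\region,\activities}$.

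\textbf{Step 1 (the first term).} By property (3) of the semigroup, $P_t$ is self-adjoint on $L^2(\mu_{\region,\activities})$. Since $P_t \one = \one$, it follows that $\E[P_t(fg)] = \innerProduct{P_t(fg)}{\one} = \innerProduct{fg}{P_t\one} = \E[fg]$. I would first prove the identity for $f, g \in \domain_0(L)$, which are bounded so that $fg \in L^2$, and then extend to $\domain(\dirichlet)$ at the end. Hence the left-hand side equals $F(0) - F(t)$.

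\textbf{Step 2 (the derivative of $F$).} For $f \in \domain(L)$ we have $\frac{\diff}{\diff s} P_s f = L P_s f = P_s L f$ in $L^2$. Therefore
\[
F'(s) = \innerProduct{L P_s f}{P_s g} + \innerProduct{P_s f}{L P_s g} = -2\,\dirichlet(P_s f, P_s g),
\]
using self-adjointness of $L$ and the definition $\dirichlet(u,v) = \innerProduct{-Lu}{v}$ from \eqref{eq:dirichlet-def}, together with its closed extension (property (1)). The function $s \mapsto \dirichlet(P_s f, P_s g)$ is continuous on $(0,t]$ by spectral calculus, and it is integrable at $0$ since $\dirichlet(P_s f) \le \dirichlet(f)$. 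Integrating gives $F(t) - F(0) = -2\int_0^t \dirichlet(P_s f, P_s g)\,\diff s$.

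\textbf{Step 3 (extension to $\domain(\dirichlet)$).} By closability, $\domain_0$ is a core for $\dirichlet$. Both sides are continuous in the form norm $\|\cdot\|_2 + \dirichlet(\cdot)^{1/2}$: for the right-hand side this is Cauchy--Schwarz for $\dirichlet$ combined with the contraction $\dirichlet(P_s u) \le \dirichlet(u)$. Here $\E[P_t(fg)]$ is interpreted as $\E[fg]$, which is finite by Cauchy--Schwarz.

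\textbf{Main obstacle: the sign.} The delicate point is the sign bookkeeping when combining Steps 1 and 2. Steps 1 and 2 give
\[
F(0) - F(t) = 2\int_0^t \dirichlet(P_s f, P_s g)\,\diff s,
\]
so the stated formula with $-2$ requires that $\dirichlet$ be read with the opposite orientation, that is, as $\innerProduct{L\,\cdot}{\cdot}$, as in the form of \cite[Eq.~(4.3)]{bertini2002spectral}. I would state the convention explicitly and check it against Lemma~\ref{lem:dirichlet_explicit}, where $\dirichlet(f,f) \ge 0$.

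For all subsequent uses only the absolute value matters. Lemma~\ref{lemma:unbounded_4.2} bounds $|\E[P_t(fg) - P_t f\,P_t g]|$ by $2\int_0^t |\dirichlet(P_s f, P_s g)|\,\diff s$. This quantity is then estimated through the explicit representation in Lemma~\ref{lem:dirichlet_explicit}, namely $\int \activities(x)\,\E[e^{-D_x^+H} D_x^+ P_s f \, D_x^+ P_s g]\,\diff x$. The downstream argument is therefore insensitive to this orientation.
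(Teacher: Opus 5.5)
Your computation is right, and your sign diagnosis is the key point. The lemma as printed is false. The convention is fixed by \eqref{eq:dirichlet-def}: $\dirichlet(u,v)=\innerProduct{-Lu}{v}$. This is what makes $\dirichlet(u,u)\ge 0$ in Lemma~\ref{lem:dirichlet_explicit}, and it is what the Poincar\'e inequality $\Var\le\gamma\,\dirichlet$ presupposes. With this convention the correct identity is
\[
\E_{\region,\activities}\big[P_t(fg)-P_tf\,P_tg\big]=\innerProduct{f}{g}-\innerProduct{P_tf}{P_tg}=+2\int_0^t\dirichlet(P_sf,P_sg)\,\diff s .
\]
To see the printed version fails, take $f=g\in\domain(\dirichlet)$ with $\dirichlet(f)>0$ and any $t>0$. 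The left side is $\E[f^2]-\E[(P_tf)^2]\ge 0$. The right side, $-2\int_0^t\dirichlet(P_sf)\,\diff s$, is strictly negative, since $s\mapsto\dirichlet(P_sf)$ is continuous with value $\dirichlet(f)>0$ at $s=0$.

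The paper's own proof contains exactly this slip. After taking expectations it asserts that $\innerProduct{L_k(P^k_sf\,P^k_sg)}{1}-\innerProduct{P^k_sg}{L_kP^k_sf}-\innerProduct{P^k_sf}{L_kP^k_sg}$ equals $-2\dirichlet_k(P^k_sf,P^k_sg)$. In fact the first term vanishes, and each of the two remaining terms, with its minus sign, equals $+\dirichlet_k(P^k_sf,P^k_sg)$. Carrying the correct sign through its argument gives your $+2$. So do not resolve the discrepancy by re-reading $\dirichlet$ as $\innerProduct{L\,\cdot}{\cdot}$; that orientation is not free. Instead, correct the statement to $+2$. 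As you note, nothing downstream changes, because \eqref{eq:pure_magic} uses only the absolute value together with Lemma~\ref{lem:dirichlet_explicit}.

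On the route, the paper does the following:
\begin{enumerate}
\item It truncates the generator spectrally to bounded operators $L_k$.
\item It interpolates with $\Phi(s)=P^k_{t-s}(P^k_sf\cdot P^k_sg)$.
\item It integrates against $\mu_{\region,\activities}$ and lets $k\to\infty$.
\end{enumerate}
Once the expectation is taken, the term from $L_k$ applied to the product drops out by invariance. What remains is precisely your scalar computation of $\frac{\diff}{\diff s}\innerProduct{P_sf}{P_sg}$.

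Your version (invariance first, then differentiate) is the same argument without the detour. It also avoids the paper's tacit need for $P^k_sf\cdot P^k_sg\in L^2$, which is not automatic for general $f,g\in\domain(\dirichlet)$; already at $s=0$ the product $fg$ is only in $L^1$. The pointwise interpolation $\Phi$ would only pay off if one wanted an identity before averaging, which is never used here.

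Your density argument in Step~3 can even be bypassed. Let $(E_u)$ be the spectral resolution of $-L$. Then
\[
\innerProduct{f}{g}-\innerProduct{P_tf}{P_tg}=\int(1-e^{-2tu})\,\diff\innerProduct{E_uf}{g}=2\int_0^t\dirichlet(P_sf,P_sg)\,\diff s ,
\]
where the second equality is Fubini. This holds directly for all $f,g\in\domain(\dirichlet)$.
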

	\begin{proof}
		We expand on the proof of \cite[Eq.~(4.3)]{bertini2002spectral}.  We approximate $L_{\Lambda,\blambda}$ by an $L^2(\diff \mu)$ bounded operator where we write $\mu = \mu_{\Lambda,\blambda}$.  For each $k > 0$ define $L_k$ via $L_k = - \int_0^k s \diff E_s$ where $\{E_s : s \in [0,\infty)\}$ are the spectral projections associated to $-L_{\Lambda,\blambda}$ guaranteed by the spectral theorem.  Define $P^{k}:= e^{t L_k}$.  Define $\Phi(s) = P^k_{t -s}(P_s^k f \cdot P_s^k g)$ and note that since $L_k$ is bounded, $\Phi(s)$ is differentiable.  By the fundamental theorem of calculus we have \begin{align*}
			P_t^k(f) P_t^k(g) - P_t^k(fg) &=  \Phi(t) - \Phi(0) = \int_0^t \Phi'(s)\diff s \\ 
			&= -\int_0^t P_{t - s}^k\left[ L_k (P_s^k f \cdot P_s^k g) - P_s^k g \cdot L_k P_s^k f - P_s^k f \cdot  L_k  P_s^k g \right]\diff s
		\end{align*}
		where we use that $P_s^k$ and $L_k$ commute. Using the fact that $L_k$ is self-adjoint in $L^2(\diff \mu)$, we have \begin{align*}
			\E_\mu&\left[P_{t - s}^k\left[ L_k (P_s^k f \cdot P_s^k g) - P_s^k g \cdot L_k P_s^k f - P_s^k f \cdot  L_k  P_s^k g \right] \right] \\
			&\qquad=  \E_\mu\left[ L_k (P_s^k f \cdot P_s^k g) - P_s^k g \cdot L_k P_s^k f - P_s^k f \cdot  L_k  P_s^k g  \right] \\
			&\qquad= \langle L_k (P_s^k f \cdot P_s^k g), 1\rangle_\mu - \langle P_s^k g, L_kP_s^k f\rangle_\mu - \langle P_s^k f, L_k P_s^k g\rangle_\mu \\
			&\qquad= -2 \mathcal{E}_k(P_s^k f, P_s^k g)
		\end{align*}
		where $\mathcal{E}_k$ is the Dirichlet form associated to $L_k$.  Taking $k \to \infty$ completes the proof.
	\end{proof}
	
	Combining Lemma \ref{lem:dirichlet_explicit} and  \ref{lem:dirichlet-form-curvature} shows 
	\begin{align}
		\big|\E_{\region, \activities}\big[P_t^{\region, \activities}(f  g) - P_t^{\region, \activities}f \ P_t^{\region, \activities}g\big]\big| \le 
		2 \activity \int_0^t \int_\Lambda F_s(x) G_s(x) \diff x \diff s \label{eq:pure_magic}
	\end{align}
	where $F_t(x) \coloneqq \infnorm{D_x^+ P^{\region, \activities}_t f}$ and $G_t(x) \coloneqq \infnorm{D_x^+ P^{\region, \activities}_t g}$.  We upper bound $F_t$, analogously to \cite[Eq.~(4.8)]{bertini2002spectral}.
	
	\begin{lemma}
		In the notation above we have $$     F_t(x) \le F_0(x) + \int_{0}^{t} \left(F_s(x) + \int_{\region} \activity \infnorm{D_x^+ e^{- D_y^+ H}} F_s(y) \diff y \right) \diff s \,.$$
	\end{lemma}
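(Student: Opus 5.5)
We will derive the bound from a commutator identity between $D_x^+$ and the generator $L = L_{\region,\activities}$, combined with a Duhamel formula and the positivity of the semigroup. Write $u_s = P_s^{\region,\activities} f$. Since $\partial_s u_s = L u_s$, we have
\[
\partial_s D_x^+ u_s = D_x^+ L u_s = L D_x^+ u_s + [D_x^+, L] u_s .
\]
We compute the commutator directly from \eqref{eq:generator-definition}. For the death part, $\sum_{y\in X\cup\{x\}} D_y^- h(X\cup\{x\}) - \sum_{y\in X} D_y^- h(X)$ equals $\sum_{y \in X} D_y^- D_x^+ h(X) + (h(X) - h(X\cup\{x\}))$. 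The first sum is the death part of $L$ applied to $D_x^+h$, and the extra term is $-D_x^+ h(X)$.

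For the birth part, set $w_y(X) = e^{-D_y^+H(X)}$. Then $D_x^+\big(w_y D_y^+ h\big) = w_y\, D_x^+D_y^+h + (D_x^+ w_y)\, D_y^+h(X\cup\{x\})$. Since $D_x^+D_y^+ = D_y^+D_x^+$, the first term reproduces the birth part of $L D_x^+ h$. Altogether,
\[
[D_x^+, L]h(X) = -D_x^+h(X) + \int_\region \activities(y)\,\big(D_x^+ e^{-D_y^+H}\big)(X)\, D_y^+ h(X\cup\{x\})\,\diff y .
\]

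Next, the Duhamel formula gives
\[
D_x^+ u_t = P_t D_x^+ f + \int_0^t P_{t-s}\big([D_x^+,L]u_s\big)\diff s .
\]
We justify this by the same approximation as in Lemma~\ref{lem:dirichlet-form-curvature}, or pointwise for $f$ in the core $\domain_0$. Because $P_t$ is a Markov semigroup, it is a contraction in $\|\cdot\|_\infty$ (item (2) above). Taking sup norms therefore yields
\[
F_t(x) \le F_0(x) + \int_0^t \Big( F_s(x) + \int_\region \activity\, \|D_x^+ e^{-D_y^+H}\|_\infty\, F_s(y)\,\diff y\Big)\diff s .
\]
Here we used $\|-D_x^+u_s\|_\infty = F_s(x)$, $\activities \le \activity$, and $\|D_y^+ u_s(\cdot\cup\{x\})\|_\infty \le F_s(y)$. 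This last bound holds because $F_s(y)$ is a supremum over all configurations.

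The main obstacle is rigour rather than algebra. First, $\|\cdot\|_\infty$ should be understood as an essential supremum, and $D_x^+$ evaluates at configurations $X\cup\{x\}$ that may be null for $\mu$. We handle this by working with the Poisson-reference versions of the semigroup, for which the generator acts pointwise on bounded functions. Second, $L$ is unbounded, so the Duhamel formula needs justification. We plan to verify it first for $f \in \domain_0(L_{\region,\activities})$ on the finite-particle truncation, where all the operators are bounded, and then pass to the limit, as in the proof of Lemma~\ref{lem:dirichlet-form-curvature}.

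One could sharpen the estimate to $e^{-t}$ weights, since the $-D_x^+h$ term is dissipative. The cruder triangle-inequality bound stated in the lemma is all we need, however, so we simply bound that term by $F_s(x)$.
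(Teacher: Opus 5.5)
Your proposal is correct and follows essentially the same route as the paper. The paper applies the fundamental theorem of calculus to $P_{t-s}(D_x^+ P_s f)$, which gives exactly your Duhamel identity (Eq.~(4.7) of Bertini--Cancrini--Cesi, whose commutator term $-D_x^+ P_s f + \int_\region \activities(y)\,(D_x^+ e^{-D_y^+H})(X)\,(D_y^+ P_s f)(X\cup\{x\})\,\diff y$ is the one you compute explicitly), and then concludes by $L^\infty$-contractivity of the semigroup; your remarks on essential suprema and on justifying Duhamel for the unbounded generator add rigour that the paper leaves implicit by citing that reference.
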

	\begin{proof}
		This bound is implicit in \cite{bertini2002spectral}.  In particular, if we let $T_x$ denote the operator defined by $T_x f(\omega) = f(\omega \cup \{x\})$ and apply the fundamental theorem of calculus to $P_{t -s}^{\Lambda,\blambda}(D_x^+ P_s^{\Lambda,\blambda} f)$ to obtain \begin{equation*}
			D_x^+ P_t^{\Lambda,\blambda} f = P_t^{\Lambda,\blambda} D_x^+ f + \int_0^t P_{t - s}^{\Lambda,\blambda} \left[-D_x^+ P_s^{\Lambda,\blambda} f + \int_{\Lambda} \blambda(y) (D_x^+ e^{-D_y^+ H}) T_x D_y^+ P_s^{\Lambda,\blambda} f \diff y \right] \diff s
		\end{equation*}
		which is precisely \cite[Eq.~(4.7)]{bertini2002spectral}.  Since $P_t^{\Lambda,\blambda}$ is a contraction in $L^\infty(d\mu_{\Lambda,\blambda})$ we obtain the stated bound.
	\end{proof}

	Using that $\infnorm{D_x^+ e^{- D_y^+ H}} \le 1 - e^{- \potential(x, y)}$ for repulsive $\potential$, we get
	\begin{equation}\label{eq:F-t-pointwise}
		F_t(x) \le F_0(x) + \int_{0}^{t} \left[F_s(x) + (\Gamma F_s)(x) \right] \diff s ,
	\end{equation}
	where $\Gamma$ is the Hilbert--Schmidt operator with kernel $\gamma(x, y) = \ind{y \in \region} \ \activity  (1 - e^{-\potential(x, y)})$ defined by \begin{equation*}
		(\Gamma F)(x) = \int_{\Lambda} F(y) \gamma(x,y) \diff y\,.
	\end{equation*}
	Iterating \eqref{eq:F-t-pointwise} then yields the following convenient bound on $F_t$ in terms of $F_0$.
	\begin{lemma}
		Iterating the inequality \eqref{eq:F-t-pointwise} yields
		\begin{align} \label{eq:ft_bound}
			F_t(x) \le e^{t} \sum_{k=0}^{\infty} \frac{t^k}{k!} (\Gamma^k F_0)(x) .
		\end{align}  
	\end{lemma}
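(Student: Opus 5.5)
Since $F_t$, $F_0$ and $\Gamma$ are all nonnegative, I will treat \eqref{eq:F-t-pointwise} as a linear integral inequality with a positive operator and run a Picard/Gr\"onwall iteration. Set $A := I + \Gamma$. The operator $\Gamma$ has nonnegative kernel $\gamma(x,y)$, so it preserves pointwise order on nonnegative functions: $F \le G$ implies $\Gamma F \le \Gamma G$. It is also bounded on $L^\infty(\Lambda)$, because
\[
\sup_x \int_\Lambda \gamma(x,y)\,\diff y \le \activity C_\phi < \infty
\]
by temperedness. This second fact holds since exponential decay gives $C_\phi \le B C_{\alpha,d}$.

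First I would note that $F_s(x) \le 2\|f\|_\infty$, since $P_s$ is an $L^\infty$ contraction; for an indicator $f$ this bound is finite. So $s \mapsto F_s$ is uniformly bounded and every iterate below is well defined.

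Substituting \eqref{eq:F-t-pointwise} into itself $n$ times, and using order-preservation of $A$ at each substitution, gives
\[
F_t \le \sum_{k=0}^{n-1} \frac{t^k}{k!} A^k F_0 + \int_0^t \frac{(t-s)^{n-1}}{(n-1)!} A^n F_s \,\diff s .
\]
The factorials come from the volume $t^k/k!$ of the simplex $0 \le s_k \le \dots \le s_1 \le t$. One technical point: measurability of $s \mapsto F_s(x)$ is needed for these integrals. If it is an issue, I would work with the measurable majorant implicit in the previous lemma's proof, or note that the inequality only needs to be applied to whatever function appears on the right-hand side.

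The remainder term goes to zero as $n \to \infty$. Indeed $\|A^n F_s\|_\infty \le (1+\activity C_\phi)^n \cdot 2\|f\|_\infty$, so the remainder is at most
\[
2\|f\|_\infty \frac{\bigl(t(1+\activity C_\phi)\bigr)^n}{n!} \to 0 .
\]
Hence $F_t \le \sum_{k\ge0} \frac{t^k}{k!} (I+\Gamma)^k F_0 = e^{t(I+\Gamma)} F_0$.

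Finally, $I$ commutes with $\Gamma$, so $e^{t(I+\Gamma)} = e^t e^{t\Gamma}$. Concretely, one expands $(I+\Gamma)^k$ binomially and rearranges the nonnegative double series by Tonelli, which yields $e^t \sum_k \frac{t^k}{k!} \Gamma^k F_0$. This is \eqref{eq:ft_bound}.

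The main obstacle is not the algebra but justifying that the remainder vanishes. This needs an a priori bound on $F_s$ uniform in $s\in[0,t]$, together with boundedness of $\Gamma$ on $L^\infty$. If $f$ is not bounded, I would first truncate $f$, or use $\|D_x^+ P_s f\|_\infty \le \|P_s D_x^+ f\|_\infty + \dots$ together with a cruder a priori growth bound, and pass to the limit. For the functions actually used (indicators and $g = e^{-\sum\phi}$), boundedness is immediate.
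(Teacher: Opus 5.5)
Your proof is correct and follows essentially the same route as the paper: treat $I+\Gamma$ as a positive operator, iterate \eqref{eq:F-t-pointwise} to get the Taylor partial sums plus an integral remainder, and kill the remainder using the $L^\infty$ operator bound on $I+\Gamma$. The paper uses the bound $1+\lambda\Vol(\Lambda)$ where you use $1+\lambda C_\phi$. It then factors $e^{t(I+\Gamma)}=e^{t}e^{t\Gamma}$. You also make explicit the a priori bound $\sup_{s\le t}\|F_s\|_\infty\le 2\|f\|_\infty$, which the paper uses only implicitly when sending the remainder to zero.
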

	\begin{proof}
		Set $G = I + \Gamma$ and note that \eqref{eq:F-t-pointwise} can be rewritten as $$F_t(x) \leq F_0(x) + \int_0^t (G F_s)(x)\, \diff s\,.$$  By induction this implies that for each $k \geq 1$ we have \begin{equation*}
			F_t(x) \leq \sum_{j = 0}^k \frac{t^j}{j!} (G^j F_0)(x) + \int_0^t \frac{(t - s)^k}{k!} (G^{k+1} F_s)(x) \,ds\,.
		\end{equation*}
		Taking $k \to \infty$ and using that $\|G^{k+1} F_s \|_\infty \leq (1 + |\Lambda \lambda|)^{k+1}  \|F_s\|_\infty$ shows that \begin{equation*} F_t(x) \leq  e^{t G} F_0(x) = e^{t} e^{t\Gamma}F_0(x) = e^{t} \sum_{k = 0}^\infty \frac{t^k}{k!} (\Gamma^k F_0)(x)\,. \qedhere \end{equation*}
	\end{proof}
	
	We proceed by observing that $\Gamma^k$ is again a Hilbert--Schmidt operator with kernel $\gamma^k$, i.e.\  recursively defined by
	\[
	(\Gamma^k F)(x) = \int_{\region} \gamma^{k}(x,y) F(y) \diff y \quad \text{ where } \quad 
	\gamma^k(x, y) = \int_{\region} \gamma^{k-1}(x, z) \cdot \gamma(z, y) \diff z .
	\]
	Our last stop before proving Lemma \ref{lemma:unbounded_4.2} is to bound $\gamma^k(x, y)$.

	\begin{lemma} \label{claim:gamma}
		Suppose $\potential$ is repulsive and decays exponentially (see Definition \ref{def:potential_decay}) with constants $B < \infty$ and $\alpha > 0$, then there is some $b \coloneqq b(\activity, B, \alpha, d) < \infty$ so for all $k \geq 1$ we have
		\[
		\gamma^k(x, y) \le b^k e^{- \frac{\alpha}{2}  \dist(x, y)}
		\]
	\end{lemma}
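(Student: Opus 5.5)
We argue by induction on $k$, with the exponent halved to leave room for integrating the tail. For $k=1$ the kernel is $\gamma(x,y) = \ind{y\in\region}\,\activity(1-e^{-\potential(x,y)})$. By Definition \ref{def:potential_decay} this gives
\[
\gamma(x,y) \le \activity B e^{-\alpha\,\dist(x,y)} \le \activity B e^{-\frac{\alpha}{2}\dist(x,y)},
\]
so any $b\ge \activity B$ handles the base case. We will in fact keep the stronger envelope $\gamma(z,y)\le \activity B e^{-\alpha\dist(z,y)}$ for the single factor in the recursion. Only the accumulated factor $\gamma^{k-1}$ carries the weaker rate $\alpha/2$.

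For the inductive step, write $\gamma^k(x,y) = \int_\region \gamma^{k-1}(x,z)\gamma(z,y)\,\diff z$ and insert the bounds:
\[
\gamma^k(x,y) \le b^{k-1}\activity B\int_{\R^d} e^{-\frac{\alpha}{2}\dist(x,z)}\, e^{-\alpha\dist(z,y)}\,\diff z .
\]
The triangle inequality gives $\dist(x,y)\le \dist(x,z)+\dist(z,y)$. Hence
\[
\tfrac{\alpha}{2}\dist(x,z)+\alpha\dist(z,y) \ge \tfrac{\alpha}{2}\dist(x,y) + \tfrac{\alpha}{2}\dist(z,y).
\]
The integral is therefore at most $e^{-\frac{\alpha}{2}\dist(x,y)}\int_{\R^d} e^{-\frac{\alpha}{2}\|z-y\|}\,\diff z$. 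The remaining integral equals $C_{\alpha/2,d}$, in the notation of the proof of Theorem \ref{thm:ssssm}, and is finite and independent of $y$ and of $\region$.

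This yields $\gamma^k(x,y)\le b^{k-1}\,\activity B\, C_{\alpha/2,d}\, e^{-\frac{\alpha}{2}\dist(x,y)}$. Setting $b \coloneqq \activity B\max(1, C_{\alpha/2,d})$ closes the induction, and $b$ depends only on $\activity, B, \alpha, d$.

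The only subtle point is that we must not try to keep the full rate $\alpha$ through the iteration, since the convolution of two rate-$\alpha$ exponentials does not return a constant times $e^{-\alpha\dist(x,y)}$. Halving the rate once, for the accumulated factor only, avoids this. The bound does not degrade further because in each step the fresh factor always has the full rate $\alpha$, and half of that rate is spent on integrability.
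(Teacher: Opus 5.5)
Your proposal is correct and is essentially the paper's proof: the same induction, with the fresh factor $\gamma(z,y)$ kept at the full rate $\alpha$ and the accumulated factor at rate $\alpha/2$. It also uses the same triangle-inequality split $\frac{\alpha}{2}\dist(x,z)+\alpha\dist(z,y)\ge\frac{\alpha}{2}\dist(x,y)+\frac{\alpha}{2}\dist(z,y)$ and the same constant $b=\lambda B\max\bigl(1,\int_{\R^d}e^{-\frac{\alpha}{2}\|z\|}\,dz\bigr)$.
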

	\begin{proof}
		Set 
		$
		b \coloneqq \max\{\activity  B, \activity B \int_{\R^d} e^{- \frac{\alpha}{2} \norm{x}} \diff x\} .
		$
		We prove the claim by induction on $k$.
		The case $k=1$ follows immediately from the exponential decay of the potential, which yields
		\[
		\gamma(x, y) 
		= \activity  (1 - e^{- \potential(x, y)}) \le \activity B e^{ - \alpha\cdot\dist(x, y)}.
		\]
		For the induction step, we use the induction hypothesis and the exponential decay of the potential to get
		\begin{align*}
			\gamma^k(x, y) 
			= \int_{\region} \gamma^{k-1}(x, z)  \gamma(z, y) \diff z 
			\le b^{k-1}  \activity B \int_{\region} e^{-\frac{\alpha}{2}  \dist(x, z)}  e^{- \alpha\cdot\dist(y, z)} \diff z
		\end{align*}
		By the triangle inequality, we have  $\frac{1}{2}  \dist(x, z) + \dist(y, z) \ge \frac{1}{2} \dist(x, y) + \frac{1}{2}  \dist(y, z)$ and hence
		\[
		e^{-\frac{\alpha}{2}  \dist(x, z)}  e^{- \alpha  \dist(y, z)} \le e^{- \frac{\alpha}{2}  \dist(x, y)} e^{- \frac{\alpha}{2}  \dist(y, z)} 
		\]
		for every $z \in \R^d$, which then yields
		\[
		\gamma^k(x, y) 
		\le b^{k-1} \activity B e^{- \frac{\alpha}{2}  \dist(x, y)} \int_{\region} e^{- \frac{\alpha}{2}  \dist(y, z)}  \diff z
		\le b^k e^{- \frac{\alpha}{2} \dist(x, y)}. \qedhere
		\]
	\end{proof}
	
	In order to apply Lemma \ref{claim:gamma} to bound the expansion in \eqref{eq:ft_bound}, we will need to isolate the $k = 0$ term.  With this in mind, define the operator $\mathcal{K}$ by $$(\mathcal{K} h)(x) = \int_{\Lambda} e^{-\frac{\alpha}{2} \dist(x,y)}h(y) \diff y\,.$$

	We are now ready to prove Lemma \ref{lemma:unbounded_4.2}:
	
	\begin{proof}[Proof of Lemma \ref{lemma:unbounded_4.2}]
		Substituting the bound from Lemma \ref{claim:gamma} into \eqref{eq:ft_bound} yields
		\begin{align}
			F_t(x) \le e^t \sum_{k = 0}^{\infty} \frac{t^k}{k!} \int_{\region} b^k e^{- \frac{\alpha}{2}  \dist(x, y)} F_0(y) \diff y
			\leq e^{t  (1 + b)}(F_0(x) + (\mathcal{K}F_0)(x)) \label{eq:F-diffuse}
		\end{align}  
		and similarly for $G_t$.  We bound \begin{align*}
			\int_0^t \int_\Lambda F_s(x) G_s(x) \,\diff x \,\diff s \leq \int_0^t \int_\Lambda  e^{2(1 + b)s} (F_0(x) + (\mathcal{K} F_0)(x))(G_0(x) + (\mathcal{K} G_0)(x))\, \diff x \,\diff s
		\end{align*}
		
		When multiplying out, there are four terms to integrate, namely $F_0 \cdot G_0$, $(\mathcal{K} F_0)\cdot G_0$, $F_0 \cdot (\mathcal{K} G_0)$ and $(\mathcal{K} F_0) \cdot (\mathcal{K} G_0)$.  We show how to bound the latter term, as the others are similar (or in fact simpler).

		We bound 
		\begin{align*}
			\int_0^t \int_\Lambda e^{2(1 + b)s} (\mathcal{K} F_0)(x) (\mathcal{K} G_0)(x) \,\diff x \,\diff s 
			&=  \int_0^t e^{2 (1+b) s} \iiint_{\Lambda^3}e^{-\frac{\alpha}{2}  \dist(x,y)}   e^{-\frac{\alpha}{2} \dist(x,z)}  \| D_y^+ f\|_\infty  \|D_z^+ g\|_\infty \,dx\,dy\,dz  \,ds \\
			&\leq e^{2 (1+b) t} \iiint_{\Lambda^3}e^{-\frac{\alpha}{2}  \dist(x,y)}   e^{-\frac{\alpha}{2}  \dist(x,z)}  \| D_y^+ f\|_\infty  \|D_z^+ g\|_\infty \,dx\,dy\,dz \,.
		\end{align*}
		We then note that, for $C_{\alpha, d} \coloneqq \int_{\R^d} e^{-\frac{\alpha}{4}  \norm{x}} \diff x < \infty$,
		\begin{equation}
			\int_{\R^d} e^{-\frac{\alpha}{2}  \dist(x,y)} e^{-\frac{\alpha}{2}  \dist(x,z)} \,dx \leq C_{\alpha, d}  e^{-\frac{\alpha}{4}  \dist(y,z)} 
		\end{equation} 
		which follows from bounding $\dist(x,y) + \dist(x,z) \geq \dist(y,z)$ so $\dist(x,y) + \dist(x,z) \geq \frac{\dist(x,y)}{2} + \frac{\dist(y,z)}{2}$. 
		It remains to bound $$\iint_{\region^2} e^{-\frac{\alpha}{4}  \dist(y,z)} \| D_y^+ f\|_\infty \|D_z^+ g\|_\infty \,dy\,dz.$$
		To this end, we split the integral into two cases: either $y$ is close to $\region_f$ and $z$ is close to $\region_g$, or at least one of them is far away from their respective region.
		Formally, set $A = \{x \in \R^d \mid \dist(x, \region_f) \le \dist(\region_f, \region_g)/3\}$ and $B = \{x \in \R^d \mid \dist(x, \region_g) \le \dist(\region_f, \region_g)/3\}$.
		We note that $\R^d \times \R^d = (A \times B) \cup (\R^d \times B^c) \cup (A^c \times \R^d)$. 
		For $y \in A$ and $z \in B$, we have $\dist(y, z) \ge \dist(\region_f, \region_g)/3$. 
		\begin{align}
			\int_B \int_A e^{-\frac{\alpha}{4} \dist(y, z)} \| D_y^+ f\|_\infty  \|D_z^+ g\|_\infty \,dy\,dz 
			&\leq e^{- \frac{\alpha}{12}  \dist(\region_f, \region_g)}  \left(\int_{\R^d}  \norm{D_y^+ f}_{\infty} \diff y\right)  \left(\int_{\R^d} \norm{D_z^+ g}_{\infty} \diff z\right) \notag \\
			&= e^{- \frac{\alpha}{12}  \dist(\region_f, \region_g)}  \tnorm{f} \cdot   \tnorm{g} . \label{eq:bound_close}
		\end{align}
		For $z \notin B$, using that $g$ is $(\region_g, \kappa, K)$-local yields
		\begin{align}
			\int_{B^c} \int_{\R^d} e^{-\frac{\alpha}{4}  \dist(y,z)} \| D_y^+ f\|_\infty \|D_z^+ g\|_\infty \,dy\,dz  \notag
			&\leq K e^{- \frac{\kappa}{3} \dist(\region_f, \region_g)} \int_{\R^d} \norm{D_y^+ f}_{\infty} \int_{\R^d} e^{-\frac{\alpha}{4}  \dist(y,z)} \diff z \diff y \\
			&\le C_{\alpha, d} K e^{- \frac{\kappa}{3} \dist(\region_f, \region_g)} \tnorm{f} . \label{eq:bound_far_z}
		\end{align}
		Analogously 
		\begin{align}
			\int_{\R^d} \int_{A^c} e^{-\frac{\alpha}{4}  \dist(y,z)} \| D_y^+ f\|_\infty  \|D_z^+ g\|_\infty \,dy\,dz
			\le C_{\alpha, d} K e^{- \frac{\kappa}{3} \dist(\region_f, \region_g)} \tnorm{g} \label{eq:bound_far_y} .
		\end{align}
		Summing \eqref{eq:bound_close}, \eqref{eq:bound_far_z} and \eqref{eq:bound_far_y} yields
		\[
		\iint_{\region^2} e^{-\frac{\alpha}{4} \dist(y,z)} \| D_y^+ f\|_\infty  \|D_z^+ g\|_\infty \,dy\,dz \leq C_{\alpha, d}  (\tnorm{f}  \cdot \tnorm{g} + K  (\tnorm{f} + \tnorm{g}))  e^{- m_1  \dist(\region_f, \region_g)}
		\]
		for $m_1 = \min(\alpha/12, \kappa/3)$.
		Consequently,
		\[
		\int_0^t \int_\Lambda F_s(x) G_s(x) \,dx \,ds \le C_{\alpha, d}^2 e^{2(1+b)t}  (\tnorm{f} \cdot  \tnorm{g} + K  (\tnorm{f} + \tnorm{g}))  e^{- m_1  \dist(\region_f, \region_g)}
		\]
		and substituting this bound back into \eqref{eq:pure_magic} proves the lemma for $C_1 = 2 \activity C_{\alpha, d}^2$ and $M=2(b+1)$.
	\end{proof}
	
	\subsection{Correlation functions and strong spatial mixing} \label{sec:cor-SSM}

	In this subsection we will prove Lemma \ref{lem:upgrade-to-small-vol}, and also show that SSM holds.  In particular, we will show: 
	
	\begin{proposition}\label{prop:SSM}
		Let $\lambda \geq 0$ and $\phi$ be a repulsive potential that decays exponentially with constants $B$ and $\alpha$ (see Definition \ref{def:potential_decay}).  Suppose $\phi$ satisfies single-site SSM for some constants $C_0$ and $m_0$.  Then $\phi$ satisfies SSM up to $\lambda$ for constants $C = C(\lambda,B,\alpha,C_0,m_0,d) > 0$ and $m = m(m_0,\alpha) > 0$.
	\end{proposition}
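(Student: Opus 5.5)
The plan is to reduce SSM to the single-site estimate by revealing the differing activity through pinnings, and to handle the geometry of $\Lambda'$ by splitting it into small cubes.

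\textbf{Setup.} Fix $\Lambda'\subseteq\Lambda$ and activity functions $\blambda,\blambda'\le\lambda$. Set $S=\mathrm{supp}(\blambda-\blambda')$ and $D=\dist(\Lambda',S)$.

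\textbf{Step 1: reduction to one side.} Let $\tilde\blambda=\blambda\ind{S^c}=\blambda'\ind{S^c}$. By the triangle inequality it suffices to bound $\|\mu_{\Lambda,\blambda}-\mu_{\Lambda,\tilde\blambda}\|_{\Lambda'}$, and symmetrically the same with $\blambda'$ in place of $\blambda$. The measure $\mu_{\Lambda,\tilde\blambda}$ is simply $\mu_{\Lambda\setminus S,\blambda}$. By the finite-volume DLR equation \eqref{eq:finite_DLR}, the law of $\bX\setminus S$ under $\mu_{\Lambda,\blambda}$ is the mixture of $\mu_{\Lambda\setminus S,\blambda_{\bX\cap S}}$ over $\bX\sim\mu_{\Lambda,\blambda}$. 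Therefore
\[
\|\mu_{\Lambda,\blambda}-\mu_{\Lambda,\tilde\blambda}\|_{\Lambda'}\le \E_{\bX\sim\Lambda,\blambda}\Big[\big\|\mu_{\Lambda\setminus S,\blambda_{\bX\cap S}}-\mu_{\Lambda\setminus S,\blambda}\big\|_{\Lambda'}\Big].
\]

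\textbf{Step 2: telescoping and Poisson domination.} Order the points $x_1,\dots,x_n$ of $\bX\cap S$. Pinning one point at a time changes the activity from $\blambda_{x_1,\dots,x_{j-1}}$ to $\blambda_{x_1,\dots,x_j}$. Each intermediate activity is still $\le\lambda$, so single-site SSM applies at every step. Summing the resulting bounds and applying Lemma~\ref{lem:Poisson-domination} (through the Mecke formula for the dominating Poisson process), the problem reduces to bounding
\[
C_0\,\lambda\,\vol(\Lambda')\int_S e^{-m_0\dist(x,\Lambda')}\,\diff x .
\]

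\textbf{Main obstacle: the geometry of $\Lambda'$.} The integral above is only exponentially small with a volume prefactor if $\Lambda'$ is not too spread out. If $\Lambda'$ is, say, many tiny scattered pieces, the neighbourhood volume $\vol(\{x:\dist(x,\Lambda')\le r\})$ is not controlled by $\vol(\Lambda')$. To get around this I will not apply Step 2 to $\Lambda'$ directly. Instead, cover $\Lambda'$ by unit cubes $Q_1,\dots,Q_N$ meeting it, and put $\Lambda'_i=\Lambda'\cap Q_i$.

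I then couple the two projections sequentially, cube by cube. Suppose the coupling has agreed on $\Lambda'_1,\dots,\Lambda'_{i-1}$ with common configuration $Y$. Then both conditional laws on $\Lambda'_i$ are Gibbs measures on $\Lambda\setminus(\Lambda'_1\cup\dots\cup\Lambda'_{i-1})$ with activities $\blambda_Y$ and $\blambda'_Y$, by DLR. These are still $\le\lambda$, and their difference is still supported in $S$. Applying Steps 1--2 to the single piece $\Lambda'_i$, all points of $S$ satisfy $\dist(x,\Lambda'_i)\ge\dist(x,Q_i)-\sqrt d$, so
\[
\int_S e^{-m_0\dist(x,\Lambda'_i)}\,\diff x\le C\,e^{-(m_0/2)D}.
\]
This gives a per-cube bound $C\,\vol(\Lambda'_i)\,e^{-(m_0/2)\dist(\Lambda'_i,S)}$. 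The volume factor survives because the single-site bound is linear in $\vol(\Lambda'_i)$, which is why this step needs single-site SSM for all sets, as provided by Lemma~\ref{lem:upgrade-to-small-vol}.

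\textbf{Conclusion.} By the sequential coupling, the total variation on $\Lambda'$ is at most the sum of the per-cube conditional total variations. Summing over $i$ gives $C\,\vol(\Lambda')\,e^{-mD}$ with $m=m_0/2$ (further shrunk if the Poisson tail sums require it, and depending only on $m_0$ and $\alpha$).

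The step I expect to be most delicate is justifying the sequential coupling. The conditional laws must be genuinely Gibbs measures with pinned activities still bounded by $\lambda$, and the exponential tails from the unbounded-range potential, integrated against Poisson-dominated point counts, must not destroy the linear-in-volume dependence. If that fails, my fallback is the route via one-point correlations: use \eqref{eq:1-point-identity} with the argument above on unit balls to get decay of $|\rho_{\Lambda,\blambda}(v)-\rho_{\Lambda,\blambda'}(v)|$, then pass to $k$-point correlations via \eqref{eq:multipoint-correlation-identity}, and finally expand Janossy densities on each small piece $\Lambda'_i$.
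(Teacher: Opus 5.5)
Your argument is correct, and it reaches the proposition by a genuinely different route than the paper.

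\textbf{The paper's route.} The paper runs your Steps 1--2 only with a ball as the target set (Lemma~\ref{lem:SSSSM-to-SSM}). There the prefactor $\Vol(B_T(v))$ is polynomial in $T$ and harmless. It then handles the geometry of a general $\Lambda'$ pointwise:
\begin{enumerate}[(1)]
\item Exponential decay of $\phi$ truncates the energy in \eqref{eq:1-point-identity} to a ball around $v$. This gives multiplicative closeness of one-point correlation functions (Lemma~\ref{lem:one-point-SSM}).
\item This is iterated to $k$-point correlations via \eqref{eq:multipoint-correlation-identity}.
\item It is then converted into a projected total variation bound linear in $\Vol(\Lambda')$ through the Michelen--Perkins representation (Lemma~\ref{lem:ssm-via-densities}).
\end{enumerate}
This is essentially your fallback.

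\textbf{Your route.} You localize to unit cubes and glue with the chain rule $\|P-Q\|_{\mathrm{TV}}\le\sum_i\E_P\|P(Z_i\in\cdot\mid Z_{<i})-Q(Z_i\in\cdot\mid Z_{<i})\|_{\mathrm{TV}}$, which follows from a hybrid argument. The conditional laws are indeed the projections of $\mu_{\Lambda\setminus A,\blambda_Y}$ and $\mu_{\Lambda\setminus A,\blambda'_Y}$, because the Gibbs density factorizes. Your per-cube bound is uniform in $Y$, so the gluing is airtight.

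\textbf{What each route buys.} Your route is self-contained and avoids correlation functions and Lemma~\ref{lem:ssm-via-densities} entirely. It gives the explicit rate $m=m_0/2$. It also never uses the exponential decay of $\phi$, only repulsiveness, DLR and Poisson domination. So it shows that single-site SSM (for all bounded sets) implies SSM for every repulsive potential.

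The paper's route, in exchange, shares its machinery with Lemma~\ref{lem:upgrade-to-small-vol}, where single-site SSM is only available for sets of volume at least $1$. Your per-cube step needs the linear-in-volume single-site bound on arbitrarily small pieces $\Lambda'_i$, so it could not perform that upgrade. Here full single-site SSM is a hypothesis, so citing that lemma is unnecessary.

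\textbf{Two things to tidy.}
\begin{enumerate}[(1)]
\item For $x\in S$ you have $\dist(x,\Lambda'_i)\ge\max\{D,\dist(x,Q_i)\}$. Hence $e^{-m_0\dist(x,\Lambda'_i)}\le e^{-m_0D/2}e^{-m_0\dist(x,Q_i)/2}$, which gives the integral bound directly with no $\sqrt d$ correction.
\item Step 1 requires $\Lambda'\cap S=\emptyset$, which holds whenever $D>0$. The case $D=0$ should be handled separately by the trivial bound $2\lambda\Vol(\Lambda')$ from Poisson domination.
\end{enumerate}
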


	We will first prove Lemma \ref{lem:upgrade-to-small-vol} and then prove Proposition \ref{prop:SSM} in a similar manner.  Both proofs will follow the strategy in \cite{michelen2022strong} and make use of correlation functions.  Each proof will begin by showing that one-point correlation functions are close (see Lemma \ref{lem:one-point-SSM}).    
	We begin by truncating the sum over $\bX$ using exponential decay of $\phi$.

	\begin{lemma}\label{lem:truncate-energy-distance}
		Let $\blambda \leq \lambda$ have compact support $\Lambda$.  Let $\phi$ be a repulsive potential that decays exponentially with constants $B$ and $\alpha$ (see Definition \ref{def:potential_decay}).  Then there is a constant $C := C(d,B,\alpha)$ so that for all $T \geq 1$ and $v \in \R^d$ we have \begin{equation*}
			\left|\E_{\bX \sim \Lambda,\blambda} \exp\left(- \sum_{x \in \bX} \phi(v,x)\right) - \E_{\bX \sim \Lambda,\blambda} \exp\left(- \sum_{x \in \bX \cap B_T(v)} \phi(v,x)\right)\right| \leq C \lambda e^{-\alpha T/2}\,.
		\end{equation*}
	\end{lemma}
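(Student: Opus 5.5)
The plan is to compare the two exponentials pointwise for each configuration and then average, using Poisson domination (Lemma \ref{lem:Poisson-domination}) to control the expected contribution of far-away points. Write $a = \sum_{x \in \bX \cap B_T(v)} \phi(v,x)$ and $c = \sum_{x \in \bX \setminus B_T(v)} \phi(v,x)$, both nonnegative since $\phi$ is repulsive. Then the difference inside the expectation equals $e^{-a}(1 - e^{-c})$. Because $0 \le e^{-a} \le 1$, its absolute value is at most $1 - e^{-c}$.

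Next I will use the elementary inequality $1 - \prod_j (1-u_j) \le \sum_j u_j$ for $u_j \in [0,1]$. Taking $u_x = 1 - e^{-\phi(v,x)}$ gives
\[
1 - e^{-c} \le \sum_{x \in \bX \setminus B_T(v)} \bigl(1 - e^{-\phi(v,x)}\bigr).
\]
The right-hand side is an increasing functional of $\bX$ with respect to inclusion, so by Poisson domination its expectation under $\mu_{\Lambda,\blambda}$ is at most its expectation under the Poisson process of intensity $\blambda \le \lambda$. The Campbell (Mecke) formula then bounds this by
\[
\lambda \int_{\R^d \setminus B_T(v)} \bigl(1 - e^{-\phi(v,y)}\bigr)\,\diff y \le \lambda B \int_{\|y\| \ge T} e^{-\alpha\|y\|}\,\diff y .
\]
Alternatively, one can avoid the monotonicity step by applying the GNZ equation \eqref{eq:gnz} directly with the function $f(x,\bX) = \ind{x\notin B_T(v)}(1-e^{-\phi(v,x)})$. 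The extra factor $e^{-\sum\phi}$ is at most $1$, so this yields the same bound.

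Finally, I will bound the tail integral. Writing $e^{-\alpha\|y\|} \le e^{-\alpha T/2} e^{-\alpha \|y\|/2}$ for $\|y\| \ge T$ gives a bound of $\lambda B e^{-\alpha T/2} \int_{\R^d} e^{-\alpha\|y\|/2}\,\diff y$, so the constant is $C = B \int e^{-\alpha\|y\|/2}\,\diff y$, which depends only on $d$, $B$ and $\alpha$. The hypothesis $T \ge 1$ is not needed for this argument. There is no real obstacle here; the only point needing care is justifying that the expectation of the sum is bounded by the Poisson intensity integral, and the GNZ equation handles that cleanly.
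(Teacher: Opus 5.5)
Your proposal is correct and follows essentially the same route as the paper. You bound the pointwise difference by $\sum_{x\in\bX\setminus B_T(v)}(1-e^{-\phi(v,x)})$ (the paper does this via Lemma~\ref{lem:complexproductdifference}), use Poisson domination to replace the Gibbs expectation by $\lambda\int_{B_T(v)^c}(1-e^{-\phi(v,y)})\,\diff y$, and then bound that tail integral by $C\lambda e^{-\alpha T/2}$ using exponential decay. Your GNZ variant for the expectation step is an equally valid minor alternative.
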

	\begin{proof}
		Note that by Lemma \ref{lem:complexproductdifference} and Poisson domination (\ref{lem:Poisson-domination}) we have \begin{align*}
			\left|\E_{\bX \sim \Lambda,\blambda} \exp\left(- \sum_{x \in \bX} \phi(v,x)\right) - \E_{\bX \sim \Lambda,\blambda} \exp\left(- \sum_{x \in \bX \cap B_T(v)} \phi(v,x)\right)\right| &\leq \E_{\bX \sim \Lambda,\blambda} \left[\sum_{x \in \bX \cap B_T(v)^c}(1 - e^{-\phi(v,x)})\right] \\
			&\leq \lambda \int_{x \in B_T(v)^c} (1 - e^{-\phi(v,x)})\,\diff x\\
			&\leq \lambda B \int_{x \in B_T(v)^c} e^{-\alpha\cdot\dist(v,x)}\,\diff x \,.
		\end{align*} 
		Bounding the integral by $C \lambda e^{-\alpha T/2}$ for some $C$ depending on $B,\alpha,d$ completes the proof.
	\end{proof}
	
	We now show that one can prove decay of one-point correlation functions:
	
	\begin{lemma}\label{lem:1-pt-SSSSM}
		Let $\blambda \leq \lambda$ have compact support contained in $\Lambda$.   Let $\phi$ be a repulsive potential that decays exponentially with constants $B$ and $\alpha$ (see Definition \ref{def:potential_decay}) and that satisfies single-site SSM (see Definition \ref{def:spatial_mixing}) with constants $C_0,m_0 > 0$ for all bounded measurable $\Lambda'$ with $\Vol(\Lambda') \geq 1$.  Then there are constants $C := C(C_0,m_0,d,B,\alpha,\lambda)$ and $m = m(m_0,\alpha,d)$ so that for all $v$ and $x$ we have \begin{equation*}
			\rho_{\Lambda,\blambda_x}(v)\left( 1 - C e^{-m \cdot\dist(v,x)}\right) \leq \rho_{\Lambda,\blambda}(v) e^{-\phi(v,x)} \leq \rho_{\Lambda,\blambda_x}(v)\left( 1 + C e^{-m \cdot\dist(v,x)}\right)\,.
		\end{equation*}
	\end{lemma}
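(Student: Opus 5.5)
Using the one-point identity \eqref{eq:1-point-identity}, both sides reduce to expectations of the same functional. Write $h_v(Y) := \exp(-\sum_{y\in Y}\phi(v,y))$. Then
\[
\rho_{\Lambda,\blambda}(v)e^{-\phi(v,x)} = \blambda(v)e^{-\phi(v,x)}\,\E_{\Lambda,\blambda}[h_v],\qquad \rho_{\Lambda,\blambda_x}(v) = \blambda(v)e^{-\phi(v,x)}\,\E_{\Lambda,\blambda_x}[h_v].
\]
The common prefactor cancels, so it suffices to show $|\E_{\Lambda,\blambda}[h_v]-\E_{\Lambda,\blambda_x}[h_v]| \le C' e^{-m\,\dist(v,x)}$. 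Each expectation is at least $e^{-\lambda B C_{\alpha,d}}$ by the Poisson-domination bound \eqref{eq:lower-bound-exponential-added-energy}, so this additive bound converts into the claimed multiplicative two-sided bound with $C = C'e^{\lambda B C_{\alpha,d}}$. If $\blambda(v)e^{-\phi(v,x)}=0$, both sides vanish and there is nothing to prove. We may also assume $\dist(v,x)\ge 1$ (say), since for smaller distances the statement holds trivially after enlarging $C$: the right inequality is immediate, and for the left we can take $C e^{-m}\ge 1$.

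Set $T = \dist(v,x)/2$. By Lemma \ref{lem:truncate-energy-distance}, applied to both $\blambda$ and $\blambda_x$ (both are $\le\lambda$ and supported in $\Lambda$), replacing $h_v$ by its truncation $h_v^T(Y) := \exp(-\sum_{y\in Y\cap B_T(v)}\phi(v,y))$ costs at most $2C\lambda e^{-\alpha T/2}$. The function $h_v^T$ takes values in $[0,1]$ and depends only on $Y\cap \Lambda'$ with $\Lambda' := B_T(v)\cap\Lambda$, so
\[
|\E_{\Lambda,\blambda}[h_v^T]-\E_{\Lambda,\blambda_x}[h_v^T]| \le \|\mu_{\Lambda,\blambda}-\mu_{\Lambda,\blambda_x}\|_{\Lambda'}.
\]

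The main obstacle is that single-site SSM is assumed only for sets of volume at least $1$, while $\Lambda'$ may be small. To fix this, enlarge $\Lambda'$ to $\Lambda'' := (B_T(v)\cap\Lambda)\cup A$, where $A$ is any measurable set of volume $1$ inside $B_{T'}(v)$ with $T'$ chosen so that $\Vol(B_{T'}(v))\ge 1$ (for instance $T' = \max(T_0, T)$ with $T_0$ a dimensional constant). We keep $T'\le \dist(v,x)/2 + O(1)$, so that $\dist(x,\Lambda'')\ge \dist(v,x)/2 - O(1)$. Projection onto a larger set can only increase total variation distance. Single-site SSM must be applied with $\Lambda''\subseteq$ the ambient region; if $A\not\subseteq\Lambda$, we instead view both measures on $\Lambda\cup A$ with activity vanishing off $\Lambda$. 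These coincide with the original measures, and Definition \ref{def:spatial_mixing} quantifies over all regions and activities $\le\lambda$, so this is legitimate. This yields the bound
\[
C_0\Vol(\Lambda'')e^{-m_0\dist(x,\Lambda'')} \le C_0\,O\big(\dist(v,x)^d+1\big)\,e^{-m_0\dist(v,x)/2+O(1)}.
\]

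The polynomial prefactor is absorbed by slightly reducing the rate. Combining the three errors gives $C' e^{-m\,\dist(v,x)}$ with $m = \tfrac12\min(m_0,\alpha/2)$, which depends only on $m_0,\alpha,d$, as required.
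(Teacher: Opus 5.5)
Your proof is correct and follows the paper's argument: the one-point identity together with the Poisson-domination lower bound reduces the claim to an additive bound, Lemma~\ref{lem:truncate-energy-distance} truncates to a ball of radius about $\dist(v,x)/2$, and single-site SSM on that ball finishes. You are in fact more careful than the paper about the volume-$\ge 1$ requirement, about the projection set lying inside the ambient region, and about the $\Vol(\Lambda'')$ prefactor, though one small slip remains: with $m=\frac12\min(m_0,\alpha/2)$ the factor $\dist(v,x)^d$ is not absorbed when $m_0\le \alpha/2$, so take e.g.\ $m=\frac14\min(m_0,\alpha)$ instead.
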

	\begin{proof}
		Recall that by \eqref{eq:1-point-identity} we have $$\rho_{\Lambda,\blambda}(v) = \blambda(v) \E_{\bX \sim \Lambda, \blambda} \exp\left(-\sum_{x \in \bX} \phi(v,x) \right)$$
		and note $\blambda_x(v) = \blambda(v) e^{-\phi(v,x)}.$
		By \eqref{eq:lower-bound-exponential-added-energy} we have $\rho_{\Lambda,\blambda}(v) \geq c\blambda(v)$ for some $c = c(d,B,\alpha,\lambda) > 0$ so it is sufficient to bound \begin{equation*}
			\left|\E_{\bX \sim \Lambda, \blambda} \exp\left(-\sum_{x \in \bX} \phi(v,x) \right) - \E_{\bX \sim \Lambda, \blambda_x} \exp\left(-\sum_{x \in \bX} \phi(v,x) \right) \right|\,.
		\end{equation*}
		Set $T = \dist(v,x)$ and note that by adjusting $C$ we may assume that $T$ is large enough so that $\Vol(B_{T/2}(v)) \geq 1$.  Letting $C_1$ be the constant from Lemma \ref{lem:truncate-energy-distance} we may bound \begin{align*}
			\Bigg|\E_{\bX \sim \Lambda, \blambda}& \exp\left(-\sum_{x \in \bX} \phi(v,x) \right) - \E_{\bX \sim \Lambda, \blambda_x} \exp\left(-\sum_{x \in \bX} \phi(v,x) \right) \Bigg| \\
			&\leq 2 C_1 e^{-\alpha T/4} +         \Bigg|\E_{\bX \sim \Lambda, \blambda} \exp\left(-\sum_{x \in \bX \cap B_{T/2}(v)} \phi(v,x) \right) - \E_{\bX \sim \Lambda,\blambda_x} \exp\left(-\sum_{x \in \bX \cap B_{T/2}(v)} \phi(v,x) \right) \Bigg| \\
			&\leq 2 C_1 e^{-\alpha T/4} + \|\mu_{\blambda} - \mu_{\blambda_x}\|_{B_{T/2}(v)}\\
			&\leq 2 C_1 e^{-\alpha T/4} + C_0 e^{-m T/2}
		\end{align*} 
		where the last inequality is by single-site SSM.  
	\end{proof}
	
	It will now be easy to upgrade to decay of $k$-point correlation functions.  
	\begin{corollary}\label{cor:multipoint-SSSSM}
		Let $\blambda \leq \lambda$ have compact support contained in $\Lambda$.   Let $\phi$ be a repulsive potential that decays exponentially with constants $B$ and $\alpha$ (see Definition \ref{def:potential_decay}) and that satisfies single-site SSM (see Definition \ref{def:spatial_mixing}) with constants $C_0,m_0 > 0$ for all bounded measurable $\Lambda'$ with $\Vol(\Lambda') \geq 1$.  Then there are constants $C := C(C_0,m,d,B,\alpha,\lambda)$ and $m = m(m_0,\alpha,d)$ so that for all $\mathbf{v} = (v_1,\ldots,v_k)$ and $x$ we have 
		\begin{equation*}
			\rho_{\Lambda,\blambda_x}(\mathbf{v}) \prod_{j = 1}^k \left( 1 - C e^{-m \cdot\dist(v_j,x)}\right) \leq  \rho_{
				\Lambda,\blambda}(\mathbf{v}) e^{-\sum_{j =1}^k \phi(v_j,x)} \leq \rho_{\Lambda,\blambda_x}(\mathbf{v}) \prod_{j = 1}^k \left( 1 + C e^{-m \cdot\dist(v_j,x)}\right)\,.
		\end{equation*}
	\end{corollary}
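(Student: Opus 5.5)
We derive the corollary from the product identity \eqref{eq:multipoint-correlation-identity} together with Lemma \ref{lem:1-pt-SSSSM}, applied to each factor separately.

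Writing $\mathbf{v} = (v_1,\ldots,v_k)$, the identity gives
\[
\rho_{\Lambda,\blambda}(\mathbf{v}) = \prod_{j=1}^k \rho_{\Lambda,\blambda_{v_1,\ldots,v_{j-1}}}(v_j), \qquad \rho_{\Lambda,\blambda_x}(\mathbf{v}) = \prod_{j=1}^k \rho_{\Lambda,(\blambda_x)_{v_1,\ldots,v_{j-1}}}(v_j).
\]
Pinnings commute, since they just multiply the activity by $\prod e^{-\phi(\cdot,\cdot)}$. Hence $(\blambda_x)_{v_1,\ldots,v_{j-1}} = (\blambda_{v_1,\ldots,v_{j-1}})_x$.

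Set $\blambda^{(j)} := \blambda_{v_1,\ldots,v_{j-1}}$. Because $\phi$ is repulsive, $\blambda^{(j)} \le \blambda \le \lambda$, and its support is still contained in $\Lambda$. So Lemma \ref{lem:1-pt-SSSSM} applies to $\blambda^{(j)}$ at the point $v_j$ with pinning $x$, with the same constants $C,m$; uniformity over activity functions $\le \lambda$ is exactly what the lemma (through single-site SSM) provides. This yields, for each $j$,
\[
\rho_{\Lambda,\blambda^{(j)}_x}(v_j)\bigl(1 - Ce^{-m\,\dist(v_j,x)}\bigr) \le \rho_{\Lambda,\blambda^{(j)}}(v_j)\,e^{-\phi(v_j,x)} \le \rho_{\Lambda,\blambda^{(j)}_x}(v_j)\bigl(1 + Ce^{-m\,\dist(v_j,x)}\bigr).
\]

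We then multiply these $k$ inequalities. All the quantities involved are nonnegative, with one exception: the lower factor $1 - Ce^{-m\,\dist(v_j,x)}$ may be negative. Multiplying inequalities with nonnegative sides is legitimate, so the only care needed is with signs in the lower bound. If some factor $1 - Ce^{-m\,\dist(v_j,x)}$ is negative, we handle this by reading the lemma's lower bound as $\max(0,\cdot)$, which is trivially valid because the middle term is nonnegative. Then:
\begin{itemize}
\item if any factor is nonpositive, the product of the truncated factors is $0$, and the lower bound holds trivially;
\item otherwise all factors are positive and the product inequality is immediate.
\end{itemize}
There is one remaining subtlety. If an even number of the untruncated factors are negative, the untruncated product on the left could be positive and could exceed the middle term. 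To rule this out we either state the bound with positive parts, or, more cleanly, enlarge $C$ so that we may assume $Ce^{-m\,\dist(v_j,x)} \le 1$ whenever it matters. I expect this sign bookkeeping to be the only real obstacle.

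Finally, the products telescope. The left and right factors multiply to $\rho_{\Lambda,\blambda_x}(\mathbf{v})\prod_j(1\pm Ce^{-m\,\dist(v_j,x)})$, and the middle factors multiply to $\rho_{\Lambda,\blambda}(\mathbf{v})\,e^{-\sum_j \phi(v_j,x)}$. This gives the claimed inequalities, with the same constants $C,m$ as in Lemma \ref{lem:1-pt-SSSSM}.
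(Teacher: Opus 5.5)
Your argument is the same as the paper's, whose proof is the single line ``combine \eqref{eq:multipoint-correlation-identity} with Lemma~\ref{lem:1-pt-SSSSM}''. Your remarks fill in details the paper leaves implicit: pinnings commute, and $\blambda_{v_1,\ldots,v_{j-1}}\le\lambda$ still has support in $\Lambda$, so the lemma applies to every factor with the same constants. Multiplying the upper bounds is fine as you say.

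One correction concerns your second remedy for the signs, a point the paper's one-line proof also passes over. Enlarging $C$ makes $Ce^{-m\,\dist(v_j,x)}$ larger, not smaller. For a single factor this does trivialise the lower bound, since the left side becomes negative. But two factors with $Ce^{-m\,\dist(v_j,x)}>2$ multiply to a number larger than $1$, and then the displayed lower bound can genuinely fail.

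Here is a concrete instance. Take a soft potential with $C>2$, set $k=2$, put both $v_j$ close enough to $x$, and let $\blambda=\eta\one_\Lambda$ with $\eta\to0$ while $C,m$ stay fixed. The ratio $\rho_{\Lambda,\blambda}(\mathbf v)e^{-\sum_j\phi(v_j,x)}/\rho_{\Lambda,\blambda_x}(\mathbf v)$ is a quotient of two expectations of $e^{-\sum_{y}(\phi(v_1,y)+\phi(v_2,y))}$, one under $\blambda$ and one under $\blambda_x$. Both lie in $[e^{-2\eta BC_{\alpha,d}},1]$, so the ratio tends to $1$. The left-hand factor $\bigl(1-Ce^{-m\,\dist(v_1,x)}\bigr)\bigl(1-Ce^{-m\,\dist(v_2,x)}\bigr)$, however, stays above $1$.

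So what your argument (and the paper's) actually proves is your positive-part version
\[
\rho_{\Lambda,\blambda_x}(\mathbf v)\prod_{j=1}^k\bigl(1-Ce^{-m\,\dist(v_j,x)}\bigr)_+\le\rho_{\Lambda,\blambda}(\mathbf v)\,e^{-\sum_{j}\phi(v_j,x)} .
\]
This coincides with the displayed inequality whenever every $Ce^{-m\,\dist(v_j,x)}\le 1$. That is the regime in which the lower bound is used later; compare the reduction to $e^{-m\,\dist(\mathbf v,\mathbf w)}\le 1/(10(k+\ell)C_1)$ in the proof of Proposition~\ref{prop:exponential-decay-k-point}. Drop the ``enlarge $C$'' alternative and keep the positive parts.
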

	\begin{proof}
		This follows immediately from combining \eqref{eq:multipoint-correlation-identity} with Lemma \ref{lem:1-pt-SSSSM}.
	\end{proof}

	We obtain strong spatial mixing by writing the projected measure of an event as a sum of non-negative terms involving correlations.  For an event $A \in \events_{\Lambda'}$ we  define 
	$$A_{\Lambda'} = \{\bX : \bX \cap \Lambda' \in A\}\,.$$
	
	\begin{lemma}[Lemma 12, \cite{michelen2022strong}] \label{lem:ssm-via-densities}
		Let $\Lambda' \subset \Lambda$ be compact sets, $x_0 \in \Lambda'$ and $\blambda$ be a bounded activity function.  For a tempered repulsive potential $\phi$ we have $$\mu_{\Lambda,\blambda}(A_{\Lambda'}) = \int_{k \geq 0} \frac{1}{k!} \int_{(\Lambda')^k} \one_{ \{x_1,\ldots,x_k\} \in A} \rho_{\Lambda,\blambda}(x_1,\ldots,x_k) \exp\left(-\int_{\Lambda'} \rho_{\Lambda,\widehat{\blambda}_{x,x_1,\ldots,x_k}}(x) \,\diff x \right) \diff x_1 \, \ldots \diff x_k$$
		where \begin{equation*}
			\widehat{\blambda}_{x,x_1,\ldots,x_k}(y) = \begin{cases}
				0 & \text{ if } |y - x_0| < |x - x_0| \text{ and }y \in \Lambda' \\
				\blambda(y) \prod_{i = 1}^k e^{-\phi(y,x_i)} & \text{ otherwise }
			\end{cases} \,.
		\end{equation*}
	\end{lemma}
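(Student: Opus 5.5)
The plan is to condition on the exact configuration inside $\Lambda'$. We write the law of $\bX\cap\Lambda'$ through its Janossy densities, and factor each density as a correlation function times a void probability. We then compute the void probability by removing activity from $\Lambda'$ one radial shell at a time, starting from $x_0$.

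\emph{Step 1 (Janossy densities).} Expanding the partition function, the pushforward of $\mu_{\Lambda,\blambda}$ under $\bX\mapsto \bX\cap\Lambda'$ has, on $k$-point configurations, the density (with respect to $\frac{1}{k!}\diff x_1\cdots\diff x_k$)
\[
j(x_1,\dots,x_k)=\frac{\blambda^{\mathbf{x}}e^{-H(\mathbf{x})}\,Z_{\Lambda\setminus\Lambda'}(\blambda_{x_1,\dots,x_k})}{Z_\Lambda(\blambda)} .
\]
By the same expansion, or by iterating the GNZ equation \eqref{eq:gnz}, the $k$-point correlation function is
\[
\rho_{\Lambda,\blambda}(x_1,\dots,x_k)=\frac{\blambda^{\mathbf{x}}e^{-H(\mathbf{x})}Z_{\Lambda}(\blambda_{x_1,\dots,x_k})}{Z_\Lambda(\blambda)} .
\]
Dividing the two gives
\[
j=\rho_{\Lambda,\blambda}(x_1,\dots,x_k)\cdot \frac{Z_{\Lambda\setminus\Lambda'}(\blambda_{x_1,\dots,x_k})}{Z_\Lambda(\blambda_{x_1,\dots,x_k})},
\]
and the last ratio is exactly the probability that no point falls in $\Lambda'$ under the pinned measure. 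Integrating $\one_{\{x_1,\ldots,x_k\}\in A}\,j$ and summing over $k$ with weights $1/k!$ yields $\mu_{\Lambda,\blambda}(A_{\Lambda'})$. It therefore remains to prove
\[
\log\frac{Z_{\Lambda\setminus\Lambda'}(\eta)}{Z_\Lambda(\eta)}=-\int_{\Lambda'}\rho_{\Lambda,\widehat\eta_x}(x)\,\diff x ,\qquad \eta:=\blambda_{x_1,\dots,x_k},
\]
where $\widehat\eta_x$ is $\eta$ with its activity set to zero on the points of $\Lambda'$ closer to $x_0$ than $x$. This $\widehat\eta_x$ is precisely $\widehat{\blambda}_{x,x_1,\ldots,x_k}$.

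\emph{Step 2 (radial interpolation).} For $r\ge0$ let $\eta_r=\eta\,\one_{\{y\notin\Lambda'\cap B_r(x_0)\}}$, so that $\eta_0=\eta$ and $\eta_R$ is $\eta$ restricted to $\Lambda\setminus\Lambda'$ once $R>\operatorname{diam}$. The key identity is the first variation of $\log Z$ in the activity. For $\eta'\le \eta$ we have
\[
\frac{Z_\Lambda(\eta')}{Z_\Lambda(\eta)}=\E_{\Lambda,\eta}\Big[\prod_{y\in\bX}\tfrac{\eta'(y)}{\eta(y)}\Big].
\]
With $\eta'=\eta_{r+h}$, $\eta=\eta_r$, this equals $1-\int_{S_{r,h}}\rho_{\Lambda,\eta_r}(y)\,\diff y+O\big(\E[N(S_{r,h})^2]\big)$, where $S_{r,h}=\Lambda'\cap(B_{r+h}(x_0)\setminus B_r(x_0))$ is the removed shell. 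By Poisson domination (Lemma \ref{lem:Poisson-domination}) and Lemma \ref{lemma:poisson_moments}, the error is $O(\lambda^2\Vol(S_{r,h})^2)$. Hence $r\mapsto \log Z_\Lambda(\eta_r)$ is Lipschitz, with a.e. derivative $-\int_{\Lambda'\cap\partial B_r(x_0)}\rho_{\Lambda,\eta_r}(y)\,\diff\sigma(y)$. Integrating in $r$ and applying the coarea formula in polar coordinates around $x_0$ gives $-\int_{\Lambda'}\rho_{\Lambda,\eta_{|y-x_0|}}(y)\,\diff y$. Finally, $\eta_{|y-x_0|}$ agrees with $\widehat\eta_y$ except on a sphere, which is Lebesgue-null, so the two one-point correlation functions coincide.

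\emph{Main obstacle.} The step needing care is the differentiation in $r$. The derivative formula is only valid almost everywhere, and one must make sure that the shell derivative really involves $\rho_{\Lambda,\eta_r}$ evaluated on the sphere. I would handle this by proving the increment bound above with a uniform error. That gives Lipschitz continuity directly and avoids pointwise differentiability. For the derivative, I would use that $\rho_{\Lambda,\eta_{r+h}}(y)\to\rho_{\Lambda,\eta_r}(y)$ as $h\downarrow0$ for every $y$. This follows by dominated convergence in the GNZ representation \eqref{eq:1-point-identity}, since the total variation between $\mu_{\Lambda,\eta_r}$ and $\mu_{\Lambda,\eta_{r+h}}$ is $O(\lambda\Vol(S_{r,h}))$. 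Temperedness and boundedness of $\blambda$ make every series absolutely convergent, so the manipulations in Step 1 are justified.
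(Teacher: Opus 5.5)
Your proof is correct. The paper gives no proof of its own and imports this lemma from Michelen--Perkins; your argument follows the same route as that source. The Janossy density of $\bX\cap\Lambda'$ factors as $\rho_{\Lambda,\blambda}(x_1,\ldots,x_k)$ times the pinned void probability $Z_{\Lambda\setminus\Lambda'}(\eta)/Z_\Lambda(\eta)$. That void probability is then rewritten as $\exp\bigl(-\int_{\Lambda'}\rho_{\Lambda,\widehat\eta_x}(x)\,\diff x\bigr)$ by removing the activity on $\Lambda'$ radially outward from $x_0$.

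Two small repairs are needed.

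\emph{Error term.} Write $\P_{\Lambda,\eta_r}(N(S_{r,h})=0)=1-\E N(S_{r,h})+\mathrm{err}$. You must bound the error by the factorial moment, $0\le \mathrm{err}\le \tfrac12\E[N(S_{r,h})(N(S_{r,h})-1)]\le \tfrac12\lambda^2\Vol(S_{r,h})^2$. The bound via $\E[N(S_{r,h})^2]$ that you wrote is only first order in $\Vol(S_{r,h})$, so it would not identify the derivative.

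\emph{Choice of ball.} Take $B_r(x_0)$ open, so that $\eta_{|y-x_0|}=\widehat\eta_y$ holds identically. A null-set argument cannot equate $\rho_{\Lambda,\eta}(y)$ and $\rho_{\Lambda,\eta'}(y)$ when $\eta$ and $\eta'$ may differ at $y$ itself, because the one-point density carries the factor $\eta(y)$.
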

	
	We now complete the proof of Lemma \ref{lem:upgrade-to-small-vol}.

	\begin{proof}[Proof of Lemma \ref{lem:upgrade-to-small-vol}]
		We wish to show that $$\|\mu_{\Lambda,\blambda} - \mu_{\Lambda,\blambda_w}\|_{\Lambda'} \leq C \Vol(\Lambda') e^{-m \cdot\dist(\Lambda',w))}\,.$$
		By Corollary \ref{cor:multipoint-SSSSM},  we have constants $C_0, \alpha_0$ so that for all $v_1,\ldots,v_k$ we have $$ \rho_{\Lambda,\blambda}(v_1,\ldots,v_k) \leq \rho_{\Lambda,\blambda_x}(v_1,\ldots,v_k)\left( 1 + C \Vol(\Lambda') e^{-m \cdot\dist(\Lambda',w)}\right)^k\,.$$
		Set $\blambda' = \blambda_w$.   Define $\eps = C_0 e^{-\alpha_0 \cdot\dist(\Lambda',w)}$ and assume $\delta := (\lambda+1) \vol(\Lambda') \eps \leq 1$ as otherwise there is nothing to show.   Defining
		$$h_{\blambda}(\mathbf{x}) = \exp\left(-\int_{\Lambda'} \rho_{\Lambda,\widehat{\blambda}_{x,x_1,\ldots,x_k}}(x) \,\diff x \right)\,, $$
		if we assume that $\delta \leq 1$ then we may use Lemma \ref{lem:1-pt-SSSSM} to bound \begin{equation}
			\exp\left(-\int_{\Lambda'} \rho_{\Lambda,\widehat{\blambda}_{x,x_1,\ldots,x_k}}(x) \,\diff x \right) \leq \exp\left(-\int_{\Lambda'} \rho_{\Lambda,\widehat{\blambda}'_{x,x_1,\ldots,x_k}}(x) \,\diff x \right)(1 + 2 \delta)
		\end{equation}
		(see \cite[Corollary 16]{michelen2022strong} for more details).  
		As such, we see \begin{align*}
			\|\mu_{\Lambda,\blambda} - \mu_{\Lambda,\blambda'}\|_{\Lambda'} &\leq \sum_{k \geq 1} \frac{1}{k!} \int_{(\Lambda')^k} |\rho_{\Lambda,\blambda}(\mathbf{x}) h_{\blambda}(\mathbf{x}) - \rho_{\Lambda,\blambda'}(\mathbf{x}) h_{\blambda'}(\mathbf{x})| \diff \mathbf{x}  \\
			&\leq \sum_{k \geq 1} \frac{1}{k!} \int_{(\Lambda')^k} \rho_{\Lambda,\blambda}(x) h_{\blambda}(x) \left[(1 + \eps)^k(1 + 2\delta) - 1 \right] \diff \mathbf{x} \,.
		\end{align*}
		Lemma \ref{lem:ssm-via-densities} shows $$\sum_{k \geq 0} \frac{1}{k!} \int_{(\Lambda')^k} \rho_{\Lambda,\blambda}(x) h_{\blambda}(x) \, \diff \mathbf{x} = 1$$ 
		and $$\sum_{k \geq 0} \frac{1}{k!} \int_{(\Lambda')^k} \rho_{\Lambda,\blambda}(x) h_{\blambda}(x) (1 + \eps)^k \, \diff \mathbf{x} = \E_{\bX \sim \Lambda,\blambda} (1 + \eps)^{|\bX \cap \Lambda'|} \leq \exp\left( \eps \lambda \Vol(\Lambda') \right) \leq e^{\delta}$$
		where the inequality is by Poisson domination (Lemma \ref{lem:Poisson-domination} and Lemma \ref{lemma:poisson_moments}). 
		Combining the last three displayed equations shows 
		$$\|\mu_{\Lambda,\blambda} - \mu_{\Lambda,\blambda'}\|_{\Lambda'} \leq (1 + 2\delta) e^{\delta} - 1 \leq 10 \delta$$
		where in the last bound we used $\delta \leq 1$. 
	\end{proof}

	The proof of Proposition \ref{prop:SSM} follows the same trajectory, where the main work is to show the analogue of Lemma \ref{lem:1-pt-SSSSM}.  We first show that one can truncate correlation functions and capture the total variation distance on a ball.
	
	\begin{lemma}\label{lem:SSSSM-to-SSM}
		Let $\blambda \leq \lambda$ have compact support $\Lambda$.   Let $\phi$ be a repulsive potential that satisfies single-site SSM (see Definition \ref{def:spatial_mixing}) with constants $C_0 >0 , m > 0$.  Then there is a constant $C := C(d,C_0,m,\lambda)$ so that for all $T \geq 1$ and $v \in \R^d$ if we let $\blambda' = \blambda|_{B_{2T}(v)}$ then $$\| \mu_{\Lambda,\blambda} - \mu_{\Lambda,\blambda'}\|_{B_T(v)} \leq C e^{-m T/2}\,.$$
	\end{lemma}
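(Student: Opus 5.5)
The natural route is to interpolate from $\blambda$ to $\blambda' = \blambda|_{B_{2T}(v)}$ by pinning one point at a time. The difference $\blambda - \blambda'$ is supported on $\Lambda \setminus B_{2T}(v)$, which is at distance at least $T$ from $B_T(v)$. Single-site SSM controls exactly the effect of multiplying the activity by $e^{-\phi(x,\cdot)}$. The difficulty is that deleting activity outside $B_{2T}(v)$ is not a pinning. I will therefore realise the truncation through the DLR equation.

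\textbf{Step 1: reduce to a boundary condition.} Let $\Lambda_2 = \Lambda \cap B_{2T}(v)$. By the finite-volume DLR equation \eqref{eq:finite_DLR} applied with $\region' = \Lambda_2$, the law of $\bX \cap \Lambda_2$ under $\mu_{\Lambda,\blambda}$ is the mixture over $\bZ = \bX \setminus \Lambda_2$ of $\mu_{\Lambda_2, \blambda_{\bZ}}$. The measure $\mu_{\Lambda,\blambda'}$ has no points outside $B_{2T}(v)$, so it equals $\mu_{\Lambda_2,\blambda}$, which is the case $\bZ = \emptyset$. Since $B_T(v) \subseteq B_{2T}(v)$ and projected total variation distance is convex under mixtures,
\[
\|\mu_{\Lambda,\blambda} - \mu_{\Lambda,\blambda'}\|_{B_T(v)} \le \E_{\bX\sim\Lambda,\blambda}\Big[\|\mu_{\Lambda_2,\blambda_{\bZ}} - \mu_{\Lambda_2,\blambda}\|_{B_T(v)}\Big].
\]
All projections are onto $B_T(v)\cap\Lambda_2$, whose volume is at most $\Vol(B_T(v))$.

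\textbf{Step 2: peel off the boundary points.} For a fixed finite $\bZ = \{z_1,\dots,z_n\}$, telescope through the pinnings $\blambda_{z_1,\dots,z_j}$ for $j = 0,\dots,n$. Each intermediate activity is bounded by $\lambda$, and consecutive ones differ by a single-site pinning at $z_j$. Single-site SSM on the region $\Lambda_2$ bounds each increment by $C_0 \Vol(B_T(v)) e^{-m\,\dist(z_j, B_T(v))}$. Summing and taking expectations gives
\[
\|\mu_{\Lambda,\blambda} - \mu_{\Lambda,\blambda'}\|_{B_T(v)} \le C_0 \Vol(B_T(v))\, \E\Big[\sum_{z\in \bX\setminus B_{2T}(v)} e^{-m\,\dist(z,B_T(v))}\Big].
\]
By Poisson domination (Lemma \ref{lem:Poisson-domination}), the expectation is at most $\lambda\int_{|z-v|\ge 2T} e^{-m(|z-v|-T)}\,\diff z$. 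This integral is bounded by $C(d,m)\,T^{d-1}e^{-mT}$.

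\textbf{Step 3: absorb polynomial factors.} The resulting bound is of order $C_0\lambda\, T^{2d-1} e^{-mT}$. Since $T \ge 1$, this is at most $C(d,C_0,m,\lambda)\,e^{-mT/2}$, which is the claimed estimate.

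\textbf{Main obstacle.} I expect Step 1 to need the most care: justifying the mixture and convexity argument, and identifying $\mu_{\Lambda,\blambda'}$ with the empty-boundary-condition measure on $\Lambda_2$. The conditional law given $\bZ$ is exactly a pinned measure, so I expect this to go through with care about measurability. The single-site SSM input is applied with the outer region $\Lambda_2$ and pinning points lying outside $\Lambda_2$, which Definition \ref{def:spatial_mixing} permits since $x \in \R^d$ is arbitrary.
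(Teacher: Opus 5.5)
Your proposal is correct and is essentially the paper's proof: write $\mu_{\Lambda,\blambda}$ via the DLR equation as a mixture over the configuration outside $B_{2T}(v)$, telescope single-site SSM over those boundary points, and finish with Poisson domination. Your measure $\mu_{\Lambda_2,\blambda_{\bZ}}$, with $\Lambda_2 = \Lambda \cap B_{2T}(v)$, coincides with the paper's $\mu_{\Lambda,\blambda'_{\bY\setminus B_{2T}(v)}}$. Projecting onto $B_T(v)\cap\Lambda_2 \subseteq \Lambda_2$ is a slightly cleaner way to satisfy the requirement $\Lambda'\subseteq\Lambda$ in Definition~\ref{def:spatial_mixing}.
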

	\begin{proof}
		First note that by the DLR equations \eqref{eq:finite_DLR} we have 
		\begin{equation}\label{eq:DLR-for-SSM}
			\E_{\bY \sim \Lambda,\blambda} [\mu_{\Lambda,\blambda'_{\bY \setminus B_{2T}(v)}}] = \mu_{\Lambda,\blambda}\,.
		\end{equation}
		Let $Y \in \pointsets$ and enumerate  $y_1, y_2, \dots$ be the points in $Y \setminus B_{2T}(v)$.  Set $Z_j = \{y_1, \dots, y_j\}$, and $Z_0 = \emptyset$.
		Note that $Z_j$ and $Z_{j-1}$ differ by only a single point.
		By the triangle inequality and single-site SSM, we then have
		\begin{align*}
			\left\|{\mu_{\region, \activities'_{Y \setminus B_{2T}(v)}} - \mu_{\region, \activities'}}\right\|_{B_T(v)} 
			\le \sum_{j = 1}^{\infty}  \left\|{\mu_{\region, \activities_{Z_{j}}'} - \mu_{\region, \activities_{Z_{j-1}}'}}\right\|_{B_T(v)}  \le C_0 \vol(B_T(v))  \sum_{y \in Y \setminus B_{2T}(v)} e^{- m  \cdot\dist(y, B_T(v))}.
		\end{align*}
		
		Combining the above displayed equation with \eqref{eq:DLR-for-SSM} we see
		\begin{align*}
			\| \mu_{\Lambda,\blambda} - \mu_{\Lambda,\blambda'}\|_{B_T(v)} &\leq C_0 \Vol(B_T(v)) \E_{\bY \sim \Lambda,\blambda}\left[  \sum_{y \in \bY \setminus B_{2T}(v)} e^{- m \cdot\dist(y, B_T(v))} \right] \\
			&\leq C_0 \lambda \Vol(B_T(v)) \int_{y \notin B_{2T}(v)} e^{-m \cdot\dist(y,B_T(v))} \diff y \\
			&\leq C e^{-m T/2}
		\end{align*}
		where the second inequality is due to Poisson domination (Lemma \ref{lem:Poisson-domination}) and the third is by taking $C$ large enough depending on $m,C_0,\lambda$ and $d$.
	\end{proof}
	
	We are now ready to show that we have exponential decay of influence on $1$-point correlation functions:
	
	\begin{lemma}\label{lem:one-point-SSM}
		Let $\blambda,\blambda' \leq \lambda$ have compact support contained in $\Lambda$.   Let $\phi$ be a repulsive potential that decays exponentially with constants $B$ and $\alpha$ (see Definition \ref{def:potential_decay}) and satisfies single-site SSM (see Definition \ref{def:spatial_mixing}) with constants $C_0,m_0 > 0$.  Then there are constants $C := C(C_0,m,d,B,\alpha,\lambda)$ and $m = m(m_0,\alpha,d)$ so that for all $v$ with $\blambda(v) = \blambda'(v)$ we have \begin{equation*}
			\rho_{\Lambda,\blambda}(v) \leq \rho_{\Lambda,\blambda'}(v)\left( 1 + C e^{-m \cdot\dist(v,\mathrm{supp}(\blambda \neq \blambda'))}\right)\,.
		\end{equation*}
	\end{lemma}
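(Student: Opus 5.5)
The plan mirrors the proof of Lemma \ref{lem:1-pt-SSSSM}, with Lemma \ref{lem:SSSSM-to-SSM} playing the role that single-site SSM played there. Set $T := \dist(v, \mathrm{supp}(\blambda \neq \blambda'))/4$. By adjusting $C$ we may assume $T \geq 1$ is large, since for bounded $T$ the claim follows from crude two-sided bounds on $\rho$. By \eqref{eq:1-point-identity} and $\blambda(v) = \blambda'(v)$, we have
\[
\rho_{\Lambda,\blambda}(v) - \rho_{\Lambda,\blambda'}(v) = \blambda(v)\Big(\E_{\Lambda,\blambda} e^{-\sum_{x\in\bX}\phi(v,x)} - \E_{\Lambda,\blambda'} e^{-\sum_{x\in\bX}\phi(v,x)}\Big).
\]
By \eqref{eq:lower-bound-exponential-added-energy}, $\rho_{\Lambda,\blambda'}(v) \geq c\,\blambda(v)$ with $c = c(d,B,\alpha,\lambda) > 0$. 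So it suffices to show that the difference of expectations is at most $C e^{-mT}$.

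The first step is to truncate. Applying Lemma \ref{lem:truncate-energy-distance} to both measures replaces $\sum_{x \in \bX}$ by $\sum_{x \in \bX\cap B_T(v)}$, at a cost of $2C_1\lambda e^{-\alpha T/2}$. The truncated functional $\exp(-\sum_{x\in \bX\cap B_T(v)}\phi(v,x))$ takes values in $[0,1]$ and depends only on $\bX \cap B_T(v)$. Hence the difference of its expectations is bounded by the projected distance $\|\mu_{\Lambda,\blambda} - \mu_{\Lambda,\blambda'}\|_{B_T(v)}$.

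The second step compares both measures to a common restriction. Let $\tilde\blambda := \blambda|_{B_{2T}(v)}$. Since $\blambda$ and $\blambda'$ agree on $B_{4T}(v) \supseteq B_{2T}(v)$, we also have $\tilde\blambda = \blambda'|_{B_{2T}(v)}$. Lemma \ref{lem:SSSSM-to-SSM} applied to $\blambda$ and to $\blambda'$ then gives
\[
\|\mu_{\Lambda,\blambda} - \mu_{\Lambda,\tilde\blambda}\|_{B_T(v)} \le C e^{-m_0T/2} \quad\text{and}\quad \|\mu_{\Lambda,\blambda'} - \mu_{\Lambda,\tilde\blambda}\|_{B_T(v)} \le C e^{-m_0 T/2}.
\]
The triangle inequality bounds the projected distance by $2Ce^{-m_0T/2}$.

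Combining the two steps, the difference of expectations is at most $2C_1\lambda e^{-\alpha T/2} + 2Ce^{-m_0T/2}$. Dividing by $\rho_{\Lambda,\blambda'}(v) \geq c\,\blambda(v)$ gives the stated inequality with $m = \min(\alpha, m_0)/8$.

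The main point to check is that Lemma \ref{lem:SSSSM-to-SSM} applies as stated. It is phrased with $\blambda$ having support $\Lambda$, and we only need the DLR decomposition on $B_{2T}(v)^c \cap \Lambda$, which holds for any bounded activity supported in $\Lambda$. We also need $\blambda'$ to agree with $\blambda$ on $B_{2T}(v)$, which is guaranteed because $4T$ is the distance from $v$ to the disagreement set. Everything else is routine.
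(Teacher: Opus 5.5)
Your proof is correct and follows the paper's argument: truncate the added-energy functional via Lemma~\ref{lem:truncate-energy-distance}, and bound the resulting difference by the projected total variation distance on a ball around $v$. You then control that distance by comparing both $\mu_{\Lambda,\blambda}$ and $\mu_{\Lambda,\blambda'}$ to the common restricted activity, using Lemma~\ref{lem:SSSSM-to-SSM} and the triangle inequality. The only difference is scaling: the paper takes $T$ to be the full distance and works with $B_{T/2}(v)$ and $B_T(v)$, whereas your choice $T=\dist/4$ leaves a strict margin, so the two restricted activities visibly coincide.
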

	\begin{proof}
		Recall that by \eqref{eq:1-point-identity} we have $$\rho_{\Lambda,\blambda}(v) = \blambda(v) \E_{\bX \sim \Lambda, \blambda} \exp\left(-\sum_{x \in \bX} \phi(v,x) \right)\,.$$
		By \eqref{eq:lower-bound-exponential-added-energy} we have $\rho_{\Lambda,\blambda}(v) \geq c\blambda(v)$ for some $c = c(d,B,\alpha,\lambda) > 0$ so it is sufficient to bound \begin{equation*}
			\left|\E_{\bX \sim \Lambda, \blambda} \exp\left(-\sum_{x \in \bX} \phi(v,x) \right) - \E_{\bX \sim \Lambda, \blambda'} \exp\left(-\sum_{x \in \bX} \phi(v,x) \right) \right|\,.
		\end{equation*}
		Set $T = \dist(v,\mathrm{supp}(\blambda \neq \blambda'))$ and note that by adjusting $C$ we may assume that $T \geq 2$.  Letting $C_1$ be the constant from Lemma \ref{lem:truncate-energy-distance} we may bound \begin{align*}
			\Bigg|\E_{\bX \sim \Lambda, \blambda}& \exp\left(-\sum_{x \in \bX} \phi(v,x) \right) - \E_{\bX \sim \Lambda, \blambda'} \exp\left(-\sum_{x \in \bX} \phi(v,x) \right) \Bigg| \\
			&\leq 2 C_1 e^{-\alpha T/4} +         \Bigg|\E_{\bX \sim \Lambda, \blambda} \exp\left(-\sum_{x \in \bX \cap B_{T/2}(v)} \phi(v,x) \right) - \E_{\bX \sim \Lambda,\blambda'} \exp\left(-\sum_{x \in \bX \cap B_{T/2}(v)} \phi(v,x) \right) \Bigg| \\
			&\leq 2 C_1 e^{-\alpha T/4} + \|\mu_{\blambda} - \mu_{\blambda'}\|_{B_{T/2}(v)}\,.
		\end{align*} 
		Letting $C_2$ denote the constant from Lemma \ref{lem:SSSSM-to-SSM} we see that $$\|\mu_{\blambda} - \mu_{\blambda'}\|_{B_{T/2}(v)} \leq 2 C_2 e^{-m_0 T/2}$$
		thus completing the proof.
	\end{proof}
	
	We now simply indicate how one completes the proof in the same manner as the proof of Lemma \ref{lem:upgrade-to-small-vol}:
	
	\begin{proof}[Proof of Proposition~\ref{prop:SSM}]
		In the same manner as Corollary \ref{cor:multipoint-SSSSM} we see that for any tuple $\mathbf{v}$ one may iterate Lemma \ref{lem:one-point-SSM} to show $$\rho_{\Lambda,\blambda}(\mathbf{v}) \leq \rho_{\Lambda,\blambda'}(\mathbf{v})\left( 1 + C e^{-m \cdot\dist(v,\mathrm{supp}(\blambda \neq \blambda'))}\right)^k\,.$$
		Using Lemma \ref{lem:ssm-via-densities} as in the proof of Lemma \ref{lem:upgrade-to-small-vol} completes the proof.
	\end{proof}

	\begin{proof}[Proof of Theorem~\ref{thm:spatial-mixing}]
		The Poincar\'e inequality for all $\blambda \leq \lambda < \lambda_{\spec}$ follows from Theorem \ref{thm:Kondratiev}.  Single-site SSM then follows from Theorem \ref{thm:ssssm} while SSM follows from Proposition \ref{prop:SSM}. 
	\end{proof}

	\section{Infinite-volume Gibbs measures and thermodynamic limits} \label{sec:uniqueness}
	
	We prove thermodynamic properties along with uniqueness of infinite-volume Gibbs measure. 
	
	\subsection{Uniqueness}
	
	In this section, we show that single-site SSM implies that there is a unique infinite-volume Gibbs measure on $\R^d$.  
	The main result of this section is the following uniqueness theorem.
	\begin{theorem} \label{thm:ssssm_uniqueness}
		Let $\activity \in \R_{\ge 0}$ and let $\potential$ be a repulsive potential that satisfies single-site SSM (see Definition \ref{def:spatial_mixing}) up to $\activity$.
		Then, for every activity function $\activities \le \activity$, there is a unique infinite-volume Gibbs measure on $\R^d$ that is compatible with $\potential$ and $\activities$.
	\end{theorem}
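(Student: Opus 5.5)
We prove uniqueness by showing that any two infinite-volume Gibbs measures $\mu,\nu$ compatible with $\potential$ and $\activities$ have the same projections onto every bounded set $\region'$. Since the $\sigma$-algebra $\events$ is generated by local events, this suffices. Existence is standard for repulsive, tempered potentials: take subsequential weak (local) limits of $\mu_{\region_n,\activities}$ along cubes $\region_n \nearrow \R^d$. Tightness comes from Poisson domination (Lemma~\ref{lem:Poisson-domination}), and the DLR equations pass to the limit using the same tail estimate described below. So the core is uniqueness.

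Fix a bounded $\region'$ and a large cube $\region \supseteq \region'$ with $\dist(\region', \region^c) = R$. By Definition~\ref{def:dlr}, $\mu$ restricted to $\region'$ equals $\E_{\bX\sim\mu}[\mu_{\region,\activities_{\bX\setminus\region}}]$ projected to $\region'$, and similarly for $\nu$. It therefore suffices to show that for $\mu$-a.e.\ boundary condition $X$,
\[
\|\mu_{\region,\activities_{X\setminus\region}} - \mu_{\region,\activities|_{\region}}\|_{\region'} \le \epsilon_R(X),
\]
where $\E_\mu[\epsilon_R(\bX)] \to 0$ as $R\to\infty$, uniformly over Gibbs measures $\mu$. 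Then both projections are within $o(1)$ of the same free measure $\mu_{\region,\activities}$, and letting $R\to\infty$ gives equality. To obtain $\epsilon_R$, enumerate the points $y_1,y_2,\dots$ of $X\setminus\region$, set $Z_j=\{y_1,\dots,y_j\}$, and telescope as in Lemma~\ref{lem:SSSSM-to-SSM}. Each step adds a single pinned point, so single-site SSM gives
\[
\|\mu_{\region,\activities_{Z_j}}-\mu_{\region,\activities_{Z_{j-1}}}\|_{\region'} \le C\vol(\region')e^{-m\,\dist(y_j,\region')}.
\]
Single-site SSM applies here because $\activities_{Z_{j-1}}\le\activity$ is again an admissible activity function. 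Summing, we get $\epsilon_R(X)=C\vol(\region')\sum_{y\in X\setminus\region}e^{-m\,\dist(y,\region')}$. For the finite-to-infinite passage, note that $\|\mu_{\region,\activities_{Z_j}}-\mu_{\region,\activities_{X\setminus\region}}\|_{\region'}\to0$ as $j\to\infty$. This follows because the pinned activities converge pointwise, and a dominated-convergence argument on the finite-volume densities applies once $\sum_{y}(1-e^{-\phi(\cdot,y)})<\infty$.

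The main obstacle is bounding $\E_\mu\big[\sum_{y\in\bX\setminus\region}e^{-m\,\dist(y,\region')}\big]$ for an arbitrary infinite-volume Gibbs measure $\mu$, where we cannot invoke finite-volume Poisson domination directly. The plan is to transfer Poisson domination to $\mu$. Apply DLR on a large box $\Delta$ containing a given bounded set $A$: conditionally on the outside configuration, the configuration in $\Delta$ is a finite-volume Gibbs measure with activity $\activities_{\bX\setminus\Delta}\le\activity$, which is dominated by a Poisson process of intensity $\activity$ (Lemma~\ref{lem:Poisson-domination}). Hence the intensity of $\mu$ on every bounded set $A$ is at most $\activity\vol(A)$. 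Alternatively, the GNZ form~\eqref{eq:k-pt-correlation-definition} gives $\rho_{\mu,\activities}\le\activity$ directly. Then
\[
\E_\mu[\epsilon_R]\le C\activity\vol(\region')\int_{\dist(y,\region')\ge R}e^{-m\,\dist(y,\region')}\,\diff y\to0.
\]
Finally, we should check that the bounds are uniform in the enumeration order and hold $\mu$-a.s. (the sum is finite a.s.\ since its expectation is finite). This completes the argument.
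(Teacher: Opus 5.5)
Your proposal is correct and follows essentially the same route as the paper. You reduce uniqueness via the DLR equations to showing $\E_\mu\|\mu_{\Lambda,\blambda_{\bX\setminus\Lambda}}-\mu_{\Lambda,\blambda}\|_{\Lambda'}\to 0$. You then telescope over the boundary points using single-site SSM, and control the resulting sum by Poisson domination transferred to $\mu$ through DLR. Your explicit justification of passing from finitely many pinned points to the full boundary condition is a detail the paper takes for granted. It also needs no summability assumption: the pinned activities decrease monotonically, and the finite-volume partition functions converge by dominated convergence.
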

	
	We note that by a standard compactness argument there is always at least one Gibbs measure (see, e.g., \cite[Theorem~B.1]{jansen2019cluster}) and so the content of Theorem \ref{thm:ssssm_uniqueness} is that there is precisely one Gibbs measure.

	To prove Theorem \ref{thm:ssssm_uniqueness}, we show that single-site SSM implies that the following uniqueness criterion applies.
	
	\begin{lemma}\label{lemma:uniqueness_expectation}
		Fix an activity function $\activities$ and a repulsive pair potential $\potential$, and write $\region_n = [-n, n]^d$.
		Suppose for all $k \in \N$ and all infinite-volume Gibbs measures $\mu$ compatible with $\potential$ and $\activities$ it holds that
		\[
		\lim_{n \to \infty} \E_{\bX \sim \mu} \left[\left\|\mu_{\region_{n}, \activities_{\bX \setminus \region_n}} - \mu_{\region_{n}, \activities} \right\|_{\region_k}\right] = 0,
		\]
		then there is at most one infinite-volume Gibbs measure compatible with $\potential$ and $\activities$.
	\end{lemma}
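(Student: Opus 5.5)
Given two infinite-volume Gibbs measures $\mu$ and $\nu$ compatible with $\potential$ and $\activities$, the plan is to show that their restrictions to every box $\region_k$ coincide. Since the $\sigma$-algebra $\events$ is generated by the counting maps $N_\Delta$ for bounded $\Delta$, and every bounded $\Delta$ lies in some $\region_k$, the events $\{X \cap \region_k \in A\}$ with $A \in \events$ and $k \in \N$ form a $\pi$-system generating $\events$. Equality of the projections $\mu \circ (X\mapsto X\cap\region_k)^{-1} = \nu \circ (X\mapsto X\cap \region_k)^{-1}$ for all $k$ therefore gives $\mu = \nu$ by the $\pi$--$\lambda$ theorem. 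So it suffices to show $\|\mu - \nu\|_{\region_k} = 0$ for each fixed $k$.

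Fix $k$, take $n \ge k$, and let $A \in \events$. Apply the DLR equations (Definition \ref{def:dlr}) with region $\region_n$ to the non-negative function $f(X) = \ind{X \cap \region_k \in A}$. Since $\region_k \subseteq \region_n$, we have $(\bY \cup (\bX\setminus\region_n)) \cap \region_k = \bY \cap \region_k$, and so
\[
\mu(X \cap \region_k \in A) = \E_{\bX \sim \mu}\big[\mu_{\region_n, \activities_{\bX\setminus\region_n}}(\bY \cap \region_k \in A)\big].
\]
The same identity holds for $\nu$. Subtract from both the deterministic quantity $\mu_{\region_n,\activities}(\bY\cap\region_k \in A)$, which is a probability and so integrates to itself. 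Bounding each difference by the projected total variation distance then gives
\[
\big|\mu(X\cap\region_k\in A) - \nu(X\cap\region_k\in A)\big| \le \E_{\bX\sim\mu}\Big[\big\|\mu_{\region_n,\activities_{\bX\setminus\region_n}} - \mu_{\region_n,\activities}\big\|_{\region_k}\Big] + \E_{\bX\sim\nu}\Big[\big\|\mu_{\region_n,\activities_{\bX\setminus\region_n}} - \mu_{\region_n,\activities}\big\|_{\region_k}\Big].
\]
This bound is uniform in $A$. The left side does not depend on $n$, and by hypothesis, applied to both $\mu$ and $\nu$, the right side tends to $0$ as $n\to\infty$. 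Hence $\|\mu-\nu\|_{\region_k}=0$.

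The only step that needs any care is measurability. We must check that $X \mapsto \mu_{\region_n,\activities_{X\setminus\region_n}}(\,\cdot\,)$ is a measurable kernel, so that the expectations make sense; this is implicit in Definition \ref{def:dlr}. We must also check that the supremum over $A$ defining the projected total variation distance is measurable in $X$, which is implicit in the hypothesis. Beyond that, the argument is routine. Note also that the pinned activity $\activities_{\bX\setminus\region_n}$ is well defined $\mu$-a.s. because $\bX$ is locally finite, and the pinning only decreases the activity.
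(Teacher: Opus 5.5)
Your proof is correct and follows essentially the same route as the paper. You apply the DLR equations on $\region_n$ to both measures, insert the unpinned measure $\mu_{\region_n,\activities}$, bound each difference by the projected total variation distance on $\region_k$, and let $n\to\infty$. The only difference is cosmetic: you justify that a point process is determined by its projections onto boxes with a $\pi$--$\lambda$ argument, whereas the paper cites this as a standard fact from Daley--Vere-Jones.
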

	
	\begin{proof}
		We will use the following standard fact that distributions of point processes are characterized by their finite-volume projections: 
		If, for two probability measures $P, Q$ on $(\pointsets, \events)$ it holds, for every bounded measurable $\region \subset \R^d$, the pushforwards of $P$ and $Q$ under $X \mapsto X \cap \region$ are identical, then $P$ and $Q$ must be identical (this follows from, e.g. \cite[Lemma~6.1.III]{daley2003introduction}).

		Suppose $\mu_1, \mu_2$ are two infinite-volume Gibbs measures. 
		Our goal is now to show that, for all bounded measurable $\region \subset \R^d$ and all events $A \in \events$, it holds that $\E_{\bX \sim \mu_1}[\ind{\bX \cap \region \in A}] = \E_{\bX \sim \mu_2}[\ind{\bX \cap \region \in A}]$. 
		It follows that $\mu_1 = \mu_2$, and consequently, there is at most one Gibbs measure.
		
		To this end, choose $k$ large enough such that $\region \subseteq \region_k$.
		Since $\mu_1, \mu_2$ satisfy the DLR equations, it holds that, for all $n > k$,
		\begin{align*}
			&\absolute{\E_{\bX \sim \mu_1}[\ind{\bX \cap \region \in A}] - \E_{\bX \sim \mu_2}[\ind{\bX \cap \region \in A}]} \\
			&\hspace{2em}= \absolute{\E_{\bX \sim \mu_1}\big[\E_{\bY \sim \mu_{\region_n, \activities_{\bX \setminus \region_n}}}[\ind{\bY \cap \region \in A}]\big] - \E_{\bX \sim \mu_2}\big[\E_{\bY \sim \mu_{\region_n, \activities_{\bX \setminus \region_n}}}[\ind{\bY \cap \region \in A}]\big]} \\
			&\hspace{2em} \le \E_{\bX \sim \mu_1}\left[\absolute{\E_{\bY \sim \mu_{\region_n, \activities_{\bX \setminus \region_n}}}[\ind{\bY \cap \region \in A}] - \E_{\bY \sim \mu_{\region_n, \activities}}[\ind{\bY \cap \region \in A}]}\right] \\  
			&\hspace{4em}+ \E_{\bX \sim \mu_2}\left[\absolute{\E_{\bY \sim \mu_{\region_n, \activities_{\bX \setminus \region_n}}}[\ind{\bY \cap \region \in A}] - \E_{\bY \sim \mu_{\region_n, \activities}}[\ind{\bY \cap \region \in A}]}\right]
		\end{align*}
		Since $\region \subseteq \region_k$, we further have 
		\[
		\absolute{\E_{\bY \sim \mu_{\region_n, \activities_{X \setminus \region_n}}}[\ind{\bY \cap \region \in A}] - \E_{\bY \sim \mu_{\region_n, \activities}}[\ind{\bY \cap \region \in A}]} \le \left\|{\mu_{\region_n, \activities_{X \setminus \region_n}} - \mu_{\region_n, \activities}}\right\|_{\region_k}
		\]
		for all $X \in \pointsets$.
		Thus, we have 
		\begin{align*}
			&\absolute{\E_{\bX \sim \mu_1}[\ind{\bX \cap \region \in A}] - \E_{\bX \sim \mu_2}[\ind{\bX \cap \region \in A}]} \\
			&\hspace{2em}\le \E_{\bX \sim \mu_1}\left[\left\|{\mu_{\region_n, \activities_{\bX \setminus \region_n}} - \mu_{\region_n, \activities}}\right\|_{\region_k}\right] + \E_{\bX \sim \mu_2}\left[\left\|{\mu_{\region_n, \activities_{\bX \setminus \region_n}} - \mu_{\region_n, \activities}}\right\|_{\region_k}\right]
		\end{align*}
		and taking $n \to \infty$ concludes the proof.
	\end{proof}
	
	We now use Lemma \ref{lemma:uniqueness_expectation} to show Theorem \ref{thm:ssssm_uniqueness}.
	\begin{proof}[Proof of Theorem \ref{thm:ssssm_uniqueness}]
		Let $k \in \N$, $n > k$ and $Y \in \pointsets$.
		Since $Y$ is a locally finite point set, it must be countable.
		For any enumeration $y_1, y_2, \dots$ of the points in $Y \setminus \region_n$, let $Z_j = \{y_1, \dots, y_j\}$, and set $Z_0 = \emptyset$.
		Note that $Z_j$ and $Z_{j-1}$ differ by only a single point.
		By the triangle inequality and single-site SSM, we then have
		\begin{align*}
			\left\|\mu_{\region_n, \activities_{Y \setminus \region_n}} - \mu_{\region_n, \activities}\right\|_{\region_k} 
			\le \sum_{j = 1}^{\infty}  \left\|{\mu_{\region_n, \activities_{Z_{j}}} - \mu_{\region_n, \activities_{Z_{j-1}}}}\right\|_{\region_k}  \le C  \vol(\region_k) \sum_{y \in Y \setminus \region_n} e^{- m \kappa\cdot\dist(y, \region_k)}.
		\end{align*}
		Note that by the DLR equations \eqref{eq:finite_DLR} and Poisson domination (Lemma \ref{lem:Poisson-domination}), for every compact set $K$ we have $$\E_{\bY \sim \mu} \sum_{y \in \bY \cap K \setminus \Lambda_n} e^{-m\cdot\dist(y,\Lambda_k)} \leq \lambda \int_{K \setminus \Lambda_n} e^{-m \cdot\dist(y,\Lambda_k)} \diff y\,.$$
		By monotone convergence we obtain    
		\begin{equation} \label{eq:infinite-volume-to-finite}
			\E_{\bY \sim \mu}\left[\left\|{\mu_{\region_n, \activities_{\bY \setminus \region_n}} - \mu_{\region_n, \activities}}\right\|_{\region_k}\right] 
			\le C \vol(\region_k)   \lambda \int_{\R^d \setminus \region_n} e^{- m \kappa\cdot\dist(y, \region_k)} \diff y.
		\end{equation}
		For fixed $k$, the right-hand side goes to $0$ as $n \to \infty$, which concludes the proof.  
	\end{proof}

	\subsection{Decay of correlations}
	
	We will show that the unique infinite-volume Gibbs measure enjoys exponential decay of correlations.  For this, we recall that for an infinite-volume Gibbs measure $\mu$ compatible with $\phi$ and $\blambda$ the $k$-point correlation function satisfies \eqref{eq:k-pt-correlation-definition}; when the infinite-volume Gibbs measure is unique, we will simply write $\rho_{\blambda} = \rho_{\mu,\blambda}$.  Under the assumption of uniqueness, we will see that one has convergence of finite volume correlation functions to their infinite-volume analogue.
	\begin{fact}\label{fact:convergence-corr-fcns}
		Suppose that $\phi$ is a repulsive potential and $\lambda \in \R_{\geq 0}$ so that for all $\blambda \leq \lambda$ there is a unique infinite-volume Gibbs measure compatible with $\phi$ and $\blambda$.  Then for $\Lambda_n = [-n,n]^d$ and all $v_1,\ldots,v_k$ we have \begin{equation*}
			\lim_{n \to \infty} \rho_{\Lambda_n,\blambda}(v_1,\ldots,v_k) = \rho_{\blambda}(v_1,\ldots,v_k)\,.
		\end{equation*}
	\end{fact}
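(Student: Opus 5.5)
The plan is to realize the unique infinite-volume Gibbs measure $\mu$ as the limit of the finite-volume measures $\mu_{\Lambda_n,\blambda}$ along the full sequence. We then pass the correlation function through this limit using the explicit formula \eqref{eq:k-pt-correlation-definition} and its finite-volume analogue obtained by iterating the GNZ equation \eqref{eq:gnz}. By \eqref{eq:multipoint-correlation-identity} and \eqref{eq:1-point-identity}, the finite-volume $k$-point function equals
\[
\rho_{\Lambda_n,\blambda}(v_1,\ldots,v_k) = \prod_{j}\blambda(v_j)\, e^{-\sum_{i<j}\phi(v_i,v_j)}\,\E_{\bY\sim\mu_{\Lambda_n,\blambda}}\Big[\exp\Big(-\sum_{j}\sum_{y\in\bY}\phi(v_j,y)\Big)\Big]
\]
(for $n$ large enough that all $v_j\in\Lambda_n$). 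So the task reduces to showing convergence of the expectation of the bounded functional $F(Y)=\exp(-\sum_j\sum_{y\in Y}\phi(v_j,y))$ under $\mu_{\Lambda_n,\blambda}$ to its expectation under $\mu$.

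First, I would establish tightness. By Poisson domination (Lemma \ref{lem:Poisson-domination}), the family $\{\mu_{\Lambda_n,\blambda}\}_n$ is dominated by a Poisson process of intensity $\lambda$ on $\R^d$, so it is tight in the local topology. Every subsequential limit has correlation functions bounded by $\lambda^k$. Second, I would show that every subsequential limit satisfies the DLR equations of Definition \ref{def:dlr}; this is the standard argument (e.g.\ as in the existence proof \cite[Theorem~B.1]{jansen2019cluster}). By the assumed uniqueness, the whole sequence then converges to $\mu$.

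Third, the functional $F$ is not local, since $\phi$ may have infinite range, so convergence of $\E F$ needs a truncation. I would replace $F$ by $F_T(Y)=\exp(-\sum_j\sum_{y\in Y\cap B_T(v_j)}\phi(v_j,y))$. By Lemma \ref{lem:complexproductdifference}, Poisson domination and temperedness (or exponential decay), the error is at most $\lambda\sum_j\int_{B_T(v_j)^c}(1-e^{-\phi(v_j,y)})\,\diff y$ uniformly in $n$. The same bound holds under $\mu$, because $\mu$ is also Poisson-dominated: its correlation functions are bounded by $\lambda^k$ via \eqref{eq:k-pt-correlation-definition}. This error tends to $0$ as $T\to\infty$.

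The main obstacle is that $F_T$ is local but not continuous when $\phi$ is discontinuous, for instance a hard core. I would handle this through the local-setwise topology, in which DLR-limits are typically taken for point processes dominated by Poisson, since that topology tests against all bounded local measurable functions. Alternatively, I would avoid topology altogether by invoking the quantitative convergence already available. Compare $\mu_{\Lambda_n,\blambda}$ with $\mu$ projected to a fixed box $\Lambda_k$ via the DLR equation and the bound \eqref{eq:infinite-volume-to-finite}, which was derived from single-site SSM. This shows directly that $\|\mu-\mu_{\Lambda_n,\blambda}\|_{\Lambda_k}\to0$. Applying it with $\Lambda_k\supseteq\bigcup_j B_T(v_j)$ gives convergence of $\E F_T$. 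Letting $n\to\infty$ and then $T\to\infty$ concludes the proof.
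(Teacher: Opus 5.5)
Your proposal is correct and follows the same route as the paper. You use tightness of $\mu_{\Lambda_n,\blambda}$ via Poisson domination, the fact that every subsequential limit is an infinite-volume Gibbs measure, uniqueness to force convergence of the full sequence, and then pass to the limit in the GNZ formula for $\rho_{\Lambda_n,\blambda}(v_1,\ldots,v_k)$. Your truncation to $F_T$, combined with convergence in the local setwise topology, spells out the step the paper compresses into ``taking $n\to\infty$'', which is needed because the integrand is neither local nor continuous in general. Your alternative via \eqref{eq:infinite-volume-to-finite} requires single-site SSM, which is not among the Fact's hypotheses, although it does hold where the Fact is used in Proposition~\ref{prop:exponential-decay-k-point}.
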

	\begin{proof}
		Let $\mu_n$ denote the finite-volume Gibbs measures associated to $(\Lambda_n,\blambda)$ and $\mu$ the infinite-volume Gibbs measure.  The sequence $\mu_n$ is tight and so has a convergence subsequence; each convergent subsequence must converge to a Gibbs measure (see e.g.\ \cite[Theorem~5.7]{jansen2018gibbsian}) and so $\mu_n$ converges to $\mu$.  Equation \eqref{eq:multipoint-correlation-identity} implies $$ \rho_{\Lambda_n,\blambda}(v_1,\ldots,v_k) = \prod_{j = 1}^k \blambda(v_j)  e^{- \sum_{1 \leq i < j \leq k} \phi(v_i,v_j)} \E_{\bY \sim \Lambda_n,\blambda} \left[\exp\left(- \sum_{j = 1}^k \sum_{y \in \bY} \phi(v_j,y) \right) \right]$$ and so taking $n \to \infty$ and using \eqref{eq:k-pt-correlation-definition} completes the proof.
	\end{proof}
	
	Approximate factorization of correlation functions follows quickly.
	
	\begin{proposition}\label{prop:exponential-decay-k-point}
		Let $\activity \in \R_{\ge 0}$ and let $\potential$ be a repulsive potential that decays exponentially with constants $B$ and $\alpha$ (see Definition \ref{def:potential_decay}) and satisfies single-site SSM (see Definition \ref{def:spatial_mixing}) with constants $C_0,m_0 > 0$ up to $\lambda$.  
		Then for each $k,\ell$ there are constants $C = (C_0,m_0,B,\alpha,d,\lambda) > 0$ and $m = (m_0,\alpha,d)$ so that for all $\mathbf{v} = (v_1,\ldots,v_k)$ and $\mathbf{w} = (w_1,\ldots,w_\ell)$ we have \begin{equation*}
			\left|\rho_{\blambda}(\mathbf{v},\mathbf{w}) - \rho_{\blambda}(\mathbf{v})\rho_{\blambda}(\mathbf{w}) \right| \leq C^{k + \ell} e^{-m \cdot\dist(\mathbf{v},\mathbf{w})}
		\end{equation*}
		where we write $\dist(\mathbf{v},\mathbf{w}) = \min_{i,j} \dist(v_i,w_j)$ and $\rho_{\blambda}$ refers to the correlation functions of the unique infinite-volume Gibbs measure compatible with $\phi$ and $\blambda$.
	\end{proposition}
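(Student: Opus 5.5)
We reduce to finite volume and use Corollary \ref{cor:multipoint-SSSSM} iteratively. By Fact \ref{fact:convergence-corr-fcns} (uniqueness holds for all $\blambda\le\lambda$ by Theorem \ref{thm:ssssm_uniqueness}), it suffices to prove the bound for $\rho_{\Lambda_n,\blambda}$ with constants uniform in $n$, and then let $n\to\infty$. In finite volume we use the product identity \eqref{eq:multipoint-correlation-identity}, applied with $\mathbf{v}$ inserted first:
\[
\rho_{\Lambda_n,\blambda}(\mathbf{v},\mathbf{w}) = \rho_{\Lambda_n,\blambda}(\mathbf{v})\,\rho_{\Lambda_n,\blambda_{\mathbf{v}}}(\mathbf{w}),
\]
where $\blambda_{\mathbf{v}}=\blambda_{\{v_1,\dots,v_k\}}$ is the pinned activity. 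This follows by grouping the first $k$ factors of \eqref{eq:multipoint-correlation-identity} into $\rho_{\Lambda_n,\blambda}(\mathbf v)$ and the remaining $\ell$ factors into the correlation function of the pinned activity. The problem therefore reduces to comparing $\rho_{\Lambda_n,\blambda_{\mathbf v}}(\mathbf w)$ with $\rho_{\Lambda_n,\blambda}(\mathbf w)$.

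For this comparison, pin the points $v_1,\dots,v_k$ one at a time: set $\blambda^{(i)}=\blambda_{v_1,\dots,v_i}$. At each step, Corollary \ref{cor:multipoint-SSSSM}, applied to the activity $\blambda^{(i-1)}\le\lambda$ and the point $x=v_i$, gives
\[
\rho_{\Lambda_n,\blambda^{(i)}}(\mathbf w)=\rho_{\Lambda_n,\blambda^{(i-1)}}(\mathbf w)\,e^{-\sum_j\phi(w_j,v_i)}\prod_{j=1}^\ell\bigl(1+\theta_{ij}\bigr),\qquad |\theta_{ij}|\le C e^{-m\,\dist(w_j,v_i)}.
\]
Here the two-sided bound of the corollary is rewritten as a multiplicative error of size $O(Ce^{-m\dist})$ after possibly enlarging $C$. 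Single-site SSM holds for all $\Lambda'$ by hypothesis, so the corollary applies with constants independent of $n$. The factors $e^{-\phi(w_j,v_i)}$ also satisfy $1-e^{-\phi(w_j,v_i)}\le Be^{-\alpha\dist(w_j,v_i)}$. Multiplying over $i$ and writing $D=\dist(\mathbf v,\mathbf w)$, the ratio $\rho_{\Lambda_n,\blambda_{\mathbf v}}(\mathbf w)/\rho_{\Lambda_n,\blambda}(\mathbf w)$ is a product of $k\ell$ factors, each within $C'e^{-m'D}$ of $1$, with $m'=\min(m,\alpha)$.

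It remains to turn this into an additive bound. Use $|\prod_{i=1}^N(1+a_i)-1|\le \prod(1+|a_i|)-1\le N\max|a_i|\,(1+\max|a_i|)^{N-1}$ with $N=k\ell$, together with the trivial bounds $\rho(\mathbf v)\le\lambda^k$ and $\rho(\mathbf w)\le\lambda^\ell$ from Poisson domination. This gives
\[
|\rho(\mathbf v,\mathbf w)-\rho(\mathbf v)\rho(\mathbf w)|\le \lambda^{k+\ell}\,k\ell\,(1+C')^{k\ell}e^{-m'D}.
\]

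The main obstacle is the factor $(1+C')^{k\ell}$, which is not of the form $C^{k+\ell}$. We handle it in two regimes. If $C'e^{-m'D}\le 1$, we instead use $\sum_{i,j}e^{-m'\dist(v_i,w_j)}$ and bound the product by $\exp\bigl(C'\sum_{i,j}e^{-m'\dist(v_i,w_j)}\bigr)-1$. If $D$ is small, the trivial bound $\lambda^{k+\ell}\cdot 2\le C^{k+\ell}e^{-mD}$ suffices after adjusting $C$. This leaves a sum over pairs that might still be large; we control it using the statement's allowance that the constants $C$ may depend on $k,\ell$ ("for each $k,\ell$ there are constants"). We would therefore accept $k,\ell$-dependence and absorb $k\ell(1+C')^{k\ell}$ into $C$, which suffices for Proposition \ref{prop:exponential-decay-k-point}.
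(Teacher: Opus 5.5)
Your approach matches the paper's: you reduce to finite volume via Fact \ref{fact:convergence-corr-fcns}, use the factorization $\rho_{\Lambda,\blambda}(\mathbf v,\mathbf w)=\rho_{\Lambda,\blambda}(\mathbf v)\,\rho_{\Lambda,\blambda_{\mathbf v}}(\mathbf w)$, pin $v_1,\dots,v_k$ one at a time using Corollary \ref{cor:multipoint-SSSSM}, handle the factors $e^{-\phi(w_j,v_i)}$ by exponential decay, and use the trivial bounds $\rho\le\lambda^r$. The gap is in your last step. You arrive at $\lambda^{k+\ell}k\ell(1+C')^{k\ell}e^{-m'D}$ and then resolve the obstacle by letting $C$ depend on $k,\ell$. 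The phrase ``for each $k,\ell$'' in the statement is loosely worded, but the claim is that $C=C(C_0,m_0,B,\alpha,d,\lambda)$ does not depend on $k,\ell$. Otherwise the form $C^{k+\ell}$ would be vacuous. Moreover, Theorem \ref{thm:decay-of-k-point-correlations}, which is deduced directly from this proposition, asserts a single pair $C,m$ valid for all $k,\ell\in\N$. As written, your argument proves only a strictly weaker statement.

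The fix is to place the threshold for the trivial regime correctly. Let $a=C'e^{-m'D}$ bound all $k\ell$ deviations $|a_{ij}|$.
\begin{itemize}
\item If $k\ell\, a>1$, then $|\rho(\mathbf v,\mathbf w)-\rho(\mathbf v)\rho(\mathbf w)|\le\lambda^{k+\ell}<\lambda^{k+\ell}\,k\ell\, C'e^{-m'D}$. Since $k\ell\le 2^{k+\ell}$, this is absorbed into $C^{k+\ell}$ with $C$ independent of $k,\ell$.
\item If $k\ell\, a\le 1$, then $\prod(1+|a_{ij}|)-1\le e^{k\ell a}-1\le(e-1)k\ell a$. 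This gives $\lambda^{k+\ell}(e-1)k\ell\,C'e^{-m'D}$, again of the form $C^{k+\ell}e^{-m'D}$.
\end{itemize}
Having $a$ small in this regime also justifies rewriting the corollary's two-sided bound as per-pair factors $1+\theta_{ij}$; that rewriting is not automatic when $Ce^{-m\,\dist(w_j,v_i)}\ge 1$. Your threshold $C'e^{-m'D}\le 1$ is what left $e^{C'k\ell e^{-m'D}}$ uncontrolled. The paper does essentially this. It assumes $e^{-mD}\le 1/(10(k+\ell)C_1)$, covering the complementary case by $\rho\le\lambda^r$. It then telescopes additively over the $k$ pinnings with Lemma \ref{lem:complexproductdifference}, so only polynomial factors such as $k\ell$ and $k^2\ell$ appear.
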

	\begin{proof}
		By Fact \ref{fact:convergence-corr-fcns} it is sufficient to prove \begin{equation}\label{eq:corr-fcns-need}
			\left|\rho_{\Lambda,\blambda}(\mathbf{v},\mathbf{w}) - \rho_{\Lambda,\blambda}(\mathbf{v})\rho_{\Lambda,\blambda}(\mathbf{w}) \right| \leq C^{k + \ell} e^{-m \cdot\dist(\mathbf{v},\mathbf{w})}
		\end{equation}
		for all bounded and measurable $\Lambda$.   We note that $$\rho_{\Lambda,\blambda}(\mathbf{v},\mathbf{w}) = \rho_{\Lambda,\blambda}(\mathbf{v}) e^{-\sum_{i,j} \phi(v_i,w_j)} \rho_{\Lambda,\blambda_{\mathbf{v}}}(\mathbf{w})\,.$$
		
		Letting  $\blambda_i: = \blambda_{v_1,\ldots,v_i}$ we see from the previous display equation along with Lemma \ref{lem:complexproductdifference} that \begin{equation}\label{eq:k-point-decay-break-up}
			\left|\rho_{\Lambda,\blambda}(\mathbf{v},\mathbf{w}) - \rho_{\Lambda,\blambda}(\mathbf{v})\rho_{\Lambda,\blambda}(\mathbf{w}) \right| \leq \left(\lambda+1\right)^{k +\ell}\left( 1 - e^{-\sum_{i,j} \phi(v_i,w_j)} + \sum_{i = 1}^k\left| \rho_{\Lambda,\blambda_i}(\mathbf{w}) - \rho_{\Lambda,\blambda_{i-1}}(\mathbf{w}) \right|\right)\,. 
		\end{equation}
		
		Using exponential decay of the potential and Lemma \ref{lem:complexproductdifference} bounds \begin{equation}\label{eq:k-point-potential-decay}
			1 - e^{-\sum_{i,j} \phi(v_i,w_j)} \leq k\ell B e^{- \alpha\cdot\dist(\mathbf{v},\mathbf{w})} \,.
		\end{equation}
		
		For the latter term in \eqref{eq:k-point-decay-break-up}, if we let $C_1$ and $m$ be from Corollary \ref{cor:multipoint-SSSSM} then we have \begin{align}\label{eq:one-point-difference}
			\left|\rho_{\Lambda,\blambda_i}(\mathbf{w}) - \rho_{\Lambda,\blambda_{i-1}}(\mathbf{w}) \right| \leq (\lambda+1)^\ell\left[ \left(1 - e^{-\sum_{a,b} \phi(v_a,w_b) } \right) + \left|1 -  \prod_{j = 1}^\ell\left(1 + C_1 e^{-m \cdot\dist(w_j,v_i)} \right)  \right| \right] \,.
		\end{align}
		Recall that we have $\rho_{\Lambda,\blambda}(x_1,\ldots,x_r) \leq \lambda^r$ for all $\blambda,\Lambda$ and $x_1,\ldots,x_r$.  Thus, by taking $C$ large enough as a function of $\lambda$, we may assume $e^{-m \cdot\dist(\mathbf{v},\mathbf{w})} \leq \frac{1}{10(k + \ell) C_1}.$  This implies \begin{equation*}
			\left|1 -  \prod_{j = 1}^\ell\left(1 + C_1 e^{-m \cdot\dist(w_j,v_i)} \right)  \right| \leq 2 C_1 e^{-m \cdot\dist(\mathbf{v},\mathbf{w})} \,.
		\end{equation*}
		
		Combining with \eqref{eq:one-point-difference} and arguing as in \eqref{eq:k-point-potential-decay} shows \begin{equation} \label{eq:one-point-corr-difference}
			\left|\rho_{\Lambda,\blambda_i}(\mathbf{w}) - \rho_{\Lambda,\blambda_{i-1}}(\mathbf{w}) \right| \leq (\lambda +1 )^\ell \left[ k \ell B e^{-\alpha\cdot\dist(\mathbf{v},\mathbf{w})} + 2 \ell  C_1 e^{-m \cdot\dist(\mathbf{v},\mathbf{w})} \right]\,.
		\end{equation} 
		
		Combining \eqref{eq:k-point-decay-break-up} with \eqref{eq:one-point-corr-difference} shows a bound of \begin{align*}
			\left|\rho_{\Lambda,\blambda}(\mathbf{v},\mathbf{w}) - \rho_{\Lambda,\blambda}(\mathbf{v})\rho_{\Lambda,\blambda}(\mathbf{w}) \right| \leq \left(\lambda+1\right)^{2(k + \ell)}\left[k\ell B e^{-\alpha\cdot\dist(\mathbf{v},\mathbf{w})} + 2 k^2 \ell  B e^{-\alpha\cdot\dist(\mathbf{v},\mathbf{w})} + 2 k \ell C_1 e^{-m \cdot\dist(\mathbf{v},\mathbf{w})} \right] \,.
		\end{align*}
		Taking $C$ large enough as a function of $\lambda,B,C_1,\alpha,m$ completes the proof.
	\end{proof}

	\begin{proof}[Proof of Theorem \ref{thm:decay-of-k-point-correlations}]
		Single-site SSM follows from Theorem \ref{thm:spatial-mixing}.  The theorem then follows from Proposition~\ref{prop:exponential-decay-k-point}.
	\end{proof}

	\section{Zero-freeness from spatial mixing}\label{sec:zero-free}
	In this section, we show that single-site SSM implies zero-freeness and analyticity for repulsive interactions with exponential decay.
	Given a repulsive potential $\potential$ and a bounded measurable region $\region \subset \R^d$ with $\vol(\region) > 0$, recall that the finite-volume pressure at activity $\activity \in \R_{\ge 0}$ is given by
	\[
	p_{\region}(\activity) = \frac{1}{\vol(\region)} \cdot \log(Z_{\region}(\activity)) .
	\]
	We are interested in identifying activity regimes such that the limit $p = \lim_{\region \nearrow \R^d} p_{\region}$, called the infinite-volume pressure, is an analytic function. 
	For studying this question, it is beneficial to study the partition function $Z_{\region}$ not only for real activities, but to extend its domain to complex inputs.
	In particular, analyticity of $p$ on some interval in $\R_{\ge 0}$ is closely related to the question of whether $Z_{\region}$ is non-vanishing in some uniform neighborhood of the interval as $\region \nearrow \R^d$.
	Our main result in this section is the following relationship between single-site SSM and analyticity and zero-freeness.
	\begin{theorem}\label{thm:zero_freeness}
		Let $\activity \in \R_{\ge 0}$, and let $\potential$ be a repulsive potential that decays exponentially (see Definition \ref{def:potential_decay}) and that satisfies single-site SSM (see Definition \ref{def:spatial_mixing}) up to $\activity$.
		Then there exists some $\delta > 0$ such that the following holds:
		\begin{enumerate}[(1)]
			\item\label{thm:zero_freeness:roots} For every bounded measurable region $\region \subset \R^d$, we have $Z_{\region} \neq 0$ for all complex activities in the $\delta$-neighborhood of $[0, \activity]$.
			
			\item\label{thm:zero_freeness:log-bound} There is a constant $M$ so that for $\Lambda_n = [-n,n]^d$ there is an analytic branch of $\log Z_{\Lambda_n}(w)$ for $w$ in the $\delta$-neighborhood of $[0,\lambda]$ so that $\frac{1}{\Vol(\Lambda_n)}|\log Z_{\Lambda_n}(w)| \leq M$.

			\item\label{thm:zero_freeness:analyticity} Given a sequence of bounded measurable regions $(\region_n)_{n \in \N}$ such that $p = \lim_{n \to \infty} p_{\region_n}$ exists pointwise on $\R_{\ge 0}$, it holds that $p$ has an analytic continuation in the $\delta$-neighborhood of $[0, \activity]$.
		\end{enumerate}
	\end{theorem}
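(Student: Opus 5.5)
The strategy is Dobrushin--Shlosman style. Write $Z_\Lambda(w)$ as a telescoping product of ratios obtained by adding the region point by point, and control each ratio by comparison with the real activity $\lambda'=\mathrm{Re}$-projection, using single-site SSM. Concretely, we extend to complex activity functions $\mathbf{w}$ with $|\mathbf{w}(x)-\blambda(x)|\le\delta$ for some real $\blambda\le\lambda$. The key object is the one-point ratio
\[
R_{\Lambda,\mathbf{w}}(v)=\frac{Z_{\Lambda}(\mathbf{w}_v)}{Z_{\Lambda}(\mathbf{w})},
\]
which for real activities equals $\E_{\bX\sim\Lambda,\blambda}[e^{-\sum_{x\in\bX}\phi(v,x)}]$ and is bounded below by $e^{-\lambda BC_{\alpha,d}}$ by \eqref{eq:lower-bound-exponential-added-energy}.

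Part (1) is proved by induction on $\Lambda$ (say on $\Vol(\Lambda)$ in steps of small cubes, or via a sequence of nested regions). We prove simultaneously the following statements.
\begin{enumerate}[(a)]
\item $Z_{\Lambda}(\mathbf{w})\neq 0$.
\item $|R_{\Lambda,\mathbf{w}}(v)-R_{\Lambda,\blambda}(v)|\le\epsilon$ with $\epsilon\ll e^{-\lambda BC_{\alpha,d}}$.
\item A complex version of single-site SSM for ratios: the effect on $R_{\Lambda,\mathbf{w}}(v)$ of pinning a far point $x$, or of changing $\mathbf{w}$ far from $v$, decays like $Ce^{-m\,\dist}$.
\end{enumerate}

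For the inductive step, the derivative identity $\partial_{\mathbf{w}(y)}\log Z_\Lambda(\mathbf{w})=R_{\Lambda,\mathbf{w}}(y)$ integrated over $\Lambda$ gives
\[
\log Z_\Lambda(\mathbf{w})-\log Z_\Lambda(\blambda)=\int_0^1\int_\Lambda(\mathbf{w}-\blambda)(y)\,R_{\Lambda,\blambda+s(\mathbf{w}-\blambda)}(y)\,\diff y\,\diff s.
\]
To bound $R$ at complex activities, we localize. Using exponential decay of $\phi$ (as in Lemma \ref{lem:truncate-energy-distance}), replace $R_{\Lambda,\mathbf{w}}(v)$ by the corresponding ratio in which the activity is complexified only inside a ball $B_L(v)$ and kept real outside. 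The error is controlled by the complex SSM hypothesis (c) applied to smaller regions. Inside the ball, expand the ratio as a polynomial/analytic function in the finitely many perturbations of volume $O(L^d)$: since $|\mathbf{w}-\blambda|\le\delta$ and Poisson domination bounds moments (Lemma \ref{lemma:poisson_moments}), the change is $O(\delta e^{cL^d})$. Choosing $L$ large, then $\delta$ tiny, gives (b). Statement (c) propagates because the real-activity single-site SSM bound (Definition \ref{def:spatial_mixing}) plus the $\delta$-perturbation argument reproduce exponential decay with slightly worse constants. This is where the simultaneous induction is needed, so the constants must be chosen so that they do not degrade with the step.

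Given (b), $|R_{\Lambda,\mathbf{w}}(v)|\ge \tfrac12 e^{-\lambda BC_{\alpha,d}}$. Writing $Z_\Lambda(\mathbf{w})$ as a product over successive additions of activity (equivalently, the ODE above) then gives (a). The same integral bound gives (2):
\[
|\log Z_{\Lambda_n}(w)|\le \Vol(\Lambda_n)\bigl(\lambda+\delta\bigr)\sup|R|\le M\Vol(\Lambda_n),
\]
with $\log$ defined along the path from $0$, which lies in the zero-free neighborhood. For (3), apply Vitali's theorem: the functions $p_{\Lambda_n}$ are analytic and uniformly bounded by $M$ on the $\delta$-neighborhood, and converge on the real segment, which has accumulation points. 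Hence they converge locally uniformly to an analytic extension of $p$ (the Lee--Yang argument). For general regions one uses the same uniform bound, since (a) and the integral bound hold for every $\Lambda$.

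The main obstacle is closing the induction for (b)--(c) at complex activities with unbounded-range $\phi$. The pinning of a point now affects the weights everywhere, and the localization error, the complex perturbation inside $B_L(v)$, and the tail from outside must all be bounded with constants uniform in $\Lambda$. I would first try to handle this by making the induction hypothesis quantitative in $\dist$, of the form $|R_{\Lambda,\mathbf{w}}(v)-R_{\Lambda,\mathbf{w}'}(v)|\le Ce^{-m\,\dist(v,\mathrm{supp}(\mathbf{w}-\mathbf{w}'))}$. I would then use the telescoping over single points, as in Lemma \ref{lem:SSSSM-to-SSM}, so that the tail sums are Poisson-dominated and summable.
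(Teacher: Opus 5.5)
\textbf{Verdict: there is a genuine gap.} The crux of the theorem, closing the induction for (b) at complex activities, is left open in your proposal. The rest of your scheme is sound.

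What works: the downstream deductions of (1)--(3) from a uniform bound of type (b) are fine. So is the Vitali argument for (3). Your (b) is essentially the paper's statement (iv) of Proposition~\ref{prop:expectations}: for $\mathbf{w}=(1+\varepsilon)\blambda$ one has $R_{\Lambda,\mathbf{w}}(v)/R_{\Lambda,\blambda}(v)=F_\Lambda^{\blambda_v}(\varepsilon)/F_\Lambda^{\blambda}(\varepsilon)$. Your near-field estimate is also correct, by positivity and $Z_{\Lambda\setminus B}(\blambda_{\mathbf{y}})\le Z_{\Lambda\setminus B}(\blambda)\le Z_\Lambda(\blambda)$.

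\textbf{The gap is the far-field step.}
\begin{enumerate}
\item Replacing $\mathbf{w}$ by $\blambda$ outside $B_L(v)$ changes the activity on the \emph{same} region $\Lambda$. Controlling that change needs your complex SSM statement (c) for $\Lambda$ itself, not for smaller regions, so the induction as written is circular.
\item You give no mechanism for deriving (c) from single-site SSM. That hypothesis is a total-variation statement about \emph{real} Gibbs measures. It says nothing directly about ratios of complex partition functions. The analogue of Lemma~\ref{lem:truncate-energy-distance} also does not apply, since $R_{\Lambda,\mathbf{w}}(v)$ is not an expectation under any probability measure.
\item ``(c) propagates with slightly worse constants'' cannot close. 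Any loss per step compounds over unboundedly many inductive steps. You flag this as the main obstacle yourself, but you do not resolve it.
\end{enumerate}

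\textbf{How the paper avoids this.} It never proves a complex SSM. It re-derives (iv) at the fixed level $\sqrt{\epsbound}$ using only real SSM, as follows.
\begin{enumerate}
\item It parametrizes $z=\lambda'(1+\varepsilon)$ with $\lambda'\ge r/2$, handling small $|z|$ by the cluster expansion. Then $F_\Lambda^{\blambda}(\varepsilon)=\E_{\Lambda,\blambda}[(1+\varepsilon)^{|\bX|}]$ is an expectation under a real Gibbs measure. Your additive class $|\mathbf{w}-\blambda|\le\delta$ does not give this where $\blambda$ vanishes.
\item For a pinned point $x$, it sets $S=A\log(1/\epsbound)$ and removes $\Lambda\cap B_{2S}(x)$. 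This leaves $\Lambda_2$, which has smaller covering number.
\item The real DLR equation expresses both $F_\Lambda^{\blambda}/F_{\Lambda_2}^{\blambda}$ and $F_\Lambda^{\blambda_x}/F_{\Lambda_2}^{\blambda}$, up to $O(\epsbound S^d)$, as expectations under $\mu_{\Lambda,\blambda}$ and $\mu_{\Lambda,\blambda_x}$. Both are expectations of the same annulus observable
\[
\Gamma(W)=(1+\varepsilon)^{|W|}\,F_{\Lambda_2}^{\blambda_W}/F_{\Lambda_2}^{\blambda}.
\]
\item The inner ball $\Delta_1$ is discarded using the decay statement (iii), which comes from the log-derivative identity (Lemma~\ref{lemma:integral_identity}). $\Gamma$ is bounded by induction on the smaller region $\Lambda_2$.
\item After truncating at $K$ points, the two expectations of $\Gamma$ are compared by real single-site SSM on the annulus $\Delta_a$, which lies at distance at least $S$ from $x$.
\end{enumerate}

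The crude bound $|\Gamma-1|\le 2\sqrt{\epsbound}|W|e^{2\sqrt{\epsbound}|W|}$ alone gives only $O(\sqrt{\epsbound}S^d)$, which does not close. SSM supplies the cancellation between the two expectations that brings the error to $\sqrt{\epsbound}/4$. The bound $\sqrt{\epsbound}$ is therefore reproduced with no loss of constants. The idea missing from your proposal is this: condition on an annulus under the real measure, push all complex information into a bounded observable controlled on a smaller region, and apply real SSM to that observable.
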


	Theorem \ref{thm:main} follows quickly now:
	\begin{proof}[Proof of Theorem~\ref{thm:main}]
		Single-site SSM holds for all $\lambda < \lambda_\spec$ by Theorem \ref{thm:spatial-mixing}.  Theorem~\ref{thm:ssssm_uniqueness} implies uniqueness of infinite-volume Gibbs measure, and Theorem \ref{thm:zero_freeness}\ref{thm:zero_freeness:analyticity} shows analyticity of the pressure.
	\end{proof}
	
	We will prove Theorem \ref{thm:zero_freeness} by a fairly involved inductive step involving ratios of partition functions. 
	For a repulsive potential $\phi$, activity function $\activities$, bounded measurable $\region \subset \R^d$, and $\eps\in \mathbb{C}$, we let
	
	\[
	F_{\region}^{\activities}(\varepsilon)
	\coloneqq
	\E_{\bX\sim\region,\activities}
	\big[(1+\varepsilon)^{\size{\bX}}\big]
	=
	\frac{Z_{\region}((1+\varepsilon)\activities)}{Z_{\region}(\activities)} .
	\]

	The main work of proving Theorem \ref{thm:zero_freeness} will be the following technical proposition.
	\begin{proposition}\label{prop:expectations}
		Let $\potential$ be a repulsive potential that decays exponentially (see Definition \ref{def:potential_decay}) with constants $B < \infty$ and $\alpha > 0$, and that satisfies single-site SSM (Definition \ref{def:spatial_mixing}) with constants $C$ and $m$ up to some activity $\activity \in \R_{\ge 0}$.
		For all $R > 0$ and $\kappa \in (0, 1/2)$, there are $\beta = \beta(\alpha) > 0$, $D = D(\activity, \alpha, B,d) < \infty$, $\delta = \delta(\activity, \alpha, B, d) > 0$ and $\epsbound = \epsbound(d, \activity, \kappa, R, \alpha, B, m, C) > 0$ such that, for all $\varepsilon \in \C$ with $\absolute{\varepsilon} \le \epsbound$, all activity functions $\activities \le \activity$ and all $\region \in \borel_{b}$ it holds that:
		\begin{enumerate}[(i)]
			\item\label{prop:expectations:zerofree}  $\absolute{F_{\region}^{\activities}(\varepsilon)} \ge (1-\kappa)^{c_{R/2}(\region)}$, where $c_{R/2}(\region)$ is the minimum number of balls of radius $R/2$ required to cover $\region$.
			In particular $F_{\region}^{\activities}(\varepsilon) \neq 0$. 
			
			\item\label{prop:expectations:region} If $\region' = \region \setminus B_{R}(x)$ for some $x \in \region$, then $\displaystyle \absolute{\frac{F_{\region}^{\activities}(\varepsilon)}{F_{\region'}^{\activities}(\varepsilon)} - 1} \le \kappa$.
			\item\label{prop:expectations:decay} For all $x \in \R^d$ with $\dist(x, \region) \ge \delta$, it holds that
			$\displaystyle \absolute{\frac{F_{\region}^{\activities_x}(\varepsilon)}{F_{\region}^{\activities}(\varepsilon)} - 1} \le D  e^{-\beta\cdot\dist(x, \region)}$.
			\item\label{prop:expectations:boundary} For all $x \in \R^d$, it holds that
			$\displaystyle \absolute{\frac{F_{\region}^{\activities_x}(\varepsilon)}{F_{\region}^{\activities}(\varepsilon)} - 1} \le \sqrt{\epsbound}$.
		\end{enumerate}
	\end{proposition}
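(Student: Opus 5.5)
We argue by strong induction on the "size" of $\region$, measured by $c_{R/2}(\region)$ or, more conveniently, by the number of $R$-balls we must carve out before $\region$ becomes empty. We prove all four statements simultaneously, since each uses the others on strictly smaller regions. The base case $\region = \emptyset$ is trivial because $F \equiv 1$. We also treat regions with $\vol(\region)$ tiny directly: there $|F^{\activities}_\region(\varepsilon) - 1| \le \E[(1+|\varepsilon|)^{|\bX|}] - 1 \le e^{\activity|\varepsilon|\vol(\region)}-1$ by Poisson domination (Lemma~\ref{lem:Poisson-domination}, Lemma~\ref{lemma:poisson_moments}). The key device is a telescoping representation of ratios such as $F_\region/F_{\region'}$ and $F^{\activities_x}_\region/F^{\activities}_\region$. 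Each such ratio is written as a product over a decreasing chain of regions, or as an interpolation $\varepsilon \mapsto$ path from $0$ to $\varepsilon$. The logarithmic derivative in $\varepsilon$ is then expressed via the GNZ equation (\ref{eq:gnz}) as an integral of one-point-correlation-type quantities with complex weights.

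Concretely, differentiating gives $\frac{d}{d\varepsilon}\log F^{\activities}_\region(\varepsilon) = \frac{1}{1+\varepsilon}\int_\region \activities(v)(1+\varepsilon)\, \frac{F^{\activities_v}_\region(\varepsilon)}{F^{\activities}_\region(\varepsilon)}\cdot \E_{\region,\activities}[e^{-\sum_{x\in\bX}\phi(v,x)}]\,\diff v$, suitably normalized. Thus (ii) reduces to controlling, for $v \in B_R(x)\cap\region$, the complex ratios $F^{\activities_v}/F^{\activities}$, which is exactly (iv) applied on the smaller region $\region'$. Here we use that pinning at $v$ effectively reduces the region, and then integrate over the $R$-ball of bounded volume. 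We get $|\log(F_\region/F_{\region'})| \lesssim \epsbound\,\activity\,\vol(B_R)(1+\sqrt{\epsbound})\le \kappa/2$ once $\epsbound$ is small in terms of $R,\kappa$. Statement (i) then follows from (ii) by peeling off $c_{R/2}(\region)$ balls, since any cover by $R/2$-balls lets us remove $R$-balls centred at points of $\region$ one at a time. Each factor has modulus at least $1-\kappa$.

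For (iii), we again write $F^{\activities_x}_\region(\varepsilon)/F^{\activities}_\region(\varepsilon)$ as $\exp\int_0^\varepsilon$ of the difference of the two logarithmic derivatives. This difference is an integral over $v\in\region$ of differences of real one-point quantities and of complex ratios under $\activities$ versus $\activities_x$. The real parts are controlled by single-site SSM through Lemma~\ref{lem:1-pt-SSSSM}, giving $e^{-m\,\dist(v,x)}$ decay. The complex ratios are handled by the induction hypothesis (iii) on a region with a ball around $v$ removed, together with the exponential decay of $\phi$ (Definition~\ref{def:potential_decay}) for the factor $e^{-\phi(v,x)}$. Integrating $e^{-\beta'\dist(v,x)}$ over $v\in\region$ with $\dist(x,\region)\ge\delta$ yields $D e^{-\beta\,\dist(x,\region)}$, with $\beta$ a fraction of $\min(\alpha,m)$. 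Statement (iv) follows by splitting: if $\dist(x,\region)$ is large, (iii) suffices. Otherwise we remove $B_R(x)\cap\region$, use (ii) on both $\activities$ and $\activities_x$ to compare with $\region\setminus B_R(x)$, and then apply (iii) on that region with $\dist\ge R$. This gives a bound $2\kappa' + De^{-\beta R}$, which is at most $\sqrt{\epsbound}$ if we choose $R$ large and arrange $\kappa$ in (ii) to scale with $\epsbound$. Alternatively, we use the crude $O(|\varepsilon|)$ bound from the derivative formula directly.

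The main obstacle is closing the induction without circularity or loss of constants. In (iii), the $\varepsilon$-derivative involves complex ratios $F^{\activities_{v}}$ versus $F^{\activities_{x,v}}$ on the same region, so the induction must be run on regions genuinely smaller than $\region$. We must also ensure that the error constants $D,\beta$ do not grow with each inductive step. I would first try to arrange that the new error is $\epsbound\cdot(\text{const})\cdot D e^{-\beta\dist}$ plus the SSM term. Choosing $\epsbound$ small then makes the recursion contract, so $D$ can be fixed in terms of the SSM constants alone. The infinite range of $\phi$ forces the $\delta$-separation hypothesis in (iii) and the separate crude bound (iv) near the boundary.
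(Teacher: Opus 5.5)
Your proposal has a genuine gap. It is in step (iv), which is where the proposition really lives.

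\textbf{Step (iv).} Your argument never uses single-site SSM, and its error bookkeeping does not close. Take $\Lambda''=\Lambda\setminus B_R(x)$. Writing $F^{\activities_x}_\Lambda/F^{\activities}_\Lambda$ as the quotient of the two (ii)-type ratios $F^{\activities_x}_\Lambda/F^{\activities_x}_{\Lambda''}$ and $F^{\activities}_\Lambda/F^{\activities}_{\Lambda''}$, times $F^{\activities_x}_{\Lambda''}/F^{\activities}_{\Lambda''}$, costs about twice the (ii) error plus $De^{-\beta R}$.
\begin{itemize}
\item The term $De^{-\beta R}$ is at most $\sqrt{\epsbound}$ only if $R\gtrsim\beta^{-1}\log(1/\epsbound)$.
\item The best (ii) bound obtainable from (iv) on smaller regions is of order $\sqrt{\epsbound}\,\E|\bX\cap B_R(x)|\approx\activity\vol(B_R)\sqrt{\epsbound}$. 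For such $R$ this exceeds $\sqrt{\epsbound}$, so the constant in (iv) grows at every inductive step.
\item Letting $\kappa$ ``scale with $\epsbound$'' is circular, since $\kappa$ and $R$ are given first and $\epsbound$ must be small in terms of them.
\item The ``crude $O(|\varepsilon|)$ bound'' from the derivative formula is really $O(|\varepsilon|\activity\vol(\Lambda))$.
\end{itemize}
As a sanity check: the paper's Parts 1--3 derive (i)--(iii) from (iv) on smaller regions without invoking SSM. So an SSM-free (iv) would make the whole induction run at every activity, for hard spheres too.

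The missing idea is the paper's screening argument at a new scale $S=A\log(1/\epsbound)$, independent of $R$.
\begin{itemize}
\item Split $\Lambda$ into $\Delta_1=\Lambda\cap B_S(x)$, the annulus $\Delta_a=(\Lambda\cap B_{2S}(x))\setminus\Delta_1$, and $\Lambda_2=\Lambda\setminus B_{2S}(x)$. Condition on $\Lambda\setminus\Lambda_2$ via the DLR equation.
\item Apply (iii) on $\Lambda_2$: since $De^{-\beta S}\le\epsbound$, the point $x$ and each point of $\Delta_1$ change $F_{\Lambda_2}$ only by a factor $1+O(\epsbound)$.
\item This reduces both $F^{\activities_x}_\Lambda/F^{\activities}_{\Lambda_2}$ and $F^{\activities}_\Lambda/F^{\activities}_{\Lambda_2}$, up to $O(\epsbound S^d)$, to expectations of the single annulus observable $\Gamma(W)=(1+\varepsilon)^{|W|}F^{\activities_W}_{\Lambda_2}(\varepsilon)/F^{\activities}_{\Lambda_2}(\varepsilon)$ under $\mu_{\Lambda,\activities_x}$ and $\mu_{\Lambda,\activities}$ respectively.
\item After truncating at $|\bX\cap\Delta_a|\le K$, $\Gamma$ is bounded by $2$. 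Single-site SSM on $\Delta_a$ then bounds the difference of the two expectations by $4CS^de^{-mS}+\sqrt{\epsbound}/8$.
\end{itemize}
SSM, a statement about the real measures, is applied to one bounded function. That is what lets the $\sqrt{\epsbound}$ regenerate instead of accumulating errors of complex ratios.

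\textbf{Steps (ii) and (iii).} Your $\varepsilon$-derivative reductions also fail as written.

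For (ii): the identity $\frac{d}{d\varepsilon}\log F^{\activities}_\Lambda(\varepsilon)=\int_\Lambda\activities(v)\frac{F^{\activities_v}_\Lambda(\varepsilon)}{F^{\activities}_\Lambda(\varepsilon)}\frac{Z_\Lambda(\activities_v)}{Z_\Lambda(\activities)}\,\diff v$ integrates over all of $\Lambda$. In $\frac{d}{d\varepsilon}\log(F_\Lambda/F_{\Lambda'})$, the contributions of $v\in\Lambda'$ coming from $\Lambda$ and from $\Lambda'$ do not cancel. Moreover $F^{\activities_v}_\Lambda/F^{\activities}_\Lambda$ lives on $\Lambda$ itself, so (iv) on $\Lambda'$ does not apply to it. Your bound $\epsbound\activity\vol(B_R)$ simply drops an extensive term. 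The paper isolates $\Delta=\Lambda\cap B_R(x)$ exactly through the DLR identity $F^{\activities}_\Lambda(\varepsilon)=\E_{\bX\sim\Lambda,\activities}[(1+\varepsilon)^{|\bX\cap\Delta|}F^{\activities_{\bX\cap\Delta}}_{\Lambda'}(\varepsilon)]$. It then telescopes (iv) on $\Lambda'$ over the points of $\bX\cap\Delta$. This yields an $O_R(\sqrt{\epsbound})$ error, not $O(\epsbound)$, which is why your (iv) accounting above breaks.

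For (iii): interpolating in $\varepsilon$ forces you to control $F^{(\activities_x)_v}_\Lambda/F^{\activities_x}_\Lambda-F^{\activities_v}_\Lambda/F^{\activities}_\Lambda$ for all $v\in\Lambda$, with decay in $\dist(v,x)$. That is a two-point estimate for complex ratios which none of (i)--(iv) supplies. Bounding each term by $De^{-\beta\dist(x,\Lambda)}$ leaves a factor $\vol(\Lambda)$. The paper instead interpolates in the activity, $\activities_t=(1-t)\activities+t\activities_x$ (Lemma~\ref{lemma:integral_identity}). The weight $\frac{\diff\activities_t}{\diff t}(u)=-\activities(u)(1-e^{-\potential(x,u)})$ already carries the exponential decay. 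The bracket then needs only an $O(1)$ bound, from (ii) on $\Lambda$ and (iv) on $\Lambda\setminus B_R(v)$, so (iii) requires no SSM at all.
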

	
	We prove Proposition \ref{prop:expectations} via an induction over the number of balls of radius $\ell \coloneqq R/2$ required to cover $\region$.
	Provided \ref{prop:expectations:zerofree}-\ref{prop:expectations:boundary} are true for all bounded measurable regions that can be covered by $<n$ balls, the induction step has four parts.
	\begin{description}
		\item[Part 1] We start by showing that \ref{prop:expectations:region} holds for any region that can be covered by $n$ balls, assuming that \ref{prop:expectations:zerofree} and \ref{prop:expectations:boundary} hold for all regions that can be covered by $< n$ balls.
		\item[Part 2] We then prove \ref{prop:expectations:zerofree} for all regions covered by $n$ balls given that \ref{prop:expectations:region} holds for all regions covered by $n$ balls, and \ref{prop:expectations:zerofree} holds for all regions covered by $< n$ balls.
		\item[Part 3] Next, we show that \ref{prop:expectations:decay} holds for all regions covered by $n$ balls given that \ref{prop:expectations:zerofree} and \ref{prop:expectations:region} hold for all regions covered by $n$ balls and that \ref{prop:expectations:boundary} holds for all regions covered by $< n$ balls.
		\item[Part 4] Finally, we prove \ref{prop:expectations:boundary} for all regions covered by $n$ balls, assuming \ref{prop:expectations:zerofree} and \ref{prop:expectations:decay} hold for all regions covered by $n$ balls, and \ref{prop:expectations:boundary} holds for regions covered by $<n$ balls.
	\end{description}
	See Figure \ref{img:proofstruct} for an overview of the structure of the induction step.  We note that Parts 1 and 2 do not make use of the assumption of single-site SSM, while both Parts 3 and 4 do.
	\begin{figure}[h!]
		\centering
		\includegraphics[scale=1]{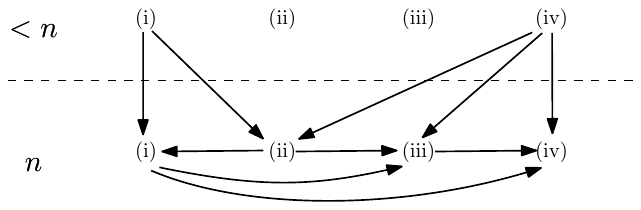}
		\caption{Structure of the induction step.} \label{img:proofstruct}
	\end{figure}
	
	Moreover, for Part 3 of the induction step, we use the following elementary lemma. 
	\begin{lemma}
		\label{lemma:ftc} Let $t > 0$, and let
		$f\colon [0, t] \to \C \setminus \{0\}$ be absolutely continuous.  It holds
		that
		\begin{equation*}
			\frac{f(t)}{f(0)} = \exp\left(\int_{0}^{t} \frac{f'(s)}{f(s)} \diff s \right),
		\end{equation*}
		where the integral should be understood as a Lebesgue integral and $f'$ is almost everywhere a derivative of $f$.
	\end{lemma}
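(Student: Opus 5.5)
The plan is to reduce the statement to the fundamental theorem of calculus for the logarithm, applied along a continuous branch of $\log f$. Since $f$ is continuous on the compact interval $[0,t]$ and never vanishes, its image $f([0,t])$ is a compact subset of $\C\setminus\{0\}$. Hence there is $r>0$ with $|f(s)|\ge r$ for all $s$. By uniform continuity there is a partition $0=s_0<s_1<\dots<s_N=t$ such that on each $[s_{i-1},s_i]$ the values of $f$ stay in a disc $D_i$ of radius $r/2$ around $f(s_{i-1})$. Such a disc avoids $0$, so it carries an analytic branch $L_i$ of the logarithm with $L_i'(z)=1/z$.

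The first step is to show that $g_i:=L_i\circ f$ is absolutely continuous on $[s_{i-1},s_i]$ with $g_i'=f'/f$ almost everywhere. Absolute continuity holds because $L_i$ is Lipschitz on the closed disc of radius $r/2$, with constant at most $2/r$. The derivative formula holds at every point where $f$ is differentiable, by the chain rule for the holomorphic $L_i$ composed with a real-variable differentiable map.

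The second step uses that absolutely continuous functions satisfy the Lebesgue fundamental theorem of calculus. This gives
\[ L_i(f(s_i))-L_i(f(s_{i-1}))=\int_{s_{i-1}}^{s_i}\frac{f'(s)}{f(s)}\,\diff s . \]
In particular $f'/f$ is integrable, since $|f'/f|\le |f'|/r$ and $f'\in L^1$. Exponentiating gives $f(s_i)/f(s_{i-1})=\exp\bigl(\int_{s_{i-1}}^{s_i} f'/f\bigr)$, because $e^{L_i(z)}=z$.

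The final step multiplies these $N$ identities. The left-hand side telescopes to $f(t)/f(0)$, and the exponents add to $\int_0^t f'/f\,\diff s$. Note that the branches never need to be glued consistently: each is used only within its own subinterval, and only exponentials are multiplied, so branch ambiguities of $2\pi i$ disappear.

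The only mildly delicate point is the absolute continuity of the composition $L_i\circ f$. It is handled by the local Lipschitz bound together with the choice of subintervals, which keeps $f$ inside a disc bounded away from $0$. Everything else is routine.
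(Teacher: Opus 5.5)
Your proof is correct, but it takes a different route from the paper's.

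The paper never uses a logarithm. It notes that $1/f$ is bounded, so $f'/f\in L^1$, and sets $I(u)=\int_0^u f'(s)/f(s)\,\diff s$, which is absolutely continuous with $I'=f'/f$ almost everywhere. By the product rule, $f e^{-I}$ then has derivative $f'e^{-I}-I'fe^{-I}=0$ almost everywhere. Since $f e^{-I}$ is absolutely continuous, it is constant, which gives $f(t)e^{-I(t)}=f(0)$.

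That integrating-factor argument is global and shorter. It needs no lower bound $|f|\ge r$, no uniform-continuity partition, no choice of local branches, and no Lipschitz-composition step. Its only inputs are that products and locally Lipschitz images of absolutely continuous functions are absolutely continuous, and that an absolutely continuous function with almost-everywhere vanishing derivative is constant.

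Your construction costs more bookkeeping but gives slightly more. Each $\int_{s_{i-1}}^{s_i} f'/f$ is an increment of a genuine logarithm of $f$. Gluing your branches would therefore produce a continuous $\log f$ on $[0,t]$ with $\log f(t)-\log f(0)=\int_0^t f'/f$, which is the winding-number interpretation of the integral.

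Both routes rest on the same fundamental theorem for absolutely continuous functions. Your individual checks are sound: the disc of radius $r/2$ around $f(s_{i-1})$ avoids $0$, $L_i$ is $2/r$-Lipschitz there, the chain rule holds at points where $f$ is differentiable, $f'/f$ is integrable, and the $2\pi i$ branch ambiguities disappear on exponentiating.
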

	\begin{proof}
		Since $f$ is continuous and non-zero, we have that $1/f$ is bounded and continuous.  In particular,  we have that $f'/f \in L^1([0,t])$ and so $I(u) = \int_0^u\frac{f'(t)}{f(t)}\,\diff t$ is absolutely continuous and $I'(u) = \frac{f'(u)}{f(u)}$ almost-everywhere.  Compute $$\frac{\diff}{\diff u}\left(f(u) e^{-I(u)} \right) = f'(u) e^{-I(u)} - I'(u) f(u) e^{-I(u)} = 0$$
		almost everywhere.  Hence $f(u) e^{-I(u)}$ is constant almost-everywhere on $[0,t]$.  Since $f(u) e^{-I(u)}$ is absolutely continuous, it is in fact constant, implying $f(t) e^{-I(t)} = f(0)$ and completing the proof.
	\end{proof}
	
	The following technical lemma will also be relevant.
	
	\begin{lemma} \label{lemma:integral_identity}
		Let $\region \subset \R^d$ be bounded and measurable, let $\activities$ be a real-valued activity function on $\Lambda$, fix $x\in \Lambda$ and let $\activities_t=(1-t)\activities+t \activities_x$ for $t\in[0,1]$. Suppose that $F_{\region}^{\activities_t}(\varepsilon)\neq0$ for all $t\in [0,1]$, then
		\[
		\frac{\diff}{\diff t}\log F_\region^{\activities_t}(\varepsilon)
		=
		\int_\region \frac{\diff\activities_t}{\diff t}(u)
		\left[
		(1+\varepsilon)
		\frac{Z_\region((1+\varepsilon)(\activities_t)_u)}
		{Z_\region((1+\varepsilon)\activities_t)}
		-
		\frac{Z_\region((\activities_t)_u)}
		{Z_\region(\activities_t)}
		\right]
		\diff u\, .
		\]
	\end{lemma}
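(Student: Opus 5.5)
The plan is to write $\log F_\region^{\activities_t}(\varepsilon) = \log Z_\region((1+\varepsilon)\activities_t) - \log Z_\region(\activities_t)$. Both partition functions are nonzero along the path: $Z_\region(\activities_t) > 0$ because the activities are real and nonnegative, and $Z_\region((1+\varepsilon)\activities_t) = F_\region^{\activities_t}(\varepsilon) Z_\region(\activities_t) \neq 0$ by hypothesis. So it suffices to prove, for any (possibly complex) activity path $\boldsymbol{\nu}_t$ that is affine in $t$ and has $Z_\region(\boldsymbol{\nu}_t) \neq 0$, the identity
\[
\frac{\diff}{\diff t}\log Z_\region(\boldsymbol{\nu}_t) = \int_\region \frac{\diff \boldsymbol{\nu}_t}{\diff t}(u)\, \frac{Z_\region((\boldsymbol{\nu}_t)_u)}{Z_\region(\boldsymbol{\nu}_t)}\,\diff u .
\]
We then apply this with $\boldsymbol{\nu}_t = (1+\varepsilon)\activities_t$ and with $\boldsymbol{\nu}_t = \activities_t$. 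In the first case $\frac{\diff}{\diff t}\boldsymbol{\nu}_t = (1+\varepsilon)\frac{\diff\activities_t}{\diff t}$ and $(\boldsymbol{\nu}_t)_u = (1+\varepsilon)(\activities_t)_u$, since pinning multiplies by the real factor $e^{-\phi(u,\cdot)}$ and so commutes with scalar multiplication. Subtracting the two instances gives exactly the stated formula.

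To prove the identity, I would differentiate the series term by term:
\[
Z_\region(\boldsymbol{\nu}_t) = 1 + \sum_{k\ge1}\frac1{k!}\int_{\region^k}\prod_{j}\boldsymbol{\nu}_t(x_j)\, e^{-H(\mathbf{x})}\,\diff\mathbf{x}.
\]
Because $\boldsymbol{\nu}_t(x) = (1-t)\boldsymbol{\nu}_0(x) + t\boldsymbol{\nu}_1(x)$ is affine in $t$, the derivative of the product is $\sum_i \dot{\boldsymbol{\nu}}(x_i)\prod_{j\ne i}\boldsymbol{\nu}_t(x_j)$, where $\dot{\boldsymbol{\nu}} = \boldsymbol{\nu}_1 - \boldsymbol{\nu}_0$ does not depend on $t$. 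By symmetry each of the $k$ choices of $i$ contributes equally, so we may rename the distinguished point $u = x_k$. Using $e^{-H(\mathbf{x})} = e^{-H(x_1,\dots,x_{k-1})}\prod_{j<k} e^{-\phi(u,x_j)}$ and $k/k! = 1/(k-1)!$, the $k$-th term becomes
\[
\int_\region \dot{\boldsymbol{\nu}}(u)\,\frac{1}{(k-1)!}\int_{\region^{k-1}}\prod_{j<k}(\boldsymbol{\nu}_t)_u(x_j)\, e^{-H}\,\diff u .
\]
Summing over $k$ gives $\int_\region \dot{\boldsymbol{\nu}}(u) Z_\region((\boldsymbol{\nu}_t)_u)\,\diff u$, and dividing by $Z_\region(\boldsymbol{\nu}_t)$ gives the derivative of the log. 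The log is well defined locally, and its derivative is unambiguous since $Z$ is nonzero.

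The main obstacle is justifying the interchange of the $t$-derivative with the infinite sum and the integrals. Here repulsiveness gives $e^{-H}\le 1$. Both $\activities$ and $\activities_x$ are bounded by $\activity$ in absolute value, so $|\boldsymbol{\nu}_t| \le (1+|\varepsilon|)\activity$ and $|\dot{\boldsymbol{\nu}}| \le 2(1+|\varepsilon|)\activity$. The differentiated $k$-th term is therefore dominated, uniformly in $t$, by
\[
k\,(2(1+|\varepsilon|)\activity)^k\,\vol(\region)^k/k!,
\]
which is summable. Dominated convergence, or uniform convergence of the derivative series, then justifies the exchange. Since each term is a polynomial in $t$, $t\mapsto Z_\region(\boldsymbol{\nu}_t)$ is in fact real-analytic, which completes the argument.
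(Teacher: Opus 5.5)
Your proposal is correct and follows the paper's proof essentially step for step. The paper also splits $\log F$ into $\log Z_\region((1+\varepsilon)\activities_t)-\log Z_\region(\activities_t)$, proves the log-derivative identity (Lemma \ref{lem:log-derivative-F}) by term-by-term differentiation, symmetrization of the distinguished point and the factorization $H(u,y_1,\dots,y_m)=H(\mathbf{y})+\sum_i\phi(u,y_i)$, and then applies it twice. One small point: the lemma as stated does not assume $\activities\le\activity$, so your domination bound should use $\sup|\activities|$ instead (activity functions are bounded throughout the paper), and nothing else changes.
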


	We prove Lemma \ref{lemma:integral_identity} in Section \ref{sec:integral-identity-proof}, and proceed by first proving Proposition \ref{prop:expectations}.  
	\begin{proof}[Proof of Proposition \ref{prop:expectations}]
		Set $\ell \coloneqq R/2$. 
		We prove our statement by induction over $c_{\ell}(\region)$, the number of balls of radius $\ell$ required to cover $\region$.
		If $c_{\ell}(\region) = 0$, then $\region = \emptyset$ and \ref{prop:expectations:zerofree}-\ref{prop:expectations:boundary} hold trivially.
		
		Now, for some $n \in \N$, suppose \ref{prop:expectations:zerofree}-\ref{prop:expectations:boundary} hold for all regions that can be covered by $<n$ balls of radius $\ell$, we proceed to show that \ref{prop:expectations:zerofree}-\ref{prop:expectations:boundary} carry over to all regions that can be covered by $n$ balls.
		
		\medskip 
		\paragraph{\texorpdfstring{\textbf{Part 1: (\ref{prop:expectations:zerofree} and \ref{prop:expectations:boundary} for $< n$ imply \ref{prop:expectations:region} for $n$):}}{Part  1: ((i) and (iv) for < n imply (ii) for n)}}
		Fix some $\activities \le \activity$ and some $\region \in \borel_b$ that can be covered by $n$ balls of radius $\ell$, and note that, for any $x \in \region$, $\region' \coloneqq \region \setminus B_{R}(x)$ can be covered by fewer than $n$ balls of radius $\ell = R/2$. 
		In particular, we know by \ref{prop:expectations:zerofree} that $F_{\region'}^{\activities}(\varepsilon) \neq 0$, and so the ratio $F_{\region}^{\activities}(\varepsilon) / F_{\region'}^{\activities}(\varepsilon)$ is well-defined. Defining $\Delta \coloneqq \region \setminus \region'$ and 
		applying the finite-volume DLR equation \eqref{eq:finite_DLR}, we obtain that
		\[
		F_{\region}^{\activities}(\varepsilon)
		= \E_{\bX \sim \region, \activities} [ F_{\region'}^{\activities_{\bX \cap \Delta}}(\varepsilon) \cdot (1 + \varepsilon)^{\size{\bX \cap \Delta}}].
		\]
		Thus, we have 
		\begin{align}\label{eq:twoterms}
			\absolute{\frac{F_{\region}^{\activities}(\varepsilon)}{F_{\region'}^{\activities}(\varepsilon)} - 1} 
			&= \absolute{\E_{\bX \sim \region, \activities} \left[\frac{F_{\region'}^{\activities_{\bX \cap\Delta}}(\varepsilon)}{F_{\region'}^{\activities}(\varepsilon)}\  (1 + \varepsilon)^{\size{\bX \cap \Delta}} \right] - 1} \nonumber \\
			&\le \absolute{\E_{\bX \sim \region, \activities} \left[\left(\frac{F_{\region'}^{\activities_{\bX \cap\Delta}}(\varepsilon)}{F_{\region'}^{\activities}(\varepsilon)} - 1\right)  (1 + \varepsilon)^{\size{\bX \cap \Delta}} \right]}  
			+ \absolute{\E_{\bX \sim \region, \activities}[(1 + \varepsilon)^{\size{\bX \cap \Delta}} - 1]} .
		\end{align}
		We proceed by showing that, for $\epsbound > 0$ sufficiently small as a function of $\activity$, $R$, and $\kappa$, each of the two terms is bounded by $\kappa/2$.
		
		For the first term in \eqref{eq:twoterms} recall that $\region'$ can be covered with $< n$ balls of radius $\ell$.
		For $X \in \pointsets_f$, applying \ref{prop:expectations:boundary}  $\size{X}$ times along a telescoping product and using Lemma \ref{lem:complexproductdifference} yields
		\[
		\absolute{\frac{F_{\region'}^{\activities_{X}}(\varepsilon)}{F_{\region'}^{\activities}(\varepsilon)} - 1} \le \sqrt{\epsbound} \size{X}  e^{\sqrt{\epsbound}  \size{X}} .
		\]
		Moreover, we have $|(1+\eps)^{|X|}|\leq e^{\epsbound |X|}$ and therefore
		\[
		\absolute{\E_{\bX \sim \region, \activities} \left[\left(\frac{F_{\region'}^{\activities_{\bX \cap\Delta}}(\varepsilon)}{F_{\region'}^{\activities}(\varepsilon)} - 1\right)  (1 + \varepsilon)^{\size{\bX \cap \Delta}} \right]}
		\le
		\sqrt{\epsbound} \E_{\bX \sim \region, \activities}\big[\size{\bX \cap \Delta}  e^{ (\sqrt{\epsbound}+\epsbound) \size{\bX \cap \Delta}}\big] \le   O_{R,\lambda}(\sqrt{\eps^\ast})
		\]
		where the final inequality is  by Poisson domination (Lemma \ref{lem:Poisson-domination}) and Lemma \ref{lemma:poisson_moments} \ref{lemma:poisson_moments:linexp}.
		Choosing $\epsbound > 0$ sufficiently small depending on $\activity$, $R$ and $\kappa$, the right-hand side is at most $\kappa/2$.

		It remains to bound the second term in \eqref{eq:twoterms}, i.e., to show that for $\epsbound$ sufficiently small, we have
		\[
		\absolute{\E_{\bX \sim \region, \activities}[(1 + \varepsilon)^{\size{\bX \cap \Delta}} - 1]} \le \kappa/2.
		\]
		To this end, we use the crude bound
		$
		\absolute{(1 + \varepsilon)^{n} - 1} \le \absolute{\varepsilon} n e^{\absolute{\varepsilon} n} \le \epsbound n e^{\epsbound n} 
		$ for any $n \in \mathbb{R}$. 
		We then have
		\[
		\absolute{\E_{\bX \sim \region, \activities}\left[(1 + \varepsilon)^{\size{\bX \cap \Delta}} - 1\right]} \le \E_{\bX \sim \region, \activities}\left[\absolute{(1 + \varepsilon)^{\size{\bX \cap \Delta}} - 1 } \right]\le \epsbound \E_{\bX \sim \region, \activities}\left[ \size{\bX \cap \Delta} e^{\epsbound \size{\bX \cap \Delta}} \right] \leq O_{R,\lambda}(\epsbound)
		\]
		where in the last inequality we used 
		Poisson domination (Lemma \ref{lem:Poisson-domination}) and Lemma \ref{lemma:poisson_moments} \ref{lemma:poisson_moments:linexp}.
		Thus, choosing $\epsbound$ small enough as a function of $R$, $\activity$ and $\kappa$ concludes Part 1.
		
		\medskip 
		
		\paragraph{\texorpdfstring{\textbf{Part 2: (\ref{prop:expectations:zerofree} for $< n$ and \ref{prop:expectations:region} for $n$ imply \ref{prop:expectations:zerofree} for $n$)}:}{Part 2: ((i) for < n and (ii) for n imply (i) for n)}}
		Let $\region \in \borel_b$ be a region that can be covered by $n$ balls of radius $\ell = R/2$.
		If $\region$ can also be covered with $< n$ such balls (i.e., $c_{\ell}(\region) < n$), then the statement follows from the inductive hypothesis.
		Thus, assume the smallest possible cover of $\region$ requires exactly $n$ ball (i.e., $c_{\ell}(\region) = n$).
		Choose a point $x \in \region$ and set $\region' = \region \setminus B_{R}(x)$.
		Assuming that \ref{prop:expectations:region} applies to $\region$, we have
		\[
		\absolute{F_{\region}^{\activities}(\varepsilon)} \ge (1 - \kappa)  \absolute{F_{\region'}^{\activities}(\varepsilon)} .
		\]
		Since $c_{\ell}(\region') \le n-1$, the inductive hypothesis \ref{prop:expectations:zerofree} further yields $\absolute{F_{\region'}^{\activities}(\varepsilon)} \ge (1-\kappa)^{n-1},$ which concludes Part 2.
		
		\medskip 
		
		\paragraph{\texorpdfstring{\textbf{Part 3: (\ref{prop:expectations:zerofree} and \ref{prop:expectations:region} for $n$, and \ref{prop:expectations:boundary} for $<n$  imply \ref{prop:expectations:decay} for $n$)}:}{Part 3: ((i) and (ii) for n, and (iv) for < n imply (iii) for n)}}
		Fix some $\region \in \borel_b$ that can be covered by $n$ balls of radius $\ell$, let $\activities \le \activity$ be some activity function, and let $x \in \R^d$. For $t\in[0,1]$, let $\activities_t=(1-t)\activities+t \activities_x$  and
		\[
		F_t\coloneqq F_{\region}^{\activities_t}(\eps)\, .
		\]
		Note that, assuming \ref{prop:expectations:zerofree} holds on $\region$ for all activity functions bounded by $\activity$, we have $F_t\neq 0$ for all $t\in [0,1]$. In particular, we may apply Lemma~\ref{lemma:ftc} to conclude that 
		\begin{align}\label{eq:ftcapp}
			\frac{F_{\region}^{\activities_x}(\eps)}{F_{\region}^{\activities}(\eps)}= \frac{F_1}{F_0} = \exp \left(\int_{0}^1 \frac{F'_t}{F_t} \diff t \right)\, .
		\end{align}
		It will suffice to show that for all $t\in [0,1]$,
		\begin{align}\label{eq:logderivbd}
			\absolute{\frac{F_t'}{F_t}}\leq \frac{D}{2} e^{-\beta\cdot\dist(x,\region)}
		\end{align}
		for some $\beta=\beta(\alpha), D=D(B, \alpha, \lambda, d)$. Indeed choosing $\delta=\delta(\beta,D)$ sufficiently large so that the right-hand side of~\eqref{eq:logderivbd} is at most 1, 
		it follows from~\eqref{eq:ftcapp} that
		\[
		\absolute{\frac{F_{\region}^{\activities_x}(\eps)}{F_{\region}^{\activities}(\eps)}-1}\leq 2 \absolute{\int_{0}^1 \frac{F'_t}{F_t} \diff t}
		\leq 
		D e^{-\beta\cdot\dist(x,\region)}\, .
		\]
		To establish~\eqref{eq:logderivbd}  we will appeal to Lemma~\ref{lemma:integral_identity} which implies that 
		\begin{align}\label{eq:ftcagain}
			{\frac{F_t'}{F_t}}= 
			\int_\region \frac{d\activities_t}{dt}(u)
			\left[
			(1+\varepsilon)
			\frac{Z_\region((1+\varepsilon)(\activities_t)_u)}
			{Z_\region((1+\varepsilon)\activities_t)}
			-
			\frac{Z_\region((\activities_t)_u)}
			{Z_\region(\activities_t)}
			\right]
			\diff u\, .
		\end{align}
		
		First we observe that
		\[
		\frac{Z_\region((1+\varepsilon)(\activities_t)_u)}
		{Z_\region((1+\varepsilon)\activities_t)}
		=
		\frac{F_\region^{(\activities_t)_u}(\varepsilon)}
		{F_\region^{\activities_t}(\varepsilon)}
		\frac{Z_\region((\activities_t)_u)}{Z_\region(\activities_t)}
		\]
		and $Z_\region((\activities_t)_u)\leq Z_\region(\activities_t)$ since $(\activities_t)_u\leq \activities_t$. Second,
		using the fact that \ref{prop:expectations:region} holds on $\region$, we have 
		\[
		\absolute{\frac{F_\region^{(\activities_t)_u}(\varepsilon)}{F_\region^{\activities_t}(\varepsilon)}}
		\le 
		\frac{1 + \kappa}{1 - \kappa} \cdot \absolute{\frac{F_{\region'}^{(\activities_t)_u}(\varepsilon)}
			{F_{\region'}^{\activities_t}(\varepsilon)}} 
		\leq 3\absolute{\frac{F_{\region'}^{(\activities_t)_u}(\varepsilon)}
			{F_{\region'}^{\activities_t}(\varepsilon)}}\, ,
		\]
		where $\region' = \region \setminus B_R(v)$ for some $v \in \region$.
		Since $\region'$ can be covered by $< n$ balls of radius $\ell$, \ref{prop:expectations:boundary} applies for the region $\region'$ and the activity function $\activities_{t}$, which yields
		\[
		\absolute{\frac{F_{\region'}^{(\activities_t)_u}(\varepsilon)}
			{F_{\region'}^{\activities_t}(\varepsilon)}}
		\le 1 + \sqrt{\epsbound}<2 
		\]
		(taking $\epsbound<1$).
		Returning to~\eqref{eq:ftcagain} and noting
		\[
		\frac{d\activities_t}{dt}(u)
		=
		\activities_x(u)-\activities(u)
		=
		-\activities(u)\big(1-e^{-\potential(x,u)}\big),
		\]
		we conclude, using exponential decay of $\potential$, that
		\[
		\absolute{\frac{F_t'}{F_t}}\leq A \int_\region \activities(u)\big(1-e^{-\potential(x,u)}\big) \diff u
		\leq 
		\lambda AB \int_{\{z:\norm{z}\ge \dist(x,\region)\}}e^{-\alpha\norm{z}}\,\diff z \leq (D/2)e^{-\beta\cdot\dist(x,\region)}
		\]
		where $A$ is an absolute constant and $\beta=\alpha/2, D=D(B, \alpha, \lambda,d)$
		as desired.
		
		\medskip 
		
		\paragraph{\texorpdfstring{\textbf{Part 4: (\ref{prop:expectations:zerofree} and \ref{prop:expectations:decay} for $n$, and \ref{prop:expectations:boundary} for $< n$ imply \ref{prop:expectations:boundary} for $n$):}}{Part 4: ((i) and (iii) for n, and (iv) for < n imply (iv) for n)}}
		Fix a region \(\region\in\borel_b\) that can be covered by \(n\) balls of
		radius \(\ell=R/2\), fix an activity function \(\activities\le\activity\), and
		fix \(x\in\R^d\). By \ref{prop:expectations:zerofree}, applied to
		\(\region\) and all activity functions bounded by \(\activity\), both
		\(F_{\region}^{\activities}(\varepsilon)\) and
		\(F_{\region}^{\activities_x}(\varepsilon)\) are nonzero. Thus the ratio in
		\ref{prop:expectations:boundary} is well-defined.
		We may assume
		\[
		\dist(x,\region)
		\le
		\max\left\{\delta,\beta^{-1}\log(D/\sqrt{\epsbound})\right\}\, ,
		\]
		otherwise the desired bound follows immediately from
		\ref{prop:expectations:decay}. 
		Choose a constant \(A\) sufficiently large, to be fixed below,
		and set
		\[
		S\coloneqq A\log(1/\epsbound).
		\]
		Indeed (assuming $\epsbound<1/2$) we choose $A=A(\delta, R, \beta, D, C,m)$ sufficiently large so that
		\begin{align}\label{eq:S-bound}
			S\ge \dist(x,\region)+R+\delta, \quad \quad De^{-\beta S}\leq \epsbound  \quad \text{and} \quad C S^d e^{-mS}\le \sqrt{\epsbound}/32 \, .
		\end{align} 
		Define
		\[
		\Delta_1\coloneqq \region\cap B_S(x),
		\qquad
		\Delta_2\coloneqq \region\cap B_{2S}(x),
		\qquad
		\Delta_a\coloneqq \Delta_2\setminus \Delta_1,
		\]
		and
		\[
		\region_2\coloneqq \region\setminus \Delta_2.
		\]
		
		\begin{figure}
			\centering
			\includegraphics[width=0.5\linewidth]{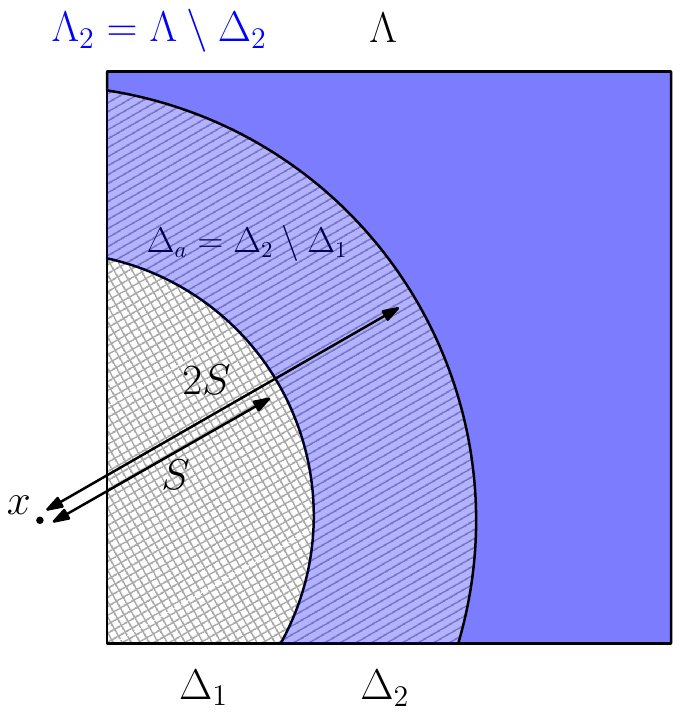} \hspace{1.7cm}
			\caption{Illustration of the regions considered in step 4. The region $\Delta_1 = \Lambda \cap B_S(x)$ is represented by the checkered pattern, the region $\Delta_2 = \Lambda \cap B_{2S}(x)$ is represented by the lined pattern. In the proof, we handle the expected influence of points in the ball $\Delta_1$ and the annulus $\Delta_a$ on the region $\Lambda_2$ separately. }
			\label{fig:regions}
		\end{figure}
		See  \Cref{fig:regions} for an illustration of these regions.
		
		We note that since \(S\ge \dist(x,\region)+R\), the set \(\region_2\) can be covered by \(<n\) balls of radius \(\ell\). Therefore the
		induction hypothesis applies to \(\region_2\).

		The high-level idea for deriving \ref{prop:expectations:boundary} is as follows. To bound $\absolute{\frac{F_{\region}^{\activities_x}(\varepsilon)}{F_{\region}^{\activities}(\varepsilon)} - 1}$, it is sufficient to show that  $$\frac{F_{\region}^{\activities_x}(\varepsilon)}{F_{\region_2}^{\activities}(\varepsilon)} - \frac{F_{\region}^{\activities}(\varepsilon)}{F_{\region_2}^{\activities}(\varepsilon)}$$ is small in absolute value because $\Lambda_2$ covers the majority of $\Lambda$.

		The key object is the annulus observable
		\[
		\Gamma(W)
		\coloneqq
		(1+\varepsilon)^{\size W}
		\frac{F_{\region_2}^{\activities_W}(\varepsilon)}
		{F_{\region_2}^{\activities}(\varepsilon)},
		\qquad
		W\subseteq \Delta_a.
		\]
		We will show that
		\begin{align}\label{eq:expect-approx}
			\frac{F_{\region}^{\activities}(\varepsilon)}
			{F_{\region_2}^{\activities}(\varepsilon)}
			\approx
			\E_{\bX\sim\region,\activities}
			\Gamma(\bX\cap\Delta_a)
			\quad
			\text{and}
			\quad
			\frac{F_{\region}^{\activities_x}(\varepsilon)}
			{F_{\region_2}^{\activities}(\varepsilon)}
			\approx
			\E_{\bX\sim\region,\activities_x}
			\Gamma(\bX\cap\Delta_a)\, .
		\end{align}
		We have approximations rather than equalities here since we are ignoring the influence of points in $\Delta_1$ on $\region_2$. However $\dist(\Delta_1, \Lambda_2)\geq S$ and so by \ref{prop:expectations:decay} their effect in $\region_2$ is small. 
		It remains to compare the two expectations in~\eqref{eq:expect-approx} and this is where spatial mixing enters. The observable \(\Gamma(\bX\cap\Delta_a)\)
		depends only on the configuration in the annulus \(\Delta_a\) and since
		\(\dist(x,\Delta_a)\ge S\), single-site SSM will allow us to bound $\left\|
		\mu_{\region,\activities_x}
		-
		\mu_{\region,\activities}
		\right\|_{\Delta_a}\,.$
		This in turn will allow us to bound the difference in expectations at~\eqref{eq:expect-approx} since $\Gamma$ is typically small in absolute value.

		We now continue with the formal proof and proceed by writing
		\begin{align}\label{eq:numerator-denominator}
			\absolute{\frac{F_{\region}^{\activities_x}(\varepsilon)}{F_{\region}^{\activities}(\varepsilon)} - 1}
			= 
			\frac{\absolute{\frac{F_{\region}^{\activities_x}(\varepsilon)}{F_{\region_2}^{\activities}(\varepsilon)} - \frac{F_{\region}^{\activities}(\varepsilon)}{F_{\region_2}^{\activities}(\varepsilon)}}}{\absolute{\frac{F_{\region}^{\activities}(\varepsilon)}{F_{\region_2}^{\activities}(\varepsilon)}}} \, .
		\end{align}
		Our goal is to upper-bound the numerator by $\frac{\sqrt{\epsbound}}{2}$ and lower-bound the denominator by $\frac{1}{2}$.
		
		We start with bounding the numerator.
		First we collect two estimates for \(\Gamma\). For every finite $W\subseteq \Delta_a$,
		\begin{align}\label{eq:Gamma-absolute-bound}
			|\Gamma(W)|
			\le
			e^{2\sqrt{\epsbound}\size W}\, ,\quad \text{and}
		\end{align}
		\begin{align}\label{eq:Gamma-close-to-one}
			|\Gamma(W)-1|
			\leq 
			2\sqrt{\epsbound}\,\size W\,e^{2\sqrt{\epsbound}\size W}\, .
		\end{align}
		Indeed, by the inductive hypothesis \ref{prop:expectations:boundary} applied to $\region_2$, and using a telescoping product together with Lemma \ref{lem:complexproductdifference} we have,
		\begin{align} \label{eq:annulus-ratio-close}
			\absolute{
				\frac{
					F_{\region_2}^{\activities_W}(\varepsilon)}
				{F_{\region_2}^{\activities}(\varepsilon)}
				-1
			}
			\le
			\sqrt{\epsbound}\absolute{W}e^{\sqrt{\epsbound}\absolute{W}}.
		\end{align}
		Moreover, the same telescoping product
		gives the bound
		\begin{align}\label{eq:annulus-ratio-bd}
			\left|
			\frac{
				F_{\region_2}^{\activities_W}(\varepsilon)}
			{F_{\region_2}^{\activities}(\varepsilon)}
			\right|
			\le
			(1+\sqrt{\epsbound})^{\size W}
			\le
			e^{\sqrt{\epsbound}\size W}.
		\end{align}
		Since \(|1+\varepsilon|\le e^{\epsbound}\le e^{\sqrt{\epsbound}}\) (assuming \(\epsbound<1\)) we have
		~\eqref{eq:Gamma-absolute-bound}.
		Finally, applying the inequality $|zw-1|\leq |z||w-1|+|z-1|$
		to the product defining \(\Gamma\), we get
		\begin{align*}
			|\Gamma(W)-1|
			\le
			e^{\sqrt{\epsbound}\absolute{W}}
			\absolute{
				\frac{
					F_{\region_2}^{\activities_W}(\varepsilon)}
				{F_{\region_2}^{\activities}(\varepsilon)}
				-1
			}
			+
			\absolute{(1+\eps)^{|W|}-1}
			\leq 
			2\sqrt{\epsbound}\,\size W\,e^{2\sqrt{\epsbound}\size W}\, ,
		\end{align*}
		where we used~\eqref{eq:annulus-ratio-close} and  Lemma~\ref{lem:complexproductdifference} to bound $\absolute{(1+\eps)^{|W|}-1}\leq \epsbound\absolute{W}e^{\epsbound\absolute{W}}$.
		
		Suppose now that \(U\subseteq \Delta_1\cup\{x\}\) is finite and note that every point of
		\(U\) has distance at least \(S\) from \(\region_2\). Since \(S\ge\delta\), the
		inductive hypothesis \ref{prop:expectations:decay} 
		applied to \(\region_2\), gives
		\[
		\left|
		\frac{
			F_{\region_2}^{\pmb{\xi}_u}(\varepsilon)}
		{F_{\region_2}^{\pmb{\xi}}(\varepsilon)}
		-1
		\right|
		\le
		D e^{-\beta S}\leq \epsbound
		\]
		for every activity function \(\pmb{\xi}\le\activity\) and every \(u\in U\).
		Therefore, telescoping over the points of \(U\) and applying
		Lemma~\ref{lem:complexproductdifference}, we obtain, uniformly in
		\(W\subseteq\Delta_a\),
		\begin{align}\label{eq:inner-to-outer-small}
			\absolute{
				\frac{F_{\region_2}^{\activities_{W\cup U}}(\varepsilon)}
				{F_{\region_2}^{\activities_W}(\varepsilon)}
				-1
			}
			\le
			\epsbound\absolute{U}
			e^{\epsbound\absolute{U}}\, .
		\end{align}
		We now make~\eqref{eq:expect-approx} formal.
		We first treat the activity \(\activities\) (the argument with \(\activities_x\) will be identical). By the finite-volume DLR equation~\eqref{eq:finite_DLR},
		conditioning on the configuration in
		\(\Delta_1\cup\Delta_a=\region\setminus\region_2\), we have
		\begin{align}\label{eq:DLR-screen-no-x}
			\frac{F_{\region}^{\activities}(\varepsilon)}
			{F_{\region_2}^{\activities}(\varepsilon)}
			&=
			\E_{\bX\sim\region,\activities}
			\left[
			(1+\varepsilon)^{\size{\bX\cap\Delta_1}
				+\absolute{\bX\cap\Delta_a}}
			\frac{F_{\region_2}^{\activities_{\bX\cap(\Delta_1\cup\Delta_a)}}(\varepsilon)}
			{F_{\region_2}^{\activities}(\varepsilon)}
			\right].
		\end{align}
		Fix a configuration \(X\) and write \(W=X\cap\Delta_a\) and \(U=X\cap\Delta_1\).
		Then the integrand in \eqref{eq:DLR-screen-no-x} is
		\[
		\Gamma(W)\,
		(1+\varepsilon)^{\absolute{U}}
		\frac{F_{\region_2}^{\activities_{W\cup U}}(\varepsilon)}
		{F_{\region_2}^{\activities_W}(\varepsilon)}\,  ,
		\]
		and so its difference from \(\Gamma(W)\) is bounded by
		\[
		|\Gamma(W)|
		\absolute{
			(1+\eps)^{\absolute{U}}
			\frac{F_{\region_2}^{\activities_{W\cup U}}(\varepsilon)}
			{F_{\region_2}^{\activities_W}(\varepsilon)}
			-1
		}\, .
		\]
		Again using the inequality $|zw-1|\leq |z||w-1|+|z-1|$
		together with \eqref{eq:inner-to-outer-small} and the bound
		\(
		|(1+\eps)^{\absolute{U}}-1|
		\le
		\epsbound\absolute{U}e^{\epsbound \absolute{U}}
		\), we obtain
		\begin{align*}
			\absolute{
				(1+\varepsilon)^{\absolute{U}}
				\frac{F_{\region_2}^{\activities_{W\cup U}}(\varepsilon)}
				{F_{\region_2}^{\activities_W}(\varepsilon)}
				-1
			}
			\le
			2\epsbound \absolute{U}
			e^{2\epsbound \absolute{U}}\, ,
		\end{align*}
		Combining this with \eqref{eq:Gamma-absolute-bound}, and using Poisson
		domination on the set \(\Delta_2\), whose volume is bounded by \(C_dS^d\), gives
		\begin{align}\label{eq:screen-approx-no-x}
			\absolute{
				\frac{F_{\region}^{\activities}(\varepsilon)}
				{F_{\region_2}^{\activities}(\varepsilon)}
				-
				\E_{\bX\sim\region,\activities}\Gamma(\bX\cap\Delta_a)
			}
			\le
			\E_{\bY \sim \Pois(\activity \vol(\Delta_2))}
			\left[
			2\epsbound \bY e^{4\sqrt{\epsbound}\bY}
			\right]=O_{\lambda,d}(\epsbound S^d)
		\end{align}
		where for the final inequality we used Lemma~\ref{lemma:poisson_moments} part \ref{lemma:poisson_moments:linexp}.
		
		By an identical argument
		\begin{align}\label{eq:screen-approx-with-x}
			\absolute{
				\frac{F_{\region}^{\activities_x}(\varepsilon)}
				{F_{\region_2}^{\activities}(\varepsilon)}
				-
				\E_{\bX\sim\region,\activities_x}\Gamma(\bX\cap\Delta_a)
			}
			=O_{\lambda,d}(\epsbound S^d)\, .
		\end{align}

		We next compare the two expectations in~\eqref{eq:screen-approx-no-x} and \eqref{eq:screen-approx-with-x}. By \eqref{eq:Gamma-absolute-bound},
		\[
		|\Gamma(X\cap\Delta_a)|\le e^{2\sqrt{\epsbound} \absolute{X\cap\Delta_a}}.
		\]
		Choose
		\[
		K=L\big(S^d+\log(1/\epsbound)\big),
		\]
		with \(L=L(d,\lambda)\) sufficiently large. By Poisson domination, Lemma~\ref{lemma:poisson_moments} part~\ref{lemma:poisson_moments:exp}, and the bound
		\(
		\P(\Pois(\rho)>K)\leq (e\rho/K)^K
		\)
		we then have
		\begin{align}\label{eq:screen-tail}
			\E_{\bX\sim\region,\activities}
			\left[
			|\Gamma(\bX\cap\Delta_a)|
			\ind{\absolute{\bX\cap\Delta_a}>K}
			\right]
			+
			\E_{\bX\sim\region,\activities_x}
			\left[
			|\Gamma(\bX\cap\Delta_a)|
			\ind{\absolute{\bX\cap\Delta_a}>K}
			\right]
			\le
			\sqrt{\epsbound}/8 .
		\end{align}
		On the other hand \eqref{eq:Gamma-absolute-bound} gives
		\[
		\absolute{\Gamma(\bX\cap\Delta_a)\ind{\absolute{\bX\cap\Delta_a}\leq K}}
		\le
		e^{2\sqrt{\epsbound} K}
		\le 2
		\]
		after shrinking \(\epsbound\) if necessary. 
		We conclude that
		\begin{align*}
			\absolute{
				\E_{\bX\sim\region,\activities_x}
				\Gamma(\bX\cap\Delta_a)
				-
				\E_{\bX\sim\region,\activities}
				\Gamma(\bX\cap\Delta_a)
			}
			\le
			4
			\left\|{
				\mu_{\region,\activities_x}
				-
				\mu_{\region,\activities}
			}\right\|_{\Delta_a}
			+
			\sqrt{\epsbound}/8\, .
		\end{align*}
		Now, since \(\dist(x,\Delta_a)\ge S\), we have by single-site SSM with constants $C < \infty$ and $m > 0$,
		\[
		\left\|{
			\mu_{\region,\activities_x}
			-
			\mu_{\region,\activities}}\right\|_{\Delta_a}
		\le
		C\vol(\Delta_a)e^{-mS}
		\le
		C S^d e^{-mS} \leq \sqrt{\epsbound}/32\, .
		\]
		We thus obtain
		\begin{align}\label{eq:screen-expectation-comparison}
			\left|
			\E_{\bX\sim\region,\activities_x}
			\Gamma(\bX\cap\Delta_a)
			-
			\E_{\bX\sim\region,\activities}
			\Gamma(\bX\cap\Delta_a)
			\right|
			\le
			\sqrt{\epsbound}/4 .
		\end{align}

		Combining \eqref{eq:screen-approx-no-x},
		\eqref{eq:screen-approx-with-x}, and
		\eqref{eq:screen-expectation-comparison} and choosing
		\(\epsbound>0\) sufficiently small we get
		\begin{align*}
			\absolute{
				\frac{F_{\region}^{\activities_x}(\varepsilon)}{F_{\region_2}^{\activities}(\varepsilon)}
				-
				\frac{F_{\region}^{\activities}(\varepsilon)}{F_{\region_2}^{\activities}(\varepsilon)}
			}
			\le\sqrt{\epsbound}/2\, .
		\end{align*}
		It remains to lower-bound the denominator in~\eqref{eq:numerator-denominator}.
		By \eqref{eq:screen-approx-no-x}, it suffices to show that
		\(
		\E_{\bX\sim\region,\activities}
		\Gamma(\bX\cap\Delta_a)
		\)
		is close to \(1\). Using \eqref{eq:Gamma-close-to-one}, Poisson domination, and Lemma~\ref{lemma:poisson_moments}
		\begin{align*}
			\absolute{
				\E_{\bX\sim\region,\activities}
				\Gamma(\bX\cap\Delta_a)
				-1
			}
			\le
			2\sqrt{\epsbound}\,
			\E_{\bX\sim\region,\activities}
			\left[
			\absolute{\bX\cap\Delta_a}e^{2\sqrt{\epsbound} \absolute{\bX\cap\Delta_a}}
			\right]
			=O_{d,\lambda}(\sqrt{\epsbound}S^d)
		\end{align*}
		Hence, after shrinking \(\epsbound\) if necessary,
		\[
		\absolute{
			\E_{\bX\sim\region,\activities}
			\Gamma(\bX\cap\Delta_a)
			-1
		}
		\le
		1/4.
		\]
		Together with \eqref{eq:screen-approx-no-x}, and shrinking \(\epsbound\) once
		more if necessary, this yields
		\begin{align*}
			\absolute{
				\frac{F_{\region}^{\activities}(\varepsilon)}
				{F_{\region_2}^{\activities}(\varepsilon)}
				-1
			}
			\le
			1/2
		\end{align*}
		as desired. Returning to~\eqref{eq:numerator-denominator} we conclude that 
		\[
		\absolute{
			\frac{F_{\region}^{\activities_x}(\varepsilon)}
			{F_{\region}^{\activities}(\varepsilon)}
			-1
		}
		\le
		\sqrt{\epsbound}
		\, .
		\]
		This proves \ref{prop:expectations:boundary} for \(\region\), and closes the
		induction. \end{proof}

	\subsection{Proof of Lemma \texorpdfstring{\ref{lemma:integral_identity}}{5.4}} \label{sec:integral-identity-proof}
	Before proving Lemma \ref{lemma:integral_identity}, we record the following basic identity.
	\begin{lemma}
		\label{lem:log-derivative-F}
		Let \(\region\subset\R^d\) be bounded and measurable and let $\pmb{\xi}, \pmb{h}$ be bounded complex-valued activity functions on $\Lambda$. Then
		\[
		\frac{d}{ds}Z_\region(\pmb{\xi}+s\pmb{h})\bigg|_{s=0}
		=
		\int_\region \pmb{h}(u)Z_\region(\pmb{\xi}_u)\,\diff u\, . 
		\]
	\end{lemma}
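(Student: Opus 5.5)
The plan is to differentiate the series defining $Z_\region$ term by term and then undo the $k$-fold integral by a symmetrization that isolates one point. Expanding,
\[
Z_\region(\pmb{\xi}+s\pmb{h}) = 1 + \sum_{k\ge 1}\frac{1}{k!}\int_{\region^k} \prod_{j=1}^k\big(\pmb{\xi}(x_j)+s\pmb{h}(x_j)\big)\, e^{-H(\V{x})}\,\diff\V{x}.
\]
Each summand is a polynomial in $s$. Its derivative at $s=0$ is $\frac{1}{k!}\sum_{i=1}^k\int_{\region^k}\pmb{h}(x_i)\prod_{j\neq i}\pmb{\xi}(x_j)e^{-H(\V{x})}\diff\V{x}$. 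By symmetry of the integrand under permuting coordinates, all $k$ terms are equal, so this equals
\[
\frac{1}{(k-1)!}\int_\region \pmb{h}(u)\int_{\region^{k-1}}\prod_{j=1}^{k-1}\pmb{\xi}(x_j)\,e^{-H(\V{x})-\sum_j\potential(u,x_j)}\,\diff\V{x}\,\diff u .
\]
Here I use $H(u,\V{x})=H(\V{x})+\sum_j\potential(u,x_j)$. Absorbing $e^{-\potential(u,x_j)}$ into the activity gives exactly $\pmb{\xi}_u(x_j)$. Summing over $k\ge1$ with $k-1\mapsto k$ then yields $\int_\region\pmb{h}(u)Z_\region(\pmb{\xi}_u)\diff u$, after exchanging the sum over $k$ with the $u$-integral.

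The only real point is justifying the interchange of $\frac{d}{ds}$ with the infinite sum, and of the sum with the integral over $u$. Since $\potential\ge 0$ we have $e^{-H}\le 1$. Let $M$ bound $|\pmb{\xi}|+|\pmb{h}|$. For $|s|\le1$ the $k$-th term is an entire function of $s$ bounded by $(M\vol(\region))^k/k!$. So the series converges uniformly on $\{|s|\le 1\}$ to an analytic function, and by Weierstrass it may be differentiated term by term. Likewise the bound $\frac{1}{(k-1)!}\|\pmb{h}\|_\infty\vol(\region)(M\vol(\region))^{k-1}$ is summable, so Fubini/Tonelli allows swapping the sum and the $u$-integral. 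I expect no serious obstacle beyond this dominated-convergence bookkeeping. The complex-valued activities cause no trouble because all bounds are in absolute value.
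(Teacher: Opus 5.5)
Your proposal is correct and follows essentially the same route as the paper: differentiate the series term by term, use permutation symmetry to single out one point $u$, split $H(u,\V{x})=H(\V{x})+\sum_j\phi(u,x_j)$, and absorb $e^{-\phi(u,x_j)}$ into $\pmb{\xi}_u$. Your justification of the interchanges (uniform convergence on $\{|s|\le 1\}$ via the bound $(M\vol(\region))^k/k!$, then Weierstrass and Fubini) is a more explicit version of the paper's brief appeal to dominated convergence.
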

	\begin{proof}
		Recall that
		\[
		Z_\region(\pmb{\xi})
		=
		\sum_{k\ge0}\frac1{k!}
		\int_{\region^k}
		\prod_{i=1}^k \pmb{\xi}(x_i)
		e^{-H(x_1,\ldots,x_k)}
		\,\diff x_1\cdots \diff x_k\, .
		\]
		We differentiate term by term. First note that
		\[
		\frac{d}{ds}
		\prod_{i=1}^k(\pmb{\xi}(x_i)+s\pmb{h}(x_i))
		\bigg|_{s=0}
		=
		\sum_{j=1}^k
		\pmb{h}(x_j)\prod_{i\ne j}\pmb{\xi}(x_i).
		\]
		By symmetry, it follows that
		\begin{align}
			\frac{d}{ds}Z_\region(\pmb{\xi}+s\pmb{h})\bigg|_{s=0}
			&=
			\sum_{k\ge1}\frac{k}{k!}
			\int_{\region^k}
			\pmb{h}(x_1)
			\prod_{i=2}^k \pmb{\xi}(x_i)
			e^{-H(x_1,\ldots,x_k)}
			\,\diff x_1\cdots \diff x_k \nonumber \\
			&=
			\int_\region \pmb{h}(u)
			\sum_{m\ge0}\frac1{m!}
			\int_{\region^m}
			\prod_{i=1}^m \pmb{\xi}(y_i)
			e^{-H(u,y_1,\ldots,y_m)}
			\,\diff y_1\cdots \diff y_m
			\,\diff u , \label{eq:direct-deriv}
		\end{align}
		where dominated convergence justifies differentiation under the sum and integral in the first equality, and the interchange of the sum and integral in the second. 
		Since
		\[
		H(u,y_1,\ldots,y_m)
		=
		H(y_1,\ldots,y_m)
		+
		\sum_{i=1}^m \potential(u,y_i),
		\]
		the inner sum in~\eqref{eq:direct-deriv} is precisely $Z_\region(\pmb{\xi}_u)$.
	\end{proof}

	We are now ready to show Lemma \ref{lemma:integral_identity}
	\begin{proof}[Proof of Lemma \ref{lemma:integral_identity}]
		Recall that
		\begin{align}\label{eq:Frecall}
			F_\region^{\activities_t}(\varepsilon)
			=
			\frac{Z_\region((1+\varepsilon)\activities_t)}
			{Z_\region(\activities_t)} ,
		\end{align}
		We apply Lemma~\ref{lem:log-derivative-F} twice. First take $\pmb{\xi}=(1+\varepsilon)\activities_t$ and $\pmb{h}=\frac{d}{dt} \pmb{\xi}$.
		Then
		\[
		\frac{d}{dt}Z_\region((1+\varepsilon)\activities_t)
		=
		(1+\varepsilon)
		\int_\region
		\frac{d\activities_t}{dt}(u)
		Z_\region((1+\varepsilon)(\activities_t)_u)
		\,\diff u .
		\]
		Dividing by \(Z_\region((1+\varepsilon)\activities_t)\), which is nonzero
		because \(F_\region^{\activities_t}(\varepsilon)\ne0\), gives
		\begin{align}\label{eq:logderiv1}
			\frac{d}{dt}
			\log
			Z_\region((1+\varepsilon)\activities_t)
			=
			(1+\varepsilon)
			\int_\region
			\frac{d\activities_t}{dt}(u)
			\frac{
				Z_\region((1+\varepsilon)(\activities_t)_u)}{
				Z_\region((1+\varepsilon)\activities_t)}
			\,\diff u .
		\end{align}
		
		Second, take $\pmb{\xi}=\activities_t$ and $\pmb{h}=\frac{d}{dt} \pmb{\xi}$.
		We have \(Z_\region(\activities_t)>0\) since $\activities_t$ is real and non-negative, thus
		\begin{align}\label{eq:logderiv2}
			\frac{d}{dt}\log Z_\region(\activities_t)
			=
			\int_\region
			\frac{d\activities_t}{dt}(u)
			\frac{Z_\region((\activities_t)_u)}
			{Z_\region(\activities_t)}
			\,\diff u .
		\end{align}
		The result follows by taking the derivative of the logarithm of~\eqref{eq:Frecall} and applying~\eqref{eq:logderiv1} and~\eqref{eq:logderiv2}.
	\end{proof}

	\subsection{Proof of Theorem \texorpdfstring{\ref{thm:zero_freeness}}{5.1}}
	Before we can prove Theorem \ref{thm:zero_freeness}, we first establish a connection between analyticity of the infinite-volume pressure and zero-freeness of partition functions in a uniform neighborhood of the real axis.
	To this end, we show the following version of the Yang--Lee theorem \cite{yang1952statistical}, following the proof of the analogous statement for the Ising model in the book by Friedli and Velenik \cite[Theorem 3.42]{friedli2017statistical}. 
	\begin{proposition}[Yang--Lee Theorem] \label{prop:leeyang}
		Let $\potential$ be a repulsive potential, and let $(\region_n)_{n \in \N}$ be a sequence of bounded measurable regions with $\region_n \nearrow \R^{d}$ such that the infinite-volume limit of the pressure $p(\activity) = \lim_{n \to \infty} p_{\region_n}(\activity)$ exists for all $\activity \in \R_{\ge 0}$.
		Let $D \subseteq \C$ be open, simply-connected, and such that $D \cap \R$ is some open interval and $D \cap \R_{\ge 0} \neq \emptyset$.
		If $Z_{\region_n}(z) \neq 0$ for all $n \in \N$ and $z \in D$, then the infinite-volume pressure $p$ has an analytic continuation to $D$.
	\end{proposition}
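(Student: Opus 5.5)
The approach is the classical Vitali/Montel argument, as in Friedli--Velenik. Since $Z_{\region_n}$ is an entire function of $z$ with no zeros in the simply connected domain $D$, and $Z_{\region_n}(\activity)>0$ for real $\activity\ge 0$, there is a unique analytic branch $\log Z_{\region_n}$ on $D$ that is real on $D\cap\R_{\ge0}$. Define $p_n(z)\coloneqq \vol(\region_n)^{-1}\log Z_{\region_n}(z)$. Each $p_n$ is analytic on $D$, and it agrees with the finite-volume pressure on $D\cap\R_{\ge 0}$. The goal is to show that $(p_n)$ converges locally uniformly on $D$. The limit is then analytic by Weierstrass and equals $p$ on $D\cap\R_{\ge0}$, so it is the desired continuation.

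The first step is to control the real parts. We have $\operatorname{Re} p_n(z) = \vol(\region_n)^{-1}\log|Z_{\region_n}(z)|$. Because $\potential\ge 0$, $|Z_{\region_n}(z)|\le Z_{\region_n}(|z|)\le e^{|z|\vol(\region_n)}$. Hence $\operatorname{Re} p_n(z)\le |z|$, which is uniformly bounded above on compact subsets of $D$.

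The second step, which I expect to be the main obstacle, is to bound the imaginary parts (equivalently the full modulus) locally uniformly. A one-sided bound on the real part alone does not give normality. I would use the Borel--Carathéodory inequality. Fix $z_0\in D\cap\R_{\ge 0}$. Since $p_n(z_0)\to p(z_0)$, the values $p_n(z_0)$ are bounded. For concentric discs $B_r(z_0)\subset B_{r'}(z_0)\Subset D$, Borel--Carathéodory gives
\[
\sup_{B_r}|p_n|\le \frac{2r}{r'-r}\sup_{B_{r'}}\operatorname{Re} p_n+\frac{r'+r}{r'-r}|p_n(z_0)|,
\]
which is uniformly bounded in $n$. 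To extend this to all of $D$, I would chain overlapping discs from the real interval along paths in $D$, applying the inequality with centre $z_0$ replaced by points already controlled. Alternatively, one can apply Harnack's inequality to the positive harmonic functions $M_K-\operatorname{Re} p_n$, where $M_K$ is the uniform upper bound from the first step. This controls $\operatorname{Re} p_n$ from below locally uniformly in terms of its value at a real point, and then Borel--Carathéodory bounds $|p_n|$. Either route shows that $(p_n)$ is locally uniformly bounded on $D$.

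The third step concludes by Montel and Vitali. By Montel's theorem, $(p_n)$ is a normal family. Every subsequential locally uniform limit is analytic and agrees with $p$ on the interval $D\cap\R_{>0}$, which has accumulation points. By the identity theorem all subsequential limits coincide, so (Vitali) the whole sequence converges locally uniformly on $D$ to an analytic $\tilde p$ with $\tilde p=p$ on $D\cap\R_{\ge0}$. This $\tilde p$ is the required analytic continuation.
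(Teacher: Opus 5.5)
Your proof is correct, but it handles the step you flag as the main obstacle differently from the paper. The paper never controls $\operatorname{Im} p_n$. It exponentiates, setting $h_n \coloneqq e^{p_n}$, so that $|h_n(z)| = e^{\operatorname{Re} p_n(z)} = |Z_{\region_n}(z)|^{1/\vol(\region_n)} \le e^{|z|}$. In other words, your one-sided real-part bound is already a locally uniform modulus bound for $h_n$. Vitali then gives $h_n \to h$ locally uniformly on $D$, with $h = e^{p} > 0$ on $D \cap \R_{\ge 0}$. Hurwitz's theorem (the $h_n$ are zero-free and $h \not\equiv 0$) shows $h$ is zero-free on $D$, and simple connectivity provides an analytic $\log h$ agreeing with $p$ on the real segment. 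So the paper replaces your Borel--Carath\'eodory (or Harnack) estimate and the disc-chaining connectedness argument with Hurwitz plus a second logarithm, which is shorter. Your route buys something in return. It shows that the complex finite-volume pressures $p_n$ themselves converge locally uniformly. It also gives $\sup_{K} \vol(\region_n)^{-1}|\log Z_{\region_n}| \le C_K$ uniformly in $n$ on every compact $K \subset D$. This uses only zero-freeness, $|Z_{\region_n}(z)|\le e^{|z|\vol(\region_n)}$, and $0 \le p_{\region_n}(\activity) \le \activity$ at a single real point. Applied to the $\delta$-neighborhood of $[0,\activity]$ (after shrinking $\delta$), this essentially recovers Theorem~\ref{thm:zero_freeness}\ref{thm:zero_freeness:log-bound}, which the paper proves separately via the telescoping product of partition-function ratios.
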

	\begin{proof}
		First note that $Z_{\region_n}(\activity) > 0$ for $\activity \in D \cap \R_{\ge 0}$. 
		Thus, if $D \cap \R$ is an open interval with $D \cap \R_{\ge 0} \neq \emptyset$ and $Z_{\region_n} \neq 0$ on $D$, the intermediate value theorem yields $Z_{\region_n} > 0$ on $D \cap \R$.
		Consequently, $\log(Z_{\region_n})$  has an analytic continuation to $D$ such that $\Re(\log(Z_{\region_n})) = \log(\absolute{Z_{\region_n}})$ (see \cite[Theorem B.23 \& Remark B.24]{friedli2017statistical}).
		Using this continuation, we define the analytic continuation of the finite-volume pressure $p_{\region_n}$ to $D$ for every $n \in \N$.
		Now, using the analytic continuation of $p_{\region_n}$, define $h_n (z) = e^{p_{\region_n}(z)}$, which is analytic on $D$.
		Our first step is to apply Vitali's convergence theorem (see \cite[Theorem B.25]{friedli2017statistical}) to the sequence $(h_n)_{n \in \N}$.
		To this end, note that the limit $h(\lambda) = \lim_{n \to \infty} h_n (\lambda) = e^{\lim_{n \to \infty} p_{\region_n}(\lambda)} = e^{p(\lambda)}$ exists for all $\lambda \in D \cap \R_{\ge 0}$ and, since $D \cap \R$ is an open interval and $D \cap \R_{\ge 0} \neq \emptyset$, $D \cap \R_{\ge 0}$ must contain an accumulation point.
		Moreover, since $\absolute{h_n(z)} = e^{\Re(p_{\region_n}(z))} = e^{\log(\absolute{Z_{\region_n}(z)})/\vol(\region_n)}$ and $\absolute{Z_{\region_n}(z)} \le e^{\absolute{z} \cdot \vol(\region_n)}$, we see that the sequence $(h_n)_{n \in \N}$ is locally uniformly bounded.
		Thus, by Vitali's theorem, the limit $h = \lim_{n \to \infty} h_n$ is analytic on $D$, and the convergence $h_n \to h$ is locally uniform.
		Using the fact that $h(0) = e^{\lim_{n \to \infty} 1/\vol(\region_n)} = 1$, we further know by Hurwitz's theorem (see \cite[Theorem B.26]{friedli2017statistical}) that $h \neq 0$ on all of $D$.
		Moreover, since $Z_{\region_n}$ is real on $\R$ for all $n$, $h$ is strictly positive on $D \cap \R$. 
		In particular, $\log(h)$ has an analytic continuation to $D$, and since $\log(h(\lambda)) = p(\lambda)$ for $\lambda \in D \cap \R_{\ge 0}$, this yields an analytic continuation of $p$ to $D$.
	\end{proof}
	
	Using Proposition \ref{prop:expectations} and Proposition \ref{prop:leeyang}, we can now prove Theorem \ref{thm:zero_freeness}.
	\begin{proof}[Proof of Theorem \ref{thm:zero_freeness}]
		By Proposition \ref{prop:leeyang}, part \ref{thm:zero_freeness:analyticity} of Theorem \ref{thm:zero_freeness} follows immediately from part \ref{thm:zero_freeness:roots}.
		Next, let $\epsbound > 0$ be as in Proposition \ref{prop:expectations}.
		By classical techniques such as the cluster expansion, there is some $r>0$ such that $Z_{\region_n}(z) \neq 0$ for all $z \in \C$ with $\absolute{z} \le r$ (see, e.g., \cite[Theorem~4.5.2]{ruelle1969statistical}).
		If $r > \activity$, there is nothing to prove.
		Otherwise, note that every $z$ in the complex $\epsbound r/2$-neighborhood of $[r/2, \activity]$ can be written as $z = \activity' \cdot (1 + \varepsilon)$ for some $\activity' \in [r/2, \activity]$ and $\absolute{\varepsilon} \le \epsbound$.
		Thus, applying part \ref{prop:expectations:zerofree} of Proposition \ref{prop:expectations} implies 
		\[
		Z_{\region_n}(z) = F_{\region_n}^{\activity'}(\varepsilon)  \cdot Z_{\region_n}(\activity') \neq 0,
		\]
		which concludes the proof of part \ref{thm:zero_freeness:roots}.

		For the proof of \ref{thm:zero_freeness:log-bound}, we again note that by, e.g., \cite[Theorem~4.5.2]{ruelle1969statistical}) the conclusion holds for all $|w| \leq r$ for some $r$.  It is thus sufficient to show the conclusion holds for some $\delta$-neighborhood of $[r,\lambda]$.  There is some $\eps_\ast > 0$ so that for $R = 1$ and $\kappa = 1/4$, the conclusion of Proposition \ref{prop:expectations} holds for all $\lambda' \leq \lambda$.    
		Find a sequence of sets $\Lambda_n = \Lambda^{(0)}, \Lambda^{(1)},\ldots, \Lambda^{(N)} = \emptyset$ with $N = \Theta_{d}(\Vol(\Lambda_n))$ so that for each $j \in \{1,\ldots,N\}$ there is some point $x_j$ with $\Lambda^{(j)} = \Lambda^{(j-1)} \setminus B_R(x_j)$.  For all $|\eps| \leq \eps_\ast$ define \begin{equation}
			f_j(\eps) = F_{\Lambda^{(j)}}^{\activity'}(\eps), \qquad g_j(\eps) = \frac{f_{j-1}(\eps)}{f_{j}(\eps)}\,.
		\end{equation}
		
		Note that $f_j$ is analytic in $\eps$.  By Proposition \ref{prop:expectations} \ref{prop:expectations:zerofree}, we have $f_{j}(\eps) \neq 0$ for all $|\eps| \leq \eps_\ast$ and so $g_j$ is analytic.  Since $f_N \equiv 1$ we have \begin{equation}\label{eq:Z-telescope}   \prod_{j = 1}^N g_j (\eps)=f_0(\eps)=\frac{Z_{\Lambda_n}((1 + \eps)\lambda')}{Z_{\Lambda_n}(\lambda')}\,.\end{equation}
		
		Since $|g_j(\eps) - 1| \leq 1/4$, if we let $\mathrm{Log}$ be the principal branch of the complex logarithm then $\mathrm{Log}(g_j)$ is analytic. Define $h_j = \mathrm{Log}(g_j)$ and $H = \sum_{j = 1}^N h_j$.  We note that $H$ is analytic in $|\eps| \leq \eps_\ast$ since it is a sum of analytic functions.  Further, \eqref{eq:Z-telescope} implies $e^{H(\eps)}  =Z_{\Lambda_n}((1 + \eps)\lambda')/Z_{\Lambda_n}(\lambda')$ and so $H + \log Z_{\Lambda_n}(\lambda')$ is an analytic branch of $\log Z_{\Lambda_n}((1 + \eps)\lambda')$.  Since $|g_j(\eps) - 1| \leq 1/4$, we have that $|h_j(\eps)| \leq 1$; noting that $0 \leq \log Z_{\Lambda_n}(\lambda) \leq \lambda \Vol(\Lambda_n)$ implies $|H(\eps) + \log Z_{\Lambda_n}(\lambda')| \leq N + \lambda'\Vol(\Lambda_n)$ for all $|\eps| \leq \eps_\ast$.  Since $N = \Theta(\Vol(\Lambda_n))$, this completes the proof.
	\end{proof}
	
	\section{Optimal mixing of block dynamics from spatial mixing}\label{sec:block-dynamics}
	We prove that single-site SSM implies that the mixing time of the block dynamics with a sufficiently large update radius is $O(\vol(\region) \log(\vol(\region)))$.  
	Recall that, for Borel $\region \subseteq \R^d$, $\pointsets_{\region}$ denotes the set of all locally-finite point configurations $X \in \pointsets$ such that $X \subseteq \Lambda$, and that we write $\events_{\region}$ for the trace of $\pointsets_{\region}$ in $\events$.
	In particular, if $\region$ is bounded then all configurations in $\pointsets_{\region}$ are finite.

	\begin{definition}[Block dynamics] \label{def:block_dynamics}
		Let $\region \subset \R^d$ be bounded and measurable, and let $L > 0$.
		A Markov chain $(\bX_t)_{t \ge 0}$ on $(\pointsets_{\region}, \events_{\region})$ is called block dynamics on $\region$ with update radius $L$ if it evolves according to the following update rule for every time step $t \in \N_0$:
		\begin{enumerate}
			\item Choose $\randpoint \in \region$ uniformly at random.
			\item Choose $\bY \sim \mu_{\region \cap B_L(\randpoint), \activities_{\bX_t \setminus B_L(\randpoint)}}$.
			\item Set $\bX_{t+1} = (\bX_t \setminus B_L(\randpoint)) \cup \bY$.
		\end{enumerate}
	\end{definition}
	It is easily checked that for every $A \subseteq \events_{\region}$, the probability that $\bX_{t+1} \in A$ is a measurable function of $\bX_t$, and thus the construction above describes the transition kernel of a Markov chain.
	
	Our goal is to bound the speed of convergence of the law of $\bX_t$ as given above to its stationary distribution as $t \to \infty$.  
	Recall that for a pair of Borel probability measures $\mu_1,\mu_2$ on $\pointsets_{\region}$, we define the total variation distance via $$\| \mu_1 - \mu_2\|_{\mathrm{TV}} := \sup_{A \in \events_{\region}} |\mu_1(A) - \mu_2(A)|\,.$$
	
	We also note that  \begin{equation}\label{eq:TV-coupling}
		\|\mu_1 - \mu_2\|_{\mathrm{TV}} = \inf_{(\bX,\bY)} \P(\bX \neq \bY)
	\end{equation}
	where the infimum is over all couplings $(\bX,\bY)$ of $(\mu,\nu)$ (see, e.g., \cite[Chapter 1.5]{lindvall1992lectures} for a statement in this level of generality).
	
	For each $\eps > 0$, the mixing time of a Markov kernel $P$ with unique stationary distribution $\pi$ is defined by $\tau_{\mathrm{mix}}(\eps) = \min\{t : \sup_{x}\| \delta_x P^t - \pi\|_{\mathrm{TV}} \leq \eps\}.$  The main theorem of this section shows that we have optimal mixing of block dynamics.
	
	\begin{theorem}\label{th:mixing-block-dynamics}
		Let $\activity \ge 0$, and suppose that $\potential$ is repulsive and satisfies single-site SSM up to $\activity$ (see Definition \ref{def:spatial_mixing}) with constants $m$ and $C$. 
		Then, there exists some $L_0 \coloneqq L_0(\activity, C, m, d) > 0$ such that, for all activity functions $\activities \le \activity$ and all $L \ge L_0$, the block dynamics with update radius $L$ have mixing time $O(\vol(\region) \cdot \log(\vol(\region)))$ on boxes $\region = [-n, n]^d \subset \R^d$.
	\end{theorem}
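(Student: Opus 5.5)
The plan is path coupling in the style of Bubley--Dyer, with the metric adapted to the continuum as in \cite{michelen2022strong,helmuth2022correlation}. On $\pointsets_\region$ we use the symmetric-difference (Hamming-type) distance $d(X,Y)=|X\triangle Y|$. For this metric, path coupling (Lemma \ref{lemma:path_coupling}, via Kantorovich--Rubinstein duality) reduces the theorem to a one-step contraction for adjacent pairs $X$ and $X\cup\{x\}$: we want a coupling of one step of the block dynamics from each state with
\[
\E\big[d(\bX_1,\bY_1)\big]\le 1-\frac{c}{\vol(\region)}
\]
for some $c>0$. This contraction gives $W_1$-decay $(1-c/\vol(\region))^t\cdot\mathrm{diam}$. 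The diameter is not bounded, but Poisson domination (Lemma \ref{lem:Poisson-domination}) lets us work with configurations of size $O(\vol(\region))$ up to exponentially small error. Alternatively, we start from $\emptyset$ and argue by monotonicity, using that after one sweep all points are resampled. Either way the mixing time is $O(\vol(\region)\log\vol(\region))$.

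To build the coupling, we use the same centre $\randpoint$ in both chains. There are three cases.
\begin{itemize}
\item[(a)] If $x\in B_L(\randpoint)$, the boundary conditions $X\setminus B_L(\randpoint)$ coincide. We use the identical update, and the distance drops to $0$. This has probability at least $c_d L^d/\vol(\region)$ for $x$ in the box, since boxes are fat.
\item[(b)] If $x\notin B_L(\randpoint)$, the two updates are the Gibbs measures $\mu_{\region\cap B_L(\randpoint),\blambda_{X'}}$ and $\mu_{\region\cap B_L(\randpoint),(\blambda_{X'})_x}$ with $X'=X\setminus B_L(\randpoint)$. These differ by a single pinning, which is exactly the setting of single-site SSM. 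We couple the two outputs optimally, hence $\E|\bY\triangle\bY'|$ controls the new disagreement.
\item[(c)] Otherwise the point $x$ stays and contributes $1$.
\end{itemize}

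The key estimate is that in case (b) the expected number of disagreements is at most $C'e^{-m'\dist(x,B_L(\randpoint))}$. To get it, we partition $B_L(\randpoint)$ into unit cubes $Q$. On each $Q$, single-site SSM bounds the projected TV by $Ce^{-m\,\dist(x,Q)}$. We then build a monotone (domination) coupling cube by cube, or simply use the bound
\[
\E|\bY\triangle\bY'|_{Q}\le \big(\text{TV on }Q\big)^{1/2}\cdot O(1),
\]
obtained from Cauchy--Schwarz and Poisson second moments (Lemma \ref{lemma:poisson_moments}). Summing over $Q$ gives $\sum_Q C e^{-m\,\dist(x,Q)/2}$. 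Integrating over $\randpoint$ with $x\notin B_L(\randpoint)$ then gives a total expected increase of
\[
\frac{1}{\vol(\region)}\int e^{-\frac m2\,\dist(x,B_L(y))}\,\mathrm{poly}(L)\,\diff y=O\!\Big(\frac{L^{d-1}}{\vol(\region)}\Big),
\]
because only shells near the boundary of the ball contribute.

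Comparing the gain $c_dL^d/\vol(\region)$ from (a) with the loss $O(L^{d-1})/\vol(\region)$ from (b), the gain wins once $L\ge L_0$, which yields the contraction. The main obstacle is the TV-to-expected-discrepancy step, since the TV bound on a cube carries an unbounded count. The remedy is to combine per-cube maximal couplings through a sequential (greedy) coupling of cubes and to use Poisson tails to cap counts. A second subtlety is that the pinning by $x$ is non-local: it also changes the activity inside $B_L(\randpoint)$ only through $e^{-\phi(x,\cdot)}$. This is already absorbed into single-site SSM, so no finite-range assumption is needed.
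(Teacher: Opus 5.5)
Your skeleton matches the paper's: path coupling in the $|X\triangle Y|$ metric, a common centre in both chains, an identical update when $x\in B_L(\randpoint)$, and a burn-in via Poisson domination. But the step everything hinges on is not established, and the fix you sketch does not work as stated.

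\textbf{The gap.} The gain--loss comparison needs, for $x\notin B_L(\randpoint)$, a \emph{single} coupling of the two updates with $\E|\bY\triangle\bY'|\le C'e^{-m'\dist(x,B_L(\randpoint))}$, where $C'$ does not depend on $L$.
\begin{itemize}
\item Summing the per-cube bounds $O(1)\cdot\mathrm{TV}_Q^{1/2}$ presupposes one coupling that is maximal on every cube simultaneously. Single-site SSM does not provide this.
\item A greedy sequential coupling does not supply it either. Once a cube disagrees, the conditional laws on later cubes differ by all the disagreeing points, and single-site SSM says nothing about them. The only remaining bound is the Poisson count of the unrevealed region.
\item If cubes are revealed starting near $x$, the first disagreement can have probability of order one when $x$ is close to $\partial B_L(\randpoint)$, with volume $\asymp L^d$ still unrevealed. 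Integrated over the shell of such centres, this costs $\asymp L^{2d-1}/\vol(\region)$, which swamps the gain $c_dL^d/\vol(\region)$.
\item Your displayed integral has the same problem. It carries a factor $\mathrm{poly}(L)$ yet concludes $O(L^{d-1}/\vol(\region))$. A one-shot maximal coupling on the whole ball does produce a genuine $\mathrm{poly}(L)$ prefactor, because single-site SSM carries $\vol(B_L)$ and $\E|\bY|^2=\mathrm{poly}(L)$. With that prefactor the comparison with $c_dL^d$ fails.
\end{itemize}

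\textbf{How the paper avoids it.} The paper never needs an $L$-independent constant, because it adds an intermediate case. For $L<\dist(x,\randpoint)<L+\ell$, it maximally couples only the projections onto $B_L(\randpoint)\setminus B_\ell(x)$. These lie at distance at least $\ell$ from $x$, so the disagreement probability is at most $\mathrm{poly}(L)e^{-m\ell}$. It then resamples $B_L(\randpoint)\cap B_\ell(x)$ independently from the correct conditional laws, at cost $\mathrm{poly}(\ell)$. With $\ell=\Theta(\log L)$ the total loss is $L^{d-1}\mathrm{polylog}(L)+\mathrm{poly}(L)e^{-m\ell/2}=o(L^d)$.

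\textbf{How your idea could be rescued.} Your sequential coupling works with the opposite order: reveal cells by \emph{decreasing} distance from $x$.
\begin{itemize}
\item Given agreement so far, the two conditional laws differ only by the pinning at $x$. So the first disagreement occurs at a cell at distance $r$ from $x$ with probability at most $Ce^{-mr}$.
\item At that moment only cells within distance $r+\sqrt d$ of $x$ remain, and their expected count is $O(r^d)$ by Poisson domination.
\item Summing gives $\E|\bY\triangle\bY'|=O\big((1+D)^{2d}e^{-mD/2}\big)$ with $D=\dist(x,B_L(\randpoint))$, uniformly in $L$. This is exactly what your comparison requires.
\end{itemize}

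\textbf{A smaller point.} Your alternative of starting from $\emptyset$ and arguing by monotonicity is not available. The dynamics are not monotone for general repulsive $\phi$, and the mixing time is a supremum over all initial states. The burn-in route of Lemma~\ref{lemma:burn_in} (coupon collecting plus Poisson domination) is the one that works.
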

	
	Of course Theorem \ref{thm:block-dynamics} is now immediate.
	
	\begin{proof}[Proof of Theorem~\ref{thm:block-dynamics}]
		Single-site SSM follows from Theorem \ref{thm:spatial-mixing} after which Theorem \ref{th:mixing-block-dynamics} completes the proof.
	\end{proof}
	
	The proof of Theorem \ref{th:mixing-block-dynamics} will proceed through the machinery of \emph{path coupling} introduced by Bubley and Dyer \cite{bubley1997path}.  We directly follow the proof in \cite[Chapter~14]{levin2017markov}.  
	For $X, Y \in \pointsets_{\region}$, let $\Delta(X,Y) := |X \triangle Y|$ where $\triangle$ is the symmetric difference.  We will also make use of the Wasserstein metric \begin{equation}\label{eq:wasserstein-definition}
		W_\Delta(\mu,\nu) := \inf_{(\bX,\bY)} \E \Delta(\bX,\bY)
	\end{equation}
	where the infimum is over all couplings $(\bX,\bY)$ of $(\mu,\nu)$.  
	If we define $L^1$ to be the set of Borel probability measures $\mu$ on $\pointsets_{\Lambda,f}$ so that $\int \Delta(X,\emptyset)\,\diff\mu(X) < \infty$, then by Kantorovich-Rubinstein duality \begin{equation}\label{eq:wasserstein-duality}
		W_\Delta(\mu,\nu) = \sup_{F:\mathrm{Lip}(F) \leq 1}| \mu F - \nu F| \quad \text{ for all } \mu,\nu \in L^1
	\end{equation}
	where the supremum is over functions $F: \pointsets_{\region} \to \R$ that are at most $1$-Lipschitz in the $\Delta$-metric.  
	For a version of Kantorovich-Rubinstein duality in this level of generality, see \cite[Theorem 5.10]{villani2009optimal}\footnote{To apply \cite[Theorem 5.10]{villani2009optimal}, we note that $\pointsets_{\Lambda}$ is a Borel subset of the Polish space of finite counting measures on the compact set $\region$; the cost function $\Delta(X,Y)$ is measurable and lower semicontinuous in the vague topology.}.

	To this end, we will use the following general version of a path coupling lemma.
	
	\begin{lemma} \label{lemma:path_coupling}
		Let $\region \subset \R^d$ be compact. 
		Consider a Markov kernel $P$ on $(\pointsets_{\Lambda},\events_{\region})$ with stationary distribution $\pi$.  Suppose there is $\delta > 0$ so that the following holds: for each pair $X,Y \in \pointsets_{\Lambda}$ with $\Delta(X,Y) = 1$ there is a coupling $(\bX,\bY)$ of $P(X,\cdot)$ and $P(Y,\cdot)$ so that $\E[\Delta(\bX,\bY)] \leq 1 - \delta\,. $
		Then, for all $X \in \pointsets_{\Lambda}$ and $t \in \mathbb{N}_0$ it holds that 
		\[
		\|P^t(X, \cdot) - \pi\|_{\mathrm{TV}} \le e^{- \delta t} \cdot \E_{\bY \sim \pi} [\Delta(X, \bY)] \,.
		\]
	\end{lemma}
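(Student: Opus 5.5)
We follow the standard path coupling argument (as in \cite[Chapter~14]{levin2017markov}), first contracting the Wasserstein distance $W_\Delta$ and then converting to total variation. The first step is to extend the one-step contraction from adjacent pairs to arbitrary pairs. Fix $X,Y\in\pointsets_\Lambda$ with $\Delta(X,Y)=k<\infty$; since $\Lambda$ is compact, all configurations are finite, so $k$ is finite. We build a path $X=Z_0,Z_1,\ldots,Z_k=Y$ in which each step deletes a point of $X\setminus Y$ or adds a point of $Y\setminus X$, so that $\Delta(Z_{i-1},Z_i)=1$.

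For any $1$-Lipschitz $F$ (with respect to $\Delta$), we have
\[
|PF(X)-PF(Y)|\le \sum_{i=1}^k |PF(Z_{i-1})-PF(Z_i)| \le \sum_{i=1}^k \E\,\Delta(\bX_i,\bY_i)\le (1-\delta)k .
\]
Here $(\bX_i,\bY_i)$ is the assumed coupling of $P(Z_{i-1},\cdot)$ and $P(Z_i,\cdot)$, and we used $|F(a)-F(b)|\le\Delta(a,b)$. By the duality \eqref{eq:wasserstein-duality}, this gives $W_\Delta(P(X,\cdot),P(Y,\cdot))\le(1-\delta)\Delta(X,Y)$. Working with duality avoids having to glue couplings measurably along the path.

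The second step is to lift this to measures. For $\mu,\nu\in L^1$ and a $1$-Lipschitz $F$, the first step shows that $PF/(1-\delta)$ is $1$-Lipschitz. Hence $|\mu PF-\nu PF|\le(1-\delta)W_\Delta(\mu,\nu)$, and taking the supremum over $F$ gives $W_\Delta(\mu P,\nu P)\le (1-\delta)W_\Delta(\mu,\nu)$. We also need $\mu P\in L^1$ whenever $\mu\in L^1$. This follows because $X\mapsto PF(X)$ has finite first moment: $\E\,\Delta(\bX_1,\emptyset)\le \Delta(X,\emptyset)+$ const for the kernels of interest. Alternatively, we can apply the first step directly to $\mu=\delta_X$ together with the coupling of $\pi$ with itself. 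We take $\mu=\delta_X$ and $\nu=\pi$, which requires $\pi\in L^1$; if $\E_\pi\Delta(X,\bY)=\infty$ the statement is trivial. Iterating yields
\[
W_\Delta(P^t(X,\cdot),\pi)\le(1-\delta)^t W_\Delta(\delta_X,\pi)\le e^{-\delta t}\,\E_{\bY\sim\pi}\Delta(X,\bY),
\]
where the last inequality uses the product coupling, which is the only coupling with a point mass.

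The final step converts to total variation. Since $\Delta$ is integer valued, $\one\{a\ne b\}\le \Delta(a,b)$. By \eqref{eq:TV-coupling}, $\|P^t(X,\cdot)-\pi\|_{\mathrm{TV}}\le \inf \P(\bX\ne\bY)\le \inf\E\,\Delta(\bX,\bY)=W_\Delta$, and this concludes the proof.

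The main technical obstacle is the measure-theoretic bookkeeping: checking that $PF$ is measurable, that $\mu P$ stays in $L^1$ so that duality applies, and that the infimum in $W_\Delta$ is attained or approximated by a coupling. We handle all of these using the Polish-space setting noted in the footnote to \eqref{eq:wasserstein-duality}.
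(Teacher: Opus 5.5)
Your proposal is correct and takes essentially the same approach as the paper: a path of unit steps plus Kantorovich--Rubinstein duality gives $W_\Delta(\mu P,\nu P)\le(1-\delta)W_\Delta(\mu,\nu)$, which you iterate against the stationary $\pi$ and convert to total variation via $\one\{X\neq Y\}\le\Delta(X,Y)$. One small improvement concerns the $L^1$-preservation that you attribute to ``the kernels of interest'': it actually follows from the hypotheses, since $\bigl||A|-|B|\bigr|\le\Delta(A,B)$ and the adjacent couplings give $m(X)\le m(Z_0)+(1-\delta)\Delta(X,Z_0)$ for the mean number of points $m(Z)$ under $P(Z,\cdot)$, and some $Z_0$ has $m(Z_0)<\infty$ whenever $\pi=\pi P\in L^1$.
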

	\begin{proof}
		For each fixed $X,Y \in \pointsets_{\Lambda}$, we wish to find a coupling of $(\bX,\bY)$ so that $$\E[\Delta(\bX,\bY)] \leq (1 - \delta) \cdot \Delta(X,Y)\,.$$
		By \eqref{eq:wasserstein-duality}, the triangle inequality implies \begin{align*}
			W_\Delta(\delta_XP,\delta_Y P) \leq \sum_{j = 0}^{K - 1} W_\Delta(\delta_{Z_j} P,\delta_{Z_{j+1}}P) \leq  (1 - \delta) K = (1 - \delta) \Delta(X, Y)
		\end{align*}
		for a sequence of configurations $X=Z_0, \dots, Z_K = Y$ with $K=\Delta(X, Y)$ and $\Delta(Z_j, Z_{j+1}) = 1$, where, in the last inequality, we used the contractive assumption and the definition of $W_\Delta$ in \eqref{eq:wasserstein-definition}.    
		
		In particular, for any measures $\mu_1,\mu_2 \in L^1$, we may apply the duality statement \eqref{eq:wasserstein-duality} to see 
		\begin{align*}
			W_{\Delta}(\mu_1 P,\mu_2 P) 
			&= \sup_{F : \mathrm{Lip}(F) \leq 1} | \mu_1 P F - \mu_2 P F| 
			\le  \sup_{F : \mathrm{Lip}(F) \leq 1} \inf_{(\bX, \bY)} \E |\delta_{\bX} P F - \delta_{\bY} P F| \\ 
			&\le (1-\delta) \inf_{(\bX, \bY)} \E \Delta(\bX, \bY) = (1 - \delta) W_{\Delta}(\mu_1, \mu_2) \,,
		\end{align*}
		where the infimum is over couplings $(\bX,\bY)$ of $(\mu_1,\mu_2)$.

		By stationarity of $\pi$ with respect to $P$, this shows that $$W_\Delta(\delta_X P^t,\pi) \leq (1 - \delta)^t W_{\Delta}(\delta_X,\pi) \leq e^{-\delta t} \E_{\bY \sim \pi}[\Delta(X,\bY)]$$ where we assumed $\pi \in L^1$ as otherwise the lemma holds trivially.
		Since $\Delta(X,Y) \geq \one\{X \neq Y\}$ we have $\|P^t(X,\cdot) - \pi\|_{\mathrm{TV}} \leq W_\Delta(\delta_X P^t,\pi)$ by \eqref{eq:TV-coupling}  and \eqref{eq:wasserstein-definition} which completes the proof.   
	\end{proof}
	
	Lemma \ref{lemma:path_coupling} is particularly convenient to apply if the diameter of $\Omega$ with respect to $\Delta$ is bounded.
	However, this will not necessarily be the case in our application.
	Instead, we argue that after some short ``burn-in'' period, the block dynamics from any initial configuration with sufficiently high probability reach a state within some bounded (expected) distance from a random state drawn from the stationary distribution.
	This is done by combining the triangle inequality with the following lemma.
	\begin{lemma}\label{lemma:burn_in}
		Let $(\bX_t)_{t \in \N_0}$ be a block dynamics chain with some activity function $\activities \le \activity$, repulsive potential $\potential$ and update radius $L > 0$ on a box $\region = [-n, n]^d$.
		There are $t_0 \in \Theta(\vol(\region) \cdot \log(\vol(\region)/\varepsilon))$ and $s \in \Theta(\vol(\region) \cdot \log(1/\varepsilon))$ such that for all $t \ge t_0$ it holds that $\P[\size{\bX_t} \ge s] \le \varepsilon$, where the implicit constants only depend on $L$, $\activity$ and $d$.
	\end{lemma}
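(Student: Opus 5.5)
We will track the size $N_t \coloneqq \size{\bX_t}$ through a one-step drift inequality and then apply Markov's inequality. Write $V \coloneqq \vol(\region)$ and $v_L \coloneqq \vol(B_L)$.

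\textbf{Step 1: a drift inequality.} Condition on $\bX_t = X$ and on the chosen centre $\randpoint$. The points of $X$ outside $B_L(\randpoint)$ are kept. The points inside $\region \cap B_L(\randpoint)$ are replaced by $\bY \sim \mu_{\region\cap B_L(\randpoint), \activities_{X\setminus B_L(\randpoint)}}$. The pinned activity is bounded by $\activity$, so by Poisson domination (Lemma~\ref{lem:Poisson-domination}) we have $\E\size{\bY} \le \activity v_L$. Averaging over $\randpoint$ uniform in $\region$, each fixed point $x \in X$ is removed with probability $\vol(\region\cap B_L(x))/V$. Since $\region = [-n,n]^d$ is a box, this probability is at least $c_{L,d}\, v_L/V$ for every $x \in \region$, provided $n \ge L$; small boxes are trivial. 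Hence
\[
\E[N_{t+1}\mid \bX_t] \le \Big(1 - \frac{c v_L}{V}\Big) N_t + \activity v_L .
\]

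\textbf{Step 2: iterate.} Set $q \coloneqq c v_L/V$. Iterating the drift inequality gives
\[
\E[N_t] \le (1-q)^t N_0 + \activity v_L/q \le e^{-qt} N_0 + \frac{\activity}{c} V .
\]
This bound is the main obstacle, because the initial state $N_0$ is arbitrary: it is finite, but it is not bounded uniformly over starting configurations. An additive $\log(\vol(\region)/\varepsilon)$ time can only wash out a starting size of order $\mathrm{poly}(V)$, so the burn-in alone does not control a general $N_0$.

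\textbf{Step 3: handle the initial configuration.} I would not rely on the additive drift alone. Instead, I would use a coupon-collector bound together with a fixed finite covering. Cover $\region$ by $M = O_{L,d}(V)$ cells of diameter less than $L/2$. Each cell $Q$ is contained in $B_L(\randpoint)$ whenever $\randpoint$ lies within distance $L/2$ of $Q$, and this event has probability at least $c'/V$ per step. After $t_0 = C V\log(V/\varepsilon)$ steps, a union bound shows that every cell has been resampled at least once, except on an event of probability at most $\varepsilon/2$.

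Once every cell has been resampled, no original point survives. To make the count rigorous, attach to each cell its last-resampling time. Each resampling of a ball produces a configuration stochastically dominated by a Poisson process of intensity $\activity$ on that ball. A point present at time $t$ was born at some last update whose ball covers it, and it has survived the later updates. This gives a stochastic domination of $N_t$ by a sum of Poisson$(\activity v_L)$ counts, one for each update that contributed surviving points. Combined with the drift computation of Step 1, applied to newly born points only, this yields $\E[N_t\,\one\{\text{all cells refreshed}\}] \le \activity V/c$.

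\textbf{Step 4: conclude.} By Markov's inequality, choose $s = (2\activity V/(c\varepsilon))$. A better choice is to use exponential moments: the dominating sum has Poisson-type tails by Lemma~\ref{lemma:poisson_moments}, so $s = \Theta(V\log(1/\varepsilon))$ already gives $\P[N_t \ge s] \le \varepsilon/2$. Adding the failure probability $\varepsilon/2$ from Step 3 gives $\P[N_t \ge s] \le \varepsilon$ for all $t \ge t_0$. All constants depend only on $L$, $\activity$ and $d$.
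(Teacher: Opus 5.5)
Your coupon-collector step (Step 3) is exactly the paper's first ingredient. The drift computation in Steps 1--2, applied to the non-original points $N'_t$ with $N'_0=0$, does correctly give $\E[N_t\,\one\{\text{all cells refreshed}\}]\le \E[N'_t]\le \activity V/c$. But this is only a first-moment bound, and Markov's inequality then forces $s=\Theta(V/\varepsilon)$, not the $s=\Theta(V\log(1/\varepsilon))$ the lemma asserts.

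The upgrade you invoke in Step 4 is not supported by anything you have established. Your ``dominating sum'' has one $\Pois(\activity v_L)$ count per update that contributed surviving points. Conditionally on the centres, it is a $\Pois(\activity v_L K)$ variable, where $K$ is the random number of contributing updates. So Lemma~\ref{lemma:poisson_moments} does not apply without a separate tail bound on $K$, which you do not give. Also, the $O(V)$ mean you use came from the drift, not from this domination. Finally, counting full balls overcounts: only the part of each ball not covered by later updates matters.

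The missing observation is disjointness. Write $B_j=B_L(y_j)$ for the update balls up to time $t$ and $S=\bigcup_j B_j$. The points of $\bX_t$ born at update $j$ lie in $B_j\setminus\bigcup_{i>j}B_i$, and these sets are pairwise disjoint. Now induct over the updates. Each block output is, conditionally on the past, dominated by a Poisson process of intensity $\activity$ on its ball (Lemma~\ref{lem:Poisson-domination}), and it replaces everything previously there. Hence $\bX_t\cap S$ is stochastically dominated by a single Poisson process of intensity $\activity$ on $S\cap\region$.

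On the coverage event $\region\subseteq S$ this gives $\size{\bX_t}\preceq\Pois(\activity V)$. Therefore $\P[\size{\bX_t}\ge s]\le \P[\region\not\subseteq S]+\P[\Pois(\activity V)\ge s]$, and a standard Poisson tail bound gives $s=\Theta(V\log(1/\varepsilon))$. This is precisely the paper's proof, and it makes your drift computation unnecessary.

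As an aside, your Markov version with $s=\Theta(V/\varepsilon)$ would still suffice for the application in Theorem~\ref{th:mixing-block-dynamics}, since $s$ enters there only through $\log s$. It does not, however, prove the lemma as stated.
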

	\begin{proof}
		Let $T \in \mathbb{N}$ be fixed.  First sample $y_1,\ldots,y_{T}$ which will be the centers of the block updates.  Define $B_j = B_L(y_j)$ and $S = \bigcup_{j = 1}^{T} B_j$.  Since $\phi$ is repulsive, we may use Poisson domination (Lemma \ref{lem:Poisson-domination}) and induct on $T$ to note that $\bX_T \cap S$ is stochastically dominated by $\bY \cap S$ where $\bY$ is a Poisson point process of intensity $\lambda$.  Then we see for each $T$ and $s > 0$ that \begin{align*}
			\P(|\bX_T| \geq s) \leq \P(\Lambda \not\subset S) + \P(|\bY \cap S| \geq s)\,.
		\end{align*}
		
		Note that by standard bounds on the coupon collector problem, there is a constant $C = C_{L,\lambda,d} > 0$ so that for $T \geq C \vol(\Lambda) \cdot \log(\vol(\Lambda)/\eps)$ we have $\P(\Lambda \not\subset S)  \leq \eps/2$.  Similarly, by a standard tail bound on Poisson random variables, we can find  $C' = C_{L,\lambda,d}' > 0$ so that for $s \geq C'\vol(\Lambda) \cdot \log(1/\eps)$ we have $\P(|\bY \cap S| \geq s) \leq \eps/2$, completing the proof.
	\end{proof}
	
	Using Lemma \ref{lemma:path_coupling} and Lemma \ref{lemma:burn_in}, we prove the following bound on the mixing time of block dynamics.
	\begin{proof}[Proof of Theorem \ref{th:mixing-block-dynamics}]
		For $L$ to be chosen later, let $P$ be the transition kernel of the block dynamics on some box $\region = [-n, n]^d$, for $n$ large enough as a function of $d$, $C$ and $m$.
		Our goal is to apply Lemma \ref{lemma:path_coupling} to $P$.
		Thus, it suffices to exhibit a contractive coupling of $P(X, \cdot)$ and $P(Y, \cdot)$ for $X, Y \in \pointsets_{\region}$ with $\Delta(X,Y) = 1$.  It is thus sufficient to show that $W_\Delta(P(X,\cdot), P(Y,\cdot)) \leq 1 - \delta/\vol(\Lambda)$. 
		
		For block dynamics, recall that we choose a uniform random point $\randpoint \in \Lambda$.  
		For every fixed $y \in \region$, let $P^y$ denote the transition kernel of block dynamics given $\randpoint = y$. 
		Note that by the triangle inequality, along with \eqref{eq:wasserstein-duality}, we have
		\begin{equation}\label{eq:wasserstein-triangle-inequality}
			W_\Delta(P(X,\cdot),P(Y,\cdot)) \leq \E_{\randpoint} W_\Delta(P^{\randpoint}(X,\cdot),P^{\randpoint}(Y,\cdot))\,,
		\end{equation}
		where measurability of $y \mapsto W_\Delta(P^{\randpoint}(X,\cdot),P^{\randpoint}(Y,\cdot))$ follows from the measurability of $y \mapsto P^{y}(X,\cdot)$, $y \mapsto P^{y}(Y,\cdot)$ and $(\mu_1, \mu_2) \mapsto W_{\Delta}(\mu_1, \mu_2)$.
		
		In order to upper bound the right-hand side of \eqref{eq:wasserstein-triangle-inequality}, we will specify couplings for each fixed triple $(X,Y,y)$.  
		Since $\Delta(X,Y) = 1,$ assume without loss of generality that $Y = X \cup \{x\}$ for some $x \in \Lambda \setminus X.$  
		
		For $0 < \ell < L$ to be determined later, we construct a coupling of $P^{y}(X, \cdot)$ and $P^{y}(Y, \cdot)$ as follows:
		\begin{enumerate}[(a)] 
			\item \label{it:PC-case-1}If $\dist(x, y) \le L$, draw $\bZ \sim \mu_{B_L(y) \cap \region, \activities_{X \setminus B_L(y)}}$, and set 
			\[
			\bX = (X \setminus B_L(y)) \cup \bZ \text{ and }\bY = (Y \setminus B_L(y)) \cup \bZ.
			\]
			\item \label{it:PC-case-2} If $\dist(x, y) \ge L + \ell$, draw $(\bZ_1, \bZ_2)$ from a coupling of $\mu_{B_L(y) \cap \region, \activities_{X \setminus B_L(y)}}$ and $\mu_{B_L(y) \cap \region, \activities_{Y \setminus B_L(y)}}$ that minimizes the probability that $\bZ_1 \neq \bZ_2$, and set 
			\[
			\bX = (X \setminus B_L(y)) \cup \bZ_1 \text{ and } \bY = (Y \setminus B_L(y)) \cup \bZ_2.
			\]
			\item \label{it:PC-case-3} If $L < \dist(x, y) < L+\ell$, draw $(\bZ_1, \bZ_2)$ from a coupling of $\mu_{B_L(y) \cap \region, \activities_{X \setminus B_L(y)}}$ and $\mu_{B_L(y) \cap \region, \activities_{Y \setminus B_L(y)}}$ that minimizes the probability that $\bZ_1 \setminus B_{\ell}(x) \neq \bZ_2 \setminus B_{\ell}(x)$, draw 
			\begin{align*}
				\bZ'_1 &\sim \mu_{B_L(y) \cap B_{\ell}(x) \cap \region, \activities_{(X \setminus B_L(y)) \cup (\bZ_1 \setminus B_{\ell}(x))}} \text{ and }\\
				\bZ'_2 &\sim \mu_{B_L(y) \cap B_{\ell}(x) \cap \region, \activities_{(Y \setminus B_L(y)) \cup (\bZ_2 \setminus B_{\ell}(x))}} 
			\end{align*}
			independently, and set
			\[
			\bX = (X \setminus B_L(y)) \cup (\bZ_1 \setminus B_{\ell}(x)) \cup \bZ'_1 \text{ and } \bY = (Y \setminus B_L(y)) \cup (\bZ_2 \setminus B_{\ell}(x)) \cup \bZ'_2.
			\]
		\end{enumerate}
		It is easily checked that in each of the three cases the above defines a coupling $(\bX,\bY)$ of $P^{y}(X,\cdot),P^{y}(Y,\cdot).$  
		As such, using \eqref{eq:wasserstein-definition}, we have the bound 
		\begin{equation} \label{eq:mix-over-y-to-Delta}
			W_\Delta(P^{y}(X,\cdot),P^{y}(Y,\cdot)) \leq \E[\Delta(\bX,\bY)] \,.
		\end{equation}
		We proceed by bounding the right-hand side of \eqref{eq:mix-over-y-to-Delta} for every choice of $(X, Y, y)$ with $\Delta(X, Y) = 1$ along the same case distinction used for constructing the coupling.
		
		\noindent    \underline{Case \ref{it:PC-case-1}: $\dist(y,x) \leq L$}. 
		If $\dist(y, x) \le L$ then $\Delta(\bX, \bY) = 0$ since $X$ and $Y$ only differ on $B_L(y)$, and therefore
		\begin{align}
			\E[\Delta(\bX, \bY)] = 0. \label{eq:coupling_close}
		\end{align}

		\noindent \underline{Case \ref{it:PC-case-2}: $\dist(y,x) \geq L + \ell$}.
		Our goal is to show that
		\begin{align}
			\E[\Delta(\bX, \bY)] \le 1 + q(L) \cdot e^{-m \cdot (\dist(x, y) - L) / 2 } \label{eq:coupling_far}
		\end{align}
		for some polynomial $q$ that only depends on $\activity$, $C$, $m$ and $d$.
		We first note that, for $\dist(x, y) \ge L + \ell$, $\Delta(\bX, \bY) = 1 + \Delta(\bZ_1, \bZ_2)$.
		We then use that $\Delta(\bZ_1, \bZ_2) \le \ind{\bZ_1 \neq \bZ_2} \cdot (\size{\bZ_1} + \size{\bZ_2})$ to bound
		\begin{align*}
			\E[\Delta(\bZ_1, \bZ_2)]
			\le \E[\ind{\bZ_1 \neq \bZ_2} \cdot \size{\bZ_1}] + \E[\ind{\bZ_1 \neq \bZ_2} \cdot \size{\bZ_2}] \,.
		\end{align*}
		Using the Cauchy--Schwarz inequality, we further get
		\[
		\E[\ind{\bZ_1 \neq \bZ_2} \cdot \size{\bZ_1}] \le \sqrt{\P(\bZ_1 \neq \bZ_2)} \cdot \sqrt{\E[\size{\bZ_1}^2 ]} \,.
		\]
		By Poisson domination, we  have $\sqrt{\E[\size{\bZ_1}^2]} \le q_1(L)$ for some polynomial $q_1$.
		Applying the same argument to $\bZ_2$, we get
		\[
		\E[\Delta(\bZ_1, \bZ_2)] \le 2 \cdot q_1(L) \cdot \sqrt{\P(\bZ_1 \neq \bZ_2)} \,.
		\]
		For $\dist(y, x) \ge L + \ell$, it holds that $X \setminus B_L(y)$ and $Y \setminus B_L(y)$ only differ in $x$, and thus, by single-site SSM (\Cref{def:spatial_mixing}),
		\[
		\sqrt{\P(\bZ_1 \neq \bZ_2)}
		\le \sqrt{C \cdot \vol(B_L(y) \cap \region) \cdot e^{-m \cdot (\dist(x, y) - L)}} \le q_2(L) \cdot e^{-m \cdot (\dist(x, y) - L)/2}  \,
		\] 
		for some polynomial $q_2$, which proves \eqref{eq:coupling_far}.
		
		\noindent \underline{Case \ref{it:PC-case-3}: $L < \dist(y,x) < L + \ell$}. 
		We will show that
		\begin{align}
			\E[\Delta(\bX, \bY)] \le 1 + p(L) \cdot e^{-m \cdot \ell/2} + r(\ell) \label{eq:coupling_mid}
		\end{align}
		for polynomials $p$ and $r$ only depending on $\activity$, $C$, $m$ and $d$.
		We first note that, if $L < \dist(x, y) < L + \ell$, then 
		\[
		\Delta(\bX, \bY) = 1 + \Delta(\bZ_1 \setminus B_{\ell}(x), \bZ_2 \setminus B_{\ell}(x)) + \Delta(\bZ'_1, \bZ'_2) .
		\]
		Using Poisson domination (Lemma \ref{lem:Poisson-domination}), the expectation of the latter term is bounded by
		\begin{align*}
			\E[\Delta(\bZ'_1, \bZ'_2)]
			\le \E[\size{\bZ'_1} ] + \E[\size{\bZ'_2} ]
			\le r(\ell) , 
		\end{align*}
		for a polynomial $r$.
		For the expectation of the remaining term, using similar computations as in the case $\dist(x, y) > L + \ell$, we obtain
		\begin{align*}
			\E[\Delta(\bZ_1 \setminus B_{\ell}(x), \bZ_2 \setminus B_{\ell}(x))] 
			&\le \sqrt{\P(\bZ_1 \setminus B_{\ell}(x) \neq \bZ_2 \setminus B_{\ell}(x))} \cdot \left(\sqrt{\E[\size{\bZ_1}^2]} + \sqrt{\E[\size{\bZ_2}^2]} \right) \\
			&\le p_1(L) \cdot \sqrt{\P(\bZ_1 \setminus B_{\ell}(x) \neq \bZ_2 \setminus B_{\ell}(x))}
		\end{align*}
		for some polynomial $p_1$.
		Using single-site SSM, we have
		\[
		\sqrt{\P(\bZ_1 \setminus B_{\ell}(x) \neq \bZ_2 \setminus B_{\ell}(x))} \le  p_2(L) \cdot e^{-m \cdot \ell/2} 
		\]
		for some polynomial $p_2$, proving  \eqref{eq:coupling_mid}.
		
		Combining \eqref{eq:wasserstein-triangle-inequality} and \eqref{eq:mix-over-y-to-Delta} with the pointwise bounds in  \eqref{eq:coupling_close}, \eqref{eq:coupling_far} and \eqref{eq:coupling_mid}, we get, for $\randpoint$ uniformly random from $\region$,
		\begin{align*}
			W_{\Delta}(P(X, \cdot), P(Y, \cdot)) 
			&\le \P(L < \dist(x, \randpoint) \le L + \ell) \cdot \left(1 + r(L) + p(L) \cdot e^{-m \cdot \ell/2} \right) \\
			&\hspace{2em} + \E\left[\ind{\dist(x, \randpoint) > L + \ell} \cdot \left(1 + q(L) \cdot e^{-m \cdot (\dist(x, \randpoint) - L) / 2}\right)\right] \\
			&= \P(\dist(x, \randpoint) > L) + \P(L < \dist(x, \randpoint) \le L + \ell) \cdot \left(r(L) + p(L) \cdot e^{-m \cdot \ell/2} \right) \\
			&\hspace{2em} + q(L) \cdot \E\left[\ind{\dist(x, \randpoint) > L + \ell} \cdot e^{-m \cdot (\dist(x, \randpoint) - L) / 2} \right] \\
			&\le 1 - \frac{c \cdot L^d}{\vol(\region)} + \frac{(\hat{q}(L) + \hat{p}(L)) \cdot e^{-m\cdot \ell /2} + L^{d-1} \cdot \hat{r}(\ell)}{\vol(\region)} 
		\end{align*}
		for some $c > 0$ depending on $d$ and polynomials $\hat{q}, \hat{p}$ and $\hat{r}$ depending on $\activity$, $C$, $m$ and $d$.
		Choosing $\ell \in \Theta(\log(L))$ and $L$ large enough such that
		\[
		c \cdot L^d > (\hat{q}(L) + \hat{p}(L)) \cdot e^{-m\cdot \ell /2} + L^{d-1} \cdot \hat{r}(\ell),
		\]
		we then get $W_{\Delta}(P(X, \cdot), P(Y, \cdot)) \le 1 - \frac{\delta}{\vol(\region)}$ for some $\delta > 0$ and all $X, Y \in \pointsets_{\region}$ with $\Delta(X, Y) = 1$.
		
		Applying Lemma \ref{lemma:path_coupling} and recalling the definition of the Wasserstein distance in \eqref{eq:wasserstein-definition} yields that 
		\begin{align}
			\absolute{P^t(X, \cdot) - \mu_{\region, \activities}}_{tv} \le e^{-\delta \cdot t/\vol(\region)} \cdot \E_{\bY \sim \mu_{\region, \activities}}[\Delta(X, \bY)] \le e^{-\delta \cdot t/\vol(\region)} \cdot (\activity \cdot \vol(\region) + \size{X}) \label{eq:mixing}
		\end{align}
		for all $X \in \pointsets_{\region}$.
		In particular, the desired mixing time bound follows if we restrict to initial configurations $X$ with size bounded by some fixed polynomial in $\vol(\region)$.
		
		To extend the result to arbitrary starting configurations, suppose that $(\bX_t)_{t \in \N_0}$ is a version of the block dynamics for a sufficiently large update radius and any initial configuration $\bX_0 = X$.
		By Lemma \ref{lemma:burn_in} we can choose $t_0 \in \Theta(\vol(\region) \cdot \log(\vol(\region)/\varepsilon))$ and $s \in \Theta(\vol(\region) \cdot \log(1/\varepsilon))$ such that $\Pr[\size{\bX_{t_0}} \ge s ] \le \varepsilon/2$.
		Further, choose $t_1 \ge \frac{\vol(\region)}{\delta} \cdot \log\left(\frac{\activity \cdot \vol(\region) + s}{\varepsilon/2}\right) \in \Theta(\vol(\region) \cdot \log(\vol(\region)/\varepsilon))$.
		By \eqref{eq:mixing}, we can couple $\bX_{t_0 + t_1}$ with some $\bZ \sim \mu_{\region, \activities}$ such that $\P[\bX_{t_0 + t_1} \neq \bZ \mid \size{\bX_{t_0}} \le s] \le \varepsilon/2$.
		Using this coupling, we have
		\[
		\P[\bX_{t_0 + t_1} \neq \bZ] \le \P[\size{\bX_{t_0}} > s] + \P[\bX_{t_0 + t_1} \neq \bZ \mid \size{\bX_{t_0}} \le s] \le \varepsilon,
		\]
		and thus $\absolute{P^t(X, \cdot) - \mu_{\region, \activities}}_{tv} \le \varepsilon$ for $t \ge t_0 + t_1 \in \Theta(\vol(\region) \cdot \log(\vol(\region)/\varepsilon))$.
	\end{proof}

	\section{The canonical ensemble}\label{sec:canonical-ensemble}
	In this section we prove Theorem \ref{thm:analyticity-canonical} and Corollaries~\ref{cor:canon-hs} and~\ref{cor:canon-pd}. 
	Recall that for $\alpha > 0$ we define   
	$$\psi(\alpha) := \lim_{n \to \infty} \frac{1}{\Vol(\Lambda_n)}\log \widehat{Z}_{\Lambda_n}(\lfloor \alpha \Vol(\Lambda_n)\rfloor ) 
	$$
	where 
	$$\widehat{Z}_{\Lambda}(k) = \frac{1}{k!} \int_{\Lambda^{k}} e^{-H(\mathbf{x})}\,\diff \mathbf{x}\,.$$
	
	Our main technical result is to show the limit $\psi$ is given in terms of the grand canonical pressure $p$ (Proposition \ref{prop:psi-exists}). We note that equivalence of ensembles has been established in various settings.  For instance see \cite[Theorem~3.4]{ruelle1969statistical} for various statements about equivalence of ensembles under various differentiability assumptions. 
	
	Recall that for $\lambda>0$ we define the limiting density
	$$\alpha(\lambda)= \lim_{n \to \infty} \alpha_{\Lambda_n}(\lambda)
	=\lim_{n\to\infty}
	\frac{\E_{\bX \sim \Lambda_n,\lambda}[|\bX|]}{\Vol(\Lambda_n)}\, .
	$$
	We will show that for $\lambda\in (0,\lambda_\spec)$, this limit exists and moreover, $\alpha(\lambda)$ is strictly increasing on this interval. We may therefore define the limit
	\[
	\alpha_{\spec}
	\coloneqq
	\lim_{\lambda\uparrow\lambda_{\spec}}\alpha(\lambda)\, .
	\]

	\begin{proposition}\label{prop:psi-exists}
		Let $\phi$ be a repulsive potential that decays exponentially (see Definition \ref{def:potential_decay})   Then for all $\lambda < \lambda_\spec$, the limit defining $\alpha(\lambda)$ exists and $\alpha$ is nondecreasing on the interval $(0,\lambda_\spec)$. Moreover, we have the identity
		$$\psi(\alpha(\lambda)) = p(\lambda) - \alpha(\lambda) \log \lambda.$$  
	\end{proposition}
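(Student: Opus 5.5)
Throughout, $\Lambda_n=[-n,n]^d$ and $\lambda<\lambda_\spec$. The plan is to derive everything from the uniform analytic control of $\log Z_{\Lambda_n}$ in Theorem~\ref{thm:zero_freeness}, and then to prove equivalence of ensembles through a Legendre-transform/local-concentration argument. The key identity is
\[
\alpha_{\Lambda_n}(\lambda)=\frac{\lambda}{\Vol(\Lambda_n)}\frac{\diff}{\diff\lambda}\log Z_{\Lambda_n}(\lambda)=\lambda\, p_{\Lambda_n}'(\lambda),
\]
which is immediate from the definition of $Z$.

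\textbf{Step 1: existence and monotonicity of $\alpha$.} By Theorem~\ref{thm:zero_freeness}\ref{thm:zero_freeness:log-bound}, the functions $p_{\Lambda_n}$ extend analytically to a complex $\delta$-neighbourhood $U$ of $[0,\lambda_0]$ for any $\lambda_0<\lambda_\spec$, and they are uniformly bounded there by $M$. By Montel's theorem they form a normal family. Every subsequential limit agrees with $p$ on the real segment, since $p$ exists by Fisher/Ruelle. Hence $p_{\Lambda_n}\to p$ locally uniformly on $U$, and $p$ is analytic there. The Weierstrass theorem then gives $p_{\Lambda_n}'\to p'$. Consequently $\alpha(\lambda)=\lim_n\alpha_{\Lambda_n}(\lambda)=\lambda p'(\lambda)$ exists and is analytic. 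Monotonicity follows because $\lambda\mapsto\alpha_{\Lambda_n}(\lambda)$ is nondecreasing (equivalently, $\log Z$ is convex in $\log\lambda$, with derivative $\Var|\bX|/\Vol\ge0$), and pointwise limits of nondecreasing functions are nondecreasing.

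\textbf{Step 2: the two inequalities.} The proof starts from the exact identity
\[
Z_{\Lambda_n}(\lambda)=\sum_k \lambda^k \widehat Z_{\Lambda_n}(k), \qquad \mu_{\Lambda_n,\lambda}(|\bX|=k)=\frac{\lambda^k\widehat Z_{\Lambda_n}(k)}{Z_{\Lambda_n}(\lambda)}.
\]
Write $k_n=\lfloor\alpha(\lambda)\Vol(\Lambda_n)\rfloor$.

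The upper bound $\psi(\alpha)\le p(\lambda)-\alpha\log\lambda$ is trivial: one has $\lambda^{k_n}\widehat Z(k_n)\le Z(\lambda)$; take logarithms and divide by the volume.

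For the lower bound it suffices to show that
\[
\frac{1}{\Vol(\Lambda_n)}\log\mu_{\Lambda_n,\lambda}(|\bX|=k_n)\to0,
\]
that is, that the point count has no exponentially small mass at its typical value. I would do this in two parts.

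\emph{Concentration.} The window $|k-k_n|\le \eta\Vol(\Lambda_n)$ carries probability tending to $1$. This follows from a law of large numbers: $\Var|\bX|=\lambda\Vol\cdot\alpha_{\Lambda_n}'(\lambda)=O(\Vol)$ by the uniform derivative bounds from Step 1, together with $\alpha_{\Lambda_n}(\lambda)\to\alpha(\lambda)$.

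\emph{Flattening within the window.} Inside the window, the ratio $\widehat Z(k+1)/\widehat Z(k)$ cannot vary by exponential factors. To see this, apply the same argument at nearby activities $\lambda'=\lambda e^{\pm s}$ and use the tilting identity
\[
\mu_{\lambda'}(|\bX|=k)=\mu_\lambda(|\bX|=k)\,e^{\pm sk}\,\frac{Z(\lambda)}{Z(\lambda')}.
\]
The mass of $\mu_\lambda$ in the window is at least $1/2$, and the window has at most $2\eta\Vol+1$ integers, so some $k$ in it has mass at least $1/(4\eta\Vol+2)$. Tilting then transfers this to $k_n$ at cost $e^{O(\eta)\Vol}$, using the Lipschitz bound on $p$ near $\lambda$. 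Sending $\eta\to0$ gives
\[
\liminf_n\frac{1}{\Vol(\Lambda_n)}\log\widehat Z_{\Lambda_n}(k_n)\ge p(\lambda)-\alpha\log\lambda.
\]

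\textbf{Main obstacle.} The delicate step is this lower bound: passing from a bound on some $k$ in the window to the specific value $k_n$. The cleanest route I would try first is log-concavity-type monotonicity. The sequence $k\mapsto\widehat Z(k)$ is related to the Legendre transform of $p$, and $\psi$ exists and is concave by Fisher's theorem. Given concavity of $\psi$, $\psi(\alpha)$ is the Legendre dual of $p(e^{\mu})$ at every point where that dual is attained. Since $p$ is differentiable at $\log\lambda$ with slope $\alpha(\lambda)$, the standard duality identity gives
\[
\psi(\alpha(\lambda))=\inf_\mu\bigl[p(e^\mu)-\alpha\mu\bigr]=p(\lambda)-\alpha\log\lambda.
\]
This uses only differentiability of $p$, which comes from Step 1, together with concavity and existence of $\psi$. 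This duality route sidesteps the window argument entirely, and I would adopt it as the main proof.
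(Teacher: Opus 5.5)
Your Step~1 and the upper bound $\psi(\alpha)\le p(\lambda)-\alpha\log\lambda$ are fine and match the paper, which also gets $\alpha=\lambda p'$ from locally uniform convergence of the analytic functions controlled by Theorem~\ref{thm:zero_freeness}\ref{thm:zero_freeness:log-bound}. The gap is the lower bound, and neither of your two routes closes it as written.

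\textbf{The duality route.} Your claim that it ``uses only differentiability of $p$, together with concavity and existence of $\psi$'' is false. Write $P(\mu)=p(e^\mu)$. The inequality $\lambda^k\widehat Z_{\Lambda}(k)\le Z_{\Lambda}(\lambda)$ only gives $\psi(\alpha)\le\inf_\mu[P(\mu)-\alpha\mu]$. To get equality at $\alpha=P'(\mu_0)$ by Fenchel--Moreau/Fenchel--Young, you must already know that $P(\mu)=\sup_\alpha[\psi(\alpha)+\alpha\mu]$, i.e.\ that the grand-canonical pressure is the Legendre transform of the canonical one. The $\le$ half of that identity requires bounding $\frac{1}{\Vol(\Lambda)}\log\widehat Z_\Lambda(k)$ by $\psi(k/\Vol(\Lambda))+o(1)$ uniformly in $k$. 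That is exactly the control at specific particle numbers that you yourself flagged as the main obstacle, and concavity and existence of $\psi$ do not supply it.

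You can close this by citing the Legendre relation as classical equivalence of ensembles for translation-invariant stable tempered potentials (see the Ruelle reference in Section~\ref{sec:canonical-ensemble}). Then your argument becomes a valid and softer alternative to the paper's: the only new input from zero-freeness is differentiability of $P$ at $\log\lambda$, which then pins down $\psi(\alpha(\lambda))$. You would still need $\psi$ to be upper semicontinuous at $\alpha(\lambda)$. This holds, for example, because $\alpha(\lambda)$ lies in the interior of $\{\psi>-\infty\}$, which follows since $P'$ is analytic and non-constant; the point matters for hard-core potentials.

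\textbf{The window route.} The tilting identity cannot do what you ask of it. It multiplies the weight of each fixed $k$ by $e^{\pm sk}$ but never moves mass between particle numbers, so knowing that some $k$ in the window is heavy says nothing about $\widehat Z_{\Lambda_n}(k_n)$. The missing ingredient is a comparison across particle numbers, which the paper supplies as Lemma~\ref{lem:approximately-smooth}:
\[
\widehat Z_\Lambda(k+\ell)\le\Vol(\Lambda)^\ell\frac{k!}{(k+\ell)!}\widehat Z_\Lambda(k).
\]
This bound is one-sided, so the heavy $k$ must lie above $k_n$. In that case $\widehat Z_{\Lambda_n}(k_n)\ge(k_n/\Vol(\Lambda_n))^{k-k_n}\widehat Z_{\Lambda_n}(k)$, which loses at most a factor $e^{O(\eta)\Vol(\Lambda_n)}$ when $0\le k-k_n\le\eta\Vol(\Lambda_n)$.

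The paper places the heavy mass above $k_n$ by choosing $\lambda_n\to\lambda$ with $\E|\bX_n|=k_n$ exactly. It then uses the Michelen--Sahasrabudhe CLT to get probability $>1/3$ on $[k_n,k_n+\sigma_n]$. Your law of large numbers also works if you apply it at a slightly larger activity $\lambda'$ with $\alpha(\lambda)<\alpha(\lambda')<\alpha(\lambda)+\eta$. Such $\lambda'$ exists because $\alpha$ is analytic and non-constant, hence strictly increasing. You then let $n\to\infty$, $\eta\to0$ and $\lambda'\downarrow\lambda$, using continuity of $p$.
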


	We start by showing a basic bound of approximate flatness of the canonical partition functions $\widehat{Z}_{\Lambda}(k)$. \begin{lemma} \label{lem:approximately-smooth}
		For a repulsive potential $\phi$ and all $k,\ell \geq 0$ we have \begin{equation*}
			\widehat{Z}_\Lambda(k + \ell) \leq \Vol(\Lambda)^\ell \frac{k!}{(k+\ell)!} \widehat{Z}_{\Lambda}(k)\,.
		\end{equation*}
	\end{lemma}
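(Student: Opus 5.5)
The inequality follows from repulsivity ($\phi \ge 0$), which lets us drop interaction terms, combined with a symmetrization over which $k$ of the $k+\ell$ coordinates are kept.

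First we use repulsivity. For any tuple $(x_1,\ldots,x_{k+\ell}) \in \Lambda^{k+\ell}$, the energy $H(x_1,\ldots,x_{k+\ell})$ equals $H(x_1,\ldots,x_k)$ plus the pairwise terms involving at least one of $x_{k+1},\ldots,x_{k+\ell}$. Each of these extra terms is nonnegative because $\phi \ge 0$. Hence
\[
e^{-H(x_1,\ldots,x_{k+\ell})} \le e^{-H(x_1,\ldots,x_k)}.
\]

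Next we integrate. The right-hand side does not depend on $x_{k+1},\ldots,x_{k+\ell}$, so integrating over those $\ell$ variables contributes a factor $\Vol(\Lambda)^\ell$. This gives
\[
\int_{\Lambda^{k+\ell}} e^{-H(\mathbf{x})}\,\diff\mathbf{x} \le \Vol(\Lambda)^\ell \int_{\Lambda^k} e^{-H(x_1,\ldots,x_k)}\,\diff x_1\cdots\diff x_k = \Vol(\Lambda)^\ell\, k!\, \widehat{Z}_\Lambda(k).
\]
Dividing by $(k+\ell)!$ turns the left-hand side into $\widehat{Z}_\Lambda(k+\ell)$, and the inequality becomes exactly the claimed bound $\widehat{Z}_\Lambda(k+\ell) \le \Vol(\Lambda)^\ell \frac{k!}{(k+\ell)!}\widehat{Z}_\Lambda(k)$.

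There is no real obstacle here. The only care needed is the normalization by the factorials: $\widehat{Z}_\Lambda(k)$ carries a $1/k!$, so the unordered-to-ordered conversion must be tracked, which is done above. A hard core (where $\phi = +\infty$) causes no difficulty, because then $e^{-\phi} = 0$ and the pointwise bound still holds. The estimate could be sharpened by a factor $\binom{k+\ell}{k}^{-1}$ through averaging over which $k$ coordinates are kept, but the stated form does not require this.
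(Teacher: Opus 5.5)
Your proof is correct and is the same argument as the paper's: drop the nonnegative pair terms involving $x_{k+1},\ldots,x_{k+\ell}$, integrate those variables out to get $\Vol(\Lambda)^\ell$, and track the factorials. One caveat on your closing aside: the claimed sharpening by $\binom{k+\ell}{k}^{-1}$ is false, since for $\phi \equiv 0$ both sides equal $\Vol(\Lambda)^{k+\ell}/(k+\ell)!$, so the stated bound is already tight (averaging over which coordinates are kept gives the same bound for every choice and so yields no improvement).
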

	\begin{proof}
		Note that \begin{align*}
			\widehat{Z}_\Lambda(k + \ell) &= \frac{k!}{(k+\ell)!} \cdot \frac{1}{k!} \int_{\Lambda^{k+\ell}} e^{-H(x_1,\ldots,x_k,x_{k+1},\ldots,x_{k+\ell})} \,\diff\mathbf{x} \leq   \frac{k!}{(k+\ell)!} \cdot \frac{\Vol(\Lambda)^\ell}{k!} \int_{\Lambda^{k}} e^{-H(x_1,\ldots,x_k)} \,\diff\mathbf{x} \\
			&= \Vol(\Lambda)^\ell \frac{k!}{(k+\ell)!} \widehat{Z}_{\Lambda}(k)\,. \qedhere 
		\end{align*}
	\end{proof}

	We need the fact that the variance for the number of points in a sample  grows at least linearly.  This was proven for repulsive potentials by Ginibre \cite{ginibre1967rigorous}  and more generally by Dereudre--Flimmel \cite{dereudre2024nonuniformity}.

	\begin{lemma}\label{lem:non-hyper-uniform}
		For fixed $\lambda > 0$ and repulsive, tempered potential $\phi$ there is a constant $c = c(\lambda,\phi)$ so that $$\Var_{\mathbf{X} \sim \Lambda,\lambda}(|\mathbf{X}|) \geq c \Vol(\Lambda)\,.$$
	\end{lemma}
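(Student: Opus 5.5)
Let $N = |\bX|$ with $\bX \sim \mu_{\Lambda,\lambda}$. I would prove the lower bound by conditioning on a sparse configuration and exploiting the independent local fluctuations that remain. Tile $\Lambda$ by cubes $Q_1,\ldots,Q_M$ of a fixed side length $s$, with $s$ chosen later depending on $\phi$ and $\lambda$. Keep only a well-separated subfamily $\{Q_i\}_{i \in I}$, say every $3^d$-th cube, so that $|I| \geq c_d \Vol(\Lambda)$ up to boundary effects. Let $U = \bigcup_{i\in I} Q_i$. By the law of total variance,
\[
\Var(N) \geq \E\big[\Var(N \mid \bX \setminus U)\big].
\]
By the finite-volume DLR equation \eqref{eq:finite_DLR}, conditionally on $\bX\setminus U = Y$, the configuration $\bX \cap U$ has law $\mu_{U,\lambda_Y}$. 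In that law the cubes $Q_i$ still interact with each other through $\phi$. That interaction is the difficulty, so I would first treat the case where it is absent.

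If $\phi$ had finite range and the cubes were separated by more than that range, the pieces $\bX \cap Q_i$ would be conditionally independent. In that case $\Var(N \mid Y) = \sum_{i\in I} \Var(|\bX\cap Q_i| \mid Y)$, and it would suffice to bound each term below by a constant, uniformly over boundary conditions $Y$ in a set of probability bounded below.

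To obtain the per-cube bound, I would note that for any law $\nu$ of a count $K$ we have $\Var(K) \geq \tfrac14 \min\{\P(K=0), \P(K=1)\}$. For a single cube $Q$ with pinned activity $\lambda_Y$, write $\P(K=1)/\P(K=0) = \int_Q \lambda_Y(u)\,\diff u$. This uses the fact that the one-point partition function is exactly $\int_Q \lambda_Y(u)\,\diff u$. The ratio $\P(K=0) \geq 1/Z_Q(\lambda) \geq e^{-\lambda s^d}$ holds by Poisson domination. So it remains to bound $\int_Q \lambda_Y$ from below. By Jensen's inequality,
\[
\int_Q \lambda_Y(u)\,\diff u \geq \lambda \Vol(Q)\, \exp\Big(-\tfrac{1}{\Vol(Q)} \int_Q \sum_{y\in Y} \phi(u,y)\,\diff u\Big).
\]
I would then restrict to typical $Y$. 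By Poisson domination and Markov's inequality, a positive fraction of the cubes (in expectation) have $\sum_{y\in Y}\int_Q \phi(u,y)\,\diff u$ bounded.

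This step requires some integrability of $\phi$ itself, not just of $1-e^{-\phi}$. Hard cores break Jensen's inequality here. I would handle hard cores instead by using the event that the neighbouring cubes of $Q_i$ are empty, which has probability at least $e^{-\lambda\cdot\mathrm{const}}$ by Poisson domination. On that event, the pinned activity is at least $\lambda e^{-\text{(tail interaction)}}$ on a sub-ball of $Q_i$.

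The main obstacle is removing the assumption of conditional independence when $\phi$ has infinite range or the cubes still interact. My first attempt would be to condition also on $\bX\cap(U\setminus \bigcup_i B_i)$, keeping only tiny balls $B_i \subset Q_i$ of radius $\eta$. Each ball then holds $0$ or $1$ point with probabilities of comparable order $\eta^d$. A second conditioning via total variance reduces the problem to a product-like measure on $\{0,1\}^I$ in which the interactions between balls are weak: they are at most $\sum_j (1-e^{-\phi})$, which is summable by temperedness. The variance of the sum of these nearly independent indicators could then be bounded below by a Dobrushin-type or pairwise covariance estimate. Alternatively, one could simply cite Ginibre \cite{ginibre1967rigorous} and Dereudre--Flimmel \cite{dereudre2024nonuniformity}, as the paper does, for the general case.
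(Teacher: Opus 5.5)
Your fallback of citing Ginibre and Dereudre--Flimmel is exactly what the paper does; it gives no proof of its own. The self-contained argument you sketch, however, has a genuine gap in the infinite-range case.

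For finite-range $\phi$ it works. Taking $s$ larger than the range makes $\mu_{U,\lambda_Y}$ a product over the cubes. You then do not even need Jensen or the hard-core patch, since $\E\bigl[\int_Q \lambda_Y\bigr]\ge \lambda\Vol(Q)e^{-\lambda C_\phi}$ by Poisson domination and $\int_Q\lambda_Y\le\lambda\Vol(Q)$.

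The paper, though, needs the lemma in \eqref{eq:finitep-convex} and Proposition~\ref{prop:psi-exists} for infinite-range potentials such as \eqref{eq:positive-def-phi}. There your decoupling step is only announced, and the justification you give is not valid. Temperedness bounds $\sup_x\int(1-e^{-\phi(x,y)})\,\diff y$. It does not control the lattice sum $\sum_j\sup_{x\in B_i,\,y\in B_j}(1-e^{-\phi(x,y)})$, which governs the residual coupling between your small balls, without extra regularity of $\phi$. Even where that sum is finite, two things remain undone: the Dobrushin-type covariance bound, and the contribution of two or more points in a ball.

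The missing idea is that no spatial decoupling is needed, because the GNZ equation \eqref{eq:gnz} handles any range directly. Write $N=|\bX|$, $W(x,\bX)=\sum_{y\in\bX}\phi(x,y)$ and $G=\lambda\int_\Lambda e^{-W(x,\bX)}\,\diff x$. Applying \eqref{eq:gnz} once and twice gives $\E G=\E N$, $\E[N^2]=\E N+\E[NG]$ and $\E[G^2]-\E[NG]=R$, where
\[
R=\lambda^2\iint_{\Lambda^2}\E\bigl[e^{-W(x,\bX)-W(y,\bX)}\bigr]\bigl(1-e^{-\phi(x,y)}\bigr)\,\diff x\,\diff y\le\lambda C_\phi\,\E N .
\]
Hence $\Cov(N,N-G)=\E N$ and $\Var(N-G)=\E N+R$. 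Cauchy--Schwarz then gives $\Var(N)\ge(\E N)^2/\Var(N-G)\ge\E N/(1+\lambda C_\phi)$. Finally, $\E N=\lambda\int_\Lambda\E[e^{-W(x,\bX)}]\,\diff x\ge\lambda e^{-\lambda C_\phi}\Vol(\Lambda)$ by Poisson domination, exactly as in \eqref{eq:lower-bound-exponential-added-energy}. This covers hard cores, infinite range and arbitrary bounded measurable $\Lambda$ at once, with $c=\lambda e^{-\lambda C_\phi}/(1+\lambda C_\phi)$.
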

	
	We now prove Proposition \ref{prop:psi-exists}.
	\begin{proof}[Proof of Proposition \ref{prop:psi-exists}]
		
		Note that for $\lambda > 0$ and $\Lambda$ that $$\frac{\E_{\bX \sim \Lambda,\lambda} |\bX|}{\Vol(\Lambda)} = \lambda p_{\Lambda}'(\lambda) = \alpha_{\Lambda}(\lambda)\,.$$
		
		By Theorem~\ref{thm:spatial-mixing} and Theorem \ref{thm:zero_freeness}\ref{thm:zero_freeness:log-bound}, we have that for $\Lambda_n = [-n,n]^d$  and each $\lambda_0 \in [0,\lambda_\spec)$, the sequence of analytic functions $\lambda \mapsto \alpha_{\Lambda_n}(\lambda)$ converge uniformly on a ball containing $\lambda_0$ in the complex plane.  In particular, this implies that we may swap limits and derivatives \begin{equation}\label{eq:finite-volume-density}
			\lim_{n \to \infty} \frac{\E_{\bX \sim \Lambda_n,\lambda} |\bX|}{\Vol(\Lambda_n)} = \lambda p'(\lambda) = \alpha(\lambda) \,.
		\end{equation}
		
		We note that $\alpha(\lambda)$ is strictly increasing since for each $\Lambda$ we have 
		\begin{align}\label{eq:finitep-convex}
			\frac{\diff^2}{\diff t^2}( p_{\Lambda}(\lambda e^t)) \big|_{t = 0} = \frac{\Var_{X \sim \Lambda,\lambda}(|\bX|)}{\vol(\Lambda)} \geq c > 0
		\end{align}
		where the latter is by Lemma \ref{lem:non-hyper-uniform}.

		We now turn to proving the identity in the statement of the proposition. 
		Fix $\lambda\in (0,\lambda_\spec)$ and choose $\lambda_n$ approaching $\lambda$.  First  
		let $\bX_n$ be a sample from $\mu_{\Lambda_n,\lambda_n}$ and set $X_n = |\bX_n|$. Define $m_n = \E X_n$ and $\sigma_n^2 = \Var(X_n)$.   Choose $\lambda_n$ so that $m_n = \lfloor \alpha(\lambda) \Vol(\Lambda_n) \rfloor$ and note that by \eqref{eq:finite-volume-density} we have $\lambda_n \to \lambda$ as $n \to \infty$.  Further, we note that $\lambda_n$ is unique due to Lemma \ref{lem:non-hyper-uniform}.  
		
		Using the uniform bound on the complex logarithm of the partition function guaranteed by Theorem \ref{thm:zero_freeness}\ref{thm:zero_freeness:log-bound}, a central limit theorem of Michelen-Sahasrabudhe \cite[Theorem 1.8]{michelen2025central} shows that, e.g.\, \begin{equation} \label{eq:CLT}
			\lim_{n \to \infty} \P\left( \frac{X_n - m_n}{\sigma_n} \in (0,1) \right) = \int_0^1 \frac{1}{\sqrt{2\pi}} e^{-x^2 / 2}\,\diff x > \frac{1}{3}\,.
		\end{equation}
		Lemma \ref{lem:non-hyper-uniform} shows that $\sigma_n = \Omega(\Vol(\Lambda_n)^{1/2})$ while Cauchy's integral formula along with Theorem \ref{thm:zero_freeness}\ref{thm:zero_freeness:log-bound} shows \begin{equation}\label{eq:variance-upper-bound}
			\frac{\sigma_n^2}{\Vol(\Lambda_n)} = \frac{\diff^2}{\diff t^2} \log Z_{\Lambda_n} (\lambda_n e^{t}) \bigg|_{t = 0} \leq C
		\end{equation}
		for some constant $C$ depending only on $M$ and $\delta$ from Theorem \ref{thm:zero_freeness}.  For $n$ sufficiently large we have \begin{align*}
			\frac{1}{3} \leq \P\left(\frac{X_n - m_n}{\sigma_n} \in (0,1) \right) \leq \frac{\lambda_n^{m_n}}{Z_{\Lambda_n}(\lambda_n)} \sum_{\ell = 0}^{\sigma_n} \lambda_n^\ell \widehat{Z}_{\Lambda_n}(m_n + \ell) \leq \left( \frac{\lambda_n^{m_n} \widehat{Z}_{\Lambda_n}(m_n)}{Z_{\Lambda_n} (\lambda_n)} \right) \cdot \left( \sum_{\ell = 0}^{\sigma_n} \frac{\Vol(\Lambda_n)^{\ell}}{(m_n + \ell)_\ell} \right)
		\end{align*}
		where $(a)_b=a!/(a-b)!$ denotes the falling factorial and
		in the first inequality we used \eqref{eq:CLT} and in the latter we used Lemma \ref{lem:approximately-smooth}.  If we write $V_n = \Vol(\Lambda_n)$ then note that \begin{equation*}
			\sum_{\ell = 0}^{\sigma_n} \frac{V_n^\ell}{(m_n + \ell)_\ell} \leq \sum_{\ell = 0}^{\sigma_n} \left(\frac{V_n}{m_n} \right)^{\ell}  \leq \sum_{\ell = 0}^{\sigma_n} \left(\alpha(\lambda_\ast)^{-1} + o(1) \right)^{\ell} \leq \exp(O(\sqrt{V_n}))
		\end{equation*}
		where the latter bound is using \eqref{eq:variance-upper-bound}.  Since we have the trivial inequality $\frac{\lambda_n^{m_n} \widehat{Z}_{\Lambda_n}(m_n)}{Z_{\Lambda} (\lambda_n)} \leq 1$ we see that \begin{equation*}
			\lim_{n \to \infty} \frac{1}{V_n}\left| \log \widehat{Z}_{\Lambda_n}(m_n) + m_n \log \lambda_n - \log Z_{\Lambda_n}(\lambda_n) \right| = 0
		\end{equation*}    
		completing the proof.
	\end{proof}

	\begin{proof}[Proof of Theorem \ref{thm:analyticity-canonical}]
		For $\alpha_0 \in (0,\alpha_\spec)$, let $\lambda_0 \in (0,\lambda_\spec)$ be the unique $\lambda_0$ so that $\alpha(\lambda_0) = \alpha_0$.  By~\eqref{eq:finite-volume-density}, Theorem~\ref{thm:spatial-mixing} and Theorem \ref{thm:zero_freeness}\ref{thm:zero_freeness:analyticity} we have that $\lambda \mapsto \alpha(\lambda) = \lambda p'(\lambda)$ in analytic in a neighborhood of $\lambda_0$.  By ~\eqref{eq:finitep-convex} and Lemma \ref{lem:non-hyper-uniform} we have that $\alpha'(\lambda_0) \neq 0$.  The analytic inverse function theorem thus guarantees an open set $U \subset \C$ with $\alpha_0 \in U$ on which we have $\alpha \mapsto \lambda(\alpha)$ is an analytic inverse of the function $\lambda \mapsto \alpha(\lambda)$.  Using Proposition \ref{prop:psi-exists} completes the proof.
	\end{proof}
	
	Corollary~\ref{cor:canon-hs} follows immediately.
	
	\begin{proof}[Proof of Corollary~\ref{cor:canon-hs}]
		For the case of the hard-sphere potential, Proposition~\ref{prop:hard-spheres} shows that $\lambda_\spec>3^{-d/2}$ for $d$ sufficiently large and so combining Theorem \ref{thm:analyticity-canonical} with \eqref{eq:LB-density} shows that $\psi$ is analytic up to density $c d 2^{-d}$ for each fixed $c < \log(2/\sqrt{3})$ for $d$ sufficiently large.
	\end{proof}
	
	Corollary~\ref{cor:canon-pd} takes a little more work. By Theorem \ref{thm:analyticity-canonical}, our goal is to show that for potentials with $\lambda_\spec=+\infty$, the density $\alpha(\lambda)$ tends to $\infty$ as $\lambda\to\infty$. To this end we need two short lemmas.
	
	The first lemma says that if $\phi$ satisfies the conditions Corollary~\ref{cor:canon-pd}, then $\phi(x,0)$ is uniformly bounded as soon as $x$ is bounded away from $0$. In particular, $\phi$ cannot have a hard core.   
	
	\begin{lemma}\label{lem:impossible-core}
		Let $\phi$ be a repulsive, translation invariant potential that decays exponentially. If $\lambda_\spec=+\infty$ then for all $r>0$,
		\[
		\esssup_{\|x\|\ge r}\phi(x,0)<\infty\, .
		\]
	\end{lemma}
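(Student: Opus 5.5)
The plan is to translate $\lambda_\spec=+\infty$ into positive-definiteness of $g(x)=1-e^{-\phi(x,0)}$ and then use a Fourier/Bochner argument to show that $g$ cannot approach $1$ away from the origin. By \eqref{eq:lambda-spec-translation-invariant} (Lemma \ref{lem:fourier-criterion}), $\lambda_\spec=+\infty$ means $\widehat g(\xi)\ge 0$ for a.e.\ $\xi$. Since $\phi$ decays exponentially, $0\le g\le \min\{1,Be^{-\alpha\|x\|}\}$, so $g\in L^1\cap L^\infty$. Hence $\widehat g$ is continuous, and $\widehat g\ge 0$ holds everywhere. I would then show $\widehat g\in L^1$. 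Take a Gaussian approximate identity $\eta_t$ and note that $\int \widehat g(\xi)\widehat{\eta_t}(\xi)\,\diff\xi=(2\pi)^d(g*\eta_t)(0)\le (2\pi)^d\|g\|_\infty$, with constants depending on the Fourier normalization. Monotone convergence, valid because $\widehat g\ge 0$, gives $\|\widehat g\|_1<\infty$. Fourier inversion then shows $g$ agrees a.e.\ with a continuous positive-definite function $\tilde g(x)=c_d\int \widehat g(\xi)e^{i\xi\cdot x}\,\diff\xi$.

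The key inequality is that positive-definiteness forces $|\tilde g(x)|\le \tilde g(0)$, together with the strict version $|\tilde g(x)|<\tilde g(0)$ for $x\neq 0$. Equality $\tilde g(x)=\tilde g(0)$ would require $e^{i\xi\cdot x}=1$ for $\widehat g$-a.e.\ $\xi$. The support of $\widehat g$ is nonempty and open, since $\widehat g$ is continuous, nonnegative and not identically zero; we may assume $g\not\equiv 0$, as otherwise the lemma is trivial. No open set lies inside the hyperplanes $\{\xi\cdot x\in 2\pi\mathbb Z\}$, so equality fails, and $\tilde g(x)<\tilde g(0)\le 1$ for $x\ne0$. (If $\tilde g(0)<1$, we are already done.) By continuity and compactness, $\max_{r\le\|x\|\le R}\tilde g(x)<1$ for any $R$. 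For large $\|x\|\ge R$, exponential decay gives $g\le Be^{-\alpha R}\le 1/2$. Together these give $\operatorname{ess\,sup}_{\|x\|\ge r} g(x)\le 1-\eta$ for some $\eta>0$, so $\phi(x,0)=-\log(1-g(x))\le \log(1/\eta)$ a.e.\ on $\{\|x\|\ge r\}$.

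The main obstacle I expect is justifying the passage from the a.e.\ nonnegativity of $\widehat g$ to integrability of $\widehat g$ and continuity of $g$ (up to null sets). This requires the approximate-identity argument to be carried out carefully with the paper's Fourier convention, and the essential supremum in the statement is precisely what accommodates the a.e.\ identification. The strictness step also needs care: the claim is that $\{\xi:\xi\cdot x\in2\pi\mathbb Z\}$ is a countable union of hyperplanes, hence Lebesgue-null, while $\widehat g\,\diff\xi$ has positive mass on an open set.
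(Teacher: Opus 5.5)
Your proof is correct, and its overall skeleton matches the paper's: positive-definiteness of $g$, a continuous representative with a spectral representation, the strict inequality $g(x)<g(0)$ for $x\neq 0$, and then compactness on an annulus combined with exponential decay. The two middle steps are justified differently, though.

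For continuity, the paper cites the general fact that a measurable positive-definite function agrees a.e.\ with a continuous one, and then applies Bochner's theorem to obtain some finite spectral measure $\mu$. You instead use $g\in L^1\cap L^\infty$ to prove $\widehat g\in L^1$ via a Gaussian approximate identity and monotone convergence. The spectral measure is then explicitly $\widehat g\,\diff\xi$ (up to normalization), and Fourier inversion gives the continuous version directly, which is more self-contained.

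For strictness, the paper shows that $g(h)=g(0)$ forces $e^{i\langle h,\xi\rangle}=1$ for $\mu$-a.e.\ $\xi$. Hence $g(\cdot)=g(\cdot-h)$, and this periodicity is incompatible with $g\to 0$ at infinity unless $h=0$. That argument works for an arbitrary, possibly singular, spectral measure, but it needs the decay at that step. Yours needs no decay there. It uses only that a nonzero absolutely continuous measure cannot be carried by the Lebesgue-null union of hyperplanes $\{\xi\cdot x\in 2\pi\mathbb Z\}$, which is exactly what your integrability step buys.

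Your separate treatment of the case $\tilde g(0)<1$ is also slightly more careful than the paper. The paper asserts $g(0)=1$ to make $\mu$ a probability measure. That need not hold for the continuous representative; for example, $g(x)=\tfrac12 e^{-\|x\|^2}$ satisfies all the hypotheses. The paper's argument survives anyway, since that case is trivial.
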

	\begin{proof}
		Let $g(x)=1-e^{-\phi(x,0)}$.
		Since $\lambda_\spec=+\infty$, $g$ is positive definite and measurable and therefore continuous (modifying $g$ on a null set if necessary, see e.g. \cite[Theorem 3.10.20, page 221]{bogachev2007measure}). Thus Bochner's theorem implies that there is a measure $\mu$ so that $$g(x) = \int_{\R^d} e^{i \langle x, \xi\rangle} \diff\mu(\xi)\,.$$ Since $g(0) = 1$, we in fact have that $\mu$ is a probability measure. Note that
		\[
		g(h)\leq g(0)=1
		\]
		for all $h\in \R^d$. If $g(h)=1$ then we must have that $e^{-i\langle h, \xi \rangle}  = 1$ for $\mu$-almost every $\xi$.  We then see that for all $x \in \R^d$  we have $$g(x) = \int_{\R^d} e^{i\langle x, \xi\rangle} \diff\mu(\xi) =  \int_{\R^d} e^{i\langle x - h, \xi\rangle} e^{i\langle h, \xi\rangle}\diff\mu(\xi) =  \int_{\R^d} e^{i\langle x - h, \xi\rangle}\diff\mu(\xi) = g(x - h)\, .$$
		Since $g(x)\to0$ as $\norm{x}\to\infty$ (since $\phi$ decays exponentially) we conclude that $h=0$. Since $g$ is continuous and $g(x)\to0$ as $\norm{x}\to\infty$ we have
		\[
		\esssup_{\|x\|\ge r}g(x)<1\, .
		\]
		The result follows. 
	\end{proof}
	
	We now show that the canonical pressure cannot equal $-\infty$ when $\lambda_{\spec} = +\infty$.  Note that in the case when $\phi$ has a hard core, one in fact does have $\psi(\alpha) = -\infty$ for all $\alpha$ sufficiently large.
	
	\begin{lemma}\label{lem:psi-finite}
		Let $\phi$ be a repulsive, translation invariant potential that decays exponentially. If $\lambda_\spec=+\infty$ then $\psi(\alpha)>-\infty$ for all $\alpha\in (0,\infty)$.
	\end{lemma}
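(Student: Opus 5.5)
We need to show $\widehat{Z}_{\Lambda_n}(k_n)$ with $k_n=\lfloor \alpha \Vol(\Lambda_n)\rfloor$ satisfies $\frac{1}{\Vol(\Lambda_n)}\log \widehat{Z}_{\Lambda_n}(k_n)\ge -c(\alpha)>-\infty$. The plan is to lower bound the canonical integral by restricting to configurations that are spread out, and to use Lemma \ref{lem:impossible-core} to keep the energy of each point under control.

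\textbf{Step 1 (the grid).} Fix $r>0$, to be chosen as a function of $\alpha$. Partition $\Lambda_n$ into small cubes of side $s$ with $s^{-d}\ge 2\alpha$, which gives at least $k_n$ cubes for $n$ large. Pick $k_n$ of these cubes whose centers form a subset of a lattice of spacing $s$. In each chosen cube put one point, confined to a tiny subcube of side $s/4$ around the center. Then any two points are at distance at least $s/2=:r$.

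\textbf{Step 2 (energy bound).} By Lemma \ref{lem:impossible-core} we have $M_r:=\esssup_{\|x\|\ge r}\phi(x,0)<\infty$, and exponential decay gives $\phi(x,0)\le -\log(1-Be^{-\alpha_0\|x\|})\le 2Be^{-\alpha_0\|x\|}$ for large $\|x\|$. Combine the two: $\phi(x,0)\le M_r$ on $r\le\|x\|\le T$, and the exponential bound beyond $T$. Since the points sit near a lattice of spacing $s$, for each point $x_i$ the sum $\sum_{j\ne i}\phi(x_i,x_j)$ is bounded by $M_r\cdot\#\{j:\|x_i-x_j\|\le T\}$ plus a convergent lattice sum of exponentials. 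This is at most a constant $E=E(r,\alpha,\phi,d)$. Hence $H(\mathbf x)\le \frac{1}{2}Ek_n$ almost everywhere on the chosen set. The one technical point is that the esssup is only an almost-everywhere bound, but that is harmless because we integrate.

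\textbf{Step 3 (volume count).} The set of such configurations, taken as ordered tuples, has Lebesgue measure at least $k_n!\,(s/4)^{dk_n}$, since we may assign points to cubes in $k_n!$ orders. Therefore
\[
\widehat{Z}_{\Lambda_n}(k_n)\ge \frac{1}{k_n!}\,k_n!\,(s/4)^{dk_n}e^{-Ek_n/2},
\]
so $\frac{1}{\Vol(\Lambda_n)}\log\widehat{Z}_{\Lambda_n}(k_n)\ge \alpha\bigl(d\log(s/4)-E/2\bigr)-o(1)$. This is finite, which gives $\psi(\alpha)>-\infty$.

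The main obstacle is Step 2, namely bounding the energy uniformly despite having only an esssup bound at distances at least $r$ together with exponential decay. The lattice-sum argument handles it. Everything else is routine volume counting.
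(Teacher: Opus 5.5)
Your proof is correct and follows essentially the paper's argument: you place one point in a small cell around each of $k_n$ sites of a lattice whose density exceeds $\alpha$, use Lemma~\ref{lem:impossible-core} together with exponential decay to bound $H$ linearly in $k_n$, and count volume. The differences are minor. You fix a single choice of sites instead of keeping the paper's harmless $\binom{M_n}{k_n}$ factor, and you spell out the lattice-sum and null-set details that the paper leaves implicit.
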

	\begin{proof}
		Fix a density $\alpha>0$.  Choose a lattice $\mathcal L\subset\R^d$ of
		density $\eta>\alpha$ i.e.\ so that 
		\[
		M_n:=|\mathcal L \cap \Lambda_n|= (\eta +o(1)) \vol(\Lambda_n)\, .
		\]
		Let $s>0$ be the minimum distance between two distinct elements in $\mathcal L$. Let $\rho=s/4$ so that the balls $B(z,\rho)$, $z\in\mathcal L$, are disjoint and have mutual distance at least $s/2$. 
		Let
		\[
		k_n=\lfloor \alpha\vol(\Lambda_n)\rfloor 
		\]
		and note that $M_n\ge k_n$ for large $n$ since $\eta>\alpha$.
		
		Suppose now that $X\subseteq \Lambda_n$ is a configuration of size $k_n$ such that $|X\cap B(z,\rho)|=1$ for exactly $k_n$ elements $z\in \mathcal L$. Note that $\norm{x-y}\geq s/2$ for all distinct $x,y\in X$. 
		By Lemma~\ref{lem:impossible-core} and the fact the $\phi$ decays exponentially we have 
		\[
		H(X)\leq C k_n\, ,
		\]
		for some constant $C$ depending only on $s,\phi$.
		Consequently,
		\[
		\widehat Z_{\Lambda_n}(k_n)
		\ge
		\binom{M_n}{k_n}
		\vol(B(0,\rho))^{k_n}
		e^{-Ck_n}.
		\]
		Taking logarithms, dividing by $\vol(\Lambda_n)$, and sending $n\to\infty$
		gives a finite lower bound for $\psi(\alpha)$.  
	\end{proof}
	
	\begin{proof}[Proof of Corollary~\ref{cor:canon-pd}]
		Since $\lambda_{\spec}=\infty$, Theorem~\ref{thm:zero_freeness} gives
		analyticity of the infinite-volume pressure $p(\lambda)$ for every
		$\lambda>0$. Thus
		\[
		\alpha(\lambda)
		=
		\lambda p'(\lambda)
		\]
		is analytic for every $\lambda>0$.
		By Theorem~\ref{thm:analyticity-canonical} it suffices to show that 
		$\alpha(\lambda)\to\infty$ as $\lambda\to\infty$.
		
		Let $P(t):=p(e^t)$ and note that $P'(t)=\alpha(e^t)$. Moreover, by~\eqref{eq:finitep-convex} we see that $P$ is a limit of convex functions and is therefore convex.

		Fix $a>0$.  For $k_n=\lfloor a\vol(\Lambda_n)\rfloor$, the
		grand-canonical partition function satisfies
		\[
		Z_{\Lambda_n}(e^t)
		\ge
		e^{t k_n}\widehat Z_{\Lambda_n}(k_n).
		\]
		Taking logarithms, dividing by $\vol(\Lambda_n)$ and sending $n\to\infty$, we obtain
		\begin{align}\label{eq:pressure-lb}
			P(t)
			\ge
			at+\psi(a)\, ,
		\end{align}
		where $\psi(a)>-\infty$
		by Lemma~\ref{lem:psi-finite}.
		
		Suppose, for contradiction, that $\alpha(e^t)=P'(t)$ does not tend to
		$\infty$.  Since $P$ is convex, this would imply that $P'(t)\le A$ for all sufficiently large $t$, for
		some finite $A$.  Consequently,
		\[
		P(t)\le P(t_0)+A(t-t_0)
		\qquad
		\text{for all }t\ge t_0.
		\]
		Choosing $a>A$, this contradicts
		\eqref{eq:pressure-lb} for large $t$.
	\end{proof}

	\section{Computing \texorpdfstring{$\lambda_\spec$}{lambda-spec}} \label{sec:lambda-spec}
	
	We begin with a proof that in the case of a translation invariant potential, one can write $\lambda_\spec$ in terms of the Fourier transform of the associated function $g$:
	
	\begin{lemma}\label{lem:fourier-criterion}
		Suppose that $\phi$ is translation invariant, i.e.\ $\phi(x,y) = h(x - y)$ for some $h$.  Define $g(x) = 1 - \exp(-h(x))$ and assume that $\phi$ is tempered and repulsive.  Then $$\sup_{f \in L^2 : \|f \|_2 = 1} \iint f(x) f(y) (e^{-\phi(x,y)} - 1)\,\diff x\,\diff y = \esssup_{\xi} (-\widehat{g}(\xi))\,.$$
	\end{lemma}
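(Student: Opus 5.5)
The identity says that the quadratic form of the convolution operator $T f = g * f$ has top spectral value equal to the essential supremum of its Fourier multiplier, up to a sign. We will prove it by diagonalizing $T$ with Plancherel. Since $e^{-\phi(x,y)}-1 = -g(x-y)$, the quantity to compute is
\[
\sup_{\|f\|_2=1} -\langle f, g * f\rangle .
\]
Temperedness together with repulsiveness gives $0\le g\le 1$ and $\int g<\infty$, so $g\in L^1\cap L^\infty$. Hence $\widehat g$ is a bounded continuous function, and Young's inequality shows that $f\mapsto g*f$ is bounded on $L^2$. Moreover $\phi$ is symmetric, so $h(x)=h(-x)$, and therefore $g$ is even and $\widehat g$ is real-valued. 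In particular $\esssup(-\widehat g)$ is a genuine real number, in fact a supremum, by continuity.

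The upper bound comes from Plancherel and the convolution theorem. With the normalization $\widehat f(\xi)=\int f(x)e^{-2\pi i\langle x,\xi\rangle}\,dx$, for real $f\in L^2$ we get
\[
\iint f(x)f(y)\,(-g(x-y))\,dx\,dy = -\int \widehat g(\xi)\,|\widehat f(\xi)|^2\,d\xi \le \esssup(-\widehat g)\,\|f\|_2^2 .
\]
The first equality is justified first for Schwartz $f$, and then extended to all of $L^2$ by density, using that the bilinear form is bounded on $L^2$. Some care is needed with the Fourier normalization: under other conventions a constant $(2\pi)^{d/2}$ appears in front of $\widehat g$, and I will fix the convention that makes the statement exact.

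For the matching lower bound, fix $\xi_0$ where $-\widehat g$ is within $\eta$ of its supremum. By continuity, $-\widehat g>\sup(-\widehat g)-2\eta$ on a small ball $U$ around $\xi_0$, and by evenness also on $-U$. I will choose a real $f$ whose Fourier transform is supported in $U\cup(-U)$, for example $\widehat f = \psi(\xi-\xi_0)+\psi(-\xi-\xi_0)$ with $\psi$ a real, even bump supported in a tiny ball. This gives $\widehat f(-\xi)=\overline{\widehat f(\xi)}$, so $f$ is real. Plugging this $f$ into the identity above gives a value of at least $\sup(-\widehat g)-2\eta$ times $\|f\|_2^2$. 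Letting $\eta\to0$ proves the equality.

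The one subtle point, which is the main place to be careful, is the restriction to real-valued $f$; the Hermitian symmetrization above handles it.

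In the degenerate case $\sup(-\widehat g)\le 0$, the same argument still shows that the supremum over $f$ equals $\sup(-\widehat g)$, because $\widehat g(\xi)\to0$ as $|\xi|\to\infty$ by Riemann–Lebesgue. Taking $f$ with Fourier support near infinity makes the form arbitrarily close to $0$, so the value is exactly $\max(0,\sup(-\widehat g))$. This is consistent with the convention $\lambda_\spec=+\infty$ when the supremum is $0$.
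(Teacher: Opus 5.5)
Your proof is correct and follows the paper's route. Plancherel turns the form into $-\int \widehat g\,|\widehat f|^2$, which gives the upper bound, and the lower bound comes from concentrating $\widehat f$ symmetrically where $-\widehat g$ is near its supremum; the paper does this with the normalized indicator of a symmetric finite-volume subset of the near-extremal level set rather than a smooth bump, so it never needs continuity of $\widehat g$. Your separate ``degenerate case'' paragraph is unnecessary: Riemann--Lebesgue already forces $\sup(-\widehat g)\ge 0$, so $\max(0,\sup(-\widehat g))=\sup(-\widehat g)$, and your general argument covers this case since it never uses attainment of the supremum.
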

	\begin{proof}
		We will show each inequality.  Let $f \in L^2$ with $\|f\|_2 = 1$.  Since $g \in L^1$ and $f \in L^2$ we may apply Parseval's theorem to see \begin{equation} \label{eq:parseval-application}
			\iint f(x) f(y)  g(x - y) \,dx\,dy =  \int f(x) (g * f)(x) \,dx =  \int \widehat{g}(\xi) |\widehat{f}(\xi)|^2  \,d\xi \geq  \essinf_\xi \widehat{g}(\xi) 
		\end{equation}
		where in the last bound we used that $\|\widehat{f}\|_2 = \|f \|_2 = 1$ by Plancherel's theorem. Multiplying by $-1$ shows the left hand side is at most the right-hand side in the lemma.  For the reverse, for each $\eps > 0$ we want to find $f \in L^2$ with $\|f \|_2 = 1$ so that $$\iint f(x) f(y) (- g(x - y))\,\diff x\,\diff y \geq \esssup_{\xi}(-\widehat{g}(\xi)) - \eps\,.$$
		Let $A \subset \{\xi : \widehat{g}(\xi) \leq \essinf \widehat{g} + \eps\}$ be a finite volume measurable set which we may assume to be symmetric since $g$ is symmetric.  Define $f$ via $\widehat{f}(\xi) = \ind{A} / \Vol(A)^{1/2}$.  Note that by Parseval we have that $\|f\|_2 = 1$.  Apply \eqref{eq:parseval-application} to see $$\iint f(x) f(y)  g(x - y) \,dx\,dy =  \int \widehat{g}(\xi) |\widehat{f}(\xi)|^2  \,d\xi = \frac{1}{\Vol(A)} \int_A \widehat{g}(\xi)\,\diff \xi \leq \essinf_\xi \widehat{g}(\xi) + \eps$$
		completing the proof.
	\end{proof}

	We now deduce Corollary \ref{cor:positive-definite}:
	
	\begin{proof}[Proof of Corollary \ref{cor:positive-definite}]
		This follows from combining Theorem \ref{thm:main} with Lemma \ref{lem:fourier-criterion}.
	\end{proof}

	We will use the criteria Lemma \ref{lem:fourier-criterion} in order to identify $\lambda_\spec$ for the hard sphere potential and to bound $\lambda_\spec$ for the Gaussian core potential.
	
	\subsection{The hard sphere potential} \label{sec:lambda-spec-HS}
	
	Given the hard sphere potential defined by \eqref{eq:hard-sphere-potential} we first note that one can write $\lambda_\spec$ in terms of the fourier transform of the unit ball.  In particular, if we let $h_d$ denote the indicator function of the ball of volume $1$ in $\R^d$ and set $a_d = - \min_{\xi} \widehat{h}_d(\xi)$ then we obtain the identity \begin{equation} \label{eq:hard-sphere-in-terms-unit-ball}
		2^d\lambda_\spec = a_d^{-1}
	\end{equation}
	using Lemma \ref{lem:fourier-criterion}.  We first show that one can write $a_d$ in terms of Bessel functions:
	
	\begin{lemma}\label{lem:unit-ball-fourier}
		Let $h_d$ denote the indicator function of the ball of volume $1$ in $\R^d$ and set $a_d = - \min_{\xi} \widehat{h}_d(\xi)$.  Then writing $\nu = d/2$ we have
		\begin{equation}
			a_d = \Vol(B_{\R^d}(1))^{-1} (2\pi)^{\nu} ( j_{\nu + 1,1})^{-\nu} (-J_{\nu}(j_{\nu + 1,1}))
		\end{equation}
		where $J_\nu$ is the Bessel function of the first kind of order $\nu$ and $j_{\nu+1,1}$ is the first positive zero of $J_{\nu+1}$.  
	\end{lemma}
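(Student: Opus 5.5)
We will compute $\widehat{h}_d$ explicitly via the classical formula for the Fourier transform of a radial function, and then minimize the resulting one-variable function. Write $r$ for the radius of the volume-one ball, so $\Vol(B_{\R^d}(1))\, r^d = 1$. The Fourier transform of the indicator of the ball of radius $r$ (with the convention $\widehat{f}(\xi)=\int f(x)e^{-2\pi i\langle x,\xi\rangle}\,dx$, consistent with Plancherel being an isometry as used in Lemma~\ref{lem:fourier-criterion}) is the standard Bessel formula
\[
\widehat{h}_d(\xi) = r^{d} \,\frac{J_{\nu}(2\pi r|\xi|)}{(r|\xi|)^{\nu}} = r^{\nu}\,|\xi|^{-\nu} J_\nu(2\pi r|\xi|).
\]
We would obtain it either by citing the radial Fourier transform formula $\widehat{f}(\xi)=2\pi|\xi|^{1-\nu}\int_0^\infty f(s)J_{\nu-1}(2\pi|\xi|s)s^{\nu}\,ds$ together with $\frac{d}{dt}(t^{\nu}J_\nu(t)) = t^{\nu}J_{\nu-1}(t)$, or by quoting it directly. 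Substituting $t = 2\pi r|\xi|$ gives
\[
\widehat{h}_d(\xi) = r^{d}(2\pi)^{\nu}\, t^{-\nu}J_\nu(t) = \Vol(B_{\R^d}(1))^{-1}(2\pi)^{\nu}\, t^{-\nu}J_\nu(t).
\]
This reduces the problem to computing $\min_{t>0} t^{-\nu}J_\nu(t)$. The value at $t=0$ is positive, and $t^{-\nu}J_\nu(t)\to 0$ as $t\to\infty$, so the minimum is attained at a critical point.

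To locate the critical points we use the identity $\frac{d}{dt}\big(t^{-\nu}J_\nu(t)\big) = -t^{-\nu}J_{\nu+1}(t)$, so they are exactly the positive zeros $j_{\nu+1,k}$ of $J_{\nu+1}$. The function decreases on $(0,j_{\nu+1,1})$, since $J_{\nu+1}>0$ there, and at $j_{\nu+1,1}$ its value is $j_{\nu+1,1}^{-\nu}J_\nu(j_{\nu+1,1})$. This value is negative, because $j_{\nu,1}<j_{\nu+1,1}<j_{\nu,2}$ by interlacing.

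The main obstacle is showing that this first critical point gives the \emph{global} minimum, that is, that $|t^{-\nu}J_\nu(t)|$ at the later local minima $j_{\nu+1,2k+1}$ is smaller. We would do this with a Sonine-type monotonicity argument: at critical points $J_{\nu+1}=0$, so by the recurrence $J_\nu = J_{\nu+1}' + \frac{\nu+1}{t}J_{\nu+1}$ we get $J_\nu(j_{\nu+1,k}) = J'_{\nu+1}(j_{\nu+1,k})$. Hence it suffices to show that the extremal values $t^{-\nu}J_{\nu+1}'(t)$ at successive zeros of $J_{\nu+1}$ decrease in absolute value. For this we apply Sonine's theorem to $y=t^{-\nu}J_\nu$: consider
\[
E(t)= y(t)^2 + \frac{y'(t)^2}{q(t)}
\]
built from the ODE $y'' + \frac{2\nu+1}{t}y' + y = 0$ satisfied by $t^{-\nu}J_\nu$. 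With $q\equiv 1$ we get $E'(t) = -\frac{2(2\nu+1)}{t}y'(t)^2 \le 0$, so $E$ is nonincreasing. At critical points $E = y^2$, so $|y|$ at successive critical points is nonincreasing, and in fact strictly decreasing. This yields $\min_t t^{-\nu}J_\nu(t) = j_{\nu+1,1}^{-\nu}J_\nu(j_{\nu+1,1})$.

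Negating gives $a_d = \Vol(B_{\R^d}(1))^{-1}(2\pi)^\nu j_{\nu+1,1}^{-\nu}\big(-J_\nu(j_{\nu+1,1})\big)$, as claimed. The Bessel facts used (the derivative identities, interlacing of zeros, and the Sonine argument) would be collected in Appendix~\ref{sec:bessel}.
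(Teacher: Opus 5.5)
Your proposal is correct and follows essentially the same route as the paper's proof in Appendix~\ref{sec:bessel}. The shared steps are: the Bessel formula for the Fourier transform of the ball, critical points of $t^{-\nu}J_\nu(t)$ at the zeros of $J_{\nu+1}$, interlacing to get $F_\nu(j_{\nu+1,1})<0$, and the energy $E=F_\nu^2+(F_\nu')^2$ with $E'=-\frac{4\nu+2}{t}(F_\nu')^2$ to show the first critical value has the largest modulus. The detour through $J_\nu(j_{\nu+1,k})=J_{\nu+1}'(j_{\nu+1,k})$ is harmless but unnecessary, since you end up applying the energy argument to $y=t^{-\nu}J_\nu$ directly, exactly as the paper does.
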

	
	We then find the asymptotic behavior as $d$ tends to $\infty$:
	
	\begin{lemma}\label{lem:asymptotics-fourier-ball}
		As $d \to \infty$ we have
		\begin{equation*}
			a_d  = (1 + O(d^{-1/3})) 2^{4/3}e^{-1}\sqrt{\pi} (\mathrm{Ai}'(-\alpha_1)) d^{-1/6}\left(\frac{2}{e} \right)^{d/2} \exp(-2^{2/3} \alpha_1 d^{1/3})
		\end{equation*}
		where $\mathrm{Ai}$ is the Airy function and $\alpha_1 = \min\{\alpha  > 0: \mathrm{Ai}(-\alpha) = 0\}$ is the magnitude of its first negative root.
	\end{lemma}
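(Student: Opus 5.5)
The plan is to start from the closed form of Lemma~\ref{lem:unit-ball-fourier},
\[
a_d = \Vol(B_{\R^d}(1))^{-1} (2\pi)^{\nu} ( j_{\nu + 1,1})^{-\nu} \bigl(-J_{\nu}(j_{\nu + 1,1})\bigr), \qquad \nu = d/2,
\]
and expand each of its three factors as $\nu \to \infty$, keeping relative errors of order $\nu^{-1/3}$. The necessary Bessel facts are the transition-region (Airy-type) asymptotics of $J_\nu$ near $x \approx \nu$ and the asymptotics of the first zero $j_{\nu+1,1}$. These are the standard Olver uniform expansions, which I would collect in Appendix~\ref{sec:bessel}.

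\textbf{Step 1: the elementary factors.} We have $\Vol(B_{\R^d}(1)) = \pi^\nu/\Gamma(\nu+1)$, so
\[
\Vol(B_{\R^d}(1))^{-1}(2\pi)^\nu = 2^\nu\Gamma(\nu+1).
\]
For the zero I use the classical expansion
\[
j_{\mu,1} = \mu + 2^{-1/3}\alpha_1 \mu^{1/3} + O(\mu^{-1/3}), \qquad \mu = \nu + 1,
\]
so that
\[
j_{\nu+1,1}^{-\nu} = (\nu+1)^{-\nu}\exp\bigl(-\nu\log(1+2^{-1/3}\alpha_1\nu^{-2/3}+O(\nu^{-4/3}))\bigr).
\]
The $\nu^{1/3}$ term in this exponent produces the factor $\exp(-c\,\alpha_1 d^{1/3})$. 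The quadratic term of the logarithm contributes only $O(\nu^{-1/3})$ and is absorbed into the error. Stirling's formula then gives
\[
2^\nu\Gamma(\nu+1)(\nu+1)^{-\nu} = (1+O(\nu^{-1}))\sqrt{2\pi\nu}\,(2/e)^{\nu}\,e^{-1}\cdot(\text{constant}),
\]
where the $e^{-1}$ comes from $(1+1/\nu)^{-\nu}$. This accounts for the $(2/e)^{d/2}$ and the $e^{-1}$ in the statement.

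\textbf{Step 2: the Bessel value.} Write
\[
j_{\nu+1,1} = \nu + \tau\nu^{1/3}, \qquad \tau = 2^{-1/3}\alpha_1 + \nu^{-1/3} + O(\nu^{-2/3}).
\]
The Airy approximation is
\[
J_\nu(\nu+\tau\nu^{1/3}) = (2/\nu)^{1/3}\bigl[\Ai(-2^{1/3}\tau) + O(\nu^{-2/3})\bigr].
\]
Here $-2^{1/3}\tau = -\alpha_1 - 2^{1/3}\nu^{-1/3}+\dots$ sits just beyond the first zero of $\Ai$. Since $\Ai'(-\alpha_1) > 0$, Taylor expansion gives
\[
-J_\nu(j_{\nu+1,1}) = 2^{2/3}\nu^{-2/3}\Ai'(-\alpha_1)\,(1+O(\nu^{-1/3})),
\]
which is positive, as it must be.

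\textbf{Main obstacle.} The leading Airy term vanishes at $-\alpha_1$, so the quantity we need is itself a first-order correction. The error terms in the uniform expansion, and in the expansion of $j_{\nu+1,1}$, must therefore be shown to be genuinely $O(\nu^{-1/3})$ relative to $\nu^{-2/3}$, i.e.\ absolute errors of size $O(\nu^{-1})$. I would handle this with Olver's uniform expansion
\[
J_\nu(\nu z) \sim \Bigl(\frac{4\zeta}{1-z^2}\Bigr)^{1/4}\nu^{-1/3}\bigl[\Ai(\nu^{2/3}\zeta)(1+O(\nu^{-2})) + O(\nu^{-4/3})\Ai'(\nu^{2/3}\zeta)\bigr].
\]
Evaluating it at $z = j_{\nu+1,1}/\nu$ and expanding $\zeta(z)$ near $z=1$ to second order should give the required precision. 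A simpler alternative is the recurrence
\[
J_\nu(j) = \frac{j}{\nu+1}\,J_{\nu+1}'(j) \qquad \text{at } j = j_{\nu+1,1},
\]
which follows from $J_{\nu+1}(j)=0$. This reduces the problem to the derivative $J'_{\nu+1}$ at its own first zero, where the Airy leading term $\Ai'(-\alpha_1)\neq 0$ does not cancel. I would try this route first.

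\textbf{Step 3: assembly.} Multiply the three expansions and collect the powers of $\nu = d/2$. The factors $\sqrt{\nu}$ and $\nu^{-2/3}$ give $d^{-1/6}$, and the numerical constants combine to $2^{4/3}e^{-1}\sqrt\pi\,\Ai'(-\alpha_1)$ times the stated exponential. I would recheck that the constant in the $d^{1/3}$ exponent comes out as $2^{2/3}$ after substituting $\nu^{1/3} = 2^{-1/3}d^{1/3}$; the $(\nu+1)$ versus $\nu$ shifts are where constant slips are most likely.
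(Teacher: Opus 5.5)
Your decomposition is the paper's own. It uses the closed form of Lemma~\ref{lem:unit-ball-fourier}, the zero expansion of Lemma~\ref{lem:jnu+1-expansion}, and the transition-region estimate \eqref{eq:turning-asympt} with absolute error $O(\nu^{-1})$. As you correctly identify, that is exactly the precision needed, so Olver's full $\zeta$-expansion is unnecessary. It then finishes with a first-order Taylor expansion of $\Ai$ at $-\alpha_1$ and Stirling, and all of your intermediate expansions are right.

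The problem is the target of Step~3. The constant in the $d^{1/3}$ exponent does not come out as $2^{2/3}$. No care with $(\nu+1)$ versus $\nu$ will change this, because those shifts never touch the coefficient of $\nu^{1/3}$ in the exponent. Your Step~1 yields $\exp(-2^{-1/3}\alpha_1\nu^{1/3})$, and with $\nu^{1/3}=2^{-1/3}d^{1/3}$ this is $\exp(-2^{-2/3}\alpha_1 d^{1/3})$.

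The printed statement has a typo. The paper's proof arrives at the same display $\exp(-2^{-1/3}\alpha_1\nu^{1/3})$ and then simply says ``recalling $\nu=d/2$''. As written, with $2^{2/3}$, the claim is false, since the two versions differ by the factor $\exp((2^{2/3}-2^{-2/3})\alpha_1 d^{1/3})\to\infty$. With $2^{-2/3}$, your argument proves it. Correspondingly, \eqref{eq:hard-sphere-lambda-spec-asympt} should carry $\exp(2^{-2/3}\alpha_1 d^{1/3})$. Proposition~\ref{prop:hard-spheres} is unaffected because $\exp(O(d^{1/3}))=(1+o(1))^{d/2}$.

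Table~\ref{tab:hard-sphere-low-dimensions} confirms this. We have $145.749/111.038\approx 1.3126$, whereas $(11/10)^{1/6}(e/2)^{1/2}\exp\bigl(c\,\alpha_1(11^{1/3}-10^{1/3})\bigr)$ is about $1.3123$ for $c=2^{-2/3}$ and about $1.53$ for $c=2^{2/3}$. So do the recheck, but conclude $2^{-2/3}$ rather than forcing the printed constant.

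There is also a smaller slip in your alternative route. At $j=j_{\nu+1,1}$, the recurrence $J'_\mu(z)=J_{\mu-1}(z)-\frac{\mu}{z}J_\mu(z)$ with $\mu=\nu+1$ gives $J_\nu(j)=J'_{\nu+1}(j)$, with no factor $j/(\nu+1)$. That factor is $1+O(\nu^{-2/3})$, so the asymptotics would survive, but the identity as you wrote it is false. With the correct identity and $J'_\mu(j_{\mu,1})=-2^{2/3}\Ai'(-\alpha_1)\mu^{-2/3}(1+O(\mu^{-2/3}))$, this route avoids the leading-order Airy cancellation and gives the same asymptotic for $-J_\nu(j_{\nu+1,1})$.
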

	
	We will prove both Lemma \ref{lem:unit-ball-fourier} and \ref{lem:asymptotics-fourier-ball} in Appendix \ref{sec:bessel}.
	\begin{proof}[Proof of Proposition \ref{prop:hard-spheres}]
		Combining \eqref{eq:hard-sphere-in-terms-unit-ball} with Lemma \ref{lem:asymptotics-fourier-ball} shows \begin{equation}\label{eq:hard-sphere-lambda-spec-asympt}
			2^d \lambda_\spec = (1 + O(d^{-1/3}))  \frac{e}{2^{4/3} \sqrt{\pi} (\mathrm{Ai}'(-\alpha_1))} d^{1/6} \exp(2^{2/3} \alpha_1 d^{1/3}) \left(\frac{e}{2} \right)^{d/2} \,. \qedhere
		\end{equation}
	\end{proof}

	\subsection{Gaussian core model}
	
	Recall that the Gaussian core model is given by the potential \begin{equation}\label{eq:gaussian-core}
		\phi(x,y) = \beta e^{- \left\| x - y \right\|_2^2 / 2}
	\end{equation}
	where $\beta > 0$ is an inverse temperature parameter.  We will be  interested in the regime where $\beta$ is fixed and $d \to \infty$.  In this regime, we observe that the temperedness constant may be computed asymptotically: \begin{equation}\label{eq:temperedness-GCM}
		C_\phi = \int_{\R^d} (1 - \exp\left(-\beta \exp(- \|x\|_2^2 / 2 \right))\,\diff x = (1 + o(1)) \beta \int_{\R^d} e^{-\|x\|_2^2 / 2}\,\diff x = (1 + o(1))\beta (2\pi)^{d/2}\,.\end{equation}
	
	We will show that $\lambda_\spec$ is exponentially large as $d \to \infty$.
	\begin{lemma}\label{lem:gaussian-core-bound}
		For fixed $\beta > 0$, there is a constant $C_\beta > 0$ so that $\lambda_{\spec} \geq  C_\beta 2^d / C_\phi$ for the Gaussian core model \eqref{eq:gaussian-core}.
	\end{lemma}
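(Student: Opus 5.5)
By Lemma \ref{lem:fourier-criterion}, it suffices to show $\esssup_\xi(-\widehat{g}(\xi)) \le C_\beta^{-1} 2^{-d} C_\phi$, where $g(x) = 1 - \exp(-\beta e^{-\|x\|^2/2})$. Since $C_\phi = (1+o(1))\beta(2\pi)^{d/2}$ by \eqref{eq:temperedness-GCM}, the goal becomes
\[
\sup_\xi(-\widehat{g}(\xi)) \le c_\beta (2\pi)^{d/2} 2^{-d}
\]
for a constant $c_\beta$ depending only on $\beta$.

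\textbf{Step 1: expand $g$.} Taylor expanding in the inner exponential gives
\[
g(x) = \sum_{k\ge1} \frac{(-1)^{k+1}\beta^k}{k!}\, e^{-k\|x\|^2/2}.
\]
The series converges absolutely in $L^1$, because $\sum_k \frac{\beta^k}{k!}(2\pi/k)^{d/2} < \infty$. Hence we may take the Fourier transform termwise. With the unitary normalization implicit in Lemma \ref{lem:fourier-criterion} (any normalization only changes factors of $(2\pi)^{d/2}$ uniformly across all terms, which are tracked consistently), we have
\[
\widehat{e^{-k\|\cdot\|^2/2}}(\xi) = (2\pi/k)^{d/2} e^{-\|\xi\|^2/(2k)},
\]
up to the normalization constant. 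Every such term is positive.

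\textbf{Step 2: discard the positive terms.} The terms with odd $k$ enter $\widehat{g}$ with a positive sign, so
\[
-\widehat{g}(\xi) \le \sum_{k \text{ even}} \frac{\beta^k}{k!}(2\pi/k)^{d/2} e^{-\|\xi\|^2/(2k)} \le (2\pi)^{d/2}\sum_{j\ge1}\frac{\beta^{2j}}{(2j)!}(2j)^{-d/2}.
\]
The $j=1$ term equals $\frac{\beta^2}{2}(2\pi)^{d/2}2^{-d/2}$. This only gives a saving of $2^{-d/2}$, not $2^{-d}$, so the crude bound is insufficient, and the main work is here.

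\textbf{Step 3: recover the full $2^{-d}$ saving.} I would avoid dropping the $k=1$ term and instead use it to cancel the $k=2$ term. For $\xi$ with $\|\xi\|^2 \le 2d\log 2 + O(1)$, compare the two:
\[
\beta(2\pi)^{d/2}e^{-\|\xi\|^2/2} \quad\text{versus}\quad \frac{\beta^2}{2}(\pi)^{d/2}e^{-\|\xi\|^2/4}.
\]
Their ratio is $\frac{2}{\beta}\,2^{d/2}e^{-\|\xi\|^2/4}$. This ratio is at least $1$ exactly when $\|\xi\|^2 \le 2d\log 2 + 4\log(2/\beta)$, and in that region the positive $k=1$ term dominates the negative $k=2$ term. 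The higher even terms are also controlled by the $k=1$ term in that region, since they have smaller prefactors $(2\pi/k)^{d/2}$. More precisely, I would pair the $k=2j-1$ and $k=2j$ terms, or bound all $k \ge 3$ terms together using $(2\pi/k)^{d/2}\le (2\pi/3)^{d/2}$ times a summable factor.

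Outside that region, $e^{-\|\xi\|^2/4} \le (\beta/2)\,2^{-d/2}$. The $k=2$ term is then at most $\frac{\beta^3}{4}\pi^{d/2}2^{-d/2} = c_\beta (2\pi)^{d/2}2^{-d}$. The remaining even terms with $k\ge4$ satisfy $e^{-\|\xi\|^2/(2k)} \le 1$ and carry the prefactor $(2\pi/k)^{d/2} \le (2\pi)^{d/2}4^{-d/2} = (2\pi)^{d/2}2^{-d}$, so their sum is at most $(2\pi)^{d/2}2^{-d}\cosh\beta$.

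Combining the two regions gives $\sup_\xi(-\widehat{g}(\xi)) \le c_\beta(2\pi)^{d/2}2^{-d}$, as required. The delicate point is the inner region: the negative contribution from $k\ge4$ must not exceed the net positive part. Since $(2\pi/4)^{d/2} = (2\pi)^{d/2}2^{-d}$ is already of the target order, even a crude bound there suffices. In fact, the simplest route overall is to drop only the positive terms with $k\ge3$ odd, keep the $k=1$ term, and bound
\[
\beta(2\pi)^{d/2}e^{-r/2} - \frac{\beta^2}{2}\pi^{d/2}e^{-r/4}, \qquad r = \|\xi\|^2,
\]
from below by minimizing over $r$. Setting $u = e^{-r/4}$, this is a quadratic in $u$ with minimum $-\frac{\beta^3}{16}\,\pi^{d}/(2\pi)^{d/2} = -\frac{\beta^3}{16}(2\pi)^{d/2}2^{-d}$. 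Adding the bound $(2\pi)^{d/2}2^{-d}\cosh\beta$ for the $k\ge4$ even terms finishes the proof.
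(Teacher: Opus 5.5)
Your final ``simplest route'' is the paper's proof: drop the odd $k\ge 3$ terms, minimize the quadratic formed by the $k=1$ and $k=2$ terms to get $-\frac{\beta^3}{16}(2\pi)^{d/2}2^{-d}$, bound the even $k\ge 4$ terms by $(2\pi)^{d/2}2^{-d}\cosh\beta$, and compare with $C_\phi\sim\beta(2\pi)^{d/2}$ (your region-splitting variant also works). One caution: your remark that the Fourier normalization is harmless is not right, since $-\widehat g$ is compared against $C_\phi$ and a stray $(2\pi)^{\pm d/2}$ would change the exponential rate; however, the prefactor $(2\pi/k)^{d/2}$ you actually used is the correct convolution multiplier (the paper's unitary convention up to a rescaling of $\xi$), so the conclusion stands.
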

	\begin{proof}
		Letting $\psi(z) = \beta e^{-\|z\|^2 / 2}$, we may write $g(z) = 1 - e^{-\psi(z)}$ as \begin{align*}
			g(z) = -\sum_{n \geq 1} \frac{(-\beta)^n}{n!} e^{-n \|z\|^2 / 2} 
		\end{align*}
		and so \begin{align*}
			-\widehat{g}(\xi) = (2\pi)^{d/2} \sum_{n \geq 1} \frac{(-\beta)^n}{n!} n^{-d/2} e^{-2\pi^2 \|\xi\|_2^2 / n} \,.
		\end{align*}
		Relabeling $2\pi^2 \|\xi\|_2^2 = x$, we will throw away all odd terms for $n \geq 3$ and bound \begin{align*}
			\sum_{n \geq 1} \frac{(-\beta)^n}{n!} n^{-d/2} e^{-x / n} \leq -\beta e^{-x} + \frac{\beta^2}{2} 2^{-d/2} e^{-x/2} + \sum_{k\geq 2} \frac{\beta^{2k}}{(2k)!} (2k)^{-d/2} e^{-x/2k}\,.
		\end{align*}
		We note that \begin{equation*}
			\max_{x \geq 0}\left( -\beta e^{-x} + \frac{\beta^2}{2} 2^{-d/2} e^{-x/2} \right) = \frac{\beta^3}{16} 2^{-d}
		\end{equation*}
		while \begin{equation*}
			\sum_{k\geq 2} \frac{\beta^{2k}}{(2k)!} (2k)^{-d/2} e^{-x/2k} \leq 2^{-d} \sum_{k \geq 2} \frac{\beta^{2k}}{(2k)!} \leq 2^{-d} \cosh(\beta) \,.
		\end{equation*}
		Recalling $C_\phi \sim \beta (2\pi)^{d/2}$, applying Lemma \ref{lem:fourier-criterion} completes the proof.
	\end{proof}

	\section*{Acknowledgments}
	A.G. is funded by the Postdoc Network Brandenburg.
	M.J.\ is supported by a UK Research and Innovation Future Leaders Fellowship MR/W007320/2.  M.M.\ is supported in part by NSF grants DMS-2336788 and DMS-2246624.
	M.P.\ is funded by the Deutsche Forschungsgemeinschaft (DFG, German Research Foundation) -- project number 390859508.
	W.P.\ supported in part by NSF grant DMS-2348743.

	\bibliography{references.bib}

\begin{thebibliography}{10}

\bibitem{alder1957phase}
B.~J. Alder and T.~E. Wainwright.
\newblock Phase transition for a hard sphere system.
\newblock {\em The Journal of chemical physics}, 27(5):1208, 1957.

\bibitem{anand2023perfect}
K.~Anand, A.~G{\"o}bel, M.~Pappik, and W.~Perkins.
\newblock Perfect sampling for hard spheres from strong spatial mixing.
\newblock In {\em 27th International Conference on Randomization and
  Computation and the 26th International Conference on Approximation Algorithms
  for Combinatorial Optimization Problems}, pages 1--18. Schloss
  Dagstuhl-Leibniz-Zentrum f{\"u}r Informatik, 2023.

\bibitem{bakry2013analysis}
D.~Bakry, I.~Gentil, and M.~Ledoux.
\newblock {\em Analysis and Geometry of {M}arkov Diffusion Operators}.
\newblock Grundlehren der mathematischen Wissenschaften. Springer International
  Publishing, 2013.

\bibitem{balister2005continuum}
P.~Balister, B.~Bollob{\'a}s, and M.~Walters.
\newblock Continuum percolation with steps in the square or the disc.
\newblock {\em Random Structures \& Algorithms}, 26(4):392--403, 2005.

\bibitem{bernard2011two}
E.~P. Bernard and W.~Krauth.
\newblock Two-step melting in two dimensions: First-order liquid-hexatic
  transition.
\newblock {\em Physical review letters}, 107(15):155704, 2011.

\bibitem{bertini2002spectral}
L.~Bertini, N.~Cancrini, and F.~Cesi.
\newblock The spectral gap for a {G}lauber-type dynamics in a continuous gas.
\newblock {\em Annales de l'IHP Probabilit{\'e}s et statistiques},
  38(1):91--108, 2002.

\bibitem{betsch2023uniqueness}
S.~Betsch and G.~Last.
\newblock On the uniqueness of {G}ibbs distributions with a non-negative and
  subcritical pair potential.
\newblock In {\em Annales de l'Institut Henri Poincare (B) Probabilites et
  statistiques}, volume~59, pages 706--725. Institut Henri Poincar{\'e}, 2023.

\bibitem{bogachev2007measure}
V.~I. Bogachev.
\newblock {\em Measure theory}, volume~1.
\newblock Springer, 2007.

\bibitem{bubley1997path}
R.~Bubley and M.~Dyer.
\newblock Path coupling: A technique for proving rapid mixing in {M}arkov
  chains.
\newblock In {\em Proceedings 38th Annual Symposium on Foundations of Computer
  Science}, pages 223--231. IEEE, 1997.

\bibitem{chen2025rapid}
X.~Chen, Z.~Chen, Z.~Chen, Y.~Yin, and X.~Zhang.
\newblock Rapid mixing on random regular graphs beyond uniqueness.
\newblock In {\em 2025 {IEEE} 66th {A}nnual {S}ymposium on {F}oundations of
  {C}omputer {S}cience---{FOCS} 2025}, pages 2170--2193. IEEE Comput. Soc.
  Press, Los Alamitos, CA, 2025.

\bibitem{cohn2018gaussian}
H.~Cohn and M.~de~Courcy-Ireland.
\newblock The {G}aussian core model in high dimensions.
\newblock {\em Duke Math. J.}, 167(13):2417--2455, 2018.

\bibitem{cohn2017sphere}
H.~Cohn, A.~Kumar, S.~D. Miller, D.~Radchenko, and M.~Viazovska.
\newblock The sphere packing problem in dimension 24.
\newblock {\em Ann. of Math. (2)}, 185(3):1017--1033, 2017.

\bibitem{cohn2022universal}
H.~Cohn, A.~Kumar, S.~D. Miller, D.~Radchenko, and M.~Viazovska.
\newblock Universal optimality of the {$E_8$} and {L}eech lattices and
  interpolation formulas.
\newblock {\em Ann. of Math. (2)}, 196(3):983--1082, 2022.

\bibitem{daley2003introduction}
D.~J. Daley and D.~Vere-Jones.
\newblock {\em An introduction to the theory of point processes: volume I:
  elementary theory and methods}.
\newblock Springer, 2003.

\bibitem{daley2008introduction}
D.~J. Daley and D.~Vere-Jones.
\newblock {\em An introduction to the theory of point processes: volume II:
  general theory and structure}.
\newblock Springer, 2008.

\bibitem{dereudre2019introduction}
D.~Dereudre.
\newblock Introduction to the theory of {G}ibbs point processes.
\newblock In {\em Stochastic Geometry: Modern Research Frontiers}, pages
  181--229. Springer, 2019.

\bibitem{dereudre2024nonuniformity}
D.~Dereudre and D.~Flimmel.
\newblock Non-hyperuniformity of {G}ibbs point processes with short-range
  interactions.
\newblock {\em J. Appl. Probab.}, 61(4):1380--1406, 2024.

\bibitem{dobrushin1985completely}
R.~L. Dobrushin and S.~B. Shlosman.
\newblock Completely analytical {G}ibbs fields.
\newblock In {\em Statistical Physics and Dynamical Systems: Rigorous Results},
  pages 371--403. Springer, 1985.

\bibitem{dobrushin1987completely}
R.~L. Dobrushin and S.~B. Shlosman.
\newblock Completely analytical interactions: constructive description.
\newblock {\em Journal of Statistical Physics}, 46(5):983--1014, 1987.

\bibitem{dyer2004mixing}
M.~Dyer, A.~Sinclair, E.~Vigoda, and D.~Weitz.
\newblock Mixing in time and space for lattice spin systems: A combinatorial
  view.
\newblock {\em Random Structures \& Algorithms}, 24(4):461--479, 2004.

\bibitem{finken2001freezing}
R.~Finken, M.~Schmidt, and H.~L{\"o}wen.
\newblock Freezing transition of hard hyperspheres.
\newblock {\em Physical Review E}, 65(1):016108, 2001.

\bibitem{fisher1964free}
M.~E. Fisher.
\newblock The free energy of a macroscopic system.
\newblock {\em Archive for Rational Mechanics and Analysis}, 17(5):377--410,
  1964.

\bibitem{friedli2017statistical}
S.~Friedli and Y.~Velenik.
\newblock {\em Statistical mechanics of lattice systems: a concrete
  mathematical introduction}.
\newblock Cambridge University Press, 2017.

\bibitem{FGKKP2025}
T.~Friedrich, A.~Göbel, M.~Katzmann, M.~S. Krejca, and M.~Pappik.
\newblock Sampling repulsive {G}ibbs point processes using random graphs.
\newblock {\em Combinatorics, Probability and Computing}, 34(1):63–89, 2025.

\bibitem{friedrich2022spectral}
T.~Friedrich, A.~Göbel, M.~S. Krejca, and M.~Pappik.
\newblock A spectral independence view on hard spheres via block dynamics.
\newblock {\em SIAM Journal on Discrete Mathematics}, 36(3):2282--2322, 2022.

\bibitem{frisch1999high}
H.~Frisch and J.~Percus.
\newblock High dimensionality as an organizing device for classical fluids.
\newblock {\em Physical Review E}, 60(3):2942, 1999.

\bibitem{ginibre1967rigorous}
J.~Ginibre.
\newblock Rigorous lower bound on the compressibility of a classical system.
\newblock {\em Physics Letters A}, 24(4):223--224, 1967.

\bibitem{goebel2026simple}
A.~G\"obel, M.~Jenssen, M.~Michelen, M.~Pappik, W.~Perkins, and L.~Schiller.
\newblock A simple proof of rapid mixing on random regular graphs beyond
  uniqueness.
\newblock {\em preprint}, 2026.

\bibitem{grakaos2014classical}
L.~Grafakos.
\newblock {\em Classical {F}ourier analysis}, volume 249 of {\em Graduate Texts
  in Mathematics}.
\newblock Springer, New York, third edition, 2014.

\bibitem{groeneveld1962two}
J.~Groeneveld.
\newblock Two theorems on classical many-particle systems.
\newblock {\em Physics Letters}, 3, 1962.

\bibitem{helmuth2022correlation}
T.~Helmuth, W.~Perkins, and S.~Petti.
\newblock Correlation decay for hard spheres via {M}arkov chains.
\newblock {\em Ann. Appl. Probab.}, 32(3):2063--2082, 2022.

\bibitem{hofertemel2019disagreement}
C.~Hofer-Temmel.
\newblock Disagreement percolation for the hard-sphere model.
\newblock {\em Electron. J. Probab.}, 24:Paper No. 91, 22, 2019.

\bibitem{houdebart2022explicit}
P.~Houdebert and A.~Zass.
\newblock An explicit {D}obrushin uniqueness region for {G}ibbs point processes
  with repulsive interactions.
\newblock {\em J. Appl. Probab.}, 59(2):541--555, 2022.

\bibitem{jansen2018gibbsian}
S.~Jansen.
\newblock Gibbsian point processes.
\newblock {\em available at author’s website}, 2018.

\bibitem{jansen2019cluster}
S.~Jansen.
\newblock Cluster expansions for {G}ibbs point processes.
\newblock {\em Advances in Applied Probability}, 51(4):1129--1178, 2019.

\bibitem{jenssen2019hard}
M.~Jenssen, F.~Joos, and W.~Perkins.
\newblock On the hard sphere model and sphere packings in high dimensions.
\newblock {\em Forum Math. Sigma}, 7:Paper No. e1, 19, 2019.

\bibitem{jenssen2024quasipolynomial}
M.~Jenssen, M.~Michelen, and M.~Ravichandran.
\newblock Quasipolynomial-time algorithms for {G}ibbs point processes.
\newblock {\em Combinatorics, Probability and Computing}, 33(1):1--15, 2024.

\bibitem{jerrum2019perfect}
M.~Jerrum and H.~Guo.
\newblock Perfect simulation of the hard disks model by partial rejection
  sampling.
\newblock {\em Annales de l’Institut Henri Poincar{\'e} D (AIHPD)}, 2019.

\bibitem{kondratiev2013spectral}
Y.~Kondratiev, T.~Kuna, and N.~Ohlerich.
\newblock Spectral gap for {G}lauber type dynamics for a special class of
  potentials.
\newblock {\em Electronic Journal Of Probability}, 18, 2013.

\bibitem{lee1952statistical}
T.-D. Lee and C.-N. Yang.
\newblock Statistical theory of equations of state and phase transitions. ii.
  lattice gas and {I}sing model.
\newblock {\em Physical Review}, 87(3):410, 1952.

\bibitem{levin2017markov}
D.~A. Levin and Y.~Peres.
\newblock {\em Markov chains and mixing times}, volume 107.
\newblock American Mathematical Soc., 2017.

\bibitem{lindvall1992lectures}
T.~Lindvall.
\newblock {\em Lectures on the coupling method}.
\newblock Wiley Series in Probability and Mathematical Statistics: Probability
  and Mathematical Statistics. John Wiley \& Sons, Inc., New York, 1992.
\newblock A Wiley-Interscience Publication.

\bibitem{lowen2000fun}
H.~L{\"o}wen.
\newblock Fun with hard spheres.
\newblock In {\em Statistical physics and spatial statistics: the art of
  analyzing and modeling spatial structures and pattern formation}, pages
  295--331. Springer, 2000.

\bibitem{meeron1970bounds}
E.~Meeron.
\newblock Bounds, successive approximations, and thermodynamic limits for
  distribution functions, and the question of phase transitions for classical
  systems with non-negative interactions.
\newblock {\em Phys. Rev. Lett.}, 25:152--155, 1970.

\bibitem{mertens2012continuum}
S.~Mertens and C.~Moore.
\newblock Continuum percolation thresholds in two dimensions.
\newblock {\em Physical Review E—Statistical, Nonlinear, and Soft Matter
  Physics}, 86(6):061109, 2012.

\bibitem{michelen2022strong}
M.~Michelen and W.~Perkins.
\newblock Strong spatial mixing for repulsive point processes.
\newblock {\em J. Stat. Phys.}, 189(1):Paper No. 9, 32, 2022.

\bibitem{michelen2023analyticity}
M.~Michelen and W.~Perkins.
\newblock Analyticity for classical gasses via recursion.
\newblock {\em Communications in Mathematical Physics}, 399(1):367--388, 2023.

\bibitem{michelen2025potential}
M.~Michelen and W.~Perkins.
\newblock Potential-weighted connective constants and uniqueness of {G}ibbs
  measures.
\newblock {\em Communications in Mathematical Physics}, 406(2):32, 2025.

\bibitem{michelen2025central}
M.~Michelen and J.~Sahasrabudhe.
\newblock Central limit theorems and the geometry of polynomials.
\newblock {\em J. Eur. Math. Soc. (JEMS)}, 28(5):2261--2305, 2026.

\bibitem{montgomery1988minimal}
H.~L. Montgomery.
\newblock Minimal theta functions.
\newblock {\em Glasgow Math. J.}, 30(1):75--85, 1988.

\bibitem{handook2010}
F.~W.~J. Olver, D.~W. Lozier, R.~F. Boisvert, and C.~W. Clark, editors.
\newblock {\em N{IST} handbook of mathematical functions}.
\newblock {U.S. Department of Commerce, National Institute of Standards and
  Technology, Washington, DC; Cambridge University Press, Cambridge}, 2010.

\bibitem{parisi2010mean}
G.~Parisi and F.~Zamponi.
\newblock Mean-field theory of hard sphere glasses and jamming.
\newblock {\em Reviews of Modern Physics}, 82(1):789--845, 2010.

\bibitem{penrose1963convergence}
O.~Penrose.
\newblock Convergence of fugacity expansions for fluids and lattice gases.
\newblock {\em Journal of Mathematical Physics}, 4(10):1312--1320, 1963.

\bibitem{richthammer2007translation}
T.~Richthammer.
\newblock Translation-invariance of two-dimensional {G}ibbsian point processes.
\newblock {\em Comm. Math. Phys.}, 274(1):81--122, 2007.

\bibitem{ruelle1963correlation}
D.~Ruelle.
\newblock Correlation functions of classical gases.
\newblock {\em Annals of Physics}, 25(1):109--120, 1963.

\bibitem{ruelle1969statistical}
D.~Ruelle.
\newblock {\em Statistical mechanics: {R}igorous results}.
\newblock W. A. Benjamin, Inc., New York-Amsterdam, 1969.

\bibitem{viazovska2017sphere}
M.~S. Viazovska.
\newblock The sphere packing problem in dimension 8.
\newblock {\em Ann. of Math. (2)}, 185(3):991--1015, 2017.

\bibitem{villani2009optimal}
C.~Villani.
\newblock {\em Optimal transport}, volume 338 of {\em Grundlehren der
  mathematischen Wissenschaften [Fundamental Principles of Mathematical
  Sciences]}.
\newblock Springer-Verlag, Berlin, 2009.
\newblock Old and new.

\bibitem{Wei06}
D.~Weitz.
\newblock Counting independent sets up to the tree threshold.
\newblock In {\em Proceedings of the 38th Annual {ACM} Symposium on Theory of
  Computing ({STOC})}, pages 140--149. ACM, 2006.

\bibitem{yang1952statistical}
C.-N. Yang and T.-D. Lee.
\newblock Statistical theory of equations of state and phase transitions. i.
  theory of condensation.
\newblock {\em Physical Review}, 87(3):404, 1952.

\bibitem{ziesche2018sharpness}
S.~Ziesche.
\newblock Sharpness of the phase transition and lower bounds for the critical
  intensity in continuum percolation on {$\Bbb{R}^d$}.
\newblock {\em Ann. Inst. Henri Poincar\'e{} Probab. Stat.}, 54(2):866--878,
  2018.

\end{thebibliography}
	\bibliographystyle{abbrv}

	\appendix
	
	\section{Bessel function calculations} \label{sec:bessel}
	The goal of this section is to prove Lemma \ref{lem:unit-ball-fourier}.   Let $\omega_d = \Vol_{\R^d}(1)$ and note that the radius $s_d$ of the ball of volume $1$ satisfies $s_d = \omega_d^{-1/d}$.   Setting $\nu = d/2$, a classical computation \cite[Appendix B.5]{grakaos2014classical} shows $$\widehat{\boldsymbol{1}}_{B(0,1)}(\xi) = |\xi|^{-\nu} J_{\nu}(2\pi \xi)$$
	where $J_\nu$ is the Bessel function of the first kind of order $\nu$. We thus see that \begin{equation}\label{eq:ad-in-terms-bessel}
		a_d = \omega_d^{-1} (2\pi)^\nu \left(- \min_{t \geq 0} t^{-\nu} J_\nu(t)  \right)\,.
	\end{equation}
	
	In order to understand the minimum in \eqref{eq:ad-in-terms-bessel},  set $F_\nu(t) = t^{-\nu} J_{\nu}(t)$.  By \cite[Eq.~10.6.6]{handook2010} we have 
	\begin{equation}
		F_\nu'(t) = - t^{-\nu} J_{\nu+1}(t)
	\end{equation}
	and so critical points of $F_\nu$ are precisely zeros of $J_{\nu+1}$, which we label $(j_{\nu+1,k})_{k \geq 1}$ in increasing order.  We recall that $(j_{\nu+1,k})_{k \geq 1}$ are distinct \cite[Sec.~10.21(i)]{handook2010}.  We first show that the absolute value of $F_\nu$ at its critical points are strictly decreasing.
	
	\begin{lemma}\label{lem:crit-points-bessel-decreasing}
		Let $F_\nu(t) = t^{-\nu} J_{\nu}(t)$ and let $(j_{\nu+1,k})_{k \geq 1}$ denote its critical points.  Then $|F_\nu(j_{\nu+1,1})| > |F_\nu(j_{\nu+1,2})| > |F_\nu(j_{\nu+1,3})| > \cdots . $.
	\end{lemma}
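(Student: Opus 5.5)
We will use a Sonine-type energy functional that is monotone along the oscillations of $F_\nu$. Write $y(t) = J_\nu(t)$. Then $F_\nu(t)^2 = t^{-2\nu} J_\nu(t)^2$. At a critical point $t = j_{\nu+1,k}$ we have $J_{\nu+1}(t)=0$. The recurrence $J_\nu'(t) = -J_{\nu+1}(t) + \frac{\nu}{t}J_\nu(t)$ then gives $J_\nu'(t) = \frac{\nu}{t}J_\nu(t)$ there. So it suffices to find a functional $E(t)$ that agrees with $F_\nu(t)^2$ at the zeros of $J_{\nu+1}$ and is strictly decreasing on $(0,\infty)$.

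\textbf{The functional.} Rather than working with $J_\nu$ directly, use $F_\nu$. It satisfies $F_\nu'(t) = -t^{-\nu}J_{\nu+1}(t) = -t\,F_{\nu+1}(t)$ and $(t^{2\nu+1}F_\nu')' = -t^{2\nu+1}F_\nu$, i.e.
\[
F_\nu'' + \frac{2\nu+1}{t}F_\nu' + F_\nu = 0 .
\]
Define $E(t) = F_\nu(t)^2 + F_\nu'(t)^2$. Differentiating and substituting the ODE gives
\[
E'(t) = 2F_\nu'\bigl(F_\nu + F_\nu''\bigr) = -\frac{2(2\nu+1)}{t}\,F_\nu'(t)^2 \le 0 .
\]
At the critical points $F_\nu'=0$, so $E(j_{\nu+1,k}) = F_\nu(j_{\nu+1,k})^2$. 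Moreover $E' < 0$ except on the discrete zero set of $F_\nu'$, namely the zeros of $J_{\nu+1}$. Hence $E$ is strictly decreasing on $(0,\infty)$, which yields
\[
|F_\nu(j_{\nu+1,1})| > |F_\nu(j_{\nu+1,2})| > |F_\nu(j_{\nu+1,3})| > \cdots .
\]

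\textbf{Remaining checks.} These are routine.
\begin{itemize}
\item The ODE for $F_\nu$ follows from the standard identities $(t^{-\nu}J_\nu)' = -t^{-\nu}J_{\nu+1}$ and $(t^{\nu+1}J_{\nu+1})' = t^{\nu+1}J_\nu$ from \cite[Eq.~10.6.6]{handook2010}. Combining them gives $(t^{2\nu+1}F_\nu')' = -(t^{\nu+1}J_{\nu+1})' = -t^{2\nu+1}F_\nu$.
\item $F_\nu$ is entire in $t^2$, so the argument is valid on $(0,\infty)$ with no singularity issue; we only compare positive critical points.
\item Zeros of $J_{\nu+1}$ on $(0,\infty)$ are isolated. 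So $F_\nu'$ is nonzero on each open interval between consecutive critical points. Therefore $E$ drops by a strictly positive amount $\int \frac{2(2\nu+1)}{t}F_\nu'^2\,dt$ between them.
\end{itemize}

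The only real obstacle is locating the right Lyapunov functional. Once $E$ is chosen, the argument is a direct computation.
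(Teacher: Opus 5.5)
Your proof is correct and is essentially the paper's argument: the same energy $E(t)=F_\nu(t)^2+F_\nu'(t)^2$, the same computation $E'(t)=-\frac{2(2\nu+1)}{t}F_\nu'(t)^2\le 0$, and the same observation that $E=F_\nu^2$ at the zeros of $J_{\nu+1}$. The only differences are minor. You derive the ODE for $F_\nu$ from the identities $(t^{-\nu}J_\nu)'=-t^{-\nu}J_{\nu+1}$ and $(t^{\nu+1}J_{\nu+1})'=t^{\nu+1}J_\nu$ rather than from Bessel's equation, and you justify the strict decrease between consecutive critical points more explicitly.
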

	\begin{proof}
		Recall that $J_\nu$ satisfies the differential equation (see \cite[Eq.~10.13.4]{handook2010})\begin{equation*} t^2 J''_\nu + t J_\nu' + (t^2 - \nu^2)J_\nu = 0
		\end{equation*}
		and so $F_\nu$ satisfies \begin{equation}\label{eq:F-ODE}
			F_\nu'' + \frac{2\nu + 1}{t} \cdot F_\nu' + F = 0\,.
		\end{equation}
		Define $E(t) = F_\nu(t)^2 + (F_\nu'(t))^2$ and note that \eqref{eq:F-ODE} implies $$E'(t) = -\frac{4\nu + 2}{t} (F_\nu'(t))^2 \leq 0$$
		implying $E$ is monotone decreasing and strictly decreasing except at critical points.  Further, by definition of $E$ we have that at each critical point $j_{\nu+1,k}$ we have $E(j_{\nu+1,k}) = F_\nu(j_{\nu+1,k})^2$.  Since $(j_{\nu+1,k})_{k \geq 1}$ are distinct, we have completed the proof.
	\end{proof}
	
	We are now ready to deduce Lemma \ref{lem:unit-ball-fourier}:
	
	\begin{proof}[Proof of Lemma~\ref{lem:unit-ball-fourier}]
		Since as $t \to 0$ we have the asymptotic $J_\nu(t) \sim (\Gamma(\nu+1))^{-1}(t/2)^\nu$ (see \cite[Eq.~10.7.3]{handook2010}), we see that for $t > 0$ and small we have that $F_\nu(t) > 0$.   Letting $j_{a,k}$ denote the $k$'th positive zero of $J_a$, we recall that the zeros interlace, and so in particular $j_{\nu,1} < j_{\nu+1,1} < j_{\nu,2}$ (see \cite[Eq.~10.21.1]{handook2010}).
		This implies that $F_\nu(j_{\nu+1,1}) < 0$ and so by Lemma \ref{lem:crit-points-bessel-decreasing} we have \begin{equation}\label{eq:min-at-crit-pt}
			\min_{t \geq 0} F_\nu(t) = F_\nu(j_{\nu+1,1})\,. 
		\end{equation}
		Combining \eqref{eq:ad-in-terms-bessel} with \eqref{eq:min-at-crit-pt} completes the proof.
	\end{proof}

	In order to prove Lemma \ref{lem:asymptotics-fourier-ball}, it will be sufficient to prove the following asymptotic:
	
	\begin{lemma}\label{lem:Bessel-minimum}
		Let $J_\nu$ be the Bessel function of the first kind.  Then $$j_{\nu+1,1}^{-\nu}J_\nu(j_{\nu+1,1}) = \nu^{-\nu} \exp\left( -2^{-1/3}\alpha_1 \nu^{1/3}\right)e^{-1} \cdot 2^{2/3} \nu^{-2/3} \left(-\Ai'(-\alpha_1)\right)(1 + O(\nu^{-1/3}))\,.$$
	\end{lemma}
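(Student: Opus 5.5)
The plan is to evaluate the two factors $j_{\nu+1,1}^{-\nu}$ and $J_\nu(j_{\nu+1,1})$ separately, using the classical large-order (Airy-type) asymptotics for the first zero of a Bessel function and for the derivative at that zero, as tabulated in \cite[Sec.~10.21(vi)]{handook2010}.

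\textbf{Step 1: reduce $J_\nu$ at the zero to a derivative of $J_{\nu+1}$.} Use the recurrence $J_\nu(t) = J_{\nu+1}'(t) + \frac{\nu+1}{t}J_{\nu+1}(t)$ \cite[Eq.~10.6.2]{handook2010}. At $t = j_{\nu+1,1}$ the second term vanishes, so
\[
J_\nu(j_{\nu+1,1}) = J_{\nu+1}'(j_{\nu+1,1}).
\]
Next apply the uniform asymptotic for the derivative at the first zero \cite[Eq.~10.21.43]{handook2010}, with $\mu = \nu+1$ in place of $\nu$:
\[
J_\mu'(j_{\mu,1}) = -\frac{2^{2/3}}{\mu^{2/3}}\,\Ai'(-\alpha_1)\bigl(1+O(\mu^{-2/3})\bigr).
\]
Replacing $\mu^{-2/3}$ by $\nu^{-2/3}$ costs only a factor $1+O(\nu^{-1})$. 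This gives $J_\nu(j_{\nu+1,1}) = 2^{2/3}\nu^{-2/3}\bigl(-\Ai'(-\alpha_1)\bigr)(1+O(\nu^{-2/3}))$, which is the Airy factor in the statement. As a consistency check on the sign: $\Ai'(-\alpha_1)>0$, so the right-hand side is negative, matching $F_\nu(j_{\nu+1,1})<0$ from the proof of Lemma~\ref{lem:unit-ball-fourier}.

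\textbf{Step 2: expand the power $j_{\nu+1,1}^{-\nu}$.} Use the McMahon/Olver expansion of the first zero \cite[Eq.~10.21.40]{handook2010}:
\[
j_{\mu,1} = \mu + 2^{-1/3}\alpha_1\mu^{1/3} + O(\mu^{-1/3}), \qquad \mu = \nu+1.
\]
Rewritten in terms of $\nu$, this reads $j_{\nu+1,1} = \nu\bigl(1 + \nu^{-1} + 2^{-1/3}\alpha_1\nu^{-2/3} + O(\nu^{-4/3})\bigr)$. Take logarithms and use $\log(1+u) = u - u^2/2 + O(u^3)$ with $u = O(\nu^{-2/3})$. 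Then
\[
-\nu\log j_{\nu+1,1} = -\nu\log\nu - 1 - 2^{-1/3}\alpha_1\nu^{1/3} + O(\nu^{-1/3}),
\]
since $\nu u^2 = O(\nu^{-1/3})$ and the $O(\nu^{-4/3})$ error in the zero contributes $O(\nu^{-1/3})$ after multiplying by $\nu$. Exponentiating gives $j_{\nu+1,1}^{-\nu} = \nu^{-\nu}e^{-1}\exp(-2^{-1/3}\alpha_1\nu^{1/3})(1+O(\nu^{-1/3}))$.

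\textbf{Step 3: combine.} Multiplying Steps 1 and 2 yields the claimed formula with total error $1+O(\nu^{-1/3})$. The dominant error comes from Step 2.

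\textbf{Main obstacle.} The only delicate point is error control in the zero asymptotic. Because $j^{-\nu}$ amplifies an absolute error $\eta$ in $j$ into a relative error of about $\nu\eta/j \approx \eta$, the argument needs the zero to within $O(\nu^{-1/3})$, not merely $o(\nu^{1/3})$. So I would check that the cited expansion is stated with its next term, $\frac{3}{10}2^{-2/3}\alpha_1^2\mu^{-1/3}$, and a genuine remainder bound. If the citation turns out to be purely formal, the fallback is to derive the zero location directly from the uniform Airy approximation $J_\mu(\mu + \tau\mu^{1/3}) = (2/\mu)^{1/3}\Ai(-2^{1/3}\tau) + O(\mu^{-1})$, uniformly for bounded $\tau$, and locate the zero by an implicit-function argument. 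That argument relies on $\Ai'(-\alpha_1)\neq 0$.
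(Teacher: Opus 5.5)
Your proof is correct. It differs from the paper's mainly in how you obtain the Airy factor.

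The paper works entirely from the single turning-point approximation $J_\nu(\nu+2^{-1/3}x\nu^{1/3}) = 2^{1/3}\nu^{-1/3}\Ai(-x)+O(\nu^{-1})$. It first locates $j_{\nu+1,1} = \nu + 2^{-1/3}\alpha_1\nu^{1/3} + 1 + O(\nu^{-1/3})$ by applying this approximation to $J_{\nu+1}$ and perturbing around the Airy zero, which is essentially your fallback. It then plugs this location back into the same approximation for $J_\nu$. Because of the order shift $\nu\to\nu+1$, the Airy argument becomes $\alpha_1 + 2^{1/3}\nu^{-1/3}+O(\nu^{-2/3})$, and a first-order Taylor expansion of $\Ai$ at $-\alpha_1$ yields $2^{2/3}\nu^{-2/3}(-\Ai'(-\alpha_1))$.

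You instead use the recurrence to get the exact identity $J_\nu(j_{\nu+1,1}) = J_{\nu+1}'(j_{\nu+1,1})$. You then cite the tabulated large-order expansions for the derivative at the first zero and for the zero itself.

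What your route buys:
\begin{itemize}
\item It explains structurally why $\Ai'$ appears.
\item It decouples the Airy factor from the fine location of the zero.
\item It gives that factor with the sharper relative error $O(\nu^{-2/3})$. The paper gets only $O(\nu^{-1/3})$ there, because its $O(\nu^{-1})$ remainder sits just a factor $\nu^{-1/3}$ below the $\nu^{-2/3}$ main term.
\item Citing $j_{\mu,1}$ directly avoids a point the paper leaves implicit: that the zero found in the Airy window really is the first one.
\end{itemize}

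The price is that you rely on two more specialized consequences of Olver's uniform theory, rather than deriving everything from one approximation. In both arguments the final $O(\nu^{-1/3})$ error comes from the factor $j_{\nu+1,1}^{-\nu}$, as your obstacle paragraph correctly identifies.
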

	
	We first deduce Lemma \ref{lem:asymptotics-fourier-ball} from Lemma \ref{lem:Bessel-minimum}:
	\begin{proof}[Proof of Lemma \ref{lem:asymptotics-fourier-ball}]
		Since $\omega_d = \frac{\pi^{\nu}}{\Gamma(\nu+1)}$ we use Lemmas \ref{lem:unit-ball-fourier} and \ref{lem:Bessel-minimum} along with Stirling's formula to see \begin{align*}
			a_d &= \Gamma(\nu+1)2^\nu  \nu^{-\nu} \exp\left( -2^{-1/3}\alpha_1 \nu^{1/3}\right) e^{-1} \cdot 2^{2/3} \nu^{-2/3} \Ai'(-\alpha_1)(1 + O(\nu^{-1/3})) \\
			&=  \left(\frac{2}{e}\right)^\nu \sqrt{2\pi \nu }\exp\left( -2^{-1/3}\alpha_1 \nu^{1/3}\right) e^{-1} \cdot 2^{2/3} \nu^{-2/3} \Ai'(-\alpha_1)(1 + O(\nu^{-1/3}))\,.
		\end{align*}
		Recalling $\nu = d/2$ completes the proof. 
	\end{proof}

	The main tool for proving Lemma \ref{lem:Bessel-minimum} is an approximation in terms of the Airy function $\mathrm{Ai}$.  In particular, for $x$ in a compact set we uniformly have \cite[Eq.~10.21.44]{handook2010} \begin{equation}\label{eq:turning-asympt}
		J_\nu(\nu + 2^{-1/3} x \nu^{1/3}) = 2^{1/3}\nu^{-1/3} \Ai(-x) + O(\nu^{-1})\,.
	\end{equation}

	We first identify $j_{\nu+1,1}$ in terms of a zero of $\Ai$. \begin{lemma}\label{lem:jnu+1-expansion}
		Let $\alpha_1 = \min\{\alpha > 0 : \Ai(-\alpha) = 0\}.$  Then $$j_{\nu+1,1} = \nu+ 2^{-1/3} \alpha_1 \nu^{1/3} + 1 + O(\nu^{-1/3})\,.$$
	\end{lemma}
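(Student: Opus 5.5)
The plan is to apply the turning-point asymptotic \eqref{eq:turning-asympt} to $J_{\nu+1}$ instead of $J_\nu$, and then convert from the variable $\nu+1$ back to $\nu$. Set $\mu = \nu+1$. By \eqref{eq:turning-asympt} with $\nu$ replaced by $\mu$, uniformly for $x$ in a compact set,
\[
J_{\mu}(\mu + 2^{-1/3} x \mu^{1/3}) = 2^{1/3}\mu^{-1/3}\Ai(-x) + O(\mu^{-1}).
\]
Two facts about the Airy function will be used: $\Ai$ is positive on $(-\alpha_1,\infty)$, and $\Ai(-x)$ has a simple zero at $x = \alpha_1$ with derivative $-\Ai'(-\alpha_1) \neq 0$. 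Fix a compact interval $I = [\alpha_1 - \eta, \alpha_1+\eta]$. On $I$ the main term $2^{1/3}\mu^{-1/3}\Ai(-x)$ changes sign at $\alpha_1$ with slope of order $\mu^{-1/3}$, while the error is $O(\mu^{-1})$. Hence $J_\mu$ has a zero at $x = \alpha_1 + O(\mu^{-2/3})$ by the intermediate value theorem.

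The step I expect to need the most care is showing that this zero is the \emph{first} positive zero $j_{\mu,1}$. I would rule out zeros of $J_\mu$ in $(0, \mu + 2^{-1/3}(\alpha_1-\eta)\mu^{1/3}]$ in three ranges.
\begin{itemize}
\item For $t \le \mu$ (in fact $t<\sqrt{\mu(\mu+2)}$), the classical fact $j_{\mu,1} > \mu$ applies, and it is cleanest simply to cite it from the handbook, e.g.\ \cite[Sec.~10.21]{handook2010}.
\item For $x$ in a compact range $[0,\alpha_1-\eta]$, the turning-point asymptotic applies, and $\Ai(-x)$ is bounded below by a positive constant there, so $J_\mu$ has no zero.
\item If one also wants to rule out negative $x$ with $|x|$ large, i.e.\ $t$ below $\mu$, this is again covered by $j_{\mu,1}>\mu$.
\end{itemize}
Combining these gives
\[
j_{\mu,1} = \mu + 2^{-1/3}\alpha_1\mu^{1/3} + O(\mu^{-1/3}).
\]
To sharpen the zero location beyond $O(\mu^{-2/3})$, alternatively one can quote the known expansion of Bessel zeros, \cite[Eq.~10.21.41]{handook2010}: $j_{\mu,1} = \mu + 2^{-1/3}\alpha_1\mu^{1/3} + O(\mu^{-1/3})$. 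If available, citing this is the most efficient route.

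Finally, substitute $\mu = \nu+1$. We have $\mu^{1/3} = \nu^{1/3}(1+1/\nu)^{1/3} = \nu^{1/3} + O(\nu^{-2/3})$, so $2^{-1/3}\alpha_1\mu^{1/3} = 2^{-1/3}\alpha_1\nu^{1/3} + O(\nu^{-2/3})$. This gives
\[
j_{\nu+1,1} = \nu + 1 + 2^{-1/3}\alpha_1\nu^{1/3} + O(\nu^{-1/3}),
\]
as claimed. Within this step, the main obstacle is getting the error $O(\mu^{-1/3})$ in the zero location. The naive intermediate-value argument from \eqref{eq:turning-asympt} only gives an error $O(\mu^{-2/3}\cdot\mu^{1/3}) = O(\mu^{-1/3})$ in $t$: the zero in $x$ is accurate to $O(\mu^{-2/3})$, and multiplying by the scale $2^{-1/3}\mu^{1/3}$ yields exactly $O(\mu^{-1/3})$. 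So the argument suffices without further refinement, with the handbook expansion as a backup.
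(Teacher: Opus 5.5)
Your proposal is correct and is essentially the paper's argument: apply the turning-point asymptotic \eqref{eq:turning-asympt} to $J_{\nu+1}$, use that $\Ai(-x)$ has a simple zero at $x=\alpha_1$ (i.e.\ $\Ai'(-\alpha_1)\neq 0$) against the $O(\nu^{-1})$ error to locate a zero at $x=\alpha_1+O(\nu^{-2/3})$, and rescale by $2^{-1/3}(\nu+1)^{1/3}$ to get the $O(\nu^{-1/3})$ error. Your extra check that this zero really is the first positive zero $j_{\nu+1,1}$ fills in a point the paper's proof leaves implicit. That check rests on three facts: $j_{\mu,1}>\mu$, the positivity of $\Ai(-x)$ on $[0,\alpha_1-\eta]$, and the fact that every zero of $J_{\mu}$ in the window $I$ lies within $O(\mu^{-2/3})$ of $\alpha_1$ in the $x$-variable.
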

	\begin{proof}
		We see from \eqref{eq:turning-asympt} that for $\eps = O(\nu^{-2/3})$ we have \begin{align*}
			J_{\nu+1}(\nu+1 + (\alpha_1+ \eps)2^{-1/3}(\nu+1)^{1/3}) &= 2^{1/3} (\nu+1)^{-1/3}\Ai(-(\alpha_1 + \eps)) + O(\nu^{-1}) \\
			&=2^{1/3} (\nu+1)^{-1/3} \left( - \eps\Ai'(-\alpha_1) + O(\eps^2)\right) + O(\nu^{-1})
		\end{align*}
		since $\Ai$ is analytic.  Noting that $\Ai'(-\alpha_1) \neq 0$ (see \cite[Eq.~9.9.3]{handook2010}) completes the proof.
	\end{proof}
	
	\begin{proof}[Proof of Lemma \ref{lem:Bessel-minimum}]
		By \eqref{eq:min-at-crit-pt} we have \begin{align*}
			\min_{t \geq 0} F_\nu(t) = F_\nu(j_{\nu+1,1}) = (j_{\nu+1,1})^{-\nu} J_\nu(j_{\nu+1,1})\,.
		\end{align*}
		Using \eqref{lem:jnu+1-expansion} we have that \begin{align}
			(j_{\nu+1,1})^{-\nu} &= \nu^{-\nu} \left(1 + 2^{-1/3}\alpha_1 \nu^{-2/3} + \nu^{-1} + O(\nu^{-4/3})\right)^{-\nu} \nonumber \\
			&= \nu^{-\nu} \exp\left( -2^{-1/3}\alpha_1 \nu^{1/3} - 1 + O(\nu^{-1/3})\right)\,.\label{eq:j-power-expansion}
		\end{align}
		Similarly, combining \eqref{eq:turning-asympt} with Lemma \ref{lem:jnu+1-expansion} we have \begin{align*}
			J_\nu(j_{\nu+1,1}) &= J_\nu\left( \nu + 2^{-1/3} \nu^{1/3}\left(\alpha_1 + 2^{1/3}\nu^{-1/3} + O(\nu^{-2/3}) \right)  \right) \\
			&= 2^{1/3} \nu^{-1/3} \Ai\left(-\left(\alpha_1 + 2^{1/3}\nu^{-1/3} + O(\nu^{-2/3}) \right) \right) + O(\nu^{-1}) \\
			&= 2^{2/3} \nu^{-2/3} \left(-\Ai'(-\alpha)\right)(1 + O(\nu^{-1/3}))\,.
		\end{align*}
		Combining the previous equation with \eqref{eq:j-power-expansion} completes the proof.
	\end{proof}

\end{document}